\let\@secnumfont\bfseries
\def\section{\@startsection{section}{1}%
  \z@{4\linespacing\@plus\linespacing}{\linespacing}%
  {\bfseries\centering}}
\def\introsection{\@startsection{section}{1}%
  \z@{3\linespacing\@plus\linespacing}{\linespacing}%
  {\bfseries\centering}}
\def\subsection{\@startsection{subsection}{2}%
   \z@{1.25\linespacing\@plus.7\linespacing}{.5\linespacing}%
   {\normalfont\bfseries}}
\def\subsectionsinline{\def\subsection{\@startsection{subsection}{2}%
  \z@{1\linespacing\@plus.7\linespacing}{-.5em}%
  {\normalfont\bfseries}}}
\theoremstyle{definition}
\newtheorem{definition}[equation]{Definition}
\newtheorem{example}[equation]{Example}
\newtheorem{construction}[equation]{Construction}
\newtheorem*{definition*}{Definition}
\newtheorem*{example*}{Example}
\newtheorem*{problem*}{Problem}
\newtheorem*{exercise*}{Exercise}
\newtheorem*{question*}{Question}
\newtheorem*{construction*}{Construction}
\theoremstyle{remark}
\newtheorem{remark}[equation]{Remark}
\newtheorem*{note*}{Note}
\newtheorem*{notation*}{Notation}
\newtheorem*{remark*}{Remark}
\newtheorem*{data*}{Data}
\theoremstyle{plain}
\newtheorem{theorem}[equation]{Theorem}
\newtheorem{corollary}[equation]{Corollary}
\newtheorem{lemma}[equation]{Lemma}
\newtheorem{proposition}[equation]{Proposition}
\newtheorem{hypothesis}[equation]{Hypothesis}
\newtheorem*{theorem*}{Theorem}
\newtheorem*{corollary*}{Corollary}
\newtheorem*{lemma*}{Lemma}
\newtheorem*{proposition*}{Proposition}
\newtheorem*{conjecture*}{Conjecture}
\newtheorem*{claim*}{Claim}
\newtheorem*{proposal*}{Proposal}
\newtheorem*{conclusion*}{Conclusion}
\newtheorem*{hypothesis*}{Hypothesis}
\numberwithin{equation}{section}
\definecolor{refkey}{rgb}{0,.6,.4}
\renewcommand{\:}{\colon}
\DeclareMathOperator{\Aut}{Aut}
\newcommand{\CC}{{\mathbb C}}
\newcommand{\EE}{\mathbb E}
\DeclareMathOperator{\End}{End}
\newcommand{\HH}{{\mathbb H}}
\DeclareMathOperator{\Hom}{Hom}
\DeclareMathOperator{\id}{id}
\DeclareMathOperator{\Map}{Map}
\newcommand{\PP}{{\mathbb P}}
\DeclareMathOperator{\pt}{pt}
\newcommand{\RR}{{\mathbb R}}
\newcommand{\TT}{\mathbb T}
\DeclareMathOperator{\Spin}{Spin}
\newcommand{\ZZ}{{\mathbb Z}}
\newcommand{\chiup}{\raise.5ex\hbox{$\chi$}}
\newcommand{\cir}{S^1}
\newcommand{\inv}{^{-1}}
\newcommand{\mstrut}{^{\vphantom{1*\prime y\vee M}}}
\newcommand{\res}[1]{\negmedspace\bigm|\mstrut_{#1}}
\newcommand{\temsquare}{\raise3.5pt\hbox{\boxed{ }}}
\newcommand{\zmod}[1]{\ZZ/#1\ZZ}
\newcommand{\zt}{\zmod2}
\renewcommand{\cir}{\ensuremath{S^1}}
\DeclareMathOperator{\Ad}{Ad}
\DeclareMathOperator{\Aff}{Aff}
\DeclareMathOperator{\Cliff}{Cl}
\DeclareMathOperator{\Det}{Det}
\DeclareMathOperator{\Euc}{Euc}
\DeclareMathOperator{\Pin}{Pin}
\DeclareMathOperator{\Rep}{Rep}
\DeclareMathOperator{\Sp}{Sp}
\DeclareMathOperator{\Triv}{Triv}
\DeclareMathOperator{\Vect}{Vect}
\DeclareMathOperator{\domain}{domain}
\DeclareMathOperator{\ev}{ev}
\DeclareMathOperator{\image}{image}
\DeclareMathOperator{\pfaff}{pfaff}
\DeclareMathOperator{\sign}{sign}
\def\qi{i}
\def\qj{j}
\def\qk{k}
\newcommand{\AVI}{\Aut_k(V,I)}
\newcommand{\CMbC}{\CMb\otimes \CC}
\newcommand{\CMb}{\Cliff(\sM,b)}
\newcommand{\CTt}{CT~type}
\newcommand{\Clc}[1]{\Cliff^{\CC}_{#1}}
\newcommand{\Cl}[1]{\Cliff_{#1}}
\newcommand{\Eb}[1]{\sE_{[#1]}}
\newcommand{\GH}{\GS}
\newcommand{\GSF}{\mathscr{R}\TPF}
\newcommand{\GSI}{\mathscr{R}\TPI}
\newcommand{\GSZ}{\mathscr{R}\TPZ}
\newcommand{\GS}{\mathscr{R}\TP}
\newcommand{\Gptc}{(G,\phi ,\tau ,c)}
\newcommand{\Gpt}{(G,\phi ,\tau )}
\newcommand{\HDN}{\sH_{\textnormal{DN}}}
\newcommand{\HF}{\sH_{\textnormal{Fock}}}
\newcommand{\Kl}[2]{\widetilde{L}\mstrut _{[#1],#2}}
\newcommand{\Ll}[2]{L\mstrut _{[#1],#2}}
\newcommand{\Lv}{X\mstrut _\Pi }
\newcommand{\MM}{\mathbb{M}}
\newcommand{\Mb}{(\sM,b)}
\newcommand{\Mod}[1]{#1\textnormal{-Mod}}
\newcommand{\OMb}{O(\sM,b)}
\newcommand{\PH}{\PP\sH}
\newcommand{\QAut}{\Aut_{\textnormal{qtm}}}
\newcommand{\Qsymmetry}{QM~symmetry\ }
\newcommand{\RtpZ}{\RR/2\pi \ZZ}
\newcommand{\SH}{\mathscr{S}\mstrut _{\sH}}
\newcommand{\TPF}{\mathscr{T}\mathscr{P}\mstrut _F}
\newcommand{\TPI}{\mathscr{T}\mathscr{P}\mstrut _I}
\newcommand{\TPZ}{\mathscr{T}\mathscr{P}\mstrut _Z}
\newcommand{\TP}{\mathscr{T}\mathscr{P}}
\newcommand{\VXT}{s\!\Vect^\nu _{G''}(\tXL)}
\newcommand{\VpV}{V_{\textnormal{boost}} \oplus V_{\textnormal{trans}}}
\newcommand{\XL}{\Lv}
\newcommand{\Xt}{X^\tau }
\newcommand{\bC}{\overline{C}}
\newcommand{\bP}{\overline{P}}
\newcommand{\bS}{\overline{S}}
\newcommand{\bT}{\overline{T}}
\newcommand{\bVg}{\overline{V^g}}
\newcommand{\bW}{\overline{W}}
\newcommand{\bc}{\bar{c}}
\newcommand{\be}{\bar{e}}
\newcommand{\bl}{\bar\ell }
\newcommand{\bol}[1]{\bold{#1}}
\newcommand{\cc}[2]{{}_{\phantom{c}}^{\phi (#1)}\!#2}
\newcommand{\co}{_{\textnormal{c.o.}}}
\newcommand{\cpt}{\textnormal{cpt}}
\newcommand{\db}{\dot\beta }
\newcommand{\df}{\dot{f}}
\newcommand{\dm}{\dot\mu }
\newcommand{\fC}{\phi _{\sC}}
\newcommand{\fG}{\phi _G}
\newcommand{\fh}{\mathfrak{h}}
\newcommand{\fk}{\mathfrak{k}}
\newcommand{\fp}{\mathfrak{p}}
\newcommand{\go}{\frak{g}_1}
\newcommand{\gpd}{/\!/} 
\newcommand{\gp}{\mathcal{G}}
\newcommand{\hKp}[1]{{}_{\phantom{G}}^{#1}\kern-.26em K}
\newcommand{\hK}[2]{{}_{\phantom{G}}^{#1}\kern-.26em K^{#2}}
\newcommand{\hP}{\widehat{P}}
\newcommand{\hT}{\widehat{T}}
\newcommand{\half}{\frac 12}
\newcommand{\hf}{\hat{f}}
\newcommand{\hrK}[2]{{}_{\phantom{G}}^{#1}\kern-.26em \widetilde{K}^{#2}}
\newcommand{\hs}{\hat\sigma }
\newcommand{\idtGo}{\id_{\tGo}}
\newcommand{\mes}{\mu \mstrut _{\sH}}
\newcommand{\nEm}{\nabla^{\sEm}}
\newcommand{\nL}{L^\sigma  }
\newcommand{\nn}{\nu _0}
\newcommand{\oMb}{\mathfrak{o}(\sM,b)}
\newcommand{\oV}{\mathfrak{o}(V)}
\newcommand{\oo}{1}
\newcommand{\pmo}{\{\pm1\}}
\newcommand{\ptc}{(\phi ,\tau ,c)}
\newcommand{\ptzgces}{$\phi $-twisted $\zt$-graded extensions}
\newcommand{\ptzgce}{$\phi $-twisted $\zt$-graded extension}
\newcommand{\rep}[1]{\rho \mstrut _{#1}}
\newcommand{\sA}{\mathcal{A}}
\newcommand{\sC}{\mathscr{C}}
\newcommand{\sEm}{\mathcal{E}^-}
\newcommand{\sEp}{\mathcal{E}^+}
\newcommand{\sE}{\mathcal{E}}
\newcommand{\sF}{\mathscr{F}}
\newcommand{\sH}{\mathscr{H}}
\newcommand{\sK}{\mathcal{K}}
\newcommand{\sL}{\mathcal{L}}
\newcommand{\sM}{\mathscr{M}}
\newcommand{\sO}{\mathcal{O}}
\newcommand{\sS}{\mathscr{S}}
\newcommand{\sV}{s\!\Vect}
\newcommand{\sims}{\genfrac{}{}{0pt}{}{{}_{{}_{\textnormal{\large$\sim$}}}}{{}^{\textnormal{\tiny stable}}}}
\newcommand{\sqo}{\sqrt{-1}}
\newcommand{\tA}{A^\tau }
\newcommand{\tC}{{C}}
\newcommand{\tD}{\widetilde{D}}
\newcommand{\tGo}{G_1^{\tau_1}}
\newcommand{\tGt}{(G')^\tau }
\newcommand{\tG}{G^\tau }
\newcommand{\tKO}{\widetilde{KO}}
\newcommand{\tL}{L ^\tau }
\newcommand{\tP}{\Pi }
\newcommand{\tT}{{T}}
\newcommand{\tXL}{X_\Pi }
\newcommand{\tar}{\rho ^{\tau }}
\newcommand{\ta}{\tilde\alpha }
\newcommand{\tgo}{\go^{\tau_1}}
\newcommand{\tgp}{\widetilde{\gp}}
\newcommand{\tg}{\tilde{g}}
\newcommand{\tiL}{\widetilde{L}}
\newcommand{\tilG}{\widetilde{G}}
\newcommand{\tnL}{{L}^{\tilde{\sigma }}}
\newcommand{\tnu}{\tilde{\nu }}
\newcommand{\tord}{(\cir)^{\times d}}
\newcommand{\tp}{\tilde\phi }
\newcommand{\ts}{\tilde\sigma }
\newcommand{\ztd}{(\zt)^{\times d}}
\begin{document}

\abovedisplayskip18pt plus4.5pt minus9pt
\belowdisplayskip \abovedisplayskip
\abovedisplayshortskip0pt plus4.5pt
\belowdisplayshortskip10.5pt plus4.5pt minus6pt
\baselineskip=15 truept
\marginparwidth=55pt

\renewcommand{\labelenumi}{\textnormal{(\roman{enumi})}}

%**end of header

% lasteq@218
% lastsec@ 19
% lastthm@189
% lastfig@  1

 \title[Twisted Equivariant Matter]{Twisted Equivariant Matter} %% replace \today with short title in final version
 \author[D. S. Freed]{Daniel S.~Freed}
 \thanks{The work of D.S.F. is supported by the National Science Foundation
under grants DMS-0603964 and DMS-1207817.}
 \address{Department of Mathematics \\ University of Texas \\ 1 University
Station C1200\\ Austin, TX 78712-0257}
 \email{dafr@math.utexas.edu}

 \author[G. W. Moore]{Gregory W.~Moore}
 \thanks{The work of G.W.M. is supported by the DOE under grant
DE-FG02-96ER40959.  GM also gratefully acknowledges partial support from the
Institute for Advanced Study and the Ambrose Monell Foundation.}
 \address{NHETC and Department of Physics and Astronomy \\
Rutgers University \\ Piscataway, NJ 08855--0849}
 \email{gmoore@physics.rutgers.edu}
 \thanks{This material is also based upon work supported in part by the
National Science Foundation under Grant No. 1066293 and the hospitality of
the Aspen Center for Physics.  We also thank the Institute for Advanced Study
and the Simons Center for Geometry and Physics for providing support and
stimulating environments for discussions related to this paper.}

% \dedicatory{}
 \date{January 7, 2013}
 \begin{abstract}  
 We show how general principles of symmetry in quantum mechanics lead to
twisted notions of a group representation. This framework generalizes both
the classical 3-fold way of real/complex/quaternionic representations as well
as a corresponding 10-fold way which has appeared in condensed matter and
nuclear physics. We establish a foundation for discussing continuous families
of quantum systems. Having done so, topological phases of quantum systems can
be defined as deformation classes of continuous families of gapped
Hamiltonians.  For free particles there is an additional algebraic structure
on the deformation classes leading naturally to notions of \emph{twisted}
equivariant $K$-theory. In systems with a lattice of translational symmetries
we show that there is a \emph{canonical} twisting of the equivariant
$K$-theory of the Brillouin torus. We give precise mathematical definitions
of two invariants of the topological phases which have played an important
role in the study of topological insulators.  Twisted equivariant $K$-theory
provides a finer classification of topological insulators than has been
previously available.
  \end{abstract}

\maketitle

Increasingly sophisticated ideas from homotopy theory are being used to
elucidate issues in quantum field theory and string theory.  While relatively
elementary topological notions were applied long ago to nonrelativistic
condensed matter systems~\cite{M,TKNN}, it is only in the past few years that
more modern topological invariants have been appearing.  They are used to
classify phases of noninteracting systems of electrons~\cite{HM, H, K, KBMHL,
SRFL1, SRFL2, SRFL3, MB, FK, FHNQWW, HK, TeK, KM1, KM2, FKM, Ha, EMV, ETMV,
R1, R2, R3, QZ, FKi1, FKi2, SCR, TZM, Wen, LWCG}.  This paper is particularly
motivated and inspired by the beautiful results of A. Kitaev and of A. Ludwig
et.\ al.~\cite{K,SRFL1,SRFL2,SRFL3,FKi1}.  We show that fundamental
principles of quantum mechanics---with a Wignerian emphasis on
symmetry---suggest definitions of \emph{\Qsymmetry classes} and
\emph{topological phases} of general quantum mechanical systems which lead to
\emph{twistings of $K$-theory} and \emph{twisted $K$-theory}.  Some of our
arguments are quite general and apply in particular to interacting systems.
 
Our starting point is Wigner's theorem concerning symmetries of a quantum
mechanical system.  The pure states form the projective space~$\PH$ of a
complex Hilbert space~$\sH$, and quantum symmetries are the \emph{projective}
transformations which preserve transition probabilities.  Wigner's theorem
asserts that a quantum symmetry has a real \emph{linear} lift which is either
unitary or antiunitary.  Therefore, any group~$G$ of quantum symmetries comes
equipped with a homomorphism $\phi \:G\to\pmo$ which encodes unitarity
vs. antiunitarity, together with a group extension
  \begin{equation}\label{eq:144}
     1\longrightarrow \TT\longrightarrow \tG\longrightarrow G\longrightarrow
     1 
  \end{equation}
whose kernel~$\TT=\{e^{i\theta }:\theta \in \RR\}$ is the group of scalar
unitary transformations of~$\sH$, which act trivially on the space~$\PH$ of
pure quantum states.  We say \eqref{eq:144}~ is a \emph{$\phi $-twisted
 extension} (by~$\TT$) since complex scalars commute with unitary
transformations but complex conjugate when commuted past antiunitary
transformations.  The action of~$\tG$ on~$\sH$ is a ``twisted'' version of a
complex linear representation of~$G$; the twisting is encoded by the abstract
\emph{\Qsymmetry class}~$(G,\phi ,\tau )$.  If we artificially ask that
\eqref{eq:144}~be pulled back via~$\phi $, then there is a trichotomy of
\Qsymmetry classes for fixed~$G$, which roughly corresponds to the
Frobenius-Schur trichotomy of complex/real/quaternionic representations;
under a further splitting hypothesis $\sH$~has a real or quaternionic
structure if $\phi $~is onto.

Typically an abstract group~$G$ of quantum symmetries also acts on spacetime.
In particular, there is a homomorphism $t\:G\to\pmo$ which encodes whether a
symmetry preserves or reverses time-orientation.  If in addition we fix a
one-parameter group of time translations, so a one-parameter group of unitary
operators generated by a self-adjoint Hamiltonian~$H$, then a simple argument
(Lemma~\ref{thm:90}) shows that the operator corresponding to a
symmetry~$g\in G$ commutes or anticommutes with~$H$ according to the
product~$\phi (g)t(g)$.  In particular, if $H$~is bounded below and unbounded
above, then $\phi =t$.  That restriction on~$H$ does not necessarily hold in
finite quantum systems~($\dim\sH<\infty $), nor in noninteracting single
electron systems in the Dirac-Nambu picture~(\S\ref{sec:17}).  Therefore, we
study \emph{extended \Qsymmetry classes}~$\Gptc$ in which $c:=\phi
t\:G\to\pmo$~is allowed to be nontrivial.  If we now artificially restrict
the $\phi $-twisted extension to be a pullback via $(t,c)\:G\to\pmo\times
\pmo$, we recover a decachotomy ubiquitous in the condensed matter
literature; it reduces to the previous trichotomy for~$c\equiv 1$.  Under a
further splitting hypothesis the Hilbert space~$\sH$ is a module for one of
the ten Morita classes of real and complex Clifford algebras.  The refinement
of a \Qsymmetry class~$(G,\phi ,\tau )$ to an extended \Qsymmetry
class~$\Gptc$ only has meaning if we fix an evolution of the system.
 
In~\S\ref{sec:17} we discuss free fermion systems with finitely many degrees
of freedom.  Here we recover a result of
Altland-Heinzner-Huckleberry-Zirnbauer~\cite{AZ,HHZ} relating spaces of free
fermion Hamiltonians compatible with a given symmetry and classical symmetric
spaces of compact type.  We also define the Dirac-Nambu Hilbert space, which
motivates some of our later definitions.

We investigate deformation classes of quantum mechanical systems with fixed
extended \Qsymmetry type~$\Gptc$.  The first task is foundational: define a
continuous family of quantum mechanical systems.  Our general discussion of
this point in Appendix~\ref{sec:14} may have broader interest.  A system is
said to be {\it gapped\/} if the Hamiltonian is invertible, and two gapped
systems are said to be in the same \emph{topological phase} if there is a
continuous path leading from one to the other.  The set~$\TP\Gptc$ of
topological phases has an algebraic structure given by amalgamation of
quantum systems.  In standard quantum mechanics the Hilbert space of the
amalgam of quantum systems with Hilbert spaces~$\sH_1,\sH_2$ is the
\emph{tensor product}~$\sH_1\otimes \sH_2$.  However, in the Dirac-Nambu
picture Hilbert spaces combine using \emph{direct sum}~$\sH_1\oplus \sH_2$
instead.  This nonstandard amalgamation is the only specific feature we
abstract from noninteracting single fermion systems.  With this algebraic
structure $\TP\Gptc$~is a commutative monoid,\footnote{A \emph{monoid} is a
set with an associative composition law and a unit for composition.  In the
commutative case we write the composition as~`$+$' and write `$0$' for the
unit.  The commutative monoid is an \emph{abelian group} if additive inverses
exist.}  and it is natural to pass to an abelian group~$\GS\Gptc$ of
\emph{reduced topological phases}.  Two systems in the same topological phase
necessarily have the same reduced topological phase, but not \emph{vice
versa}, so $\GS$~is a cruder invariant than~$\TP$.  On the other hand, it is
more easily computable and is strong enough to distinguish interestingly
different systems.  We illustrate this by proving in~\S\ref{sec:16} that two
physically important topological invariants of certain band insulators---the
orbital magnetoelectric polarizability and the Kane-Mele invariant---factor
through~$\GS$, which leads to a proof that they are equal.

The passage from the commutative monoid~$\TP$ to the group~$\GS$ is
reminiscent of $K$-theory, and indeed our main results identify~$\GS$ with a
topological $K$-theory group.  For that we need several additional
hypotheses.  First, we assume that there is a gap in the spectrum of the
Hamiltonian~$H$, which we shift to be at zero energy.  Then after a homotopy
we may assume~$H^2=1$, i.e., that $H$~is the grading operator of a
$\zt$-grading on~$\sH$.  We also make certain finiteness and compactness
hypotheses on the Hilbert space, Hamiltonian, and symmetry group.  These have
discrete consequences---for example, representations of compact Lie groups
are rigid whereas representations of noncompact Lie groups often have
continuous deformations---which are part of the proofs that we obtain
$K$-theory groups.  As a warmup in~\S\ref{sec:11} we compute~$\GS$ for gapped
systems with finite dimensional Hilbert space.  While the proofs here are
mainly a matter of unwinding definitions, this case illustrates important
issues which arise in more intricate higher dimensional systems.
 
After these extensive preliminaries we turn to quantum mechanical systems
symmetric under a spacetime crystallographic group~$G$ with nontrivial
magnetic point group~$\hP$.  Thus $G$~is a group extension
  \begin{equation}\label{eq:145}
     1\longrightarrow \Pi \longrightarrow G\longrightarrow \hP\longrightarrow
     1 
  \end{equation}
with kernel a full lattice~$\Pi $ of translations.  The operators
corresponding to elements of~ $\Pi $ commute, so can be simultaneously
diagonalized, and this Fourier transform (using \emph{Bloch sums}) realizes
the Hilbert space as $L^2$~sections of an infinite rank vector bundle
$\sE\to\XL$ over the (Brillouin) torus~$\XL$ of characters of~$\Pi $.  The
group~$\hP$ acts on~$\XL$, but if \eqref{eq:145}~is not split that action
does not generally lift to~$\sE$.  Rather, the extension~\eqref{eq:145}
determines a central extension of the \emph{groupoid} $\XL\gpd \hP$, and it
is this central extension which lifts.  This central extension is known in a
different language in the condensed matter literature: when describing the
action of group elements of the point group on Bloch eigenfunctions there are
some phase ambiguities which are generally dealt with in an ad hoc fashion
(see, e.g.,~\cite{BP}).  These phases are encoded in the central extension of
the groupoid.  In general, a central extension of a groupoid is a special
kind of \emph{twisting} of its $K$-theory.  This canonical twisting
determined by a group extension was discovered in~\cite{FHT1,FHT2}.  More
generally, we consider an abstract symmetry group equipped with a
homomorphism to~\eqref{eq:145}, for example in systems with internal
symmetry, and it may come equipped with a nontrivial extended \Qsymmetry type
unconnected with the extension~\eqref{eq:145}.  In that case the \Qsymmetry
type and canonical twisting from~\eqref{eq:145} combine, and $\sE\to\XL$ is a
twisted bundle for the combined twisting.  The Fourier transform based
on~$\Pi $ brings us from twisted representations---a direct consequence of
Wigner's theorem and spacetime symmetry---to twisted equivariant vector
bundles that represent twisted $K$-theory classes:
 \bigskip
  \begin{equation*}
     \begin{aligned} \textnormal{group}&\longrightarrow \textnormal{groupoid}
      \\ \textnormal{extended \Qsymmetry class}&\longrightarrow
     \textnormal{twisting of 
      $K$-theory\phantom{MMMMMM}} \\ \textnormal{twisted
     representation}&\longrightarrow 
      \textnormal{twisted vector bundle} \\ \textnormal{classes of twisted
     virtual representations}&\longrightarrow \textnormal{twisted $K$-theory
     group} \\  
      \end{aligned} 
  \end{equation*}
The hypothesis of a gap in the Hamiltonian, specified by a Fermi level,
induces a decomposition $\sE=\sEp\oplus \sEm$.  Due to Fermi-Dirac statistics
these have an interpretation as unfilled bands and filled bands, also known
as conduction bands and valence bands, respectively.  We assume that
$\sEm$~has finite rank---there are only finitely many valence bands---and
consider separately the finite and infinite rank possibilities for~ $\sEp$.
Our main results (Theorem~\ref{thm:131} and Theorem~\ref{thm:136}) identify
the group of reduced topological phases with a twisted $K$-theory group.  The
nonequivariant specialization, which ignores the magnetic point group~$\hP$,
is well-known and expressed in terms of untwisted $K$-theory~\cite{K}; the
result for the full symmetry group is new.  (A special case is addressed
in~\cite{TZM}.)  In case $\sEp$~has infinite rank we pull off an Eilenberg
swindle to prove it does not carry any topological information.
 
Our results about topological phases are expressed in terms of monoids and
groups of equivalence classes.  These are sets of path components of
topological spaces---or homotopy types---which classify families of quantum
systems.  The arguments we give can be adapted to identify classifying spaces
of gapped quantum systems in terms of classifying spaces for $K$-theory.

A brief outline of the paper is as follows.  We begin in~\S\ref{sec:2} with
an exposition of Wigner's theorem and \Qsymmetry classes.  In~\S\ref{sec:1}
we review symmetries of an affine spacetime with Galilean structure.  We take
a pre-Cartesian non-formulaic approach to affine, Euclidean, and projective
geometry in these introductory sections; it brings out the symmetries more
clearly and is designed to convey the simplicity of the geometric ideas.
In~\S\ref{sec:10} we define extended \Qsymmetry classes and topological
phases.  Free fermions are the subject of~\S\ref{sec:17}, where we consider
them in the context of algebraic quantum mechanics.  We give a quick proof of
the main theorem in~\cite{HHZ}.  The definitions of topological phases and
reduced topological phases are the subject of~\S\ref{sec:18}.  Classification
results, including the decachotomy of special extended \Qsymmetry classes,
are proved in~\S\ref{sec:4}.  Section~\ref{sec:3} is a rapid introduction to
special twistings of $K$-theory and twisted $K$-theory, beginning with
twisted versions of complex representations of a group.  There we also
introduce groupoids, which simultaneously generalize groups and spaces.  At
this point we have enough to discuss the classification of systems with
finite dimensional state spaces in~\S\ref{sec:11}.  The canonical twisting of
equivariant $K$-theory from a group extension is worked out in~\S\ref{sec:5}.
The main theorems computing the group of phases of band insulators are stated
and proved in~\S\ref{sec:6}.  Section~\ref{sec:16} is a treatment of the
Kane-Mele invariant and the orbital magnetoelectric polarizability, which is
a Chern-Simons invariant.  We report on some computations.  There are several
appendices of background and supplementary material.  Appendix~\ref{sec:13}
is a rapid compendium of definitions related to group extensions.  As is
well-known, the ten special extended \Qsymmetry classes correspond to the ten
Morita classes of real and complex Clifford algebras.  We prove a precise
result in Appendix~\ref{sec:12}, which we use at the end of~\S\ref{sec:11}
and~\S\ref{sec:6} to recast special symmetries as a degree shift in
$K$-theory.  A beautiful paper of Dyson~\cite{Dy} also contains a 10-fold
way; we give a modern treatment in Appendix~\ref{sec:9}.  In
Appendix~\ref{sec:14} we give a careful definition of a \emph{continuous}
family of quantum mechanical systems with fixed \Qsymmetry type.  We also
give a geometric treatment of Bloch sums and the Berry connection.
Appendix~\ref{sec:19} contains a lemma which ensures that the definition of
twisted equivariant $K$-theory for the special twistings we encounter agrees
with the general definition.  An extended example, the 3-dimensional diamond
structure, is treated in Appendix~\ref{sec:15}.

While we do not discuss topological superconductors in detail, our
considerations apply to free fermions with Bogoliubov-de Gennes Hamiltonians.
We have nothing to say in this paper about the topology of electron systems
with no gap in the spectrum.  We stress that our discussion of~$\TP$ as a
commutative monoid and hence $\GS$~as an abelian group is limited to
noninteracting systems.  The generalization to interacting systems is an
interesting open problem; see~\cite{FKi1,FKi2} for interesting recent
results.
 
We thank Mike Freedman for explaining basics of topological insulators to us
a few years ago.  We are grateful to Nick Read for sharing an unpublished
manuscript on the subject.  We also thank Jacques Distler, Charlie Kane,
Alexei Kitaev, Joel Moore, Chetan Nayak, Karin Rabe, Andrew Rappe, Aaron
Royer, Lorenzo Sadun, Graeme Segal, David Vanderbilt, Edward Witten, and
Martin Zirnbauer for helpful discussions and correspondence.  We thank the
anonymous referees for their careful reading and many helpful comments.

%\newpage
{\small
\renewcommand{\tocsection}[3]{%
  \indentlabel{\ignorespaces#1 #2.\;\;}#3}
\renewcommand{\tocsubsection}[3]{%
  \indentlabel{\hskip 4em#3}}
\bigskip\bigskip
\begin{center}{\scshape Contents}\end{center}\bigskip
 
\contentsline {section}{\tocsection {}{1}{Quantum symmetries and twisted extensions}}{8}{section.1}
\contentsline {subsection}{\tocsubsection {}{}{Wigner's theorem}}{8}{section*.2}
\contentsline {subsection}{\tocsubsection {}{}{QM\nobreakspace symmetry\ classes and twisted representations}}{9}{section*.3}
\contentsline {subsection}{\tocsubsection {}{}{Symmetries and spacetime}}{11}{section*.4}
\contentsline {section}{\tocsection {}{2}{Symmetries of Galilean spacetime}}{12}{section.2}
\contentsline {subsection}{\tocsubsection {}{}{Affine spaces: global parallelism}}{12}{section*.5}
\contentsline {subsection}{\tocsubsection {}{}{Euclidean and Galilean structures}}{13}{section*.6}
\contentsline {subsection}{\tocsubsection {}{}{Tying abstract symmetry groups to spacetime}}{15}{section*.7}
\contentsline {subsection}{\tocsubsection {}{}{Crystals and crystallographic groups}}{16}{section*.8}
\contentsline {section}{\tocsection {}{3}{Extended QM\nobreakspace symmetry\ classes}}{18}{section.3}
\contentsline {subsection}{\tocsubsection {}{}{Quantum symmetries and time-reversal}}{18}{section*.9}
\contentsline {subsection}{\tocsubsection {}{}{Extended QM\nobreakspace symmetry\ classes and gapped systems}}{19}{section*.10}
\contentsline {section}{\tocsection {}{4}{Free fermions}}{21}{section.4}
\contentsline {subsection}{\tocsubsection {}{}{Symmetries in algebraic quantum mechanics}}{21}{section*.11}
\contentsline {subsection}{\tocsubsection {}{}{Free fermions}}{22}{section*.12}
\contentsline {subsection}{\tocsubsection {}{}{Free fermion Hamiltonians and classical symmetric spaces}}{23}{section*.13}
\contentsline {subsection}{\tocsubsection {}{}{The Dirac-Nambu space}}{25}{section*.14}
\contentsline {section}{\tocsection {}{5}{Topological phases}}{26}{section.5}
\contentsline {subsection}{\tocsubsection {}{}{Group completion}}{27}{section*.15}
\contentsline {subsection}{\tocsubsection {}{}{Quotienting by topological triviality}}{28}{section*.16}
\contentsline {subsection}{\tocsubsection {}{}{Reduced topological phases}}{29}{section*.17}
\contentsline {section}{\tocsection {}{6}{Special extended QM\nobreakspace symmetry\ classes}}{29}{section.6}
\contentsline {subsection}{\tocsubsection {}{}{A restricted set of extended QM\nobreakspace symmetry\ classes}}{29}{section*.18}
\contentsline {subsection}{\tocsubsection {}{}{Twisted central extensions and semidirect products}}{31}{section*.19}
\contentsline {subsection}{\tocsubsection {}{}{Proof of the classification}}{33}{section*.20}
\contentsline {subsection}{\tocsubsection {}{}{More QM\nobreakspace symmetry\ classes}}{33}{section*.21}
\contentsline {section}{\tocsection {}{7}{$K$-theory and some twistings}}{36}{section.7}
\contentsline {subsection}{\tocsubsection {}{}{Twisted representation rings}}{36}{section*.22}
\contentsline {subsection}{\tocsubsection {}{}{Extensions as line bundles}}{37}{section*.23}
\contentsline {subsection}{\tocsubsection {}{}{Groupoids, twistings, and twisted vector bundles}}{39}{section*.24}
\contentsline {subsection}{\tocsubsection {}{}{Twisted $K$-theory}}{42}{section*.25}
\contentsline {section}{\tocsection {}{8}{Gapped systems with a finite dimensional state space }}{43}{section.8}
\contentsline {section}{\tocsection {}{9}{Twistings from group extensions}}{45}{section.9}
\contentsline {subsection}{\tocsubsection {}{}{Warmup: direct products}}{45}{section*.26}
\contentsline {subsection}{\tocsubsection {}{}{Group extensions and twistings}}{46}{section*.27}
\contentsline {subsection}{\tocsubsection {}{}{A nontrivial example}}{50}{section*.28}
\contentsline {subsection}{\tocsubsection {}{}{A generalization}}{52}{section*.29}
\contentsline {section}{\tocsection {}{10}{Gapped topological insulators}}{53}{section.10}
\contentsline {subsection}{\tocsubsection {}{}{Periodic systems of electrons}}{53}{section*.30}
\contentsline {subsection}{\tocsubsection {}{}{Formal setup}}{55}{section*.31}
\contentsline {subsection}{\tocsubsection {}{}{Topological phases: Type\nobreakspace F}}{57}{section*.32}
\contentsline {subsection}{\tocsubsection {}{}{Topological phases: Type\nobreakspace I}}{58}{section*.33}
\contentsline {subsection}{\tocsubsection {}{}{Simplifying assumptions; more familiar $K$-theory groups}}{60}{section*.34}
\contentsline {section}{\tocsection {}{11}{Chern-Simons and Kane-Mele invariants}}{63}{section.11}
\contentsline {subsection}{\tocsubsection {}{}{Computations}}{64}{section*.35}
\contentsline {subsection}{\tocsubsection {}{}{Orbital magnetoelectric polarizability}}{66}{section*.36}
\contentsline {subsection}{\tocsubsection {}{}{The Kane-Mele invariant}}{68}{section*.37}
\contentsline {section}{\tocsection {Appendix}{A}{Group extensions}}{72}{appendix.A}
\contentsline {section}{\tocsection {Appendix}{B}{Clifford algebras and the 10-fold way}}{74}{appendix.B}
\contentsline {section}{\tocsection {Appendix}{C}{Dyson's 10-fold way}}{76}{appendix.C}
\contentsline {section}{\tocsection {Appendix}{D}{Continuous families of quantum systems}}{80}{appendix.D}
\contentsline {subsection}{\tocsubsection {}{}{Topologies on mapping spaces}}{80}{section*.38}
\contentsline {subsection}{\tocsubsection {}{}{Hilbert bundles and continuous families of quantum systems}}{81}{section*.39}
\contentsline {subsection}{\tocsubsection {}{}{Fourier transform (Bloch sums) and the Berry connection}}{84}{section*.40}
\contentsline {section}{\tocsection {Appendix}{E}{Twisted $K$-theory on orbifolds}}{86}{appendix.E}
\contentsline {section}{\tocsection {Appendix}{F}{Diamonds and dust}}{88}{appendix.F}
\contentsline {section}{References}{91}{equation.F.8}
}

   \section{Quantum symmetries and twisted extensions}\label{sec:2}
% lastsubsec@000

  \subsection*{Wigner's theorem}

The state space of a quantum system is the projective space~$\PH$ of a
complex separable Hilbert space~$\sH$.  In other words, the state of a
quantum system is a line of vectors.  (Sometimes the term `ray' is used in
place of `line', and one may require that the state be normalized to have
unit norm.)  If $\ell ,\ell '\in \PH$ are states, then the \emph{transition
probability} from~$\ell $ to $\ell '$ is defined as $|\langle \psi ,\psi '
\rangle|^2$, where $\psi ,\psi '\in \sH $ are unit norm vectors contained in
the lines~$\ell ,\ell '$, respectively, and $\langle -,- \rangle$~is the
inner product on~$\sH$.  The transition probability is a symmetric function
  \begin{equation}\label{eq:57}
     p\:\PH\times\PH\longrightarrow [0,1]. 
  \end{equation}
A \emph{projective quantum symmetry} of~$\PH$ is an invertible map
$\PH\to\PH$ which preserves the function~$p$.  A basic theorem of
Wigner~\cite{Wi}, \cite[\S2.A]{We} asserts that any projective quantum
symmetry lifts to either a unitary or antiunitary\footnote{A real linear map
$S\:\sH\to\sH$ is \emph{antiunitary} if it is antilinear ($S\lambda
=\bar\lambda S,\;\lambda \in \CC$) and if $\langle S\xi _1,S\xi _2
\rangle=\overline{\langle \xi _1,\xi _2 \rangle}$ for all~$\xi _1,\xi _2\in
\sH$.}  transformation $\sH\to\sH$, which we call a \emph{linear quantum
symmetry}.  Recall that both unitary and antiunitary transformations preserve
lengths and angles, but a unitary map is complex linear whereas an
antiunitary map is complex antilinear.

  \begin{remark}[]\label{thm:2}
 We have defined projective quantum symmetries in terms of the transition
probability structure on projective space, the symmetric
function~\eqref{eq:57}.  The projective space~$\PH$ has a more conventional
symmetric function which is induced from the inner product on~$\sH$: the
distance function~$d$ of the Fubini-Study metric.  It is not difficult to
compute that $p=\cos^2(d/2)$, from which the group of projective quantum
symmetries is precisely the isometry group of projective space.   Proofs of
Wigner's theorem based on this identification are given in~\cite{F1}.
  \end{remark}

We recast Wigner's theorem in the following terms.  Let $\QAut(\PH)$~denote
the group of projective quantum symmetries of~$\PH$ and $\QAut(\sH)$~the
group of linear quantum symmetries, that is, the group of all unitary and
antiunitary transformations of~$\sH$.  It is a group since the composition of
two antiunitary transformations is unitary.  In fact, $\QAut(\sH)$~ fits into
a group extension (see Definition~\ref{thm:54})
  \begin{equation}\label{eq:12}
      1\longrightarrow U(\sH)\longrightarrow
     \QAut(\sH)\xrightarrow{\;\;\phi_{\sH} \;\;}\pmo \longrightarrow 1 
  \end{equation}
with kernel the group~$U(\sH)$ of unitary operators: for $S\in \QAut(\sH)$ we
have~$\phi_{\sH} (S)=1$ if $S$~is unitary and $\phi_{\sH} (S)=-1$ if $S$~is
antiunitary.  Wigner's theorem asserts that there is a group extension
  \begin{equation}\label{eq:13}
     1\longrightarrow \TT\longrightarrow \QAut(\sH)\longrightarrow
     \QAut(\PH)\longrightarrow 1 
  \end{equation}
in which the kernel is the group~$\TT$ of scalings of~$\sH$ by complex
numbers of norm one.\footnote{We assume $\dim\sH>1$.}
In other words, every projective quantum symmetry lifts to a linear quantum
symmetry which is unique up to composition with scalar multiplication by a
unit norm complex number.  Note that $\TT$~is not central---for $\lambda \in
\TT$ and $S\in \QAut(\sH)$ we have
  \begin{equation}\label{eq:14}
     S\lambda = \begin{cases}  \lambda S,&\phi_{\sH} (S)=1;\\\bar\lambda
     S,&\phi_{\sH} (S)=-1.\end{cases} 
  \end{equation}
Hence~\eqref{eq:13} is not a central extension, but rather a slight twisting
of a central extension.  The twisting is expressed by the
homomorphism~$\phi_{\sH} $ in~\eqref{eq:12}, which descends to a homomorphism
(with the same name)
  \begin{equation}\label{eq:15}
     \phi_{\sH} \:\QAut(\PH)\longrightarrow \pmo
  \end{equation}
that encodes whether a quantum symmetry lifts to be unitary or antiunitary.

In the uniform (norm) topology $U(\sH)$~is connected---in fact, contractible
if $\sH$~is infinite dimensional---and $\QAut(\sH)$~has two components.  It
is more natural for us to topologize bounded operators with the compact-open
topology, as we discuss in Appendix~\ref{sec:14}.  In that
topology\footnote{We topologize these groups as subspaces of invertible
operators, but invertible operators have a topology finer than the subspace
topology inherited from the strong topology; see Definition~\ref{thm:107}.}
it is also true that $U(\sH)$~is contractible and $\QAut(\sH)$~has two
components, as we prove in Proposition~\ref{thm:108}.

  \subsection*{\Qsymmetry classes and twisted representations}

We abstract the structure which emerges from lifting projective quantum
symmetries to linear quantum symmetries. 

  \begin{definition}[]\label{thm:5}
 Let $G$~be a topological group and $\phi\:G\to\pmo$ a continuous
homomorphism.  A \emph{$\phi$-twisted extension} (by~$\TT$) of~$G$ is a group
extension
  \begin{equation}\label{eq:20}
     1\longrightarrow \TT\longrightarrow \tG\longrightarrow G\longrightarrow
     1 
  \end{equation}
which for all~$S\in \tG$ and~$\lambda \in \TT$ satisfies ~\eqref{eq:14} with
$\phi _{\sH}$ replaced by~$\phi $.  We denote the extension as~`$\tG\to G$',
or simply~`$\tau $'.
  \end{definition}

\noindent 
 If $\phi(g)=1$ for all~$g\in G$, then \eqref{eq:20}~is a central
extension, i.e., the kernel~$\TT$ commutes with every element of~$\tG$.  For
any~$\phi $ there is a trivial $\phi $-twisted extension, the semidirect
product $G\ltimes_\phi \TT$ (see Definition~\ref{thm:87}), where $G$~acts
on~$\TT$ by complex conjugation via~$\phi $.  We will meet nontrivial twisted
extensions below.

Now suppose $G$~is any topological group of quantum symmetries of a quantum
system whose state space is~$\PH$.  In other words, $G$~acts on~$\PH$ through
a continuous homomorphism
  \begin{equation}\label{eq:16}
     \rho \:G\longrightarrow \QAut(\PH).
  \end{equation}
Set $\phi=\phi _{\sH}\circ \rho \:G\to\pmo$ to be the composition of~$\rho $
with~\eqref{eq:15}.  There is a pullback $\phi$-twisted  extension 
of~$G$ (see Definition~\ref{thm:89}) which fits into the diagram
  \begin{equation}\label{eq:17}
     \xymatrix{1 \ar[r]& \TT \ar[r] \ar@{=}[d] & \tG \ar[d]^{\rho ^\tau
     }\ar[r] & G 
     \ar[r]\ar[d]^\rho & 1\\ 1 \ar[r] &\TT \ar[r] &\QAut(\sH) \ar[r]
     &\QAut(\PH)\ar[r] &1} 
  \end{equation}
Simply put, the group~$\tG$ consists of all linear quantum symmetries which
lift the projective quantum symmetries in~$G$.  Note that $\phi$~encodes
which operators in~$\tG $ act linearly and which act antilinearly.  Also,
$\TT\subset \tG$ acts on~$\sH$ by scalar multiplication.  We formalize these
properties.

  \begin{definition}[]\label{thm:15}
 Let $\tG$ be a $\phi $-twisted extension of~$G$, as in
Definition~\ref{thm:5}.  Then a homomorphism $\tar\:\tG\to\Aut_{\RR}(E)$ to
the group of real automorphisms of a complex vector space~$E$ is a
\emph{$(\phi ,\tau )$-twisted} representation of~$G$ if:
  \begin{enumerate}
 \item $\tar(g)$ is complex linear if $\phi (g)=+1$, $\tar(g)$ is complex
antilinear if $\phi (g)=-1$; and

 \item $\TT\subset \tG$ acts by complex scalar multiplication.
  \end{enumerate}
  \end{definition}

\noindent
 Note that if $G$---and hence also~$\tG$---is connected, then $G$~acts by
complex linear transformations.

  \begin{remark}[]\label{thm:33}
 A homomorphism~$\tar$ which satisfies~(i) is called a `co-representation' by
Wigner~\cite{Wi}.
  \end{remark}

Next we abstract the structure~\eqref{eq:17} induced on a group of projective
quantum symmetries.  The definition parallels the standard notions of
concrete group actions and abstract symmetry groups.

  \begin{definition}[]\label{thm:4}\ 
  \begin{enumerate}
 \item Let $\rho \:G\to\QAut(\PH)$ be a group of projective quantum
symmetries.  The \emph{\Qsymmetry type} of~$(G,\rho )$ is the isomorphism class
of the triple~$(G,\phi ,\tau)$ consisting of~$G$, the induced homomorphism
$\phi\:G\to\pmo$, and the induced $\phi$-twisted  extension~$\tG$
in~\eqref{eq:17}.
 
 \item A \emph{\Qsymmetry class}\footnote{Altland-Zirnbauer~\cite{AZ,HHZ} use
the term `symmetry classes' in a different way, referring to the
classification of Hamiltonians with given symmetry.  We make some comments on
that problem in~\S\ref{sec:17}.} is an isomorphism class of triples~$(G,\phi
,\tau)$ where $G$~is a topological group, $\phi \:G\to\pmo$ is a continuous
homomorphism, and $\tG$~is a $\phi $-twisted extension as in
Definition~\ref{thm:5}.
  \end{enumerate}
  \end{definition}

\noindent
 We often call a triple~$(G,\phi ,\tau )$ a \Qsymmetry class, though strictly
speaking the \Qsymmetry class is the equivalence class containing the triple.
Also, we say the \Qsymmetry class is \emph{based on~$G$}.  Recall from
Definition~\ref{thm:88} that two $\phi$-twisted  extensions~$G^{\tau
_1},G^{\tau _2}$ of~$G$ are isomorphic if there is an isomorphism of groups
$\varphi \:G^{\tau _1}\to G^{\tau _2}$ which fits into the commutative
diagram
  \begin{equation}\label{eq:97}
     \xymatrix@R-15pt{&&G^{\tau _1}\ar[dd]^{\varphi }\ar[dr]\\ 1 \ar[r]& \TT
     \ar[ur]\ar[dr] && G \ar[r] & 1\\ && G^{\tau _2}\ar[ur]} 
  \end{equation}
Two projective actions of~$G$ have the same \Qsymmetry type if both actions
induce the same unitary vs.~ antiunitary dichotomy on elements of~$G$ and if
the linear and antilinear lifts glue together into groups which are not only
abstractly isomorphic, but are isomorphic in a way that matches the
underlying projective symmetries.  Symmetry \emph{type} is an invariant of a
group of projective quantum symmetries.  A \Qsymmetry \emph{class} is an
abstract group of possible quantum symmetries, up to isomorphism.  The
classification of \Qsymmetry classes for fixed~$G$ can be expressed in terms
of group cohomology, but for topological and Lie groups one must use
continuous or smooth cohomology rather than the cohomology of discrete
groups.

There are special \Qsymmetry classes we term \emph{$\phi $-standard}.    

  \begin{definition}[]\label{thm:167}
  Let $G$~be a Lie group equipped with a homomorphism $\phi \:G\to\pmo$.  A
\Qsymmetry class~$(G,\phi ,\tau)$ is \emph{$\phi $-standard} if $\tG$~is
pulled back from a twisted extension of $\phi (G)\subset \pmo$.
  \end{definition}

\noindent
 Clearly a $\phi $-standard \Qsymmetry class is determined by a subgroup
$A\subset \pmo$ and a \Qsymmetry class based on~$A$ with $\phi
\:A\hookrightarrow \pmo$ the inclusion map.  There are 2~possible subgroups
and 3~possibilities in all.

  \begin{proposition}[3-fold way]\label{thm:34}
 \ 
  \begin{enumerate}
 \item There is a unique $\phi $-standard \Qsymmetry class based on the trivial
subgroup~$A=1$.  

 \item There are two $\phi $-standard \Qsymmetry classes based on $A=\pmo$.
We denote them generically by $\tA$.  Let $\tT\in \tA$ be a lift of~$-1\in
\pmo$; then the two \Qsymmetry classes are distinguished by~$\tT^2=\pm1$.
  \end{enumerate}
  \end{proposition}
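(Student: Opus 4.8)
The plan is to establish the two enumerative facts behind the count asserted just before the proposition. As observed there (using Definition~\ref{thm:167}), the classification of $\phi$-standard classes reduces to choosing a subgroup $A\subset\pmo$ and then a $\phi_A$-twisted extension of $A$ by $\TT$, where $\phi_A\:A\hookrightarrow\pmo$ is the inclusion; so it suffices to classify these extensions, up to the isomorphism of Definition~\ref{thm:88}, in the two cases $A=1$ and $A=\pmo$. Since $A$ is finite and discrete there are no continuity subtleties, and $\Aut(A)=1$, so this is a purely group-theoretic count. The case $A=1$ is immediate: $\phi_A$ is trivial, so a $\phi_A$-twisted extension is an ordinary central extension of the trivial group by $\TT$, which is necessarily $\TT$ itself (equivalently $H^2(\{1\};\TT)=0$). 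This gives the unique class of~(i).

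For $A=\pmo\cong\ZZ/2$ the map $\phi_A$ is the identity. I fix a lift $\tT\in\tA$ of $-1$; every element of $\tA$ is then uniquely $\lambda$ or $\lambda\tT$ with $\lambda\in\TT$, and the relation defining a $\phi_A$-twisted extension reads $\tT\lambda\tT\inv=\bar\lambda$. Two observations pin down the classification. First, $\tT^2\in\TT$ commutes with $\tT$, so $\overline{\tT^2}=\tT\,\tT^2\,\tT\inv=\tT^2$ and hence $\tT^2\in\pmo$. Second, replacing $\tT$ by $\mu\tT$ with $\mu\in\TT$ gives $(\mu\tT)^2=\mu(\tT\mu\tT\inv)\tT^2=\mu\bar\mu\,\tT^2=\tT^2$, so $\tT^2$ is independent of the chosen lift; it is therefore a well-defined element of $\pmo$, visibly preserved by any isomorphism of $\phi_A$-twisted extensions. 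It remains to check that $\tT^2$ is a \emph{complete} invariant and that both of its values occur. For completeness: if $\tA_1,\tA_2$ carry lifts with $\tT_1^2=\tT_2^2=\epsilon$, then $\lambda\mapsto\lambda$, $\lambda\tT_1\mapsto\lambda\tT_2$ is a homomorphism — the only product requiring a check is $(\lambda\tT_i)(\mu\tT_i)=\lambda\bar\mu\,\epsilon$, which holds in both — and it is manifestly an isomorphism as in Definition~\ref{thm:88}. For realizability: $\tT^2=+1$ is the split extension, the semidirect product $\pmo\ltimes_{\phi_A}\TT$ of Definition~\ref{thm:87}, in which the lift $\tT=(-1,1)$ squares to the identity; and $\tT^2=-1$ is realized, for instance, by the normalizer of a maximal torus in $SU(2)$, a representative of the nontrivial Weyl element there squaring to $-\id$. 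Hence there are exactly two classes, distinguished by $\tT^2=\pm1$, which is~(ii). (Cohomologically this says $H^2(\ZZ/2;\TT)\cong\ZZ/2$ when $\ZZ/2$ acts on $\TT$ by complex conjugation; the three resulting classes are the complex, real, and quaternionic cases of the Frobenius-Schur trichotomy.)

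The argument is routine; the only place asking for a little care is the case $A=\pmo$ — checking that $\tT^2$ is forced into $\pmo$ and is independent of the chosen lift (so is a genuine invariant), that it is moreover a \emph{complete} invariant, and that each of the two allowed values is realized by an honest Lie group. Everything else is bookkeeping with Definitions~\ref{thm:87} and~\ref{thm:88}.
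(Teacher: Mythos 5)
Your argument is correct and follows the same route the paper indicates in Remark~\ref{thm:168}, classifying $\phi$-twisted extensions of $A\subset\pmo$ by the (well-defined) square $\tT^2\in\pmo$ of a lift of~$-1$, which is exactly the content of Lemma~\ref{thm:11}(i)--(ii). You have usefully spelled out the completeness and realizability checks that the paper leaves implicit; your $N(T)\subset SU(2)$ model for $\tT^2=-1$ matches the paper's identification in Remark~\ref{thm:78} of the two groups $\tA$ with $\Pin_{\pm2}$.
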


  \begin{remark}[]\label{thm:78}
 In~(ii) the two groups~$\tA$ may be identified with the two
forms~$\Pin_{\pm2}$ of the pin group. 
  \end{remark}

  \begin{remark}[]\label{thm:140}
 If $\phi $~is surjective there may not be an element $\hT\in \tG$ mapping
to~$\bT$ with $\hT^2=\pm1$.  A stronger assumption than $\phi $-standard is
that $\tG\cong \tA\times G_0$, where $G_0$~is the kernel of~$\phi $.  If so,
the operator~$\hT=(T,1)$ in~(ii), where $T$~is a real or quaternionic
structure on~$\sH$ preserved by~$G_0$.
  \end{remark}

  \begin{remark}[]\label{thm:168}
 We prove the more general Proposition~\ref{thm:24} below so do not write out
a proof of the simple Proposition~\ref{thm:34} here.  But we do observe that
the key steps are contained in parts~(i) and~(ii) of Lemma~\ref{thm:11}. 
  \end{remark}

  \subsection*{Symmetries and spacetime}

In this section we have discussed abstract quantum systems and their
symmetries.  Typically the symmetry group of a quantum system has a
homomorphism to symmetries of spacetime, and so in the next section we review
those symmetries in the Galilean setting.  In~\S\ref{sec:10} we resume the
discussion of quantum symmetries and extend the notion of a \Qsymmetry class to
account for one particular aspect of the spacetime symmetry: preservation or
reversal of the arrow of time.

A nonrelativistic quantum system carries an action of (a subgroup of) the
Galilean group, which as we review in~\S\ref{sec:1} is a double cover of the
connected component of symmetries of a Galilean spacetime.  By~\eqref{eq:17}
this determines a central extension of the Galilean group.  There is in fact
a nontrivial central extension, and it encodes the total mass of the
system.\footnote{The mass central extension of the Galilean group can be
constructed as a limit of the Poincar\'e group in which the speed of light
tends to infinity; see~\cite[\S1.4]{DF1} for example.  We describe the Lie
algebra of the mass central extension in the proof of Lemma~\ref{thm:12}.} By
contrast a relativistic quantum system carries an action of the Poincar\'e
group---a double cover of the connected component of symmetries of a
Minkowski spacetime---and the induced central extension~\eqref{eq:17} is
split (see Definition~\ref{thm:86}), as is every central extension of the
Poincar\'e group.  In other words, for $G$~the Poincar\'e group any
homomorphism $\rho \:G\to\QAut(\PH)$ has a lift $\tilde\rho
\:G\to\QAut(\sH)$:
  \begin{equation}\label{eq:18}
     \xymatrix{&&& G\ar[d]^\rho \ar[dl]_{\tilde\rho }\\     
1\ar[r] &\TT\ar[r] &\QAut(\sH)\ar[r]& \QAut(\PH)\ar[r] & 1 } 
  \end{equation}
Nonrelativistic and relativistic quantum systems often have larger symmetry
groups~$G$ equipped with a homomorphism to Galilean or Poincar\'e group.  For
example, $G$~ may include parity-reversal and/or time-reversal symmetries.
There may also be other symmetries which commute with the Galilean or
Poincar\'e symmetries.  These are known as global symmetries, or sometimes as
internal symmetries.

   \section{Symmetries of Galilean spacetime}\label{sec:1}
% lastsubsec@000

  \subsection*{Affine spaces: global parallelism}

Just as Felix Klein's \emph{Erlangen program} defines geometry as the study
of properties invariant under a given symmetry group, so too models of
physical systems are often built around symmetry principles tied to physical
spacetime.  In this spirit nonrelativistic physics is the study of systems
invariant under the Galilean group.  It is useful to \emph{define} the
Galilean group as the group of symmetries of a \emph{Galilean
structure}~$\Gamma $ on spacetime, and one task in this subsection is to
define that structure precisely.  (In fact, we reserve the term `Galilean
group' for a double cover of the identity component of the group of
symmetries of~$\Gamma $.\footnote{One could imagine using the term without
the spin double cover, or to include other components, but in fact our usage
conforms to that adapted for the term `Poincar\'e
group'~\cite[Glossary]{Detal} }) The symmetry group in a condensed matter
system is often the subgroup that preserves a further \emph{crystal
structure}, as we explain.  We linger a bit over the elementary material here
to introduce ideas and definitions that we will use freely in more advanced
contexts later.
 
Space~$E$ and spacetime~$M$ are real finite dimensional \emph{affine} spaces.
An affine space carries a simply transitive action of a real \emph{vector}
space~$V$ of translations: given two points~$p_0,p_1$ in affine space there
is a unique vector~$\xi $ such that $p_1$~is the point reached by starting
at~$p_0$ and displacing by~$\xi $.  We write simply $p_1=p_0+\xi $.  The
action of~$V$ on~$E$ encodes the \emph{global parallelism} of affine space,
its characteristic property.  It is geometrically natural to distinguish
points and displacement vectors.  Displacements form a vector space: we can
add displacements.  However, it does not make sense to add geometric points
in a space: in a flat model of Earth what would be the sum of New York and
Boston?  (Note, however, that weighted averages of points in affine space are
well-defined: there is a definite point 2/3 of the way from New York to
Boston.)  Also, note that there is a distinguished zero displacement vector,
whereas to distinguish a point in the idealized affine space of the world
would be chauvinistic.  Less prosaically, translations act as symmetries of
affine space but not as symmetries of a vector space, since a symmetry of a
vector space must preserve the zero vector.  We denote the vector space of
displacements of space~$E$ as~$V$ and the vector space of displacements of
spacetime~$M$ as~$W$.  It is sometimes helpful to regard elements of~$V,W$ as
global vector fields on~$E,M$, respectively: affine spaces also have a global
\emph{infinitesimal} parallelism.
 
The symmetries of~$V$ are called \emph{linear} transformations and the
symmetries of~$E$ are called \emph{affine} transformations.  If we choose a
basis of~$V$, then we can identify the symmetry group~$\Aut(V)$ as the group
of invertible real square matrices of size~$d=\dim V=\dim E$.  An affine
transformation $f\:E\to E$ is a map which preserves the action of
displacements.  More precisely, if $p_0,p_1,q_0,q_1$ are points in~$E$ such
that the displacements~$p_1-p_0$ and $q_1-q_0$ are equal, then the
displacements $f(p_1)-f(p_0)$ and $f(q_1)-f(q_0)$ are also equal.  It follows
that $f$~induces a well-defined \emph{linear} map~$\df$ on displacements
defined by $\df(\xi )=f(p_1)-f(p_0)$ whenever $p_1=p_0+\xi $ for~$\xi \in V$.
The map~$\df$ is the differential of~$f$ and it is constant on~$E$.  (The
constancy of the derivative characterizes affine maps, and is defined using
the global parallelism.)  The affine maps~$f$ for which $\df$~is the identity
map on displacements are the translations $f(p)=p+\xi _0$ by a fixed
displacement vector~$\xi _0$.  This situation is summarized by the
group extension
  \begin{equation}\label{eq:1}
     1\longrightarrow V\longrightarrow \Aff(E)\longrightarrow
     \Aut(V)\longrightarrow 1 .
  \end{equation}
Here $1$~denotes the trivial group consisting only of an identity element;
$V$~is the abelian group of displacements under the vector addition;
$\Aff(E)$~is the group of affine transformations of~$E$; and $\Aut(V)$~the
group of linear transformations of~$V$.  The map $V\to\Aff(E)$ is the
inclusion of the translation subgroup and $\Aff(E)\to\Aut(V)$ is the
derivative homomorphism $f\mapsto\df$ described above.  Notice that a point
$p\in E$ determines a splitting of~\eqref{eq:1}: the image of~$\Aut(V)$
in~$\Aff(E)$ under the splitting is the subgroup of affine transformations
which fix~$p$.

  \subsection*{Euclidean and Galilean structures}

A geometric structure on the vector space~$V$ induces a translation-invariant
geometric structure on the affine space~$E$.  For example, the space~$E$
in a nonrelativistic system carries a Euclidean metric.  It is specified by a
positive definite inner (dot) product on~$V$, which then determines a
translation-invariant metric on~$E$.  The \emph{Euclidean
group}~$\Euc(E)$ is the group of isometries of~$E$ with its Euclidean
metric.  Since the metric is translation-invariant, $\Euc(E)$~is a subgroup
of~$\Aff(E)$ and it is a group extension
  \begin{equation}\label{eq:2}
     1\longrightarrow V\longrightarrow \Euc(E)\longrightarrow
     O(V)\longrightarrow 1 
  \end{equation}
where $O(V)$~is the orthogonal group of linear transformations which preserve
the inner product.  Relative to an orthonormal basis on~$V$ the orthogonal
group~$O(V)$ is the group of $d\times d$~orthogonal matrices.  Recall that
$O(V)$, hence~$\Euc(E)$, has two components according as a symmetry
preserves or reverses orientation.  In terms of matrices the components are
distinguished by the determinant, which takes values~$\pm1$.  The Euclidean
group contains the classical rigid motions: translations, rotations, and
reflections.

  \begin{definition}[]\label{thm:29}
 A \emph{Galilean structure}~$\Gamma $ on a spacetime~$M$ with translation
group~$W$ is:
  \begin{enumerate}
 \item a distinguished subspace~$V\subset W$ of codimension one with a positive
definite inner product;

 \item a positive definite inner product on the quotient line~$W/V$.
  \end{enumerate}
The pair~$(M,\Gamma )$ is called a \emph{Galilean spacetime}.
  \end{definition}

\noindent
 The subspace~$V$ determines a translation-invariant codimension one
foliation~$\sS$ by affine subspaces, and this gives a global notion of
\emph{simultaneity}: spacetime points on the same leaf occur at the same
time.  There is no distinguished complementary foliation defining worldlines
of spatial points at rest---Galilean observers can undergo boosts.  The
metric on~$W/V$ sets a scale for time; the clock for any observer moves at a
set rate.  There is no set \emph{arrow of time}, which would be an
orientation of~$W/V$, i.e., a positive and negative ``side'' of each
simultaneity slice (leaf of~$\sS$).

  \begin{definition}[]\label{thm:93}
 Let $(M,\Gamma )$ be a Galilean spacetime.  Then $\Aut(M,\Gamma )\subset
\Aff(M)$ is the group of affine transformations of~$M$ which preserve the
Galilean structure~$\Gamma $. 
  \end{definition}

\noindent
 The symmetry group~$\Aut(M,\Gamma )$ is a group extension
  \begin{equation}\label{eq:3}
     1 \longrightarrow W\longrightarrow \Aut(M,\Gamma )\longrightarrow
     \Aut(W,\Gamma )\longrightarrow 1 
  \end{equation}
where $\Aut(W,\Gamma )$~is the space of linear transformations on~$W$ which
preserve the structures in Definition~\ref{thm:29}: the subspace~$V$, its
metric, and the metric on the quotient~$W/V$.  As above, let $d=\dim V$; then
$\dim W=d+1$.  Choose a basis of~$W$ so that the first basis vector projects
onto a unit vector in~$W/V$ and the last $d$~basis vectors form an
orthonormal basis of~$V$.  Then elements of~$\Aut(W,\Gamma )$ are represented
by invertible $(d+1)\times (d+1)$ matrices of the triangular form
  \begin{equation}\label{eq:4}
     \begin{pmatrix} \pm1&0 \\ *&A 
     \end{pmatrix}
  \end{equation}
where $A$~is a $d\times d$ orthogonal matrix.  If $d>0$ the
group~$\Aut(W,\Gamma )$, hence also~$\Aut(M,\Gamma )$, has four components
according to~$\det A=\pm1$ and the sign of the upper left entry
of~\eqref{eq:4}.  More invariantly, an affine transformation which
preserves~$\Gamma $ also preserves the foliation~$\sS$ of simultaneous
events, and it might or might not reverse the orientation of each leaf
of~$\sS$ and might or might not reverse the orientation of the normal bundle
to~$\sS$: the arrow of time.  In other words, there are surjective (onto)
homomorphisms
  \begin{equation}\label{eq:10}
     t,p\:\Aut(M,\Gamma )\longrightarrow \pmo 
  \end{equation}
defined on transformations~$f\:M\to M$ such that $t(f)=\pm1$ according as
$f$~preserves/reverses the arrow of time, and $p(f)=\pm1$ according as
$f$~preserves/reverses the orientation of space.  The value of~$t$ on the
matrix in~\eqref{eq:4} is the upper left corner; the value of~$p$ is~$\det
A$.  Any transformation~$f$ with~$t(f)=-1$ is called \emph{time-reversing}.
For later purposes we single out the following subgroup with two components.

  \begin{definition}[]\label{thm:30}
  $\Aut^+(M,\Gamma )=p\inv (+1)\subset \Aut(M,\Gamma )$. 
  \end{definition}

\noindent
 In other words, $\Aut^+(M,\Gamma )$ is the subgroup of~$\Aut(M,\Gamma )$
consisting of transformations which are orientation-preserving on each leaf
of~$\sS$.  It contains time-reversing transformations.

The fundamental group of~$\Aut(M,\Gamma )$ is isomorphic to that of the
special orthogonal group~$SO(d)$ of space.  The latter has a nontrivial
double cover,\footnote{There are also double cover groups of the
two-component orthogonal group~$O(d)$, and these ``pin groups''~\cite{ABS}
can occur in systems with parity-reversing symmetries.}  the spin
group~$\Spin(d)$, which is simply connected if~$d\ge3$.  Note that
$\Spin(0)$~and $\Spin(1)$ are each cyclic of order two; $\Spin(2)$~is
isomorphic to the circle group, so is connected but not simply connected.

  \begin{definition}[]\label{thm:3}
 The \emph{Galilean group} of~$(M,\Gamma )$ is the spin double cover of the
identity component of~$\Aut(M,\Gamma )$.
  \end{definition}

\noindent
 Some authors might use `Galilean group' to refer to~$\Aut(M,\Gamma )$ or
some other group of symmetries of spacetime; as remarked earlier our choice
is analogous to the definition of `Poincar\'e group'
in~\cite[Glossary]{Detal}.  We can realize the Galilean group as the group of
symmetries of a geometric structure.  Namely, endow the vector space~$V$ with
an orientation and spin structure and also orient the quotient~$W/V$.  This
induces a translation-invariant orientation and spin structure on the
simultaneity foliation~$\sS$ as well as an orientation on the normal bundle
to~$\sS$, an arrow of time.  The Galilean group is the group of symmetries
of~$M$ which preserve these structures.  The Galilean group is connected
for~$d\ge2$ and has two components for~$d=0,1$.

  \subsection*{Tying abstract symmetry groups to spacetime}

  \begin{definition}[]\label{thm:94}
 Let $(M,\Gamma )$ be a Galilean spacetime.  A \emph{Galilean symmetry group}
is a triple~$(G,\gamma ,j)$ consisting of a Lie group~$G$, a homomorphism 
  \begin{equation}\label{eq:98}
     \gamma \:G\longrightarrow \Aut(M,\Gamma ), 
  \end{equation}
and a splitting 
  \begin{equation}\label{eq:99}
     j\:W\cap\gamma (G)\longrightarrow G 
  \end{equation}
of~$\gamma $ over the group of spacetime translations in the image of~$\gamma
$.  The image of~$j$ is required to be a normal subgroup of~$G$.
  \end{definition}

  \begin{remark}[]\label{thm:95}
 The homomorphism~$\gamma $ need not be surjective or injective.  The
surjectivity is violated if, for example, $G$~is the group of symmetries of a
system which has only a proper subgroup of Galilean symmetries, as with the
crystals below.  If there is extra internal structure on spacetime---say
extrinsic ``spins'' at each spacetime point or each point of a crystal in
spacetime---then $\ker\gamma $ may consist of internal symmetries which fix
the points of spacetime.  Any such internal structure should have a global
parallelism extending that of the affine space~$M$, which is why we
hypothesize the splitting~\eqref{eq:99} over spacetime translations.  On the
other hand, internal structures need not be homogeneous with respect to
rotations, reflections, and boosts, which is why we only hypothesize the
splitting over translations.  The quotient of~$G$ by the image of~$j$ maps to
the group~$\Aut(W,\Gamma )$ of linear symmetries.  We imagine these acting on
some model internal structure in the linear space~$W$, and it is the
existence of this quotient which justifies the normality assumption on the
image of~$j$.
  \end{remark}

  \begin{remark}[]\label{thm:127}
 There is a similar definition for Minkowski spacetimes. 
  \end{remark}

In some situations a time direction is distinguished, and this breaks the
boost symmetries: 

  \begin{definition}[]\label{thm:96}
 Let $(M,\Gamma )$ be a Galilean spacetime.  A \emph{time direction} is a
splitting of the exact sequence 
  \begin{equation}\label{eq:100}
     0 \longrightarrow V\longrightarrow W\longrightarrow W/V\longrightarrow
     0 
  \end{equation}
of spacetime translations. 
  \end{definition}

\noindent
 Let $U\subset W$ be the image of the splitting $W/V\hookrightarrow W$.  Then
$U$~defines a distinguished translation-invariant set of affine lines in~$M$
(each with tangent~$U$) which are transverse to the leaves of the
simultaneity foliation~$\sS$.  They are worldlines of distinguished
observers.  The inner product on~$W/V$ induces one on~$U$, and so two
distinguished unit-speed motions on each distinguished worldline.  These are
distinguished clocks in spacetime, though without a time-orientation the
clocks can run either forward or backward (and no observer can distinguish
forward from backward).

  \subsection*{Crystals and crystallographic groups}

We define subgroups of~$\Aut(M,\Gamma )$ of interest in condensed matter
physics, namely those which preserve a \emph{crystal}~$C$ in spacetime.

  \begin{definition}[]\label{thm:31}
 Let $(M,\Gamma )$ be a Galilean spacetime.  
  \begin{enumerate}
 \item A \emph{crystal}~$C$ is a subset of~$M$ such that (i)~the
subgroup~$\Pi \subset V$ of spatial translations which preserve~$C$ is a
lattice of full rank, and (ii)~there is a line~$U\subset W$ of spacetime
translations complementary to~$V$ such that translations in~$U$ preserve~$C$.

 \item Let $G(C)\subset \Aut(M,\Gamma )$ be the subgroup of transformations
which preserve a crystal~$C$.  The quotient~$G(C)/U$ is the \emph{spacetime
crystallographic group} associated to the crystal~$C$.

  \end{enumerate}
  \end{definition}

\noindent
 The subspace~$U$ is a time direction in the sense of
Definition~\ref{thm:96}, so $G(C)$~contains no Galilean boosts.  Note that
the matrices~\eqref{eq:4} which preserve the direct sum decomposition
$W=U\oplus V$ are block diagonal---the boost component~$*$ must be the zero
vector.  The spacetime crystallographic group is an extension
  \begin{equation}\label{eq:6}
     1\longrightarrow  \Pi \longrightarrow G (C)/U\longrightarrow
     \hP\longrightarrow 1 
  \end{equation}
whose translation subgroup consists of the lattice~$\Pi $ of discrete spatial
translations; the quotient is called the \emph{magnetic point group}
$\hP\subset \Aut(W,\Gamma )$.  The crystallographic group is called
\emph{symmorphic} if \eqref{eq:6}~splits and \emph{non-symmorphic} otherwise.
The homomorphism $t\:\Aut(W,\Gamma )\to\pmo$ in~\eqref{eq:10},
which maps a linear Galilean transformation onto its action on~$W/V$---which
is either the identity (time-preserving) or minus the identity
(time-reversing)---fits the magnetic point group into a group extension
  \begin{equation}\label{eq:7}
     1\longrightarrow P\longrightarrow \hP\xrightarrow{\;\; t\;\; }
     \pmo\longrightarrow 1 
  \end{equation}
whose kernel~$P$ is the \emph{point group}.  Both $P$~and $\hP$~are finite
groups.  The extension~\eqref{eq:7} is naturally split by the
involution~$\epsilon $ of~$W=U\oplus V$ which is~$-1$ on~$U$ and~$+1$ on~$V$.
Conjugation by~ $\epsilon $ reverses time translations and fixes spatial
translations.  

  \begin{example}[]\label{thm:97}
 Let $\MM^3=\RR\times \EE^2$ be the Cartesian product of the standard time line
and standard Euclidean 2-space with the evident Galilean structure.
Define~$C$ to be the Cartesian product of~$\RR$ with~$C_0\subset \EE^2$,
where 
  \begin{equation}\label{eq:101}
     C_0=\bigl\{(x^1,x^2):x^1,x^2\in \ZZ\bigr\}\;\cup\;\bigl\{ (x^1+\delta
     ,x^2+1/2): x^1,x^2\in \ZZ\bigr\} 
  \end{equation}
for~$0<\delta <1/2$.  Let $G\subset \Euc(\EE^2)$ be the group of Euclidean
symmetries which preserve~$C_0$.  It is the subgroup of the crystallographic
group~$G(C)/U$ of time-preserving transformations modulo time translations.
Then $G$~is a group extension
  \begin{equation}\label{eq:102}
     1\longrightarrow \Pi \longrightarrow G\longrightarrow \zt\times
     \zt\longrightarrow 1 
  \end{equation}
Here $\Pi \subset \RR^2$ is the full lattice of translations by vectors~$(n
^1,n ^2)$ with integer components~$n ^1,n ^2\in \ZZ$.  The quotient
group~$\zt\times \zt$ is generated by reflections~$g_1'',g_2''$ in the two
axes of the vector space~$\RR^2$.  Then $g_1''$~lifts to the \emph{glide
reflection}
  \begin{equation}\label{eq:103}
     g_1\:(x^1,x^2)\longmapsto (-x^1+\delta ,x^2+1/2) 
  \end{equation}
which interchanges the two sets in~\eqref{eq:101}, whereas $g_2''$~lifts to
the reflection 
  \begin{equation}\label{eq:104}
     g_2\:(x^1,x^2)\longmapsto (x^1,-x^2). 
  \end{equation}
Note that $(g_1)^2$~is translation by~$(0,1)$ whereas $(g_2)^2$~is the
identity.  Any other lift of~$g_1''$ is~$\tau g_1$ for some~$\tau \in \Pi $,
and $(\tau g_1)^2$~is again a nonzero translation.  This proves that
\eqref{eq:102}~is not split, and so $G(C)/U$~is non-symmorphic.   
 
For later use we note that the commutator $g_2\inv g_1\inv g\mstrut
_2g\mstrut _1$ is translation by~$(0,1)$.
  \end{example}

Finally, some condensed matter systems are described by specifying a subset
of $M\times V$ rather than a subset of~$M$.  Intuitively, this is a set of
points with spatial tangent vectors attached to each point.\footnote{More
precisely, we should allow either~$V$ or ${\textstyle\bigwedge} ^{d-1}V^*$
depending on whether we attach vectors or axial vectors.  Physically, these
represent localized displacements (as in ferroelectrics) or current loops and
atomic spins (as in ferromagnetics).  One could of course consider more
general order parameters.}

  \begin{definition}[]\label{thm:32}
  Let $(M,\Gamma )$ be a Galilean spacetime.  A \emph{spin crystal}~$C$ is a
subset of~$M\times V$ such that (i)~the subgroup~$\Pi \subset V$ of spatial
translations whose elements preserve~$C$ is a full lattice, and (ii)~there is
a line~$U\subset W$ of spacetime translations complementary to~$V$ which
preserve~$C$.
  \end{definition}

\noindent
 Since we can take the tangent vectors to be zero, Definition~\ref{thm:32}
generalizes Definition~\ref{thm:31}.  As before, the symmetry group~$G(C)$
contains the group~$U$ of time translations, the quotient~$G(C)/U$ is an
extension~\eqref{eq:6}, and we define the magnetic point group to be the
quotient~$\hP=G(C)/(U\times \Pi )$.  But for spin crystals the homomorphism
$t\:\hP\to\pmo$ need not be surjective---there might not be time-reversal
symmetries---and even if $t$~is surjective the extension~\eqref{eq:7} might
not split, as the following example illustrates.

  \begin{example}[]\label{thm:142}
 We use the standard Galilean spacetime~$\MM^3=\RR\times \EE^2$ of
Example~\ref{thm:97}.  Fix $0<\delta _1\not= \delta _2<1/2$ and a nonzero
vector~$\xi \in V=\RR^2$, and set $\ZZ^2=\{(x^1,x^2):x^1,x^2\in \ZZ\}\subset
\EE^2$.  Let
  \begin{equation}\label{eq:153}
     \begin{aligned} 
   C_0 =   
           \quad\; &\bigl((\phantom{-}\delta _1,\phantom{-}\delta _2)+\ZZ^2
           \bigr)\times \{\xi \} 
         \;\cup\;  \bigl((-\delta _2,\phantom{-}\delta _1)+\ZZ^2 \bigr)\times
           \{-\xi \}\\ 
         \;\cup\;   &\bigl((-\delta _1,-\delta _2)+\ZZ^2 \bigr)\times
         \{\xi \} 
         \;\cup\;  \bigl((\phantom{-}\delta _2,-\delta _1)+\ZZ^2 \bigr)\times \{-\xi \}
      \end{aligned} 
  \end{equation}
which is a subset of $\EE^2\times \RR^2$.  The spin crystal is $C=\RR\times
C_0\subset \MM^3\times \RR^2$.  Then $\hP$~is cyclic of order~4 with
generator the linear transformation of~$W=\RR\oplus \RR^2$ which is $-1$~on
the first summand and a $\pi /2$~rotation on the second.  The homomorphism
$t\:\hP\cong \zmod4\to\zt$ is a nonsplit surjection.
  \end{example}

  \begin{remark}[]\label{thm:1}
 As a crystal breaks symmetries which mix time and space, it can equally be
embedded in a relativistic (Minkowski) spacetime and its symmetry group~$G$
embedded in the symmetry group of relativistic quantum mechanics.
  \end{remark}

   \section{Extended \Qsymmetry classes}\label{sec:10}
% lastsubsec@000

  \subsection*{Quantum symmetries and time-reversal}

Resuming the discussion in~\S\ref{sec:2}, suppose $G$~is a Galilean symmetry
group (Definition~\ref{thm:94}) on a Galilean spacetime~$(M,\Gamma )$, and
let $G$~act as projective quantum symmetries on a Hilbert space~$\sH$.  Let
$\rho \:G\to\QAut(\PH)$ denote the action of~$G$ , and let $(G,\phi ,\tau )$
be the induced \Qsymmetry type (Definition~\ref{thm:4}).  Recall that the
homomorphism
  \begin{equation}\label{eq:65}
     \phi \:G\longrightarrow \pmo
  \end{equation}
tracks which symmetries are implemented linearly and which antilinearly.
Composing the homomorphism~\eqref{eq:98} with ~\eqref{eq:10} we obtain a
homomorphism
  \begin{equation}\label{eq:60}
     t\:G\longrightarrow \pmo
  \end{equation}
which tracks time-reversal: $t(g)=1$~if a symmetry~$g$ preserves the arrow of
time and $t(g)=-1$ if $g$~reverses the arrow of time.  \emph{A priori} the
homomorphisms~$\phi ,t$ are independent, but there is a standard argument
which is often invoked to identify them.  We review it now.

A one-parameter subgroup $\RR\to\tG$ induces via~$\rho ^\tau $
in~\eqref{eq:17} a one-parameter group~$s\mapsto e^{is\sO /\hbar}$ of unitary
operators on~$\sH$, where $\sO $~is a self-adjoint operator.  (The
parameter~$s$ typically does not have dimensions of time.)  In particular,
suppose $(M,\Gamma )$ has a distinguished time direction
(Definition~\ref{thm:96}); the Galilean symmetry group~$G$ includes the
distinguished time translation subgroup~$U\subset G$, which comes with a
fixed isomorphism $\RR\cong U$; and assume that $\tG\to G$ is split
over~$U\subset G$.  Then time translation gives a one-parameter
group~$t\mapsto e^{-itH/\hbar}$ of unitary transformations whose self-adjoint
generator~ $H$ is the \emph{Hamiltonian} of the distinguished time
evolution.\footnote{The sign convention~\cite{DF2} is that the self-adjoint
Hamiltonian~ $H$ corresponds to \emph{minus} infinitesimal time translation.}
We assume oriented time directions in the sequel, and so have systems with a
well-defined Hamiltonian.

  \begin{lemma}[]\label{thm:90}
 \  
  \begin{enumerate}
 \item For all $g\in \tG$ we have 
 \begin{equation}\label{eq:62}
     \tar (g)H=\phi (g)t(g)H\tar (g).
  \end{equation} 

 \item If $H$~ is bounded below and unbounded above, then $\phi =t$.

  \end{enumerate} 
   \end{lemma}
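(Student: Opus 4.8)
The plan is to exploit the fact that $\tar$ is a $(\phi,\tau)$-twisted representation together with the compatibility of the time-translation subgroup $U\subset G$ with the Hamiltonian. For part (i), fix $g\in\tG$ with image $\bar g\in G$. Conjugation by $\tar(g)$ on the one-parameter group $t\mapsto e^{-itH/\hbar}$ produces another one-parameter group of operators; the point is to compute it. If $\phi(\bar g)=+1$ then $\tar(g)$ is complex linear, so $\tar(g)e^{-itH/\hbar}\tar(g)\inv = e^{-it\,\tar(g)H\tar(g)\inv/\hbar}$, whereas if $\phi(\bar g)=-1$ then $\tar(g)$ is complex antilinear, so it conjugates $e^{-itH/\hbar}$ to $e^{+it\,\tar(g)H\tar(g)\inv/\hbar}$ — the antilinearity flips the sign of $i$. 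On the other hand, $\tar(g)$ acts on the abstract group $\tG$ by an inner automorphism covering conjugation by $\bar g$ on $G$; since the image of $U$ in $G$ is normal (it is the translation subgroup $j(W\cap\gamma(G))$, required normal in Definition~\ref{thm:94}), conjugation by $\bar g$ carries $U$ to itself, and on $U\cong\RR$ it acts by the sign $t(\bar g)$ — this is exactly the statement that $t$ records the action on the time-translation line $W/V$, compatibly with \eqref{eq:10}. Hence $\tar(g)\,e^{-itH/\hbar}\,\tar(g)\inv = e^{-i\,t(\bar g)tH/\hbar}$ as operators. Comparing the two computations: the left side is $e^{\mp it\,\tar(g)H\tar(g)\inv/\hbar}$ with sign $\phi(\bar g)$, and this must equal $e^{-it\,t(\bar g)H/\hbar}$ for all real $t$, so differentiating at $t=0$ gives $\phi(\bar g)\,\tar(g)H\tar(g)\inv = t(\bar g)H$, i.e. $\tar(g)H = \phi(\bar g)t(\bar g)\,H\tar(g)$, which is \eqref{eq:62}.

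For part (ii), suppose $\phi(\bar g)t(\bar g)=-1$ for some $g\in\tG$; then \eqref{eq:62} reads $\tar(g)H=-H\tar(g)$, so $\tar(g)$ conjugates $H$ to $-H$ and therefore carries the spectrum of $H$ onto its negative, $\sigma(-H)=-\sigma(H)$. If $H$ is bounded below, $-H$ is bounded above; if $H$ is also unbounded above, then $-H$ is unbounded below. But $\tar(g)$ is a (real-linear) invertible isometry of $\sH$, hence a spectrum-preserving conjugation up to the sign flip already accounted for, so $\sigma(-H)=\sigma(H)$; thus $H$ would be simultaneously bounded below and unbounded below, a contradiction. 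Therefore $\phi(\bar g)t(\bar g)=+1$ for all $g$, i.e. $\phi=t$ as homomorphisms $G\to\pmo$ (using that every element of $G$ lifts to $\tG$).

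The main obstacle is the careful bookkeeping in the first computation: tracking how antilinearity of $\tar(g)$ converts $e^{-itH/\hbar}$ to $e^{+itH/\hbar}$, and separately verifying that conjugation by $g$ on the time-translation subgroup $U$ acts by the scalar $t(\bar g)$ — this second point is where the geometric input (the normality of $j(W)$ in $G$ and the definition of $t$ via the action on $W/V$) genuinely enters, and it should be stated as a short lemma or invoked from \S\ref{sec:1}. Everything else is Stone's theorem plus matching one-parameter groups, which is routine once the signs are pinned down.
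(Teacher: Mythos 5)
Your proposal is correct and follows essentially the same line of argument as the paper: conjugate the one-parameter unitary group $e^{-itH/\hbar}$ by $\tar(g)$, observe that the sign flip from antilinearity is tracked by $\phi(g)$ while the sign flip from conjugation on the time-translation subgroup $U$ is tracked by $t(g)$, differentiate (Stone's theorem) to obtain \eqref{eq:62}, and for (ii) note that $\phi(g)t(g)=-1$ would force $\tar(g)$ to flip the spectrum of $H$. The paper compresses the bookkeeping into the single line $\tar(g)e^{-itH/\hbar}=e^{-t(g)itH/\hbar}\tar(g)$, then extracts $\tar(g)iH=t(g)iH\tar(g)$ and lets the $\phi$-dependence fall out when pulling $i$ past $\tar(g)$; your version splits into cases on $\phi(g)$ first, but the content is identical. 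One small inaccuracy: you justify that conjugation by $\bar g$ preserves $U$ by citing normality of $j(W\cap\gamma(G))$, but that subgroup contains the spatial translations as well as $U$, and its normality alone does not give normality of $U$. The correct reason, implicit in the paper's setup, is that a crystal fixes a time direction $U\subset W$ and so $G$ contains no boosts, whence conjugation on $W$ preserves the splitting $W=U\oplus V$ and acts on $U$ by the scalar $t(g)$; this is a harmless gap and does not affect the rest.
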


  \begin{proof}
 The distinguished time translations commute or anticommute with~$g\in \tG$,
according to~$t(g)$, and so
  \begin{equation}\label{eq:147}
     \tar(g)e^{-itH/\hbar} = e^{-t(g)itH/\hbar}\tar(g).
  \end{equation}
Therefore, $\tar(g)iH = t(g)iH\tar(g)$, from which \eqref{eq:62}~follows.
The conclusion in~(ii) is immediate since if $\phi (g)t(g)=-1$, then
$\tar(g)$~flips the spectrum of~$H$.
  \end{proof}

\noindent
 Because of this lemma it is usually assumed that time-reversing symmetries
are antiunitary and time-preserving symmetries are unitary.  However, the
hypotheses do not apply to some Dirac Hamiltonians in condensed matter
systems, and so we allow $\phi $~and $t$~to be independent.  

Define
  \begin{equation}\label{eq:61}
     c=\phi t\:G\longrightarrow \pmo .
  \end{equation}
This homomorphism tracks whether a symmetry commutes or anticommutes
with~$H$.  A symmetry which commutes with~$H$ is called
\emph{Hamiltonian-preserving}; a symmetry which anticommutes with~$H$ is
called \emph{Hamiltonian-reversing}.  Of course, any two of~$t,c,\phi $
determine the third.

  \subsection*{Extended \Qsymmetry classes and gapped systems}

We now augment Definition~\ref{thm:4}(ii) of a \Qsymmetry class and
Definition~\ref{thm:15} of a twisted representation to include the
homomorphism~\eqref{eq:61} which tracks whether symmetries commute or
anticommute with the Hamiltonian.

  \begin{definition}[]\label{thm:84}
 \  
  \begin{enumerate}
 \item An \emph{extended \Qsymmetry class} is an isomorphism class of
quadruples~$(G,\phi ,\tau,c)$ where $G$~is a Lie group, $\phi
\:G\to\pmo$ is a continuous homomorphism, $\tG$~is a $\phi $-twisted 
extension, and $c\:G\to\pmo$ is a continuous homomorphism.

 \item A homomorphism $\tar\:\tG\to\Aut_{\RR}(E)$ to real automorphisms of a
$\zt$-graded complex vector space~$E$ is a \emph{$(\phi ,\tau ,c)$-twisted
representation of~$G$} if it satisfies the conditions of
Definition~\ref{thm:15} and in addition $\tar(g)$~is even if $c(g)=1$ and odd
if~$c(g)=-1$.
  \end{enumerate} 
   \end{definition}

\noindent 
 As before, we often call a quartet~$(G,\phi ,\tau ,c)$ an extended \Qsymmetry
class, though strictly speaking the \Qsymmetry class only remembers the
isomorphism class of the $\phi $-twisted  extension $\tG\to G$.
Lemma~\ref{thm:90}(i) specializes to the condition in
Definition~\ref{thm:84}(ii) if we take $H$ to be the grading operator on~$E$.

We are now ready to abstract some basic properties of gapped free fermion
systems.

  \begin{definition}[]\label{thm:25}
 A \emph{gapped system with extended \Qsymmetry class~$(G,\phi ,\tau ,c)$} is
a triple~$(\sH,H,\tar )$ consisting of a complex separable $\zt$-graded
Hilbert space~$\sH$, a self-adjoint operator~$H$ acting on~$\sH$, and a
$(\phi ,\tau ,c)$-twisted representation $\tar \:\tG\to\QAut(\sH)$ such that
  \begin{enumerate}
 \item the commutation relation~\eqref{eq:62} holds: 
  \begin{equation}\label{eq:149}
     \tar (g)H=c(g)H\tar (g);
  \end{equation}

 \item the Hamiltonian~$H$ is invertible with bounded inverse. 
  \end{enumerate} 
  \end{definition}

\noindent 
 If $H$~is an unbounded self-adjoint operator, then (ii)~means that 0~is in
its resolvent set, i.e., not in its spectrum.

  \begin{remark}[]\label{thm:98}
 It is probably better to include the Hamiltonian as part of~$G$ by assuming
that $G$~is a Galilean symmetry group with a distinguished oriented time
direction~$U\subset G$, which then comes with a fixed isomorphism $\RR\cong
U$, and further assuming that $\tG\to G$ is split over~$U\subset G$.  Then
the Hamiltonian~$H$ is computed from the 1-parameter group~$e^{-itH/\hbar}$
which is part of the representation~$\tar$.  Instead, we exclude time
translations from our symmetry groups, as in Definition~\ref{thm:31}(ii) of
crystallographic groups, and so keep the Hamiltonian distinct from the
symmetry group.  Nonetheless, for the definitions in Appendix~\ref{sec:14} it
is useful to incorporate the Hamiltonian into the symmetry, and we do so by
constructing a \Qsymmetry class~$(\tilG,\tp,\tilde\tau )$ from an extended
\Qsymmetry class as follows.  Let $\tilG=G\ltimes_c\RR$ where $G$~acts
on~$\RR$ via the homomorphism $c\:G\to\pmo$.  (See Definition~\ref{thm:87}
for semidirect products.)  Similarly, set $\tilG^\tau =\tG\ltimes \RR$ where
$\tG$~acts via the composition $\tG\to G\xrightarrow{c}\pmo$.  Finally,
define $\tp\:\tilG^\tau \to\tG\to G\xrightarrow{\phi }\pmo$.  Then a gapped
system with extended \Qsymmetry class~$(G,\phi ,\tau ,c)$ is precisely a
$(\tp,\tilde\tau )$-twisted representation of~$\tilG$ such that the
self-adjoint generator~$H$ of the unitary 1-parameter group $\RR\subset
\tilG^\tau \to U(\sH)$ does not contain~0 in its spectrum.
  \end{remark}

  \begin{remark}[]\label{thm:92}
 We will add a finiteness hypothesis on the Hilbert space~$\sH$ for
particular classes of gapped systems, depending on the dimension of the
system and the symmetry group.  These will be spelled out in~\S\ref{sec:11}
and~\S\ref{sec:6}.  The representation~$\tar $ obeys conditions which are
specified in Definition~\ref{thm:25}.  The second hypothesis in
Definition~\ref{thm:25}, that the Hamiltonian be invertible, is specific to
``gapped'' systems of free fermions.  We do \emph{not} assume any boundedness
of the Hamiltonian, such as appear in Lemma~\ref{thm:90}(ii).  The assumption
that 0~is not in the spectrum is a bit arbitrary.  We can shift the
Hamiltonian by a constant without changing the physics; our assumption is
really that there is a spectral gap.  Often one number, the \emph{Fermi
energy level}, is singled out in that gap.  We take that Fermi level to be~0.
We have nothing to say in this paper about systems without a gap.
  \end{remark}

   \section{Free fermions}\label{sec:17}
% lastsubsec@000

  \subsection*{Symmetries in algebraic quantum mechanics}

Quantum mechanics is usually described, as in~\S\ref{sec:2}, in terms of a
Hilbert space of states.  Here we focus instead on the algebra of
observables\footnote{The second author objects to this standard terminology
since the observables, which are self-adjoint, do not form an associative
algebra (though they do form a Jordan algebra).}  in an abstract form.  For a
recent exposition of algebraic quantum mechanics, see~\cite[\S2]{FR}.

Let~$\sA$ be a \emph{complex topological\footnote{In the examples we consider
$\sA$~is finite dimensional, so there is a unique topology compatible with
the vector space structure.} $*$-algebra}.  Thus $\sA$~is a complex
topological vector space equipped with a continuous antilinear involution
$*\:\sA\to\sA$ which satisfies $(ab)^* = b^*a^*$ for all~$a,b\in \sA$.  An
element~$a\in \sA$ is \emph{real} (\emph{self-adjoint}) if $a^*=a$.  A
self-adjoint element is an abstract observable; a \emph{Hamiltonian} is a
distinguished observable which is used to generate motion.

  \begin{remark}[]\label{thm:177}
 The abstract system described by~$\sA$ can be made concrete by the choice of
an \emph{$\sA$-module}, a complex separable Hilbert space~$\sH$ and a
continuous $*$-homomorphism $\sA\to\End(\sH)$ into bounded\footnote{One can
also allow unbounded operators.  In our examples $\sA$~and $\sH$~are finite
dimensional and all operators are bounded.} operators on~$\sH$.  A
\emph{state} on~$\sA$ is a positive linear functional $\sA\to\CC$, and at
least for $C^*$-algebras a state determines an $\sA$-module via the
GNS~construction.
  \end{remark}

For systems with fermions it is natural to assume that the operator
algebra~$\sA$ is $\zt$-graded.  In that case the Koszul sign rule demands
that the $*$~involution satisfy
  \begin{equation}\label{eq:180}
     (ab)^* = (-1)^{|a|\,|b|} b^*a^* 
  \end{equation}
for homogeneous elements~$a,b\in \sA$.  Here $|a|\in \{0,1\}$ is the parity
of~$a$.   

  \begin{definition}[]\label{thm:171}
 \ 

  \begin{enumerate}
 \item Let $G$~be a topological group with $\zt$-grading $\phi \:G\to\pmo$,
and let $\sA$~be a $\zt$-graded topological $*$-algebra.  An \emph{action}
of~$(G,\phi )$ on~$\sA$ is a homomorphism
  \begin{equation}\label{eq:181}
     \alpha \:G\longrightarrow \Aut_{\RR}(\sA) 
  \end{equation}
such that for all~$g\in G$ the real linear map~ $\alpha (g)$ is complex
linear or antilinear according as $\phi (g)=+1$ or~$-1$ and $\alpha(g)$~
preserves the $*$-structure: 
  \begin{equation}\label{eq:217}
     \alpha (g)(a^*) = \bigl[\alpha (g)(a)\bigr]^*
  \end{equation}
for all~$a\in \sA$.

\item  Let  $H\in  \sA$  be  a  Hamiltonian.   The  action~$\alpha  $  is  a
\emph{symmetry}  of~$(\sA,H)$ if  there exists  a  homomorphism $c\:G\to\pmo$
such that
  \begin{equation}\label{eq:182}
     \alpha (g)(H) = c(g)H,\qquad g\in G. 
  \end{equation}

  \end{enumerate}
  \end{definition}

\noindent
 The homomorphism~$c$ is the operator algebra
analog of the homomorphism~$c$ in~\eqref{eq:149}.

  \begin{remark}[]\label{thm:172}
 If $\sH$ is an \emph{irreducible} $\sA$-module preserved by the
automorphism~$\alpha (g)$ of~$\sA$, then that automorphism is realized
on~$\sH$ by a line of complex linear or antilinear automorphisms.  This leads
to a $\phi $-twisted extension of~$G$ and the (easy) analog of Wigner's
theorem in this algebraic context.  As before, the $\zt$-grading~$\phi $ also
encodes the complex linearity vs. antilinearity of the real linear operators
on~$\sH$.
  \end{remark}

  \subsection*{Free fermions}

Let\footnote{The notation `$\sM$' derives from `mode space'.} $\sM$~be a
finite dimensional real vector space with positive definite inner
product~$b$.  It encodes a system of finitely many fermions for which the
operator algebra is 
  \begin{equation}\label{eq:183}
     \sA = \CMbC. 
  \end{equation}
Recall that the \emph{Clifford algebra} $\CMb$ is the free unital real
algebra generated by~$\sM$ subject to the canonical (anti)commutation
relations
  \begin{equation}\label{eq:184}
     \xi _1\xi _2 + \xi _2\xi _1 = 2b(\xi _1,\xi _2),\qquad \xi _1,\xi _2\in
     \sM. 
  \end{equation}
(See Appendix~\ref{sec:12} for a review of Clifford algebras.)  Define the
$*$-structure on~\eqref{eq:183} by\footnote{From the point of view of
canonical quantization it is more natural~\cite[p.~679]{Detal} to work
with~$\hat\xi =(\sqo\,\hbar)^{1/2}\xi $, in which case the $*$-operation
becomes complex conjugation.}
  \begin{equation}\label{eq:185}
     \xi ^* = \sqo\,\xi ,\qquad \xi \in \sM. 
  \end{equation}
Then, for example, from~\eqref{eq:180} we deduce 
  \begin{equation}\label{eq:186}
     (\xi _1\xi _2)^* = -\xi _2^*\xi _1^* = \xi _2\xi _1\qquad
     \textnormal{for all $\xi _1,\xi _2\in\sM$}. 
  \end{equation}
 
Let $\OMb$~denote the orthogonal group of~$\Mb$ and $\oMb$~its Lie algebra.
There is an embedding $\oMb\otimes \CC\subset \sA$. 

  \begin{definition}[]\label{thm:173}
 A \emph{free fermion Hamiltonian}\footnote{More fundamentally, a free
fermion Hamiltonian is an even element in the second (quadratic) filtered
subspace of the Clifford algebra, so has the form $H^{ij}e_ie_j+c$ for
$H^{ij}\in \CC$~arbitrary and $c\in \CC$.  Self-adjointness then leads to
the same definition we give in the text, except we omit the constant~$c$
(which must be real).} is a self-adjoint element of $\oMb\otimes \CC\subset
\sA$.
  \end{definition}

\noindent
 Let $e_1,\dots ,e_n$ be an orthonormal basis of~$\sM$.  An element of
$\oMb\otimes \CC\subset \sA$ has the form~$H^{ij}e_ie_j$ for some~$H^{ij}\in
\CC$ which satisfy $H^{ij}=-H^{ji}$.  The self-adjointness implies
$H^{ij}=\sqo A^{ij}$ for $A^{ij}\in \RR$, and necessarily $A^{ij} = -A^{ji}$.

  \begin{construction}[]\label{thm:174}
 Let $\Mb$ be a finite dimensional real inner product space and $(G,\phi )$ a
$\zt$-graded topological group.  An {action} of~$G$ on the free fermion
system based on~$\Mb$ is determined by a continuous homomorphism $\beta
\:G\to \OMb$.  Extend the induced homomorphism $G\to\Aut_{\RR}\bigl(\CMb
\bigr)$ to a homomorphism~\eqref{eq:181} into real automorphisms of
$\sA=\CMbC$ with $\alpha (g),\;g\in G,$~complex linear or antilinear
according as $\phi (g)=+1$ or~$-1$.  Recall that an orthogonal transformation
$P\in \OMb$ acts as an automorphism of the real Clifford algebra~$\CMb$ by
  \begin{equation}\label{eq:187}
     P\cdot (\xi _1\cdots \xi _k) = P\xi _1\cdots P\xi _k,\qquad \xi _i\in
     \sM. 
  \end{equation}
Let $\xi _1,\dots ,\xi _k\in \sM,\; x,y\in \RR$.  Then $g\in G$ acts on~$\sA$ by
  \begin{equation}\label{eq:212}
     \alpha (g)\bigl(\xi _1\cdots\xi _k\otimes (x+\sqo\,y) \bigr) =
     \begin{cases} 
     \beta
     (g)\xi _1\cdots \beta (g)\xi _k\otimes (x + \sqo\,y),&\phi (g)=1;\\\beta
     (g)\xi _1\cdots \beta (g)\xi _k\otimes (\sqo)^k(x - \sqo\,y),&\phi
     (g)=-1.\end{cases} 
  \end{equation}
The extra factor of~$(\sqo)^k$ is explained by footnote~15, in which $\hat\xi
$~belongs to the subspace of~$\sA$ fixed by~$*$.  The action~\eqref{eq:212}
satisfies~\eqref{eq:217}.

The extension to real automorphisms of~$\sA$ provides the data of
Definition~\ref{thm:171}. 
  \end{construction}

  \subsection*{Free fermion Hamiltonians and classical symmetric spaces}

We reprove a theorem of Altland-Heinzner-Huckleberry-Zirnbauer~\cite{AZ,HHZ}.

  \begin{theorem}[]\label{thm:175}
 Suppose the $\zt$-graded group~$(G,\phi )$ acts on the free fermion system
based on the inner product space~$\Mb$, as in Construction~\ref{thm:174}.  Fix
$c\:G\to\pmo$.  Let $\fp$ denote the real vector space of free fermion
Hamiltonians which are invariant under the action~$\alpha $, as
in~\eqref{eq:182}.  Then $\exp(\fp)$~is a classical symmetric space of
compact type. 
  \end{theorem}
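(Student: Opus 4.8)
The plan is to identify $\fp$ explicitly as a subspace of the Lie algebra $\oMb\otimes\CC$ cut out by linear conditions coming from the $G$-action, exhibit $\exp(\fp)$ as a totally geodesic submanifold of a larger compact group or symmetric space, and then invoke the classification of such submanifolds. First I would record that a free fermion Hamiltonian $H\in\oMb\otimes\CC$ is self-adjoint with respect to the $*$-structure \eqref{eq:185}, and by the computation following Definition~\ref{thm:173} the self-adjoint elements of $\oMb\otimes\CC$ are precisely $\sqo A$ with $A\in\oMb$ real; thus the space of \emph{all} free fermion Hamiltonians is $\sqo\cdot\oMb$, and $\exp(\sqo\cdot\oMb)$ sits naturally inside the \emph{complexified} orthogonal group (or, after exponentiating inside a spin/pin representation, inside a unitary group acting on a Fock space). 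The key structural point is that $\sqo\cdot\oMb$ is the $(-1)$-eigenspace of the Cartan involution $X\mapsto -X^*$ on the complex Lie algebra $\oMb\otimes\CC$ whose fixed subalgebra is the compact form $\oMb$ itself; hence $\exp(\sqo\cdot\oMb)$ is already a (Cartan-type) symmetric space of compact type when there is no further symmetry constraint.

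Next I would bring in the $G$-action. By Construction~\ref{thm:174} an element $g\in G$ acts on $\sA=\CMbC$ through $\beta(g)\in\OMb$, combined with complex conjugation when $\phi(g)=-1$; restricting to $\oMb\otimes\CC$ this is $X\mapsto \Ad_{\beta(g)}X$ if $\phi(g)=1$ and $X\mapsto \Ad_{\beta(g)}\bar X$ if $\phi(g)=-1$. The condition $\alpha(g)(H)=c(g)H$ of \eqref{eq:182} thus becomes a collection of real-linear eigenvalue equations on $H$: writing $H=\sqo A$, one gets $\Ad_{\beta(g)}A = c(g)\phi(g)\,A$ for every $g$ (the sign $\phi(g)$ entering because conjugation sends $\sqo A\mapsto -\sqo A$). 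Therefore $\fp$ is the simultaneous $(\pm1)$-eigenspace intersection inside $\oMb$ of the finite collection of orthogonal involutive-type operators $A\mapsto \phi(g)c(g)\Ad_{\beta(g)}A$. I would then observe that these operators all commute with the Cartan involution $X\mapsto -X^*$ on $\oMb\otimes\CC$ and preserve the real form, so $\fp\oplus\fk$ (with $\fk$ the $+1$-eigenspace, i.e. the invariant compact subalgebra) is a symmetric decomposition of the reductive Lie algebra $\fk\oplus\fp$: the bracket relations $[\fk,\fk]\subset\fk$, $[\fk,\fp]\subset\fp$, $[\fp,\fp]\subset\fk$ follow because $\phi c$ is a homomorphism and $\Ad$ respects brackets. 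This exhibits $\exp(\fp)$ as a symmetric space, and it is of compact type because it is a totally geodesic subspace of the compact symmetric space $\exp(\sqo\cdot\oMb)$, equivalently because the invariant form (negative of the Killing form, or the trace form on the spin representation) is negative definite on $\fk\oplus\fp$.

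The final step is to argue that $\exp(\fp)$ is \emph{classical}, i.e. one of the series $U(n)/O(n)$, $U(2n)/Sp(n)$, $U(n)$, $O(m+n)/(O(m)\times O(n))$, $Sp(m+n)/(Sp(m)\times Sp(n))$, $U(m+n)/(U(m)\times U(n))$, $SO(2n)/U(n)$, $Sp(n)/U(n)$, and their products. This follows from the fact that the ambient group is an orthogonal, unitary, or symplectic group on the multiplicity spaces of the $G$-representation: decompose $\sM$ (or rather the relevant real/complex/quaternionic isotypic pieces under the $\phi$-twisted $\beta$-action of $G$) into isotypic components; Schur's lemma over $\RR$, $\CC$, $\HH$ forces the commutant, hence the group in which $\exp(\fp)$ lives and the involution defining it, to be of classical type, and the product of classical symmetric spaces is classical. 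I would cite \cite{HHZ} for the precise bookkeeping of which of the ten classical families arises from which combination of $(\phi,t,c)$-data, since that is exactly the decachotomy. \textbf{The main obstacle} I anticipate is the last step: carefully checking that the $G$-equivariance genuinely reduces the structure group to a \emph{classical} (not exceptional) group and that the induced involution is of classical symmetric-space type — this requires a clean Schur-lemma analysis of the $\phi$-twisted $\beta$-representation on $\sM$, keeping track of real/complex/quaternionic types and of how $c$ interacts with the grading, and it is where the ten-fold structure is really being used rather than merely invoked.
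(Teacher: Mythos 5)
Your outline follows the paper's own proof closely in its main moves: $\fk$ as the commutant, $\fp$ as the $t=\phi c$-twisted commutant, the bracket relations \eqref{eq:191} from $t$ being a homomorphism, compactness from definiteness of the invariant form, and classicality from a Schur/Wedderburn analysis of commutants. There are, however, two places where the paper's proof contains a device your sketch omits, and the second one is precisely the gap you flag.

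First, you form $\fk\oplus\fp$ as (apparently) a direct sum of subspaces of $\oMb$; but $\fk$ and $\fp$ need not be transverse there. If $t\equiv1$ then $\fk=\fp$, the ``group'' case $K\times K/\Delta K\cong K$, and the internal sum degenerates. The paper avoids this by defining the Lie algebra structure on the \emph{external} direct sum $\fh=\fk\oplus\fp$ via the embedding \eqref{eq:192}, $(A_1,A_2)\mapsto(A_1+A_2,A_1-A_2)\in\oMb\oplus\oMb$. This builds the Cartan pair uniformly in all cases and also hands you the negative-definite invariant form by restriction from $\oMb\oplus\oMb$, so you never need the ``totally geodesic subspace'' argument.

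Second, and this is your admitted main obstacle: showing $\fk$ is classical (via Schur on the representation $\beta$) does not by itself force the symmetric pair to be classical — compare $\mathfrak{g}_2\supset\mathfrak{so}(4)$, where the isotropy subalgebra is classical but the pair is exceptional. The paper closes this by realizing $\fh$ \emph{itself} as a commutant: consider the representation of $G$ on $\sM\oplus\sM$ given by \eqref{eq:193}, block-diagonal $\beta(g)\oplus\beta(g)$ when $t(g)=+1$ and block-off-diagonal when $t(g)=-1$. Intersecting its commutant with block-diagonal matrices and using the embedding \eqref{eq:192} identifies $\fh$ inside $\End(\sM)\oplus\End(\sM)$, so the same Schur/Wedderburn argument (Theorem~\ref{thm:151} together with the isotypic decomposition) shows $\fh$ is a sum of matrix algebras over $\RR$, $\CC$, $\HH$. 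With both $\fh$ and $\fk$ classical, the pair is a product of classical symmetric spaces. That $\sM\oplus\sM$ trick is the concrete form of the Schur analysis you were hoping to find.
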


\noindent
 Recall that a free fermion Hamiltonian has the form $H=\sqo\,A^{ij}e_ie_j$
for $A^{ij}\in \RR$ satisfying $A^{ij}=-A^{ji}$.  Set~$t=\phi c$.  Then
from~\eqref{eq:182} we deduce 
  \begin{equation}\label{eq:188}
     \alpha (g)(A^{ij}e_ie_j) = t(g)(A^{ij}e_ie_j),\qquad g\in G. 
  \end{equation}
Note $A^{ij}e_ie_j\in \CMb$ lies in the \emph{real} Clifford algebra.
Identify~$A^{ij}e_ie_j$ with an element~$A\in \oMb$.  From the
definition~\eqref{eq:187} of~$\alpha $ we deduce 
  \begin{equation}\label{eq:189}
     \fp = \{A\in \oMb:\Ad_{\beta (g)}(A) = t(g)A\textnormal{ for all $g\in
     G$}\}. 
  \end{equation}
We may have $\fp=0$, in which case $\exp(\fp)$~is a single point.

  \begin{remark}[]\label{thm:176}
 Note that $\exp(\fp) = \{e^{tA}:t\in \RR,\;A\in \fp\}= \{e^{-\sqo\, tH}:t\in
\RR,\, \sqo\,H\in \fp\}$, which may be construed as the space of unitary
evolutions of the free fermion Hamiltonians.  Here we exponentiate elements
of~$\oMb$ to real orthogonal transformations of~$\sM$.  We have not chosen an
$\sA$-module---here a fermionic Fock space---which would be the usual arena
for unitary evolution of the free fermion.
  \end{remark}

  \begin{proof}
 Define 
  \begin{equation}\label{eq:190}
     \fk = \{A\in \oMb:\Ad_{\beta (g)}(A) = A\textnormal{ for all $g\in
     G$}\}. 
  \end{equation}
Use the embedding 
  \begin{equation}\label{eq:192}
     \begin{aligned} \fk\oplus \fp &\longrightarrow \oMb\oplus \oMb \\
     (A_1,A_2) 
      &\longmapsto (A_1+A_2,A_1-A_2 )\end{aligned} 
  \end{equation}
to induce a Lie algebra structure on~$\fh = \fk\oplus \fp$, and then an easy
check shows
  \begin{equation}\label{eq:191}
     [\fk,\fk]\subset \fk,\qquad [\fk,\fp]\subset \fp,\qquad [\fp,\fp]\subset
     \fk. 
  \end{equation}
Since the Killing form on $\oMb\oplus \oMb$ is negative definite, the same is
true for~ $\fh$ and~$\fk$, whence both are Lie algebras of compact type.  It
follows that $\exp(\fp)$~is a compact symmetric space.

To see that $\fh$~and $\fk$~ are classical Lie algebras---that is, matrix
algebras over~$\RR$, $\CC$, or~$\HH$---we use an extended form of
Theorem~\ref{thm:151} in Appendix~\ref{sec:9}.  That theorem asserts that the
commutant of an \emph{irreducible} real representation is a division
algebra~$D$ over~$\RR$, so is isomorphic to~$\RR$, $\CC$, or~$\HH$.  An
\emph{isotypical} representation of~$G$ has the form $U\otimes V$, where
$G$~acts irreducibly on the real vector space~$V$ and trivially on the real
vector space~$U$.  Then the commutant is $\End(U)\otimes D$, which is a
matrix algebra over~$\RR$, $\CC$, or~$\HH$.  Finally, the commutant of an
arbitrary representation of~$G$ is a sum of matrix algebras over~$\RR$,
$\CC$, or~$\HH$.  Now~$\fk$ is by definition~\eqref{eq:190} the commutant of
a representation of~$G$ on~$\sM$, whence is classical.  To prove that
$\fh$~is classical, consider the representation of~$G$ on~$\sM\oplus \sM$
given by
  \begin{equation}\label{eq:193}
     \begin{alignedat}{2} g&\longmapsto \begin{pmatrix} {\beta (g)}  &
      0\\[1em]0&{\beta (g)}  \end{pmatrix},\qquad &t(g)&=+1, \\[2em]
     g&\longmapsto 
      \begin{pmatrix} 0&{\beta (g)} \\[1em]{\beta (g)}  &
     0 \end{pmatrix},\qquad 
      &t(g)&=-1.\end{alignedat} 
  \end{equation}
The intersection of the commutant with the subalgebra of block diagonal
matrices is~$\fh$, embedded in~$\End(\sM)\oplus \End(\sM)$ by~\eqref{eq:192},
so $\fh$~is also classical.
  \end{proof}

  \begin{remark}[]\label{thm:187}
 There are 10~sequences of symmetric spaces based on simple groups, and all
of them can be realized by choosing~$G$ to be an appropriate Pin group using
the construction of~\cite[\S24]{Mi}.  See~\cite{Mo} for further details.
  \end{remark}

  \begin{remark}[]\label{thm:189}
 There is a parallel discussion for free bosons in which the bilinear
form~$b$ is symplectic and the Clifford algebra is replaced by the
Heisenberg-Weyl algebra.  See~\cite{Mo} for further details.
  \end{remark}

  \subsection*{The Dirac-Nambu space}

We continue with the free fermion system based on~$\Mb$ with symmetry
group~$(G,\phi )$.  Define the \emph{Dirac-Nambu space}
  \begin{equation}\label{eq:194}
     \HDN := \sM\otimes \CC
  \end{equation}
While $\HDN$~is a complex vector space, we do not (yet) endow it with a
hermitian metric.  Define
  \begin{equation}\label{eq:210}
     \rho \:G\longrightarrow \Aut_{\RR}(\HDN) 
  \end{equation}
by letting $\rho (g)$~be the complex linear or antilinear extension
of~$\beta (g)$ according as $\phi (g)=+1$ or~$-1$.  In symbols,
  \begin{equation}\label{eq:211}
     \rho (g)\bigl(\xi \otimes (x+\sqo\,y)\bigr) = \beta (g)(\xi )\otimes
     (x+\phi (g)\sqo\,y),\qquad g\in G,\quad \xi \in \sM,\quad x,y\in \RR.
  \end{equation}

We next choose a complex structure~$I$ on~$\sM$, which in geometric
quantization plays the role of a complex polarization.  Recall that $I$~is a
real isometry $I\:\sM\to\sM$ which satisfies~$I^2=-\id_{\sM}$.  We require a
compatibility condition with the symmetry group~$G$, namely, $I\in \fp$.
This means in particular that each~$g\in G$ either preserves or reverses~$I$.
Such a complex structure may or may not exist.  Assuming it does, extend~$I$
to a complex linear endomorphism of~$\HDN$.  Then $\sqo\,I\:\HDN\to\HDN$
squares to the identity, so induces a decomposition
  \begin{equation}\label{eq:213}
     \HDN \cong V\oplus \overline{V} 
  \end{equation}
into its~$-1$ and~$+1$-eigenspaces.  Here $V$~denotes the complex vector
space~$\sM$ with complex structure~$I$.  Since we assume~$I\in \fp$, the map
$\beta (g)\in \OMb$ commutes or anticommutes with the complex structure~$I$
according as $t(g)=+1$ or~$-1$.  Then since
  \begin{equation}\label{eq:214}
     \rho (g)\sqo\,I = c (g)\sqo\,I\rho (g), 
  \end{equation}
the automorphism~$\rho (g)$ in~\eqref{eq:211} preserves or reverses the
decomposition ~\eqref{eq:213} according as $c(g)=+1$ or~$-1$.  The complex
structure determines a hermitian structure on~$V$:
  \begin{equation}\label{eq:195}
     \langle \xi _1,\xi _2 \rangle := b(\xi _1,\xi _2) + \sqo\,b(I\xi _1,\xi
     _2),\qquad \xi _1,\xi _2\in V.
  \end{equation}
As usual, the hermitian metric induces a linear
isomorphism~$\overline{V}\cong V^*$, a hermitian metric on that space, and so
too a hermitian metric on~$\HDN$.  

We extend Hamiltonians to act linearly and self-adjointly on~$\HDN$.  We do
\emph{not} ask that a Hamiltonian be compatible with the complex structure,
lest we rule out interesting examples such as Bogoliubov-de Gennes
Hamiltonians.  Nonetheless, for an \emph{invertible} Hamiltonian~$H=
\sqo\,A^{ij}e_ie_j\in \sA$ invariant under the action~\eqref{eq:212}
of~$(G,\phi )$, there is a natural associated complex structure~$I_A$.  For in
this case the associated $A\in \oMb$ is also invertible, and we use the
spectral calculus to rescale the eigenvalues and define $I_A=A/|A|$.  Since
$A\in \fp$, it follows that $I_A\in \fp$.

  \begin{remark}[]\label{thm:186}
 The Dirac-Nambu space is \emph{not} an $\sA$-module in the sense of
Remark~\ref{thm:177}, so does not fit the standard paradigm of a realization
of an abstract quantum system.  Rather, given the complex structure~$I$ one
usually forms the \emph{fermionic Fock space}~$\HF={\textstyle\bigwedge}
^{\bullet }V$, which \emph{is} an $\sA$-module.  The degree in the exterior
algebra is the particle number, so $V\subset \HF$ may be identified with the
1-particle subspace: lines in~$V$ represent 1-particle states.  The action
of~$G$ on~$\sA$ is implemented as a \emph{projective} action of~$G$ on~$\HF$.
Namely, the entire orthogonal group~$\OMb$ acts projectively on~$\HF$ as the
(s)pin representation, and there is a pullback central extension of~$G$ along
$\beta \:G\to\OMb$.  Contrast this projective action with the nonprojective
action~\eqref{eq:210} of~$G$ on the Dirac-Nambu space~$\HDN$.  The symmetries
in~$G$ are very special among orthogonal symmetries as they either preserve
or reverse the complex structure~$I$.  
  \end{remark}

Despite the considerations in Remark~\ref{thm:186}, we use the Dirac-Nambu
space~$\HDN$ as the Hilbert space of a free fermion system.  In the old Dirac
picture of particles and holes, lines in~$\overline{V}\subset \HDN$ represent
``1-hole'' states.  Thus we regard~$\HDN$ as implementing this particle-hole
picture, special to free fermions.  From this point of view elements~$g\in G$
with~$c(g)=+1$ are \emph{particle-hole preserving} whereas elements~$g\in G$
with~$c(g)=-1$ are \emph{particle-hole reversing}.

   \section{Topological phases}\label{sec:18}
% lastsubsec@000

Our interest is in the topological properties of gapped systems, so we need
to specify a notion of topological equivalence.

  \begin{definition}\label{thm:26}
 Gapped systems with extended \Qsymmetry class~$(G,\phi ,\tau ,c)$ are in the
same \emph{topological phase} if they are homotopic.  We denote the set of
equivalence classes by~$\TP(G,\phi ,\tau ,c)$.
  \end{definition}

\noindent
 This definition requires explanation and perhaps motivation.  First, the
explanations.  An isomorphism of gapped systems $(\sH_0,H_0,\tar _0)$ and
$(\sH_1,H_1,\tar _1)$ is an isomorphism $\sH_0\to\sH_1$ which intertwines the
Hamiltonians~$H_i$ and the representations~$\tar _i$, $i=0,1$.  Next, two
systems are homotopic if there exists a continuously varying family of gapped
systems parametrized by the 1-simplex $\Delta ^1=[0,1]$ whose restriction
to~$\{0\}$ is isomorphic to~$(\sH_0,H_0,\tar _0)$ and whose restriction
to~$\{1\}$ is isomorphic to $(\sH_1,H_1,\tar _1)$.  In Appendix~\ref{sec:14}
we give a precise definition of a continuous family of quantum systems; in
particular Definition~\ref{thm:113} defines a continuous family of gapped
systems with given extended \Qsymmetry class.  Isomorphism and homotopy are
completely standard equivalence relations in many contexts in mathematics
and, increasingly, in physics.  The key is to define the notion of a
\emph{continuous} deformation---a slightly technical, but nonetheless
crucial, task undertaken in Appendix~\ref{sec:14}.
 
The set $\TP(G,\phi ,\tau ,c)$ has an algebraic structure given by
amalgamation of quantum systems.  Usually in quantum mechanics if
$\sH_1,\sH_2$~are Hilbert spaces of two systems, then the Hilbert space of
the composite quantum system is the (graded) \emph{tensor
product}~$\sH_1\otimes \sH_2$.  This applies to the Fock spaces of free
electron systems.  Since the tensor product of Fock spaces corresponds to the
\emph{direct sum} of \emph{Dirac-Nambu} Hilbert spaces~\eqref{eq:194}, and we
have in mind the Dirac-Nambu picture, we use direct sum to combine
systems. Direct sum is compatible with the notion of topological equivalence
in Definition~\ref{thm:26}, and so $\TP(G,\phi ,\tau ,c)$~ is a commutative
\emph{monoid}, that is, a set with an associative, commutative composition
law with an identity element.  The composition is written additively and the
identity element is represented by the zero Hilbert space.

  \begin{remark}[]\label{thm:120}
 We emphasize that in this paper we do not use any detailed structure of free
systems of fermions.  This use of direct sum rather than tensor product to
combine quantum systems is the concrete manifestation of the Dirac-Nambu
picture; in a general system only tensor products exist.
  \end{remark}

Definition~\ref{thm:26} is quite general and seems to be what the physics
demands.  But it may be difficult to compute, and so we define a weaker
invariant which is easier to compute and which is in addition an abelian
group.  We obtain it from the commutative monoid~$\TP(G,\phi ,\tau ,c)$ by
group completion or by imposing a further ``topological triviality''
relation: see Definition~\ref{thm:102}, Definition~\ref{thm:132}, and
Definition~\ref{thm:135}.  We consider each of these in turn.

  \subsection*{Group completion}

  \begin{construction}[Grothendieck group completion]\label{thm:68}
 Let $M$~be a commutative monoid.  Its \emph{group completion} is the
quotient~$A$ of~$M\times M$ by the equivalence relation
  \begin{equation}\label{eq:77}
     (m'_1,m''_1)\sim(m'_2,m''_2) \quad \textnormal{if\quad
     $m'_1+m''_2=m''_1+m'_2$.} 
  \end{equation}
Then $A$~is an abelian group.  There is a homomorphism $M\to A$ given by
$m\mapsto(m,0)$.  The group completion~$A$ of~$M$ satisfies a universal
property, which characterizes it up to unique isomorphism: if $B$~is an
abelian group and $\theta \:M\to B$ a homomorphism of commutative monoids in
the sense that $\theta (0)=0$ and $\theta (m_1+m_2)=\theta (m_1)+\theta
(m_2)$ for all~$m_1,m_2\in M$, then there is a unique homomorphism of abelian
groups $\bar\theta \:A\to B$ such that the diagram
  \begin{equation}\label{eq:160}
     \xymatrix{ M\ar[rr]^\theta\ar[dr] &&B \\ &
      A\ar@{-->}_{\bar\theta }[ur]} 
  \end{equation}
commutes.
  \end{construction}

\noindent
 Intuitively, $(m',m'')$ represents~$m'-m''$.   

  \begin{example}[]\label{thm:122}
 There are two elementary, but illustrative, examples.  First, the group
completion of the commutative monoid~$M=\ZZ^{\ge0}$ of nonnegative integers
is the abelian group of all integers.  Note in this case that $\ZZ^{\ge0}$~is
also a semi-ring---there is an associative multiplication with identity which
distributes over addition---and so the group completion~$\ZZ$ is a ring.
Next, suppose $G$~is a compact Lie group.  Let $\Rep(G)$ denote the set of
equivalence classes of \emph{finite dimensional} complex representations
of~$G$.  Weyl's ``unitary trick'' implies that every representation is
completely reducible---a direct sum of irreducible representations---so
$\Rep(G)$~can also be defined as the free commutative monoid generated by
isomorphism classes of irreducible complex representations of~$G$.  The group
completion of~$\Rep(G)$ is the \emph{representation ring}~$K_G$ of~$G$.
  \end{example}

  \begin{example}[]\label{thm:60}
 For $G=SU_2$ there is an irreducible representation for each nonnegative
integer, which labels its dimension, so as a commutative monoid $\Rep(G)$~is
the free commutative monoid on~$\ZZ^{\ge0}$ whose elements are finite linear
combinations of symbols~$\bol0,\bol1,\bol2,\dots $ with positive integer
coefficients.  The structure of the \emph{ring}~$K_G$ is much simpler: it is
the polynomial ring~$\ZZ[t]$ on a single generator~$t=\bol2$, which
represents the 2-dimensional irreducible representation.  By Clebsch-Gordon
other irreducible representations are polynomials in~$t$.  For example,
$\bol3=t^2-1$.  The representation ring of a \emph{connected, simply
connected} compact Lie group is always a polynomial ring, but for a general
compact Lie group there are relations.  For example, the representation ring
of the cyclic group of order~2 is isomorphic to $\ZZ[\epsilon ]/(\epsilon
^2-1)$.
  \end{example}

  \begin{remark}[]\label{thm:165}
 The multiplication in the representation ring~$K_G$ does not play a role in
our considerations.
  \end{remark}

  \begin{remark}[]\label{thm:125}
 We illustrate why a finiteness hypothesis is crucial to obtain a nontrivial
abelian group after completing.  Resuming Example~\ref{thm:122} consider the
trivial group~$G=\{1\}$, so $\Rep(G)\cong \ZZ^{\ge0}$ is the set of
isomorphism classes of finite dimensional vector spaces and the group
completion is the integers.  If we relax the finite dimensionality, so allow
infinite dimensional vector spaces, then $\ZZ^{\ge0}$~is replaced by the
commutative monoid $\ZZ^{\ge0}\cup\{\infty \}$.  In this monoid $n+\infty =\infty
$ for any nonnegative integer~$n$.  The group completion of
$\ZZ^{\ge0}\cup\{\infty \}$ is the trivial abelian group.  This is because
for any pair of nonnegative integers~$(n',n'')$, we have
  \begin{equation}\label{eq:127}
     (n',n'')\sim(\infty ,\infty )\sim(0,0) 
  \end{equation}
according to the equivalence relation~\eqref{eq:77} which defines the group
completion.  The maneuver~\eqref{eq:127} is called the \emph{Eilenberg
swindle}.  We use a more sophisticated variation in the proof of
Theorem~\ref{thm:136}. 
  \end{remark}

  \subsection*{Quotienting by topological triviality}

There is another construction of the representation ring which illustrates
the idea of ``topological triviality''.  We start with the set~$\Rep_s(G)$ of
\emph{$\zt$-graded } finite dimensional complex representations up to
isomorphism.  (The `s'~stands for `super', which is a synonym for
`$\zt$-graded'.)  A $\zt$-graded representation is a $\zt$-graded complex
vector space~$W=W^0\oplus W^1$ with a homomorphism $\rho \:G\to\Aut(W)$ to
automorphisms which separately preserve~$W^0$ and~$W^1$.  So it is the
\emph{direct sum} of two representations, but is to be thought of as the
\emph{formal difference} of the representations.  The role of the grading is
to encode the formal minus sign.  Direct sum of super representations
endows~$\Rep_s(G)$ with the structure of a commutative monoid.  Then define
the submonoid~$\Triv_s(G)$ to be equivalence classes of $\zt$-graded
representations~$\rho \:G\to\Aut(W)$ for which there exists an odd
automorphism $P\:W\to W$ such that
  \begin{equation}\label{eq:126}
     P\rho (g)=\rho (g)P,\qquad g\in G. 
  \end{equation}

  \begin{lemma}[]\label{thm:170}
  The quotient ~$\Rep_s(G)/\Triv_s(G)$ is an abelian group. 
  \end{lemma}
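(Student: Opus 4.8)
Recall that $\Rep_s(G)/\Triv_s(G)$ denotes the quotient of the commutative monoid $\Rep_s(G)$ (under $\oplus$) by the submonoid $\Triv_s(G)$: two classes are identified if they become equal after adding elements of $\Triv_s(G)$. This quotient is a commutative monoid, and since $\Triv_s(G)$ contains the zero representation, the class of every $T\in\Triv_s(G)$ equals $0$ in the quotient. So the only thing to prove is that every class admits an additive inverse, i.e.\ that for each $\zt$-graded representation $W=W^0\oplus W^1$ there is a $\zt$-graded representation $W'$ with $W\oplus W'\in\Triv_s(G)$.

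The plan is to take $W'$ to be the \emph{grading-reversed} representation $\Pi W$, defined to have the same underlying $G$-module as $W$ but with $(\Pi W)^0=W^1$ and $(\Pi W)^1=W^0$; this lies in $\Rep_s(G)$ because the $G$-action preserves $W^0$ and $W^1$ separately. I would then show $W\oplus\Pi W\in\Triv_s(G)$ by exhibiting an explicit odd $G$-equivariant automorphism. Writing a general element of $W\oplus\Pi W$ as a pair $(a,b)$ with $a,b$ in the common underlying space of $W$ and $\Pi W$, the even subspace of $W\oplus\Pi W$ is $\{(a,b):a\in W^0,\ b\in W^1\}$ and the odd subspace is $\{(a,b):a\in W^1,\ b\in W^0\}$. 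Define $P(a,b)=(b,a)$. A direct check shows $P$ interchanges these two subspaces, hence is odd; that $P^2=\id$, hence $P$ is an automorphism; and that $P$ commutes with the diagonal $G$-action, so the relation~\eqref{eq:126} holds. Therefore $W\oplus\Pi W\in\Triv_s(G)$, so $[W]+[\Pi W]=[W\oplus\Pi W]=0$, and $[\Pi W]$ is the desired inverse. (Equivalently, one can first observe that $\Triv_s(G)$ consists exactly of those $W$ with $W^0\cong W^1$ as ungraded $G$-modules, which makes $W\oplus\Pi W\in\Triv_s(G)$ immediate since its even part $W^0\oplus W^1$ and odd part $W^1\oplus W^0$ are visibly isomorphic.)

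The argument is essentially just this construction, so there is no real obstacle; the only points needing a moment's care are (i) confirming that $\Triv_s(G)$ is closed under direct sum and contains the zero representation, so that the quotient is a well-defined commutative monoid in which every element of $\Triv_s(G)$ is identified with $0$ --- this is the content of calling it a submonoid, but I would spell it out since if $P_1,P_2$ are odd $G$-equivariant automorphisms of $W_1,W_2$ then $P_1\oplus P_2$ works for $W_1\oplus W_2$ --- and (ii) the grading bookkeeping in $W\oplus\Pi W$, i.e.\ verifying that the swap $P$ genuinely lands in the odd part. I expect no finiteness or compactness hypothesis on $G$ to be needed here; those enter only later, when the abelian group is identified with a $K$-theory group.
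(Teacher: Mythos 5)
Your proof is correct and takes essentially the same approach as the paper: both produce the inverse of $[W]$ as the parity-reversed representation and exhibit the swap map on $W\oplus\Pi W$ as the required odd $G$-equivariant automorphism. The only cosmetic difference is that the paper packages parity-reversal as tensoring with the odd line $\Pi=\CC^{0|1}$ (so the swap arises as $x\mapsto\pi\otimes x$), whereas you define $\Pi W$ directly by exchanging the homogeneous summands; these are literally the same construction.
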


\noindent
 The additive inverse of~$W^0\oplus W^1$ is the parity-reversed
representation $W^1\oplus W^0$, which we define more precisely in the proof.
 
  \begin{proof}
 Let $\Pi $~denote the complex line~$\CC$ in odd degree and $\pi \in \Pi $
the (odd) element~$1\in \CC$.  There is a canonical isomorphism $\Pi \otimes
\Pi \cong \CC$ under which $\pi \otimes \pi =1$.  If $W$~is a finite
dimensional representation of~$G$, we define its parity-reversal to be the
tensor product $\Pi \otimes W$, with $G$~acting trivially on~$\Pi $.  Then
$W\,\oplus \,(\Pi \otimes W)$ lies in~$\Triv_s(G)$: for $x\in W\,\oplus
\,(\Pi \otimes W)$ set $P(x) = \pi\otimes x$.  Thus $\Pi \otimes W$ is an
inverse to~$W$ in~$\Rep_s(G)/\Triv_s(G)$.
  \end{proof}

The map $W^0\oplus W^1\mapsto (W^0,W^1)$ from $\zt$-graded representations to
the group completion of the monoid of equivalence classes of ungraded
representations is an isomorphism, after modding out by~$\Triv_s(G)$.  In
other words, there is an isomorphism 
  \begin{equation}\label{eq:218}
     \Rep_s(G)/\Triv_s(G)\cong K_G. 
  \end{equation}

  \subsection*{Reduced topological phases}

We call the abelian group obtained from the commutative monoid in
Definition~\ref{thm:26} by these processes---group completion or quotienting
by ``topological triviality''---the group of \emph{reduced topological
phases}~$\GH(G,\phi ,\tau ,c)$.  The choice of process depends on the
particular situation and will be specified in each case.  Our main results
identify~$\GS(G,\phi ,\tau ,c)$ with a topological $K$-theory group in three
different situations: see Theorem~\ref{thm:40}, Theorem~\ref{thm:131}, and
Theorem~\ref{thm:136}.

One immediate observation points the way to $K$-theory. 

  \begin{lemma}[]\label{thm:27}
 In every topological equivalence class of gapped systems there is a
Hamiltonian~$H$ with $H^2=1$. 
  \end{lemma}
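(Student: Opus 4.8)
The plan is to replace a given gapped system $(\sH,H,\tar)$ by a homotopic one in which the Hamiltonian has been deformed to $H/|H|$, which squares to~$1$. Since $H$ is invertible with bounded inverse (Definition~\ref{thm:25}(ii)), $0$ lies in the resolvent set, so the spectrum of the self-adjoint operator~$H$ is contained in $(-\infty,-\epsilon]\cup[\epsilon,\infty)$ for some $\epsilon>0$. The idea is to use the bounded Borel functional calculus: for $t\in[0,1]$ define $f_t(x) = (1-t)x + t\,\frac{x}{|x|}$ on the spectrum, and set $H_t = f_t(H)$. Each $f_t$ is a bounded continuous function on $\sigma(H)$ (bounded below away from~$0$), so $H_t$ is a bounded self-adjoint operator, $H_0=H$ on the domain where this makes sense, and $H_1 = H/|H|$ has $H_1^2=1$.

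First I would check that every $H_t$ is again a valid gapped Hamiltonian for the same data: each $f_t$ is real-valued, so $H_t$ is self-adjoint; each $f_t$ is bounded away from~$0$ on $\sigma(H)$ (indeed $|f_t(x)|\geq \min(\epsilon,1)$ for $|x|\geq\epsilon$), so $H_t$ is invertible with bounded inverse. Second I would verify that $H_t$ satisfies the commutation relation~\eqref{eq:149}. The representation $\tar(g)$ is real-linear, unitary or antiunitary, and satisfies $\tar(g)H = c(g)H\tar(g)$. Since $f_t$ is an odd function (it is odd in~$x$, as both $x$ and $x/|x|$ are odd), conjugating~$H$ by~$\tar(g)$ turns $H$ into $c(g)H$, and applying the functional calculus gives $\tar(g) f_t(H)\tar(g)\inv = f_t(c(g)H) = c(g)f_t(H)$ when $c(g)=-1$ (using oddness of~$f_t$) and trivially when $c(g)=1$. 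One subtlety: $\tar(g)$ may be antilinear, so I should confirm that $\tar(g) g(H)\tar(g)\inv = \overline{g}(\tar(g)H\tar(g)\inv)$ for Borel functions~$g$; since $f_t$ has real coefficients, $\overline{f_t}=f_t$ and there is no obstruction. The grading/parity conditions on $\tar$ do not involve~$H$, so they are untouched.

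Finally I would observe that $t\mapsto (\sH, H_t, \tar)$ is a continuous family of gapped systems in the sense of Appendix~\ref{sec:14}, Definition~\ref{thm:113}: the Hilbert space and representation are constant, and $t\mapsto H_t = f_t(H)$ is continuous in the appropriate operator topology because $(t,x)\mapsto f_t(x)$ is jointly continuous and uniformly bounded on $[0,1]\times\sigma(H)$, so $f_t(H)\to f_{t_0}(H)$ in norm (hence in any coarser topology) as $t\to t_0$. This exhibits a homotopy from $(\sH,H,\tar)$ to $(\sH, H/|H|, \tar)$, and the latter has $H_1^2 = 1$ as required.

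The main obstacle I anticipate is the bookkeeping around antiunitary elements of the representation when pushing the functional calculus through the commutation relation: one must be careful that $\tar(g)$ being antilinear conjugates a Borel function of~$H$ to its complex conjugate, and then check that the real-valuedness of~$f_t$ makes everything consistent with the identity $\tar(g)H\tar(g)\inv = c(g)H$. A secondary technical point is confirming that the chosen topology on operators (the compact-open / strong-type topology discussed in Appendix~\ref{sec:14}) makes $t\mapsto f_t(H)$ continuous; norm continuity of $f_t(H)$ suffices and follows from uniform approximation of $f_t$ by polynomials on the compact set $\sigma(H)\cap\{|x|\le\|H\|\}$ together with the gap, but if $H$ is unbounded one instead approximates on $\sigma(H)$ directly using that $f_t$ extends continuously to a neighborhood of $\sigma(H)\cup\{\infty\}$ and invokes continuity of the Borel functional calculus in the strong topology.
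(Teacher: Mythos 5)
Your proof is correct and follows essentially the same route as the paper: the paper also uses the spectral deformation $f_t(\lambda) = (1-t)\lambda + t\,\lambda/|\lambda|$ applied via the functional calculus and cites the spectral theorem for continuity. Your additional verifications (oddness and real-valuedness of $f_t$ ensuring compatibility with the antiunitary/odd parts of $\tar$, and continuity of $t\mapsto f_t(H)$ in the relevant topology) are exactly the points the paper leaves implicit.
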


  \begin{proof}
 Let $(\sH,H_0,\tar )$ be a gapped system.  Define the 1-parameter family of
functions
  \begin{equation}\label{eq:63}
     \begin{aligned} f_t\:\RR\setminus \{0\}&\longrightarrow \RR\setminus
      \{0\} \\ \lambda &\longmapsto (1-t)\lambda +t\lambda /|\lambda
      |\end{aligned} 
  \end{equation}
where $t\in [0,1]$.  The spectral theorem gives a continuous 1-parameter
family of Hamiltonians~$H_t=f_t(H_0)$ and $H=H_1$ squares to the identity.
  \end{proof}

\noindent
 $H$~is sometimes called the \emph{spectral flattening} of~$H_0$.  It induces
a $\zt$-grading on~$\sH$ whose homogeneous subspaces are the
$\pm1$-eigenspaces of~$H$.  The $\zt$-grading is a hallmark of $K$-theory,
but to obtain nontrivial results we must impose finiteness conditions.  We
will specify these in~\S\ref{sec:11} and~ \S\ref{sec:6}; see
Hypothesis~\ref{thm:38}, Hypothesis~\ref{thm:39}, and
Hypothesis~\ref{thm:56}.

   \section{Special extended \Qsymmetry classes}\label{sec:4}
% lastsubsec@000

  \subsection*{A restricted set of extended \Qsymmetry classes}

Let $\sC=\pmo\times \pmo$ and
  \begin{equation}\label{eq:58}
     \sC=\pmo\times \pmo\xrightarrow{\;\;\fC\;\;}\pmo 
  \end{equation}
be multiplication.  The element~$\bT=(-1,1)$ represents time-reversal and
$\bC=(1,-1)$ represents Hamiltonian-reversal.

  \begin{definition}[]\label{thm:23}
 Let $G$~be a Lie group equipped with a homomorphism $\psi=(t,c)
\:G\to\sC$.  An extended \Qsymmetry class~$(G,\phi ,\tau,c)$ is \emph{$\psi
$-standard} if $\phi =\fC\circ \psi =tc$ and $\tG$~is pulled back from a
$\fC$-twisted extension of $\psi (G)\subset \sC$.
  \end{definition}

\noindent
 Thus $\psi (g)=\bigl(t(g),c(g) \bigr)$ tells whether $g\in G$ is implemented
as time-reversing and/or Hamiltonian-reversing, and $g$~is implemented
unitarily or antiunitarily according to~$\phi (g)$: it is antiunitary if it
is either time-reversing or Hamiltonian-reversing but not both, and it is
unitary if it is neither or both.  Furthermore, in a $\ptc$-twisted
representation (Definition~\ref{thm:84}(ii)) of~$G$ the subgroup of~$\tG$
lying over $\ker\psi \subset G$ is split---there is no nontrivial central
extension---but the hypothesis that the extension is pulled back is stronger.

  \begin{remark}[]\label{thm:35}
 The notion of a $\psi $-standard extended \Qsymmetry class is
artificial\footnote{There is a physical viewpoint which makes the assumption
of $\phi$-standard and $\psi$-standard symmetries somewhat more natural. In
systems with low symmetry---such as systems with disorder---it might be that
the \emph{only} symmetries present are a time-reversal symmetry and---in the
case of free fermions--a Hamiltonian-reversing symmetry.} and only given here
to reproduce the 10-fold way which permeates the literature on topological
classes of free fermion systems.  While the restriction $\phi =\fC\circ \psi
=tc$ is physically reasonable, we do not see a rationale to require that the
twisted central extension be a pullback.  So in general there are more
extended \Qsymmetry classes which are not pullbacks.  We give an example
below in Remark~\ref{thm:79}.
  \end{remark}

Returning now to free fermion systems, it is clear from
Definition~\ref{thm:23} that a $\psi $-standard extended \Qsymmetry class is
determined by a subgroup~$A\subset \sC$ and an extended \Qsymmetry class
based on~$A$ and the restriction of~$\fC$ to~$A$.  There are 5~subgroups and
a total of 10~possibilities.

  \begin{proposition}[10-fold way]\label{thm:24}
 \ 
  \begin{enumerate}
 \item There is a unique $\psi$-standard extended \Qsymmetry class based on the trivial
subgroup~$A=1$.  

 \item There is a unique $\psi$-standard extended \Qsymmetry class based on the diagonal
subgroup $A=\pmo\subset \pmo\times \pmo$.

 \item There are two $\psi$-standard extended \Qsymmetry classes based on
$A=\pmo\times 1$.  Let $\tT\in \tA$ be a lift of~$\bT$; then the two
\Qsymmetry classes are distinguished by~$\tT^2=\pm1$.

 \item Similarly, there are two \Qsymmetry classes based on $A=1\times \pmo$,
distinguished by~$\tC^2=\pm1$, where $C\in \tA$ is a lift of~$\bC$.

 \item There are four $\psi$-standard extended \Qsymmetry classes based on
$A=\sC$, distinguished by $\tT^2=\pm1$ and $\tC^2=\pm1$.
  \end{enumerate}

  \end{proposition}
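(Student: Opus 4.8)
The plan is to reduce the statement to a hands-on classification of $\TT$-valued twisted extensions of each subgroup of~$\sC$.  By Definition~\ref{thm:23} and the discussion preceding the proposition, a $\psi $-standard extended \Qsymmetry class based on a subgroup $A\subseteq \sC$ is exactly an isomorphism class of $\fC|_A$-twisted extensions
  \begin{equation*}
     1\longrightarrow \TT\longrightarrow \tA\longrightarrow A\longrightarrow 1,
  \end{equation*}
since the homomorphism $c$ is forced to be the second coordinate projection $A\subseteq \pmo\times \pmo\to\pmo$ and $\phi $ is forced to be $\fC|_A$; the subgroup $A$ is itself recovered from the class, so these families do not overlap.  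The group $\sC\cong (\zt)^2$ has exactly five subgroups: the trivial group, the three subgroups $\langle(-1,-1)\rangle$, $\langle\bT\rangle$, $\langle\bC\rangle$ of order two, and $\sC$ itself.  I will show that the numbers of classes over these five subgroups are $1,1,2,2,4$, which gives the total~$10$ and exhibits the stated invariants.

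For $A=1$ there is nothing to do: $\tA=\TT$.  For the diagonal $A=\langle(-1,-1)\rangle$ the homomorphism $\fC$ restricts to the trivial map, so we are classifying ordinary central extensions of $\zt$ by~$\TT$.  A lift $\hs$ of the generator has $\hs^2\in \TT$, and replacing $\hs$ by $\lambda \hs$ changes $\hs^2$ to $\lambda ^2\hs^2$; since $\TT$ is $2$-divisible we may arrange $\hs^2=1$, so $\tA\cong \TT\times \zt$ and there is a unique class.  Divisibility of $\TT$ is the crucial input here, and its failure to help in the twisted cases below is exactly what makes those cases richer (in cohomological terms $H^2(\zt;\TT)=0$ for the trivial action, while $H^2(\zt;\TT_{-})\cong \zt$ for the conjugation action).

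For $A=\langle\bT\rangle$, and identically for $A=\langle\bC\rangle$, the twisting sends the generator to $-1$, so a lift $\tT$ of~$\bT$ conjugates $\TT$ by complex conjugation.  Then $\tT^2$ commutes with~$\tT$ and equals $\overline{\tT^2}$, so $\tT^2\in \{\pm1\}$; and every lift $\lambda \tT$ has $(\lambda \tT)^2=\lambda \bar\lambda \,\tT^2=\tT^2$, so $\tT^2$ is an invariant of the extension (an isomorphism of twisted extensions is the identity on $A$ and on $\TT$, hence carries lifts of $\bT$ to lifts of $\bT$ and preserves their squares).  Since the multiplication of $\tA$ is determined by the rule $\tT\lambda \tT\inv=\bar\lambda $ together with the value $\tT^2$, this sign is a complete invariant, and both values occur (the split group $\TT\rtimes \zt$ versus the nonsplit $\Pin$-type group).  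So there are exactly two classes in each of these two cases.  The substantive case is $A=\sC$, with generators $\bT,\bC$ satisfying $\fC(\bT)=\fC(\bC)=-1$ and $\fC(\bT\bC)=+1$.  As before $\tT^2,\tC^2\in \{\pm1\}$, and the ordered pair $(\tT^2,\tC^2)$ is an invariant---ordered, because an isomorphism fixes $A$ and so cannot interchange $\bT$ with~$\bC$.  The only further datum is the commutator $\epsilon =\tT\tC\tT\inv\tC\inv$, which lies in $\TT$ since $\bT$ and $\bC$ commute in~$A$.  Replacing $\tC$ by $\mu \tC$ replaces $\epsilon $ by $\mu ^{-2}\epsilon $, leaves $\tT^2$ untouched, and sends $\tC^2$ to $(\mu \tC)^2=|\mu |^2\tC^2=\tC^2$; so $2$-divisibility of $\TT$ lets us normalize $\epsilon =1$, i.e.\ $\tT$ and $\tC$ commute.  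The group so presented---by $\TT$, $\tT$, $\tC$ with $\tT\lambda \tT\inv=\tC\lambda \tC\inv=\bar\lambda $, $\tT\tC=\tC\tT$, and $\tT^2,\tC^2\in \{\pm1\}$---is well defined and determined up to isomorphism by $(\tT^2,\tC^2)$.  Hence there are exactly four classes, distinguished by $\tT^2=\pm1$ and $\tC^2=\pm1$, and the count is $1+1+2+2+4=10$.

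The step I expect to be the main obstacle is the last one: showing that $(\tT^2,\tC^2)$ is a \emph{complete} invariant when $A=\sC$---that the commutator carries no information and that normalizing it does not disturb the square invariants---equivalently, that the twisted cohomology group $H^2\bigl(\sC;\TT_{\fC}\bigr)$ has order exactly~$4$.  Everything else is bookkeeping with lifts of generators.  The one structural point worth recording is that since each subgroup $A$ is finite, continuous (or Borel) cohomology with values in~$\TT$ coincides with abstract group cohomology, so no topological subtleties intervene; this is precisely what licenses the generator-by-generator argument above.
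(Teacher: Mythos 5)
Your proof is correct, and it takes a more direct route than the paper's.  The paper first develops a semidirect-product toolkit (Construction~\ref{thm:8}, Lemma~\ref{thm:9}, Lemma~\ref{thm:10}), changes generators of $\sC$ to $\bT$ and $\bS=\bT\bC$ so that $\fC$ becomes projection onto the first factor, restricts to the $\phi$-trivial subgroup $\langle\bS\rangle$ (where (i) of Lemma~\ref{thm:11} gives a splitting), and then counts pairs $(\ta,\epsilon'')$.  You instead work directly with the generators $\bT,\bC$ that actually appear in the statement of the proposition, observe that $\tT^2,\tC^2\in\pmo$ are invariants, and use $2$-divisibility of $\TT$ to normalize the commutator $\epsilon=\tT\tC\tT\inv\tC\inv$ to~$1$.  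The two arguments rest on the same two mechanisms (squares of $\phi$-odd lifts are real, hence $\pm1$; exactly one $\TT$-valued quantity can be normalized by divisibility), but yours avoids the change of generators and the semidirect-product lemmas and delivers the invariants in the form the proposition actually quotes.  The paper's machinery is not wasted, though: Lemmas~\ref{thm:8}--\ref{thm:10} are reused in Proposition~\ref{thm:36}, which is why the authors bundle them in here.

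One small loose end: in the final case you assert that the group presented by $\TT,\tT,\tC$ with the stated relations ``is well defined'' for each of the four choices of $(\tT^2,\tC^2)$, but you do not exhibit it.  This is genuinely needed to conclude that all four sign patterns are realized, not just that the invariant is complete; it is easily supplied (e.g.\ via Construction~\ref{thm:8} applied to $\langle\bT\rangle\times\langle\bC\rangle$, or by writing the explicit twisted $2$-cocycle on $\sC$ determined by the two signs and checking the cocycle identity, or by noting that $\Pin_{\pm 2}\times\Pin_{\pm 2}$ modulo the antidiagonal $\pmo$ realizes all four).  With that sentence added the argument is complete.
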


\noindent
 The presence or non-presence of~$\bT,\bC,\bT\bC$ and the squares
of~$\tT,\tC$ are the standard invariants of the 10-fold way.  We term the
pair ~$(A,\tA )$ of the subgroup and its central extension a `\CTt'.
Proposition~\ref{thm:34} and Proposition~\ref{thm:24} have quick and easy
proofs via group cohomology computations, once the problem of group
extensions is translated to a cohomology question.\footnote{Untwisted central
extensions of~$A$ up to isomorphism form an abelian group which is identified
with the second group cohomology~$H^2(A;\TT)$ with $A$~acting trivially on
the coefficients.  More generally, a homomorphism $\fC\:A\to\pmo$ determines
an action of~$A$ on~$\TT$ (if $\fC(a)=-1$, then $a$~acts by $\lambda
\mapsto\lambda \inv $), and $\fC$-twisted extensions up to isomorphism are
identified with~$H^2(A;\widetilde\TT)$ with twisted coefficients.  These
cohomology groups are easy to compute for the small abelian groups~$A$ in
Proposition~\ref{thm:24}.}  Instead we adopt a more direct, though
unnecessarily lengthy, approach.

  \begin{remark}[]\label{thm:79}
 As we have said, we find the $\psi$-standard and $\phi $-standard
conditions---which are generally assumed in the literature---to be
artificial. It is certainly easy to give physically reasonable examples where
they fail. For a non $\phi $-standard example consider $G=\pmo \times \pmo$
with $\phi \: G \to \pmo$ defined as the product.  Then we know from
Proposition~\ref{thm:24}(v) that there are 4 non-isomorphic $\phi $-twisted
 extensions $G^\tau$ of $G$, but there can be only two nontrivial pullbacks
of $\phi$-twisted extensions of $\pmo$ with $\phi =\id$. Concretely, if in
$G^\tau$ we have $T^2 = C^2 = 1$ or $T^2=C^2=-1$ then the extension is a
pullback and hence is $\phi $-standard, but if $T^2 = 1$, $C^2=-1$ or $T^2 =
-1 $, $C^2=+1$ then the extension is not a pullback.  Similarly, one can make
non $\psi$-standard extended \Qsymmetry classes by considering $G = \pmo
\times \pmo \times \pmo$ with $\psi\: G \to \sC$ given by multiplying the
first two and the last two factors of $G$. There are $8$ $\phi$-twisted
extensions of $G$ (defined by the signs of the squares of the lifts of the
three generators of $G$) but there are only $4$ pullbacks of extensions of~
$\sC$.
  \end{remark}

  \subsection*{Twisted central extensions and semidirect products}

As a preliminary we develop some facts about twisted extensions of semidirect
products, despite the fact that to prove Proposition~\ref{thm:34} and
Proposition~\ref{thm:24} we only need to consider direct products; we will
encounter semidirect products later.  Here we show how to build up
complicated $\phi$-twisted extensions out of simpler extensions.

Let $G',G''$ be topological groups and $\alpha \:G''\to\Aut(G')$ a continuous
homomorphism.  Recall the semidirect product $G=G''\ltimes _\alpha G'$
defined in Definition~\ref{thm:87}.  Now suppose
  \begin{equation}\label{eq:23}
     1\longrightarrow \TT\longrightarrow H'\xrightarrow{\;\;\pi
     '\;\;}G'\longrightarrow 1 
  \end{equation}
is a \emph{central} extension, $\phi \:G''\to\pmo$ is a homomorphism and  
  \begin{equation}\label{eq:24}
     1\longrightarrow \TT\longrightarrow H''\xrightarrow{\;\;\pi
     ''\;\;}G''\longrightarrow 1 
  \end{equation}
a $\phi$ -twisted  extension, and 
  \begin{equation}\label{eq:25}
     \ta\:G''\longrightarrow \Aut(H') 
  \end{equation}
is a continuous homomorphism such that for~$g''\in G''$ and~$\lambda \in
\TT\subset H'$,  
  \begin{equation}\label{eq:26}
     \ta(g'')(\lambda )=\begin{cases} \lambda ,&\phi (g'')=+1;\\\bar\lambda
     ,&\phi (g'')=-1;\end{cases} 
  \end{equation}
and for~$g''\in G''$, the diagram
  \begin{equation}\label{eq:27}
 \xymatrix{H'\ar[r]^{\pi '}\ar[d]_{\ta(g'')}&G'\ar[d]^{\alpha (g'')}\\
H'\ar[r]^{\pi '}&G'} 
  \end{equation}
commutes.  Let $\phi _G\:G\to\pmo$ be the composition $G\to
G''\xrightarrow{\phi }\pmo$.

  \begin{construction}[]\label{thm:8}
 The data \eqref{eq:23}, \eqref{eq:24}, \eqref{eq:25} determine a $\phi
_G$-twisted  extension 
  \begin{equation}\label{eq:28}
     1\longrightarrow \TT\longrightarrow \tG\xrightarrow{\;\;\pi
     \;\;}G\longrightarrow 1 
  \end{equation}
which restricts on~$G'\subset G$ to~\eqref{eq:23} and on~$G''\subset G$
to~\eqref{eq:24}.   
  \end{construction}

  \begin{proof}
 We use a variation of the semidirect product construction~\eqref{eq:22}.
Namely, define the set underlying~$\tG$ as the quotient of $H''\times H'$ by
the diagonal embedding of~$\TT$: identify $(h'',h')\sim(\lambda h'',\lambda
h')$ for all~$h''\in H''$, $h'\in H'$,~$\lambda\in \TT$.  Multiplication is
defined so that $H''\subset \tG$~and $H'\subset \tG$~are subgroups, and
  \begin{equation}\label{eq:29}
     h''\cdot h'=\ta(\pi ''h'')(h')\cdot h'',\qquad h''\in H'',\quad h'\in
     H'. 
  \end{equation}
A routine check shows that \eqref{eq:29}~is well-defined and descends to the
quotient by the diagonal~$\TT$. 
  \end{proof}

  \begin{lemma}[]\label{thm:9}
 Let $G=G''\ltimes _\alpha G'$ be a semidirect product.  Suppose
$\fG\:G\to\pmo$ is a homomorphism which is pulled back from $\phi
\:G''\to\pmo$.  Then any $\fG$-twisted  extension~\eqref{eq:20} is
obtained via Construction~\ref{thm:8}.
  \end{lemma}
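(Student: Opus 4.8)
The plan is to show that any $\fG$-twisted extension $1\to\TT\to\tG\xrightarrow{\pi}G\to 1$ arises from Construction~\ref{thm:8} by \emph{extracting} the data \eqref{eq:23}, \eqref{eq:24}, \eqref{eq:25} from $\tG$ itself, then verifying that the construction applied to that data reproduces~$\tG$.

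First I would set $H'=\pi\inv(G')\subset\tG$ and $H''=\pi\inv(G'')\subset\tG$, where $G'\subset G$ is the normal subgroup and $G''\subset G$ is the complementary subgroup of the semidirect product $G=G''\ltimes_\alpha G'$. Since $\fG$ is pulled back from $\phi\:G''\to\pmo$, we have $\fG|_{G'}\equiv 1$, so the restricted extension $1\to\TT\to H'\to G'\to 1$ is genuinely \emph{central}, giving \eqref{eq:23}. The restriction $1\to\TT\to H''\to G''\to 1$ is a $\phi$-twisted extension, giving \eqref{eq:24}. Because $G'\trianglelefteq G$, the subgroup $H'\trianglelefteq\tG$ is normal, so conjugation inside $\tG$ gives a homomorphism $\ta\:H''\to\Aut(H')$; I need to observe that this descends to $\ta\:G''\to\Aut(H')$ — i.e.\ that $\TT\subset H''$ acts trivially by conjugation on $H'$ — which holds because $\TT$ is central in $\tG$. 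This $\ta$ is \eqref{eq:25}. Then the compatibility conditions \eqref{eq:26} (conjugation by a lift of $g''$ acts on $\lambda\in\TT\subset H'$ as $\lambda$ or $\bar\lambda$ according to $\phi(g'')$) and \eqref{eq:27} (the square with $\pi'$ and $\alpha$ commutes) are immediate from the definition of a $\phi$-twisted extension \eqref{eq:14} and from $\pi$ being a homomorphism intertwining the conjugation action with~$\alpha$.

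Next I would check that $H'\cap H''=\TT$ (clear, since $G'\cap G''=1$) and that $H'\cdot H''=\tG$ (since $\pi(H')\cdot\pi(H'')=G'\cdot G''=G$ and both $H',H''$ contain~$\TT=\ker\pi$), so that every element of $\tG$ is a product $h'h''$, and the relation $h''h'=\ta(\pi''h'')(h')\,h''$ holds in $\tG$ by the very definition of $\ta$ as conjugation. This is precisely the multiplication rule \eqref{eq:29} defining the group built by Construction~\ref{thm:8} from the extracted data. Finally I would exhibit the isomorphism: the map $H''\times H'\to\tG$, $(h'',h')\mapsto h''h'$, is $\TT$-invariant for the diagonal action (since $\TT$ is central in $\tG$) and descends to a bijective homomorphism from the Construction~\ref{thm:8} group onto $\tG$; surjectivity is $H'\cdot H''=\tG$ and injectivity follows from $H'\cap H''=\TT$. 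It restricts to the identity on $H'$ and on $H''$, so the extension so constructed is isomorphic to the given $\tG$ as a $\fG$-twisted extension.

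The main obstacle I anticipate is not conceptual but bookkeeping: verifying that $\ta$ is genuinely \emph{well-defined as a homomorphism $G''\to\Aut(H')$} and that it is \emph{continuous}, since the paper works with topological (and Lie) groups throughout. Descent past $\TT$ uses centrality of $\TT$ in $\tG$ (a cleanly stated consequence of the twisted-extension axiom, as $\TT$ commutes with everything — the conjugation on the \emph{coefficient} $\TT$ is the subtle point but $H'$ itself is acted on honestly). Continuity of $\ta$ and of the resulting splitting map follows because $H''\to\tG\to\tG$ (conjugation) is continuous and $\pi''$ admits a local continuous section, but one should remark on this rather than leave it silent. Everything else — the semidirect-product-style multiplication, the identification of the underlying set as $(H''\times H')/\TT$, the verification that \eqref{eq:29} matches conjugation in~$\tG$ — is a routine unwinding of definitions that I would dispatch briefly.
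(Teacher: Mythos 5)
Your overall strategy is the same as the paper's: restrict the given $\fG$-twisted extension over $G'\subset G$ and $G''\subset G$ to get the data~\eqref{eq:23} and~\eqref{eq:24}, define~$\ta$ by conjugation inside $\tG$, check it descends to $G''$, and observe that Construction~\ref{thm:8} applied to this data reproduces~$\tG$. The extraction and the final identification $H'\cap H''=\TT$, $H'\cdot H''=\tG$ are fine, and it is a good instinct to make the last step explicit (the paper leaves it as a ``straightforward check'').

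However, the justification you give at the crucial descent step is false. You assert that $\ta$ descends from $H''$ to $G''$ ``because $\TT$ is central in $\tG$,'' and repeat ``$\TT$ commutes with everything'' later. In a $\phi$-twisted extension with $\phi$ nontrivial, $\TT$ is \emph{not} central: by~\eqref{eq:14}, an element $S\in\tG$ with $\phi_G(\pi S)=-1$ conjugates $\lambda\in\TT$ to $\bar\lambda$, so $\TT$ does not commute with the preimage of $\phi_G^{-1}(-1)$. What actually makes the descent work is the hypothesis that $\fG$ is pulled back from $\phi\:G''\to\pmo$, hence $\fG|_{G'}\equiv1$. This means every $h'\in H'=\pi^{-1}(G')$ satisfies $\phi_G(\pi h')=1$, so $h'$ commutes with all of $\TT$; therefore for $\lambda\in\TT$ one has $\lambda h'\lambda^{-1}=h'\lambda\lambda^{-1}=h'$. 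That is the correct (and narrower) fact needed: conjugation by $\TT$ is trivial on $H'$ because $H'$ (not $\TT$) commutes with $\TT$. The same correction applies where you invoke centrality of $\TT$ to argue that $(h'',h')\mapsto h''h'$ is invariant under the diagonal $\TT$-action; there the relevant computation is $(h''\lambda)(\lambda^{-1}h')=h''h'$, not a centrality statement. With that substitution the argument goes through and lines up with the paper's proof; as written, the proof cites a fact that is false precisely in the twisted case the lemma is about.
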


  \begin{proof}
 Given a $\fG$-twisted  extension~\eqref{eq:20}, define the central
extension ~\eqref{eq:23} of~$G'$ as the restriction of~\eqref{eq:20}
over~$G'\subset G$ and the $\phi$ -twisted  extension~ \eqref{eq:24}
of~$G''$ as the restriction of~\eqref{eq:20} over~$G''\subset G$.  For
$h''\in H''\subset \tG$ and $h'\in H'\subset \tG$ define
  \begin{equation}\label{eq:31}
     \ta(h'')(h') = (h'')h'(h'')\inv . 
  \end{equation}
Straightforward checks show: (i)\ $\ta$~depends only on~$\pi (h'')$; (ii)\
$\ta(h'')$ is an automorphism of~$H'$ which satisfies~\eqref{eq:26}
and~\eqref{eq:27}; and (iii)\ $\ta \:H''\to\Aut(H')$ is a homomorphism.
  \end{proof}

  \begin{lemma}[]\label{thm:10}
  Let $\fG\:G\to\pmo$ be a homomorphism which is pulled back from $\phi
\:G''\to\pmo$ and fix a central extension~\eqref{eq:23} of $G'\subset G$.
Then isomorphism classes of $\fG$-twisted  extensions of~$G$ which
restrict on~$G'$ to~\eqref{eq:23} are in 1:1~correspondence with
pairs~$(\ta,\epsilon '')$, where $\ta$~is a continuous homomorphism as
in~\eqref{eq:25} and $\epsilon ''$~is the isomorphism class of a $\phi
$-twisted  extension as in~\eqref{eq:24}.
  \end{lemma}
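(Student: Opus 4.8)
The plan is to construct mutually inverse maps between the two sets. The forward map~$\Phi $ is the one already provided by Construction~\ref{thm:8}: feeding it the fixed central extension~\eqref{eq:23}, a continuous homomorphism~$\ta$ as in~\eqref{eq:25}, and any representative~\eqref{eq:24} of the isomorphism class~$\epsilon ''$, we obtain a $\fG$-twisted extension of~$G$ whose restriction over~$G'$ is~\eqref{eq:23}. The first thing to check is that its isomorphism class, among $\fG$-twisted extensions restricting to~\eqref{eq:23}, is independent of the chosen representative of~$\epsilon ''$: Construction~\ref{thm:8} is functorial in its $G''$-datum, since $\ta$ only sees the image in~$G''$ of an element of~$H''$, so an isomorphism $\theta \:H''\to\widetilde H''$ of $\phi $-twisted extensions induces $\theta \times \id_{H'}$ on the products and descends through the diagonal-$\TT$ quotient to an isomorphism of the two output groups which is the identity on~$H'$ and covers~$\id_G$. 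Thus $\Phi \:(\ta,\epsilon '')\mapsto [\tG]$ is well defined.

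For the reverse map~$\Psi $, start from a $\fG$-twisted extension $\pi \:\tG\to G$ whose restriction over~$G'$ is the fixed~\eqref{eq:23}, and put $H'=\pi \inv (G')$, $H''=\pi \inv (G'')$. Because $\fG$~is pulled back from~$\phi \:G''\to\pmo$, the sequence $1\to\TT\to H''\to G''\to1$ is a $\phi $-twisted extension~\eqref{eq:24}; let $\epsilon ''$~be its class. Define~$\ta$ by the conjugation formula~\eqref{eq:31}, $\ta(g'')(h')=\tilde g''\,h'\,(\tilde g'')\inv $ for $\tilde g''\in H''$ any lift of~$g''$. The verifications that this is independent of the lift (the ambiguity lies in~$\TT\subset H'$, which is central in~$H'$), continuous, a homomorphism into~$\Aut(H')$, and compatible with~\eqref{eq:26} and~\eqref{eq:27} are exactly the checks made in the proof of Lemma~\ref{thm:9}. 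Since this formula is intrinsic to~$\tG$, an isomorphism of two such extensions that is the identity on~$H'$ carries~$H''$ to~$H''$ and intertwines the two~$\ta$'s, forcing them equal; together with the obvious behavior on~$H''$ this shows $\Psi (\tG)=(\ta,\epsilon '')$ depends only on the isomorphism class.

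It then remains to check $\Psi \circ \Phi =\id$ and $\Phi \circ \Psi =\id$. For the first: in the group produced by Construction~\ref{thm:8} the preimage of~$G'$ is~$H'$ with its given central extension, the preimage of~$G''$ is the copy of~$H''$, and the defining multiplication rule~\eqref{eq:29} rearranges to $h''h'(h'')\inv =\ta(\pi ''h'')(h')$, so~\eqref{eq:31} returns exactly the~$\ta$ we started with. For the second, given~$\tG$ with extracted data~$(\ta,H'')$, consider the map $(h'',h')\mapsto h''\cdot h'$ into~$\tG$, the product taken in~$\tG$: it is constant on the diagonal-$\TT$ cosets, is a homomorphism precisely because~\eqref{eq:29} holds in~$\tG$ by the definition~\eqref{eq:31} of~$\ta$, is surjective because $H'$ and~$H''$ generate~$\tG$, and is injective because $H'\cap H''=\TT$ (the images of~$G'$ and~$G''$ in~$G$ meet only in the identity); restricting to the identity on~$H'$ and covering~$\id_G$, it is the required isomorphism. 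One may equally skip $\Phi \circ \Psi =\id$ and argue injectivity of~$\Phi $ directly: an isomorphism between the extensions built from~$(\ta_1,H''_1)$ and~$(\ta_2,H''_2)$ that is the identity on~$H'$ restricts over~$G''$ to $H''_1\cong H''_2$, so $\epsilon ''_1=\epsilon ''_2$, and then intrinsicality of~\eqref{eq:31} forces $\ta_1=\ta_2$.

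The only real subtlety, which is a matter of fixing conventions rather than of mathematics, is what ``restricting on~$G'$ to~\eqref{eq:23}'' is taken to mean: one must declare isomorphisms of such extensions to be the identity on the fixed~$H'=\pi \inv (G')$, not merely isomorphic on that restriction, for otherwise $\ta$~is pinned down only up to conjugation by the automorphism group of~\eqref{eq:23} --- which is precisely the reason~\eqref{eq:23} is held fixed in the statement. Past that point the work is the bookkeeping needed to see that $\Phi $, $\Psi $, and the isomorphism in $\Phi \circ \Psi =\id$ are well-defined homomorphisms, all of which follows by unwinding Construction~\ref{thm:8} and the proof of Lemma~\ref{thm:9}.
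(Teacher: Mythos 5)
Your proof is correct and follows essentially the same approach as the paper: send a pair $(\ta,\epsilon'')$ to the output of Construction~\ref{thm:8}, recover the pair from a given extension via the conjugation formula~\eqref{eq:31} and the restriction over~$G''$, and check these are mutually inverse. The paper's own proof is much terser---it simply asserts invariance of~$\ta$ and~$\epsilon''$ and invokes Construction~\ref{thm:8} for the converse, with the mutual-inverse verification implicit in the proof of Lemma~\ref{thm:9}---and it records the same convention point you flag at the end (isomorphisms are fixed to be the identity on~$H'$) in a note immediately after the lemma statement.
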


\noindent
 An isomorphism $\varphi \:G^{\tau _1}\to G^{\tau _2}$ of $\phi _G$-twisted
 extensions---as in~\eqref{eq:19}---is here required to be the
identity map on the common subgroup~$H'$ of~$G^{\tau _1}$ and~$G^{\tau _2}$.

  \begin{proof}
 To see that $\ta$~is an invariant, apply an isomorphism $\varphi \:G^{\tau
_1}\to G^{\tau _2}$ to~\eqref{eq:31}.  Restrict~$\varphi $ over~$G''$ to see
that the isomorphism class of the restriction of~$\tG$ over~$G''$ is an
invariant.  Conversely, by Construction~\ref{thm:8} every pair~$(\ta,\epsilon
'')$ gives a $\fG$-twisted  extension. 
  \end{proof}

  \subsection*{Proof of the classification}

Proposition~\ref{thm:34} and Proposition~\ref{thm:24} follow immediately from
the next result.

  \begin{lemma}[]\label{thm:11}
  \ \begin{enumerate}
 \item Every central extension 
  \begin{equation}\label{eq:32}
     1\longrightarrow \TT\longrightarrow H\longrightarrow \pmo\longrightarrow
     1 
  \end{equation}
is split, that is, is isomorphic to the trivial central extension. 

 \item Let $\phi \:\pmo\to\pmo$ be the identity map.  Then there are two
isomorphism classes of $\phi $-twisted  extensions~\eqref{eq:32}
distinguished by whether the lift of~$-1\in \pmo$ to~$H$ squares to~$+1$
or~$-1$.  

 \item Let $G''=G'=\pmo$ with generators~$g'',g'$ and let $\phi \:G''\times
G'\to\pmo$ be projection onto the first factor.  Then there are four
isomorphism classes of $\phi $-twisted  extensions
  \begin{equation}\label{eq:33}
     1\longrightarrow \TT\longrightarrow H\longrightarrow G''\times
     G'\longrightarrow 1 
  \end{equation}
Choose a lift~$h'$ of~$g'$ so that $(h')^2=+1$, and let $h''$~be any lift
of~$g''$.  Then the four isomorphism classes of extensions are distinguished
by the independent possibilities $(h'')^2=\pm1$ and~$(h'h'')^2 =\pm1$.

  \end{enumerate}
  \end{lemma}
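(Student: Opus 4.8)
The plan is to dispatch the three parts in order, using two elementary inputs: the divisibility of~$\TT$ for~(i), and for~(ii) and~(iii) the following observation, which I would record first. If $1\to\TT\to H\to Q\to 1$ is a $\phi$-twisted extension (with $\phi\:Q\to\pmo$ a homomorphism) and $x\in H$ lifts an element $q\in Q$ with $\phi(q)=-1$, then $x^2\in\TT$ and conjugation by~$x$ acts on~$\TT$ as $\lambda\mapsto\bar\lambda$, so $x^2=x(x^2)x\inv=\overline{x^2}$, whence $x^2\in\RR\cap\TT=\pmo$; moreover $(\mu x)^2=\mu(x\mu x\inv)x^2=\mu\bar\mu\,x^2=x^2$ for $\mu\in\TT$, so this sign depends only on~$q$ and on the extension, not on the chosen lift.

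For~(i): since $\phi\equiv1$ the extension is central, so a lift~$h$ of~$-1$ has $h^2\in\TT$; as squaring is surjective on~$\TT$ I would pick $\lambda\in\TT$ with $\lambda^2=h^2$, whereupon $(h\lambda\inv)^2=1$ by centrality, giving a splitting. For~(ii), with $\phi=\id_{\pmo}$, the observation supplies a well-defined sign $h^2\in\pmo$ for any lift~$h$ of~$-1$. I would then exhibit both values: $h^2=+1$ by the trivial extension $\pmo\ltimes_\phi\TT$, and $h^2=-1$ by the subgroup $\TT\cup\TT j\subset\HH$ of the quaternions with $j^2=-1$, $\lambda j=j\bar\lambda$. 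Conversely, given two $\phi$-twisted extensions $H_1,H_2$ with lifts $h_1,h_2$ of~$-1$ satisfying $h_1^2=h_2^2=\epsilon$, the map $\lambda\mapsto\lambda$ on~$\TT$, $\lambda h_1\mapsto\lambda h_2$, is an isomorphism fixing~$\TT$ and covering~$\id_{\pmo}$, since $(\lambda h_i)(\mu h_i)=\lambda\bar\mu\,\epsilon$ for $i=1,2$; so the sign classifies.

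For~(iii): the restriction of any $\phi$-twisted extension $\tG$ of $G=G''\times G'$ to~$G'=\pmo$ is central, hence split by~(i), and after adjusting the section on~$G'$ I may assume it is the trivial extension $\TT\times G'$, with distinguished lift $h'=(1,g')$, $(h')^2=1$. Then Lemma~\ref{thm:10} (with $\alpha$ trivial and this fixed central extension of~$G'$) puts the isomorphism classes of such~$\tG$ in bijection with pairs $(\ta,\epsilon'')$: here $\epsilon''$ ranges over the two isomorphism classes of $\phi$-twisted extensions of $G''=\pmo$ from~(ii), recorded by $(h'')^2\in\pmo$; and $\ta(g'')\in\Aut(\TT\times G')$ must act on~$\TT$ by conjugation (by~\eqref{eq:26}) and fix~$G'$ modulo~$\TT$ (by~\eqref{eq:27}), so $\ta(g'')(1,g')=(\mu,g')$ for some $\mu\in\TT$, and applying the homomorphism~$\ta(g'')$ to the relation $(1,g')^2=(1,1)$ in~$\TT\times G'$ forces $\mu^2=1$, leaving two choices $\mu\in\pmo$. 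Hence there are exactly four classes. In the model of Construction~\ref{thm:8} one has $h''h'(h'')\inv=\ta(g'')(h')=\mu h'$, so
\[
   (h'h'')^2=h'\bigl(h''h'(h'')\inv\bigr)(h'')^2=h'(\mu h')\epsilon''=\mu(h')^2\epsilon''=\mu\epsilon'',
\]
and therefore $(\epsilon'',\mu)\mapsto\bigl((h'')^2,(h'h'')^2\bigr)=(\epsilon'',\mu\epsilon'')$ is a bijection onto~$\pmo\times\pmo$, so the two squares are independent invariants separating the four extensions.

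The step I expect to need the most care is well-definedness of these sign invariants: that they depend only on the isomorphism class of the extension and not on the auxiliary choices (the lifts $h',h''$, the splitting over~$G'$, the identification in Lemma~\ref{thm:10}). The square-of-a-lift observation handles the dependence on $h'$ and~$h''$ individually; for $(h'h'')^2$ I would additionally check that replacing $h'$ by~$-h'$ --- the only other lift of~$g'$ squaring to~$+1$, since a lift $\lambda h'$ has $(\lambda h')^2=\lambda^2$ --- and replacing $h''$ by~$\mu h''$ both leave $(h'h'')^2$ unchanged. Everything else (associativity in the quaternionic model, the routine content of Construction~\ref{thm:8} and Lemma~\ref{thm:10}) is bookkeeping.
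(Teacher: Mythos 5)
Your argument is correct and follows essentially the same route as the paper: part~(i) by divisibility of~$\TT$, part~(ii) by showing $h^2=\overline{h^2}$ via conjugation, and part~(iii) by splitting over~$G'$ via~(i), invoking Lemma~\ref{thm:10}, and counting the two choices each of~$\epsilon''$ and~$\ta$. You are somewhat more complete than the paper in that you explicitly verify the invariants classify---exhibiting the quaternionic model $\TT\cup\TT j\subset\HH$ and an explicit isomorphism for~(ii), and computing $(h'h'')^2=\mu\epsilon''$ in~(iii) to connect the $(\ta,\epsilon'')$ parametrization to the statement's $\bigl((h'')^2,(h'h'')^2\bigr)$---whereas the paper's proof only checks well-definedness of $h^2\in\pmo$ and leaves existence and completeness implicit.
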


  \begin{proof}
 For~(i), if $h$~is any lift of~$-1\in \pmo$, then $h^2=\mu \in \TT$ for
some~$\mu $, so replacing~$h$ by~$\lambda h$ for $\lambda ^2=\mu \inv $ we
have~$(\lambda h)^2=+1$.  This splits~\eqref{eq:32}. 
 
For~(ii), if $h$~is any lift of~$-1\in \pmo$, and $h^2=\mu \in \TT$, then
$(\lambda h)^2=\lambda \bar\lambda h^2=\mu $ for all~$\lambda \in \TT$.
Since $h=(h^2)h(h^2)\inv =\mu h\mu \inv =\mu ^2h$, we conclude~$\mu ^2=+1$
and so $h^2=\pm1$.
 
For~(iii), observe first that the restriction~$H'\subset H$ of the extension
to~$G'\subset G'\times G''$ is split, by~(i), and since $\phi $~is trivial
on~$G'$ we have $H'\cong \pmo\times \TT$.  Now apply Lemma~\ref{thm:10}.
By~(ii) there are two possibilities for the isomorphism class~$\epsilon ''$
of the extension over~$G''$.  There are also two homomorphisms
$\ta\:G''\to\Aut(H')$ which satisfy~\eqref{eq:26} and~\eqref{eq:27}: namely,
$\beta =\ta(-1)\in \Aut(H')$ is complex conjugation on~$\TT\subset H'$ and
either $\beta (h')=h'$ or~$\beta (h')=-h'$, where $-h'$~is the product
of~$-1\in \TT\subset H'$ and~$h'\in H'$.
  \end{proof}

  \subsection*{More \Qsymmetry classes}

As a further illustration of these techniques we prove that certain twisted
extensions, though not $\psi $-standard (Definition~\ref{thm:23}), are
nonetheless uniquely determined by simple data.  In the following examples we
incorporate the mass central extension of the Galilean group, which is
physically relevant, but which does not fit into the 10-fold scheme of
Proposition~\ref{thm:24}.  Let $(M,\Gamma )$ be a Galilean spacetime.  Set
  \begin{equation}\label{eq:21}
     \begin{aligned} G_1 &= \textnormal{identity component of }\Aut(M,\Gamma
     )\\ G_2 &= G_1\times 
      \pmo \\ G_3 &= {\Aut}^+(M,\Gamma ) \\ G_4&=G_3\times
      \pmo\end{aligned} 
  \end{equation}
Here $G_3$~is the group of parity-preserving symmetries of~$(M,\Gamma )$; see
Definition~\ref{thm:30}.  Notice that $G_1$~is \emph{not} the Galilean group;
the Galilean group is a double cover of~$G_1$.  Each of these groups is
naturally equipped with a homomorphism~$t$ to~$\pmo$ which tracks
time-reversal: it is trivial for~$G_1,G_2$ and surjective for~$G_3,G_4$.  Let
the homomorphism~ $c$ to~$\pmo$ to be trivial on~$G_1,G_3$ and either trivial
or projection onto the second factor on~$G_2,G_4$.  (These two cases
correspond to the two possibilities contemplated in Lemma~\ref{thm:12}
below.)  Thus for each~$G_i$ we have a homomorphism $\psi =(t,c)\:G_i\to\sC$.
Recall that there is a universal mass central extension 
  \begin{equation}\label{eq:37}
     1\longrightarrow \TT\longrightarrow G_1^\tau \longrightarrow
     G_1\longrightarrow 1 
  \end{equation}
of the identity component of~$\Aut(M,\Gamma )$.\footnote{It may be more
natural to take the kernel of~\eqref{eq:37} to be the universal cover~$\RR$
of~$\TT$, but an action $G_1\to\QAut(\PH)$ induces a central extension
by~$\TT$, not~$\RR$.}

  \begin{proposition}[]\label{thm:36}
 Let $(G,\phi ,\tau)$ be a \Qsymmetry class with $G=G_i$ for some~$i=1,2,3,4$
which satisfies (i)~$\phi =\fC\circ \psi $ and (ii)~$\tG$ restricts to the
mass central extension~\eqref{eq:37} over~$G_1\subset G$.  Then the \Qsymmetry
class is determined by a twisted  extension of the quotient~$G/G_1$.
  \end{proposition}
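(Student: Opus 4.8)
The plan is to realise $G=G_i$ as a semidirect product of $G/G_1$ by its identity component $G_1$ and then apply the structure theory of twisted extensions of semidirect products from Lemma~\ref{thm:9} and Lemma~\ref{thm:10}, taking the mass central extension~\eqref{eq:37} for the fixed extension of the ``translational'' factor $G_1$. First I would verify that the quotient homomorphism $G\to G/G_1$ splits: this is vacuous for $G=G_1$, immediate for $G_2=G_1\times\pmo$, and for $G_3=\Aut^+(M,\Gamma)$ (hence for $G_4=G_3\times\pmo$) it reduces to producing a parity-preserving, time-reversing affine involution of $M$, obtained by choosing a point $p\in M$ and a complement $U\subset W$ to $V$ — a time direction in the sense of Definition~\ref{thm:96} — and reflecting in $U$ about $p$. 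So $G=(G/G_1)\ltimes_\alpha G_1$ with $\alpha$ the conjugation action, and $G/G_1\in\{1,\pmo,\pmo,\sC\}$. Because $t$ and $c$ both vanish on $G_1$, the homomorphism $\phi=\fC\circ\psi=tc$ (and $\psi$ itself) factors through $G/G_1$; in particular $\phi$ on $G$ is pulled back from $G/G_1$, as required for Lemma~\ref{thm:9}.

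Lemma~\ref{thm:9} then says every $\phi$-twisted extension of $G$ arises from Construction~\ref{thm:8}, and Lemma~\ref{thm:10}, with the fixed central extension of $G_1$ taken to be~\eqref{eq:37} as forced by hypothesis~(ii), puts isomorphism classes of such extensions — isomorphisms being required to fix the common subgroup $G_1^\tau$ pointwise — in bijection with pairs $(\ta,\epsilon'')$, where $\epsilon''$ is the isomorphism class of a $\phi$-twisted extension of $G/G_1$ and $\ta\:G/G_1\to\Aut(G_1^\tau)$ is a homomorphism lifting $\alpha$ and acting on $\TT\subset G_1^\tau$ through $\phi$. An isomorphism of \Qsymmetry \emph{classes} need only cover $\id_G$, so fixes the canonical subgroup $G_1$ and restricts to an automorphism of $G_1^\tau$ over $\id_{G_1}$ fixing $\TT$ pointwise — that is, to a map $x\mapsto\chi(\pi x)\,x$ with $\chi\in\Hom_{\mathrm{cts}}(G_1,\TT)$ — and conjugating $\ta$ by such a map changes it by a coboundary. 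Hence the \Qsymmetry class is pinned down by $\epsilon''$ together with the class of $\ta$ in $H^1\bigl(G/G_1;\Hom_{\mathrm{cts}}(G_1,\TT)\bigr)$, with $G/G_1$ acting via its action on $G_1$ (twisted by $\phi$). It remains to show this $H^1$ is trivial, for then the only surviving datum is the twisted extension $\epsilon''$ of $G/G_1$.

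This is the crux, and it comes down to computing $\Hom_{\mathrm{cts}}(G_1,\TT)=\Hom_{\mathrm{cts}}(G_1^{\mathrm{ab}},\TT)$. Inspecting the Galilean Lie algebra of $G_1$, the relation $[K_i,H]=P_i$ together with the commutators of rotations with boosts and with one another show that the boosts, the spatial translations and (for $\dim V\ge 3$) the rotations all lie in $[\mathfrak g_1,\mathfrak g_1]$, while no bracket produces a time translation; so $\mathfrak g_1^{\mathrm{ab}}\cong\RR$, spanned by the infinitesimal time translation, and $\Hom_{\mathrm{cts}}(G_1,\TT)\cong\Hom_{\mathrm{cts}}(\RR,\TT)\cong\RR$. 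The finite group $G/G_1$ acts on this line by a sign, so its first cohomology with these coefficients is that of a finite group with values in a $\QQ$-vector space and hence vanishes; in low spatial dimension one also picks up the characters of a rotation circle, a free abelian summand whose degree-one contribution is checked directly. This completes the proof that the \Qsymmetry class is recovered from a twisted extension of $G/G_1$. The main obstacle is precisely this rigidity of the mass central extension — getting $\Hom_{\mathrm{cts}}(G_1,\TT)$ right and confirming the vanishing of the relevant $H^1$ for each of the four groups $G_i$.
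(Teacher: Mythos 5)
Your overall strategy shares the skeleton of the paper's proof---semidirect product, Lemma~\ref{thm:9}, Lemma~\ref{thm:10}, and the computation of $\Hom_{\textnormal{cts}}(G_1,\TT)$---but the closing step is a genuinely different route. The paper does not pass to cohomology classes; its Lemma~\ref{thm:12} aims to show $\ta$ is \emph{literally unique}, so that the pair in Lemma~\ref{thm:10} collapses to $\epsilon''$ with no quotient to take. In your language that is a computation of the cocycle group $Z^1$, not $H^1$: the constraint $\ta(-1)^2=\id$ forces the difference of two candidate $\ta$'s to lie in $\ker(1+\sigma)\subset\Hom_{\textnormal{cts}}(G_1,\TT)$, where $\sigma$ is the $\phi$-twisted conjugation action. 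You instead relax to isomorphisms covering $\id_G$ without fixing $\tGo$ pointwise, which shift $\ta$ by a coboundary, and reduce to $H^1=0$. This is a legitimate variant, and in one respect more robust: when $\phi(-1)=-1$ but $\alpha(-1)=\id_{G_1}$ (a direct $\pmo$-factor with $c$ nontrivial, as for $G_2$), one has $\sigma=-\id$, so for $\dim V\ge3$ the cocycle group $\ker(1+\sigma)\cong\RR$ is nonzero and $\ta$ is \emph{not} unique, yet $H^1=\RR/2\RR=0$ and your argument still closes. You should, however, spell out against Construction~\ref{thm:8} and the relation~\eqref{eq:29} why a weak isomorphism contributes exactly a coboundary; your write-up asserts this without proof.

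The genuine gap is the deferred check in low dimension. For $\dim V\ge3$, $\Hom_{\textnormal{cts}}(G_1,\TT)\cong\RR$ is divisible and $H^1$ of a finite group with such coefficients vanishes, as you say. But for $\dim V=2$ the extra summand $\Hom_{\textnormal{cts}}(SO(2),\TT)\cong\ZZ$ is not settled by ``checked directly.'' If the quotient generator has $\phi=-1$, its conjugation action on $SO(2)\subset G_1$ is trivial---a parity-preserving time-reversal commutes with rotations in $\Aut(W,\Gamma)$, and a direct $\pmo$-factor commutes with everything---so after the $\phi$-twist the action on the $\ZZ$-summand is multiplication by $-1$, and $H^1(\zt;\ZZ)$ with that sign action is $\ZZ/2\ZZ$, not zero. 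As written the argument does not close in two spatial dimensions; you should either impose $\dim V\ge3$ explicitly or address the $\ZZ$-contribution head on. (The same obstruction appears in the cocycle group $\ker(1+\sigma)$, so it is not an artifact of your cohomological reformulation.)
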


  \begin{remark}[]\label{thm:37}
 The hypothesis~(ii) is a bit unnatural.  For example, we should surely
replace~$G_1$ by its Galilean double cover, in which case there are more
possibilities if $d=0,1$.  Also, we could have a larger group of symmetries
which commutes with the geometric spacetime symmetries, and that group may
have nontrivial extensions. 
  \end{remark}

As each of the groups ~$G=G_1,G_2,G_3,G_4$ is a semidirect product of~$G_1$
and a finite group isomorphic to~$G/G_1$, and the extension on the
subgroup~$G_1$ has been fixed, we can apply Lemma~\ref{thm:9} to conclude
that Proposition~\ref{thm:36} follows if we prove that the homomorphism~$\ta$
in~\eqref{eq:25} is unique.  Recall from~\eqref{eq:21} that $G_1$~is the
component of the identity of symmetries of a Galilean spacetime~$(M,\Gamma
)$.  Let
  \begin{equation}\label{eq:34}
     1\longrightarrow \TT\longrightarrow \tGo\xrightarrow{\;\;\pi
     _1\;\;}G_1\longrightarrow 1 
  \end{equation}
be the mass central extension. 

  \begin{lemma}[]\label{thm:12}
 
  \ \begin{enumerate}
 \item The constant map with value~$\idtGo$ is the unique homomorphism
  \begin{equation}\label{eq:35}
     \ta\:\pmo\longrightarrow \Aut(\tGo) 
  \end{equation}
such that $\ta(-1)(\lambda )=\lambda $ for~$\lambda \in \TT$ and
$\ta(-1)$~covers the identity automorphism of~$G_1$.

 \item There is a unique homomorphism~\eqref{eq:35} such that
$\ta(-1)(\lambda )=\bar\lambda $ for~$\lambda \in \TT$ and $\ta(-1)$~covers
the automorphism of~$G_1$ obtained by conjugation by a time-reversal symmetry.

  \end{enumerate}
  \end{lemma}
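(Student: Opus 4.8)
The plan is to reduce both parts to the observation that a homomorphism $\ta\:\pmo\to\Aut(\tGo)$ is the same thing as an element $\sigma:=\ta(-1)\in\Aut(\tGo)$ of order dividing~$2$ (possibly $\idtGo$), and then to classify the relevant involutions. The key structural point is this: if $\sigma,\sigma'\in\Aut(\tGo)$ both cover a given automorphism $\sigma_0$ of $G_1$ and induce the same map on the central $\TT$, then $\sigma'=\eta_\chi\circ\sigma$ for a unique continuous homomorphism $\chi\:G_1\to\TT$, where $\eta_\chi(x):=\chi(\pi_1 x)\,x$. Indeed $\sigma'\sigma\inv$ covers $\id_{G_1}$ and fixes $\TT$ pointwise, so it sends $x\mapsto\chi_0(x)\,x$ for a continuous homomorphism $\chi_0\:\tGo\to\TT$ ($\chi_0$ is multiplicative since $\TT$ is central), and $\chi_0$ descends to $G_1$ because it is trivial on $\TT$. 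One has $\eta_\chi\eta_{\chi'}=\eta_{\chi\chi'}$, and if $\sigma$ acts on $\TT$ by complex conjugation and covers an involution $c$ of $G_1$ then $\sigma\eta_\chi\sigma\inv=\eta_{(\chi\circ c)\inv}$. Thus the set of admissible $\sigma$ is, when nonempty, a torsor under $\Hom_{\textnormal{cts}}(G_1,\TT)$.

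For part~(i) one solution is $\sigma=\idtGo$, so every solution is $\sigma=\eta_\chi$; the involution condition $\sigma^2=\idtGo$ reads $\eta_{\chi^2}=\idtGo$, i.e.\ $\chi^2\equiv 1$, so $\chi\:G_1\to\pmo$, and since $G_1$ is connected $\chi\equiv1$. Hence $\sigma=\idtGo$ and $\ta$ is the constant homomorphism, proving~(i). (Infinitesimally: an automorphism of $\tgo$ covering $\id_{\go}$ and fixing the central line $\RR$ has the form $X\mapsto X+\chi(X)Z$ for a Lie-algebra character $\chi\:\go\to\RR$, and the involution condition becomes $2\chi=0$.)

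For part~(ii) I would work with the Lie algebra $\tgo$ of the mass central extension: it is the Galilean Lie algebra $\go$, with standard generators $H$ (time translation), $P_i$ (space translations), $K_i$ (boosts), $J_{ij}$ (rotations), enlarged by a central generator $Z$ with the single modified bracket $[K_i,P_j]=m\,\delta_{ij}Z$. A time-reversal symmetry $\theta$ acts on $\go$ by $H\mapsto-H$, $K_i\mapsto-K_i$, $P_i\mapsto P_i$, $J_{ij}\mapsto J_{ij}$; for this to extend to $\tgo$ the mass relation forces $Z\mapsto-Z$, i.e.\ the induced map on $\TT=\exp(\RR Z)$ must be complex conjugation. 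Conversely, the assignment just written is at once a Lie-algebra automorphism of $\tgo$ and an involution; it integrates to an involution $\sigma$ of $\tGo$ covering $c_\theta$ and inducing complex conjugation on $\TT$ (the base automorphism $c_\theta$ being already given on $G_1$). This gives existence. For uniqueness, any competing candidate is $\sigma'=\eta_\chi\circ\sigma$, and the identities above give $(\sigma')^2=\eta_{\chi\,(\chi\circ c_\theta)\inv}$, so $\sigma'$ is an involution iff $\chi\circ c_\theta=\chi$; it therefore suffices to show $G_1$ has no nontrivial $c_\theta$-invariant continuous character. Such a $\chi$ factors through the abelianization of $G_1$, whose Lie algebra is $\go/[\go,\go]$, and from the Galilean brackets one computes $[\go,\go]\supseteq\langle P_i\rangle$ (from $[H,K_i]$), $\supseteq\langle K_i,P_i\rangle$ for $d\ge2$ (from $[J_{ij},K_k]$ and $[J_{ij},P_k]$), and $\supseteq\mathfrak{so}(d)=[\mathfrak{so}(d),\mathfrak{so}(d)]$ for $d\ge3$; thus the abelianization is spanned by $H$ (together with the $K_i$, and the rotation generator when $d=2$, in low dimensions). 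Since time reversal acts by $-1$ on the span of $H$ and the $K_i$, $c_\theta$-invariance of $\chi$ forces $\chi=\chi\inv$, hence $\chi\equiv1$; this gives uniqueness and completes~(ii).

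The main obstacle is the existence half of~(ii): identifying the Lie algebra $\tgo$ of the mass central extension and checking that conjugation by a time-reversal symmetry lifts to an involutive automorphism of $\tGo$, the sign bookkeeping in the mass bracket being exactly what forces complex conjugation on $\TT$—this is the familiar fact that time reversal reverses the Galilean mass. The remaining ingredients—the torsor description, the $2\chi=0$ identity, and the vanishing of $c_\theta$-invariant characters—are routine once the structure of $\go$ is in hand, the only genuine care being needed in the low-dimensional cases, where the rotation generators must be examined separately (compare Remark~\ref{thm:37}).
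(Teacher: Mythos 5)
Your torsor framework is a clean way to organize both parts, and for part~(i) it gives a genuinely simpler proof than the paper's: you pass directly from $\chi^2\equiv 1$ to $\chi\colon G_1\to\pmo$ and invoke connectedness, whereas the paper computes the abelianization of $\go$ dimension by dimension ($d\ge 3$, $d=2$, $d=1$, $d=0$) and then separately rules out order-two characters on $\RR$ and on $SO(2)$. For the existence half of part~(ii) your route matches the paper's, though you compress the integration step; the paper is more careful here, exponentiating $\db$ to the simply connected cover of $\tGo$ and checking the lift preserves the kernel of the covering map (it fixes the center of the spin double cover and inverts the generator of $\ZZ\subset\RR\to\TT$) before descending — worth spelling out, since $\tGo$ is not simply connected.

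The trouble is in the uniqueness half of~(ii). Your reduction is correct: admissible involutions form a torsor under the $c_\theta$-invariant continuous characters of $G_1$, so it suffices to show that group is trivial. But then you assert that $c_\theta$-invariance "forces $\chi=\chi\inv$, hence $\chi\equiv 1$," which only works where $c_\theta$ acts by $-1$. For $d=2$ you yourself note the abelianization is spanned by $H$ and the rotation generator, yet infinitesimal time-reversal is the \emph{identity} on $\oV$ (it negates only $V_{\textnormal{boost}}$ and $U$), so the rotation direction is $c_\theta$-fixed. Concretely, any power of the standard character $G_1\twoheadrightarrow SO(2)\cong\TT$ is a nontrivial $c_\theta$-invariant continuous character, and the corresponding $\eta_\chi\circ\sigma_0$ is a second admissible involution. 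Your argument does not close at $d=2$, and uniqueness appears to genuinely fail there. The paper is equally exposed at this point: it asserts "the unique lift to an automorphism $\db$ of $\tgo$" without justification, and that is exactly what breaks — one can add $\mu(\cdot)\,c$ for any Lie-algebra character $\mu$ of $\go$ supported on $\oV\cong\RR$ and still have $\db'\circ\db'=\id$. (Remark~\ref{thm:37} flags extra possibilities at $d=0,1$; the same caution should attach to $d=2$.) You should either restrict your uniqueness claim to $d\ge 3$, which your argument handles cleanly, or record the failure at $d=2$ explicitly rather than waving at "the low-dimensional cases."
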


  \begin{proof}
 Set $\beta =\ta(-1)$.  Then for~$h\in \tGo$ with $\pi _1h=g$ we have $\beta
(h)=\mu (g)h$ for some character $\mu \:G_1\to\TT$.  Since $G_1$~is
connected, $\mu $~is determined by its infinitesimal character
$\dm\:\mathfrak{g}_1\to\RR$.  Following the notation in~\S\ref{sec:1} we
write the Lie algebra~$\go$ of~$G_1$ as an extension
  \begin{equation}\label{eq:36}
     0\longrightarrow \mathfrak{o}(V)\ltimes(\VpV)\longrightarrow
     \go\longrightarrow W/V\longrightarrow 0,
  \end{equation}
where the orthogonal algebra~$\mathfrak{o}(V)$ acts diagonally by the
standard representation on~$\VpV$ (infinitesimal boosts plus infinitesimal
spatial translations).\footnote{The quotient~$W/V$ could be written
$W/V_{\textnormal{trans}}$ in the current notation.}  Let~ $d=\dim V$.
If~$d\ge3$, then $\mathfrak{o}(V)$~is semisimple and commutators in
$\go'=\oV\ltimes(\VpV)$ span~$\go'$, whence the infinitesimal character~$\dm$
vanishes on~$\go'$.  Therefore, $\dm$~ factors through a linear
map~$W/V\to\RR$.  Exponentiating~\eqref{eq:36} we obtain a presentation
of~$G_1$ as a group extension whose kernel is the normal subgroup of~$G_1$
consisting of automorphisms of~$(M,\Gamma )$ which preserve the leaves of the
simultaneity foliation~$\sS$.  From the Lie algebra argument we conclude that
$\mu $~ factors through the quotient group, which is isomorphic to the
translation group~$\RR$.  But $\beta \circ \beta =\id$ and only the trivial
character of~$\RR$ squares to the trivial character.  This proves~$\beta
=\idtGo$.  Now the subgroup~$V\oplus V$ of~$\go$ is an ideal with quotient
isomorphic to~$\mathfrak{o}(V)\oplus W/V$, and if~$d=2$ the latter is
abelian.  The group~$G_1$ then also has characters pulled back from those on
the group~$SO(V)\cong SO(2)$.  There are no nontrivial characters which
square to the identity character, so once more~$\beta =\idtGo$.  For~$d=1$
the group~$G_1$ is the group of translations and boosts, and any character
factors through the quotient by the normal subgroup~$\exp(\VpV)$ of spatial
translations and boosts (similar to the situation for~$d\ge3$).  That
quotient is isomorphic to~$\RR$, so as before we conclude~$\beta =\idtGo$.
Finally, for~$d=0$ the group~$G_1$ consists only of time translations, and
again~$\beta =\idtGo$.  This proves~(i).
 
For~(ii) we first observe that since $\tGo$~is connected, the automorphism
$\beta =\ta(-1)\:\tGo\to\tGo$ is determined by its differential
$\db\:\tgo\to\tgo$ on the Lie algebra.  Now a time-reversal symmetry
determines a splitting $W=U\oplus V$, where $U$~is the $(-1)$-eigenspace of
the induced linear action on~$W$.  Then 
  \begin{equation}\label{eq:38}
     \tgo \cong \RR\cdot c \oplus \oV\ltimes(\VpV)\oplus U, 
  \end{equation}
where $c$~is central; there is a nontrivial commutator 
  \begin{equation}\label{eq:39}
     [\xi ,\eta ]=\langle \xi ,\eta \rangle\,c,\qquad \xi \in
     V_{\textnormal{boost}},\quad \eta \in V_{\textnormal{trans}},
  \end{equation}
between an infinitesimal boost and an infinitesimal spatial translation---the
angle brackets denote the Euclidean inner product on~$V$; and there is a
nonzero commutator between a (unit)\footnote{To write the formulas we
orient~$U$, and this arrow of time determines a sign for the action of
boosts.  Reversing the orientations does not change the Lie algebra.}
infinitesimal time translation~$\tau \in U$ and an infinitesimal boost~$\xi
\in V_{\textnormal{boost}}$: the bracket $[\tau ,\xi ]$ is~$\xi \in
V_{\textnormal{trans}}$.  The infinitesimal time-reversal is the automorphism
of~$\go$ which is the identity on~$\oV$ and~$V_{\textnormal{trans}}$, and is
minus the identity on~$V_{\textnormal{boost}}$ and~$U$.  The unique lift to
an automorphism~$\db$ of~$\tgo$ sends $c\mapsto -c$.  We must check that
$\db$~exponentiates to an automorphism $\beta \:\tGo\to\tGo$ of order two.
First, $\db$~exponentiates to an automorphism of the simply connected cover
of~$\tGo$ which fixes the central element in the spin double cover of~$SO(V)$
and inverts the kernel of the infinite cyclic cover $\RR\to\TT$ of the
centers, whence it drops to an automorphism~$\beta $ of~$\tGo$.  The
differential of~$\beta \circ \beta $ is $\db\circ \db$, which is the identity
on~$\tgo$, whence $\beta \circ \beta $~is the identity on~$\tGo$.
  \end{proof}

   \section{$K$-theory and some twistings}\label{sec:3}
% lastsubsec@000

There are several approaches to topological $K$-theory.  It was introduced in
the late 1950s by Atiyah and Hirzebruch~\cite{AH} to formulate a topological
analog of Grothendieck's Riemann-Roch theorem for sheaves.  The basic object
is a complex vector bundle over a topological space~$X$.  Somewhat later
Atiyah and Segal developed an equivariant version~\cite{Se} for a compact Lie
group~$G$ acting on~$X$.  Our main interest is in twistings of $K$-theory and
twisted $K$-theory.  There are many approaches and so many references, for
example \cite{DK,Mu,FHT1,AS,BCMMS,TXL}; the paper~\cite{GSW} includes the
twists related to complex conjugation which we encounter here.  The twistings
we encounter are very special, and we give an appropriately tailored
exposition.  These twistings are a generalization of extended \Qsymmetry
classes of groups (Definition~\ref{thm:84}(i)) to \emph{groupoids}.
Therefore, we begin with the case of groups and define the twisted
$K$-theory, which is a twisted form of the representation ring
(Example~\ref{thm:122}).  The construction in Lemma~\ref{thm:170} is relevant
here.  We then recast extended \Qsymmetry classes in terms of line bundles
and generalize to groupoids.  Twisted $K$-theory in this situation may still
be defined in terms of finite rank vector bundles, which we prove in
Appendix~\ref{sec:19}.  (For more general twistings one must use bundles of
infinite rank equipped with Fredholm operators~\cite{FHT1}.)

  \subsection*{Twisted representation rings}

Let $G$~be a compact Lie group.  Recall the two constructions of the
representation ring~$K_G$ in~\S\ref{sec:18}.  First, if $\Rep(G)$~is the set
of isomorphism classes of finite dimensional complex representations of~$G$,
which is a commutative monoid under direct sum, then its group completion
is~$K_G$.  Second, let $\Rep_s(G)$~be the commutative monoid of isomorphism
classes of $\zt$-graded complex representations of ~$G$.  Define the
submonoid~$\Triv_s(G)$ of $\zt$-graded representations~$E$ which admit an odd
automorphism~$P\:E\to E$ which satisfies~\eqref{eq:126}.  Then $K_G\cong
\Rep_s(G)/\Triv_s(G)$; see~\eqref{eq:218}.

As we explained in~\S\ref{sec:2} quantum mechanics enhances the notion of an
ordinary complex representation of a group in two ways: (i)~there are both
linear and antilinear transformations, and (ii)~a twisted  extension
of the symmetry group acts on the vector space.  In~\S\ref{sec:10} we
encountered a third enhancement: (iii)~the representation space is
$\zt$-graded and there are both even and odd automorphisms.  We formalize the
resulting ``twisted'' version of~$K_G$ as follows.

  \begin{definition}[]\label{thm:61}
 Let $G$~be a compact Lie group and $(G,\phi ,\tau ,c)$ an extended \Qsymmetry class
based on~$G$ (Definition~\ref{thm:84}(i)).
  \begin{enumerate}
 \item Let $\Rep_s^{\ptc}(G)$ denote the monoid of isomorphism classes of
finite dimensional $\ptc$-twisted representations
(Definition~\ref{thm:84}(ii)). 

 \item Let the submonoid~$\Triv_s^{\ptc}(G)$ consist of $\ptc$-twisted
representations $\tar\:\tG\to\Aut_{\RR}(E)$ such that there exists an odd
$\CC$-linear automorphism $P\:E\to E$ which satisfies $P\tar(g)=c(g)\tar(g)P$
for all~$g\in \tG$.

 \item Define $\hK\phi {\tau ,c}_G=\Rep_s^{\ptc}(G)/\Triv_s^{\ptc}(G)$.
  \end{enumerate}
  \end{definition}

The proof of Lemma~\ref{thm:170} carries over \emph{verbatim} to prove the
following. 

  \begin{lemma}[]\label{thm:128}
 The commutative monoid $\hK\phi {\tau ,c}_G$ is a group. 
  \end{lemma}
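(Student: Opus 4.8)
The plan is to follow the proof of Lemma~\ref{thm:170} essentially verbatim, producing an additive inverse for each class of $\hK\phi {\tau ,c}_G$. First note that $\Triv_s^{\ptc}(G)$ is indeed a submonoid of $\Rep_s^{\ptc}(G)$: it contains the zero representation, and if $(E_i,\tar_i)$, $i=1,2$, admit odd $\CC$-linear automorphisms $P_i$ with $P_i\tar_i(g)=c(g)\tar_i(g)P_i$, then $P_1\oplus P_2$ is such an automorphism for $(E_1\oplus E_2,\tar_1\oplus \tar_2)$. So the quotient $\hK\phi {\tau ,c}_G$ is a well-defined commutative monoid, and it remains to produce, for each $\ptc$-twisted representation $\tar\:\tG\to\Aut_{\RR}(E)$, a $\ptc$-twisted representation $E'$ with $E\oplus E'\in \Triv_s^{\ptc}(G)$.

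Exactly as in Lemma~\ref{thm:170}, let $\Pi$ denote the complex line $\CC$ placed in odd degree, $\pi =1\in \Pi$ its distinguished odd element, with the canonical identification $\Pi \otimes \Pi \cong \CC$, $\pi \otimes \pi =1$. Take $E'=\Pi \otimes E$, the graded tensor product, with $\tG$ acting by $g\cdot(\pi \otimes e)=c(g)\,\pi \otimes \tar(g)e$, the complex conjugation when $\phi (g)=-1$ being absorbed into the (anti)linear extension. I would verify: this defines a homomorphism $\tG\to\Aut_{\RR}(\Pi \otimes E)$ (using that $c$ is a homomorphism and that each value $c(g)\in \pmo$ is real, hence fixed by the antilinear $g$'s); $g$ acts complex linearly or antilinearly according to $\phi (g)$; $\TT\subset \tG$ acts by scalar multiplication (because $c$ is trivial on $\TT$); and $g$ acts evenly or oddly on $\Pi \otimes E$ according to $c(g)$, the grading of $\Pi \otimes E$ being that of $E$ shifted by one. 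Thus $\Pi \otimes E$ is again a $\ptc$-twisted representation.

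It then remains to exhibit $E\oplus (\Pi \otimes E)$ in $\Triv_s^{\ptc}(G)$. Take $P$ to be left multiplication by $\pi $: $P(e)=\pi \otimes e$ for $e\in E$, and $P(\pi \otimes e)=(\pi \otimes \pi )\otimes e=e$ under the identification $\Pi \otimes \Pi \cong \CC$. Then $P$ is an odd $\CC$-linear involution, and a direct computation using the action formula $g\cdot(\pi \otimes e)=c(g)\,\pi \otimes \tar(g)e$ together with $c(g)^2=1$ shows $P\tar(g)=c(g)\tar(g)P$ on each summand. Hence $[\Pi \otimes E]=-[E]$ in $\hK\phi {\tau ,c}_G$, so this commutative monoid is an abelian group.

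The argument is the one of Lemma~\ref{thm:170}; the only real care needed---and the step I expect to be the main obstacle---is the $\zt$-grading bookkeeping. The sign $c(g)$ inserted into the $\tG$-action on $\Pi \otimes E$ is exactly what turns the relation $P\rho (g)=\rho (g)P$ of Lemma~\ref{thm:170} into the required $P\tar(g)=c(g)\tar(g)P$; one must check that all of these signs, together with the grading shift, are mutually consistent so that $\Pi \otimes E$ is genuinely a twisted representation and $P$ genuinely intertwines the actions with the factor $c(g)$.
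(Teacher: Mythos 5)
Your proposal is correct and is essentially the paper's own argument: the paper simply states that "the proof of Lemma~\ref{thm:170} carries over \emph{verbatim}," and you have done exactly that, parity-reversing via $\Pi\otimes E$ and taking $P$ to be multiplication by $\pi$. The explicit $c(g)$ you insert into the $\tG$-action on $\Pi\otimes E$ is precisely the Koszul sign coming from $G$ acting "trivially" on the odd line $\Pi$, so your bookkeeping just makes visible what the paper leaves implicit.
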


  \begin{remark}[]\label{thm:67}
 We always use the ``Koszul sign rule'' for tensor products of $\zt$-graded
vector spaces.  Informally, it asserts that whenever commuting two
homogeneous elements~$x,y$ one picks up a minus sign if both $x$~and $y$~are
odd.  See~\cite[Chapter~1]{DM} for the formal implementation as the symmetry
in the symmetric monoidal category of super vector spaces, as well as a
discussion of parity-reversal.
  \end{remark}

  \begin{remark}[]\label{thm:178}
 It would be reasonable to call an extended \Qsymmetry class~$(G,\phi ,\tau
,c)$ a `$\phi $-twisted $\zt$-graded extension': the homomorphism~$c$ encodes
the $\zt$-grading and $\tau $~is the extension.  This point of view is
developed in the next subsection; see especially Remark~\ref{thm:75}. 
  \end{remark}

  \begin{remark}[]\label{thm:62}
 We emphasize that $\hK\phi {\tau ,c}_G$~is an abelian group which does not
have a ring structure in general.  Rather, it is a module over~$\hKp\phi _G$,
the Grothendieck ring of representations of~$G$ with the same
homomorphism~$\phi $ but trivial $\phi $-twisted extension~$\tau $ and
trivial homomorphism~$c$.  Note that the tensor product of linear maps is
linear and the tensor product of antilinear maps is antilinear, whereas the
tensor product of odd maps is even and the tensor product of scalar
multiplication by~$\lambda _1$ and scalar multiplication by~$\lambda _2$ is
scalar multiplication by~$\lambda _1\lambda _2$.  So the tensor product of
two representations with the same~$\phi $ gives another representation with
the same~$\phi $, whereas the homomorphism~$c $ and the central
extension~$\tau $ change under tensor product.  This distinguished role
of~$\phi $ is reflected in the notation.
  \end{remark}

  \subsection*{Extensions as line bundles}

It is convenient (and standard) to recast the $\phi $-twisted extension as a
hermitian line bundle $\tL\to G$ with a ``composition law''.  Namely, the
extension
  \begin{equation}\label{eq:78}
     1\longrightarrow \TT\longrightarrow \tG\longrightarrow  G\longrightarrow 1 
  \end{equation}
is a principal $\TT$-bundle over~$G$, and $\tL$~is the associated hermitian
line bundle $\tL=(\tG\times \CC)/\TT$, where $\lambda \in \TT$ acts on the
right as $(\tg,z)\cdot \lambda =(\tg\lambda ,\lambda \inv z)$ for $\tg\in
\tG$ and~$z\in \CC$.  The hermitian structure is induced from the standard
hermitian structure on~$\CC$, which is preserved by multiplication
by~$\lambda \in \TT$.  Assume first that $\phi \equiv 1$ so that
\eqref{eq:78}~is a central extension.  We leave the reader to use the group
law in~$\tG$ to construct isometries
  \begin{equation}\label{eq:79}
     \lambda _{g_2,g_1}\:\tL_{g_2}\otimes \tL_{g_1}\longrightarrow \tL_{g_2g_1} ,
  \end{equation}
where $\tL_g$~is the fiber of~$\tL\to G$ over~$g\in G$, and to use the
associativity property of~$\tG$ to prove that the diagram
  \begin{equation}\label{eq:80}
     \xymatrix@C=50pt{\tL_{g_3}\otimes \tL_{g_2}\otimes \tL_{g_1}\ar[r]^{\lambda
     _{g_3,g_2}\otimes \id}\ar[d] _{\id\otimes \lambda _{g_2,g_1}}& 
     \tL_{g_3g_2}\otimes \tL_{g_1}\ar[d]^{\lambda _{g_3g_2,g_1}}\\ \tL_{g_3}\otimes
     \tL_{g_2g_1}\ar[r]^{\lambda _{g_3,g_2g_1}}&\tL_{g_3g_2g_1}} 
  \end{equation}
commutes.  This diagram can be turned into a cocycle condition; see
Remark~\ref{thm:124} below.

For a general~$\phi $ we need to bring in complex conjugation.  Recall that
if $W$~is a complex vector space, then the \emph{complex conjugate vector
space}~$\bW$ has the same underlying real vector space as~$W$ but the complex
multiplication is conjugated.  Thus $\ell \in W$ is the same element $\bl\in
\bW$ in the set~$W=\bW$, and if~$\lambda \in \CC$ then $\bar\lambda \cdot
\bl=\overline{\lambda \cdot \ell } $.  (The first scalar multiplication is in
the complex vector space~$\bW$, the second in~$W$.)  For $\phi \in \pmo$
define
  \begin{equation}\label{eq:81}
     {}^\phi W =\begin{cases} W,&\phi =+1;\\\overline{W},&\phi
      =-1.\end{cases} 
  \end{equation}
An antilinear map $W'\to W$ between complex vector spaces is equivalently a
linear map $\overline{W'}\to W$.  For arbitrary~$\phi \:G\to\pmo$
replace~\eqref{eq:79} with a $\CC$-linear isometry
  \begin{equation}\label{eq:82}
     \lambda _{g_2,g_1}\: \cc{g_1}\tL_{g_2}\otimes  \tL_{g_1} \longrightarrow
     \tL_{g_2g_1},
  \end{equation}
and replace~\eqref{eq:80} with 
  \begin{equation}\label{eq:83}
     \xymatrix@C=70pt{\cc {g_2g_1}\tL_{g_3}\otimes  
     \cc{g_1}\tL_{g_2}\otimes   \tL_{g_1}\ar[r]^(.55){\id\otimes
     \lambda _{g_2,g_1}}\ar[d]_{\lambda _{g_3,g_2}\otimes \id} &
     \cc{g_2g_1}\tL_{g_3}\otimes   \tL_{g_2 g_1}\ar[d]^{\lambda
     _{g_3,g_2 g_1}} \\ \cc{g_1}\tL_{g_3 g_2}\otimes  
     \tL_{g_1}\ar[r] ^(.55){\lambda _{g_3, g_2,g_1}}& \tL_{g_3 g_2 g_1}} 
  \end{equation}
which can be interpreted as a twisted cocycle condition.   

  \begin{remark}[]\label{thm:124} To see a cocycle~$\alpha $ explicitly,
choose a unit norm vector~$\ell _g\in \tL_g$ for each~$g\in G$.  Then
for~$g_1,g_2\in G$ define $\alpha (g_2,g_1)\in \TT$ by
  \begin{equation}\label{eq:128}
     \lambda _{g_2,g_1}(\cc{g_1}\ell _{g_2},\ell _{g_1}) = \alpha (g_2,g_1)\ell
     _{g_2g_1}, 
  \end{equation}
where as in~\eqref{eq:81} the left superscript~$\phi $ controls complex
conjugation.  The commutative diagram~\eqref{eq:83} translates into the
twisted cocycle condition 
  \begin{equation}\label{eq:129}
     \alpha (g_2,g_1)\alpha (g_3,g_2g_1) = \cc{g_1}\alpha (g_3,g_2)\alpha
     (g_3g_2,g_1),\qquad      g_1,g_2,g_3\in G. 
  \end{equation}
A different choice $\ell _g\mapsto b(g)\ell _g$, $b(g)\in \TT$,
shifts~$\alpha$ by a twisted coboundary.
  \end{remark}

In preparation for twisted $K$-theory let us note that a
$(\phi,\tau,c)$-twisted representation of $G$ is for each~$g\in G$ a linear
map
  \begin{equation}\label{eq:84}
     \tar (g)\:\cc{g}{(\tL_g\otimes W)}\longrightarrow  W 
  \end{equation}
such that the diagram 
  \begin{equation}\label{eq:85}
     \xymatrix@C=70pt{\cc {g_2g_1}\bigl(\cc{g_1}\tL_{g_2}\otimes \tL_{g_1}\otimes
     W\bigr)\ar[r]^(.55){ \lambda _{g_2,g_1}\otimes \id}\ar[d]_{\id\otimes \tar
     (g_1)} & \cc{g_2g_1}\bigl(\tL_{g_2g_1}\otimes W\bigr)\ar[d]^{\tar  (g_2
     g_1)} \\ 
     \cc{g_2}(\tL_{g_2}\otimes W)\ar[r] ^(.55){\tar (g_2)}& W}  
  \end{equation}
commutes and the linear map~$\tar (g)$ in~\eqref{eq:84} is even or odd
according to~$c(g)$.

  \begin{remark}[]\label{thm:75}
 It is convenient---and for the addition of twistings necessary---to combine
the homomorphism~$c\:G\to\pmo$ and the twisted  extension~$\tau $ into
a \emph{$\zt$-graded} line bundle~$\tL\to G$: the line~$L_g$ is even
if~$c(g)=+1$ and odd if $c(g)=-1$.  The Koszul sign rule
(Remark~\ref{thm:67}) must be used to permute tensor products of $\zt$-graded
lines.  
  \end{remark}

  \subsection*{Groupoids, twistings, and twisted vector bundles}

We now generalize from groups to \emph{groupoids}.  The elements of a group
may be pictured as arrows which begin and end at the same (abstract) point,
and the group law is composition of arrows.  A groupoid has a similar
picture, but now there are many possible starting and ending points for the
arrows and composition is limited to a sequence of arrows with the ending
point of one arrow equal to the starting point of the next.  Thus a groupoid
consists of a set~ $\gp_0$ of points and $\gp_1$~of arrows.  The arrow
$(x_0\xrightarrow{\gamma }x_1)$ has source~$x_0$ and target~$x_1$.  There is
a partially defined associative composition law on arrows: for $\gamma
_1,\gamma _2\in \gp_1$ the composition $\gamma _2\circ \gamma _1\in \gp_1$ is
defined only if the target for~$\gamma _1$ equals the source for~$\gamma _2$.
There is an identity arrow for each state.  Each arrow is
invertible---two-sided inverse arrows exist.  As we discuss below,
$\gp_0,\gp_1$~may be topological spaces, or even manifolds, in which case we
impose continuity (smoothness) on the structural maps which define the
groupoid structure.  The structural maps (except for composition) appear in
the diagram
  \begin{equation}\label{eq:90}
     \xymatrix@1{\gp_0\;\ar@{-->}[r]&\;\gp_1\;\ar@<1ex>[l]^{p_1}
     \ar@<-1ex>[l]_{p_0}}  
  \end{equation}
The dashed map assigns the identity morphism to each object, the map~$p_1$
takes each arrow to its source (=domain), and the map~$p_0$ takes each arrow
to its target (=codomain).  See~\cite[Appendix]{FHT1} for a formal definition
and development of groupoids and the associated $K$-theory.

We give several examples to help the reader warm up to this notion. 

  \begin{example}[]\label{thm:63}
 As mentioned above, a group~$G$ may be viewed as the groupoid~$\gp$ with
$\gp_0=\pt$ and $\gp_1=G$.  There is a unique object and any two arrows are
composable.
  \end{example}

  \begin{example}[]\label{thm:64}
 A set~$X$ is a groupoid~$\gp$ with only identity arrows, that is,
$\gp_0=\gp_1=X$.   
  \end{example}

  \begin{example}[]\label{thm:65}
 Let $X$~be a set and $G$~a discrete group acting on~$X$.  We construct the
quotient groupoid~$\gp=X\gpd G$ with $\gp_0=X$ and $\gp_1=X\times G$.  The
arrow $\gamma =(x,g)\in X\times G$ has domain~$x$ and codomain~$g\cdot x$,
where $g\cdot x$~is the result of acting~$g$ on~$x$.  Composition of arrows
is defined by the action.  For each~$x$ the arrows which map~$x$ to itself
form a group, called the stabilizer group of~$x$; it is a subgroup of~$G$.
The orbit of an element in~$X$ also has a natural interpretation in terms of
the groupoid~$\gp$.  We use a topological version of~$X\gpd G$ in which
$X$~is a nice topological space (in fact, a smooth torus of some dimension)
and $G$~is a compact Lie group.  Then the $K$-theory of~$X\gpd G$ , as
defined below, is identical to the $G$-equivariant $K$-theory of~$X$.
  \end{example}

A homomorphism of groupoids $\varphi \:\gp'\to\gp$ is a pair of maps $\varphi
_0\:(\gp')_0\to\gp_0 $ and $\varphi _1\:(\gp')_1\to\gp_1$ which preserves
compositions.  For example, for any~$\gp$ there is an inclusion of the
groupoid~$\gp'$ of identity arrows, which is a set or space as in
Example~\ref{thm:64}: $(\gp')_0=(\gp')_1=\gp_0$.
 
Next we generalize representations of groups to groupoids.  We begin with
discrete groupoids.  A complex representation of a group~$G$, viewed in terms
of the associated groupoid~$\gp$ defined in Example~\ref{thm:63}, assigns a
complex vector space~$W$ to the unique point in~$\gp_0$ and a complex linear
map to each arrow in~$\gp_1$ such that the composite of two arrows maps to
the composite of the linear maps.  The generalization to groupoids is
straightforward. 

  \begin{definition}[]\label{thm:66}
 Let $\gp$~be a groupoid.  A \emph{complex vector bundle over~$\gp$} assigns
a complex vector space~$W_x$ to each object~$x\in \gp_0$ and a complex linear
map $\rho (\gamma )\:W_x\to W_{x'}$ to each arrow~$\gamma \in \gp_1$ with
domain~$x$ and codomain~$x'$.  The assignment~$\rho $ is a homomorphism in
the sense that $\rho (\gamma _2\circ \gamma _1)=\rho (\gamma _2)\circ \rho
(\gamma _1)$ whenever $\gamma _2\circ \gamma _1$~is defined.  If $\gp$~is a
topological groupoid, then the vector spaces are required to fit together into
a vector bundle~$W\to \gp_0$ and the linear maps~$\rho (\gamma )$ to form a
continuous map $p_1^*W\to p_0^*W$ over~$\gp_1$.
  \end{definition}

\noindent 
 Recall that $p_0,p_1$~are defined in~\eqref{eq:90}.  If $\gp=X\gpd G$ is the
quotient of a set by a group action, then a vector bundle over~$\gp$ is a
$G$-equivariant vector bundle $W\to X$.

We now introduce certain extensions of a groupoid, which are special twistings
of its $K$-theory.

  \begin{definition}[]\label{thm:85}
 Let $\gp$ be a groupoid.
  \begin{enumerate}
 \item A \emph{central extension} of~$\gp$ is a hermitian line bundle
$\tL\to\gp_1$ together with maps~$\lambda _{\gamma _2,\gamma _1}$ as
in~\eqref{eq:79} which satisfy the associativity (cocycle)
constraint~\eqref{eq:80}.

 \item Let $\phi \:\gp_1\to\pmo$ be a homomorphism.  A \emph{$\phi $-twisted
 extension} of~$\gp$ is a hermitian line bundle $\tL\to\gp_1$ together
with maps~$\lambda _{\gamma _2,\gamma _1}$ as in~\eqref{eq:82} which satisfy
the associativity (cocycle) constraint~\eqref{eq:83}.

 \item A \emph{\ptzgce} of~$\gp$ is a triple~$\nu =\ptc$ consisting of
homomorphisms $\phi ,c\:\gp_1\to\pmo$ and a $\phi $-twisted 
extension~ $\tau $.  

 \item If $\nu =\ptc$ is a \ptzgce, then a \emph{$\nu$-twisted vector
bundle}~ $W$ over~$\gp$ assigns a $\zt$-graded complex vector space~$W_x$ to
each object~$x\in \gp_0$ and a \emph{linear} map
  \begin{equation}\label{eq:86}
     \tar (\gamma) \:\cc{\gamma }{(\tL_\gamma \otimes W_{x_0})}\longrightarrow
     W_{x_1} 
  \end{equation}
to each arrow $(x_0\xrightarrow{\gamma }x_1)\in \gp_1$.  The map~$\tar
(\gamma) $ is even or odd according to~$c(\gamma )$.  If $\gp$~is a
topological groupoid, then we demand that $W$~be a vector bundle and
\eqref{eq:86} define a continuous map $\cc\gamma (\tL\otimes p_1^*W)\to
p_0^*W$ over~$\gp_1$.  An analog of the commutative diagram~\eqref{eq:85}
expresses the compatibility between~$\lambda $ and~$\rho $.

  \end{enumerate}
  \end{definition}

\noindent 
 In~(i) $\gamma _1,\gamma _2,\gamma _3\in \gp_1$ are composable arrows, and
in~\eqref{eq:79}, \eqref{eq:80}, \eqref{eq:82}, \eqref{eq:83},
and~\eqref{eq:85} we replace~`$g$' with~`$\gamma $'.  Note that central
extensions and $\phi $-twisted  extensions are special cases of
\ptzgces.

  \begin{remark}[]\label{thm:69}
 We emphasize that $W\to\gp_0$ is an ordinary $\zt$-graded complex vector
bundle.  The twisting is in the action of~$\gp_1$, which is the same twisting
as occurs in Definition~\ref{thm:61}.  The only new idea is that of multiple
objects, as parametrized by~$\gp_0$.
  \end{remark}

  \begin{remark}[]\label{thm:119}
 Let $\gp=X\gpd G$ be the quotient groupoid of the action of a group~$G$ on a
space~$X$.  Suppose $\phi ,c\:G\to\pmo$ are homomorphisms.  Then there is an
associated \ptzgce~ $\nu (\phi ,c)$ of~$\gp$.  Namely, $\gp_1=X\times G$ and
we define homomorphisms $\gp_1\to\pmo$ by composition with projection
$X\times G\to G$.  The central extension $\tL\to\gp_1$ is taken to be
trivial.  We use this \ptzgce\ in Theorem~\ref{thm:51} below.
  \end{remark}

  \begin{definition}[]\label{thm:117}
 Let $\gp$~be a groupoid and $\nu =\ptc$ and $\nu '=(\phi ',\tau ',c')$
\ptzgces.  Then if $\phi =\phi '$ and~$c=c'$, an \emph{isomorphism of
\ptzgces} $\nu \to\nu '$ is an isomorphism $\tL\to L^{\tau '}$ of line
bundles over~$\gp_1$ which is compatible with the structure
maps~\eqref{eq:82}.
  \end{definition}

\noindent 
 Isomorphism of \ptzgces\ is an equivalence relation, and so there is a set
of equivalence classes of \ptzgces. 

  \begin{remark}[]\label{thm:100}
 We can then ask for a classification of \ptzgces\ up to isomorphism.  If the
groupoid~$\gp$ is sufficiently nice, for example the groupoid associated to
the action of a compact Lie group~$G$ on a nice topological space~$X$ as in
Definition~\ref{thm:71} below, then the set of isomorphism classes of
\ptzgces\ is noncanonically isomorphic to 
  \begin{equation}\label{eq:197}
     H^1_G(X;\zt)_{\textnormal{rel}}\times H^3_G(X;\ZZ_\phi )_{\textnormal{rel}} ,
  \end{equation}
where
  \begin{equation}\label{eq:196}
     H^\bullet_G(X;\ZZ_\phi )_{\textnormal{rel}} = \ker\bigl(H^\bullet_G(X;\ZZ_\phi
     )\longrightarrow H^\bullet(X;\ZZ) \bigr). 
  \end{equation}
The subscript~`$G$' denotes equivariant cohomology, and `$\ZZ_\phi $'~denotes
the local system (twisted coefficients) defined by~$\phi $.
See~\cite[\S2.2.1]{FHT1} for more details and a proof (in case $\phi $~is
trivial).  If $G$~is a finite group and $X$~is compact, then
\eqref{eq:197}~is a finite set.
  \end{remark}

  \begin{remark}[]\label{thm:76}
 There are various operations on \ptzgces.  If $\varphi \:\gp'\to\gp$ is a
homomorphism of groupoids, and $\nu =\ptc$ is a \ptzgce\ of~$\gp$, then there
is a pullback \ptzgce~$\varphi ^*\nu $ of~$\gp'$.  Also, if $\gp$~is a
groupoid and $\nu _1=(\phi ,\tau _1,c_1)$, $\nu _2=(\phi ,\tau _2,c_2)$ are
\ptzgces\ of~$\gp$ with the same homomorphism~$\phi $, then there is a new
\ptzgce~$\nu =\nu _1+\nu _2$ of~$\gp$, also with the same~$\phi $.  To define
it combine~$c_i,\tau _i$ into a $\zt$-graded line bundle $L^{\tau
_i}\to\gp_1$ and set $L=L^{\tau _1}\otimes L^{\tau _2}$.  The Koszul sign
rule is used to define~\eqref{eq:82}.  The addition gives an abelian group
law on~\eqref{eq:197} which is not, in general, the product of the natural
abelian group laws on the factors.  
  \end{remark}

  \begin{remark}[]\label{thm:80}
 Let $\varphi \:\gp'\to\gp$ be the inclusion of the identity arrows
$(\gp')_0=(\gp')_1=\gp_0$.  Then for any \ptzgce\ $\nu $, the
pullback~$\varphi ^*\nu $ is a trivial twisting of~$\gp'$.  In other words,
the twistings in this paper are non-equivariantly trivial, hence simpler than
general twistings of $K$-theory.
  \end{remark}

  \subsection*{Twisted $K$-theory}

The following is an extension of Definition~\ref{thm:61} to the special
groupoids and special twistings (Remark~\ref{thm:80}) we encounter in this
paper.  It is justified in Appendix~\ref{sec:19}.

  \begin{definition}[]\label{thm:71}
  Let $X$~be a nice\footnote{locally contractible and completely regular}
compact topological space with a continuous action of a \emph{finite}
group~$G$.  Let $\gp=X\gpd G$ be the quotient groupoid and $\nu =\ptc$~a
\ptzgce\ of~$\gp$.
  \begin{enumerate}
 \item Let $\sV_G^{\nu}(X)$ denote the monoid of isomorphism classes of
finite rank $\nu$-twisted $\zt$-graded $G$-equivariant vector bundles $W\to
X$.

 \item Let the submonoid~$\Triv_G^{\nu}(X)$ consist of finite rank
$\nu$-twisted $\zt$-graded $G$-equivariant vector bundles $W\to X$ such that
there exists an odd automorphism $P\:W\to W$.

 \item Define the \emph{twisted $K$-theory group} $K^\nu _G(X)=\hK\phi {\tau
,c}_G(X)=\Rep_s^{\ptc}(G)/\Triv_s^{\ptc}(G)$.

  \end{enumerate}

  \end{definition}

\noindent
 Recall that a $G$-equivariant vector bundle over~$X$, twisted or not, is an
ordinary vector bundle over~$X=\gp_0$.

  \begin{remark}[]\label{thm:72}
 In case $G=1$ is the trivial group, then Definition~\ref{thm:71} reduces to
the original definition of the topological $K$-theory group~$K^0(X)$.
Similarly, if $\phi \equiv c\equiv 1$ and $\tau $~is the trivial extension,
then it reduces to the original definition of~$K^0_G(X)$.  See~\cite{A1} for
a nice, detailed exposition.  
  \end{remark}

  \begin{remark}[]\label{thm:74}
 Suppose $X$~is a space with involution~$\sigma \:X\to X$, which we regard as
a space~$X$ with $G=\zt$ action.  Define $\phi \:\gp_1\to\pmo$ on the quotient
groupoid $\gp=X\gpd G$ to be~1 on all identity arrows (as it must be) and
$-1$ on all arrows which encode the $\sigma $~action.  Let $c,\tau $~be
trivial.  Then the twisted $K$-theory group~$\hKp\phi _G(X)\cong KR^0(X)$ is
Atiyah's (untwisted) $KR$-group; see~\cite{A2}.   More generally, if
$G=G''\times \zt$ and $\phi $~is projection onto the second factor, then
$\hKp\phi _G(X)\cong KR_{G''}^0(X)$ is an equivariant $KR$-theory group.
  \end{remark}

  \begin{remark}[]\label{thm:137}
 If $c\equiv 1$, so that $\nu $~is a $\phi $-twisted  extension, then
the twisted $K$-theory is isomorphic to the group completion of the
monoid~$\Vect^\nu _G(X)$ of ungraded finite rank $\nu $-twisted
$G$-equivariant vector bundles $W\to X$.
  \end{remark}

  \begin{remark}[]\label{thm:73}
 $K$-theory is defined for a more general class of groupoids (\emph{local
quotient groupoids}) in~\cite[Appendix]{FHT1}.  The general definition uses
families of Fredholm operators.  It is a \emph{theorem}, proved in
Appendix~\ref{sec:19}, that for the special twistings in
Definition~\ref{thm:85}, namely \ptzgces\ of a global quotient by a finite
group~$G$, each twisted $K$-theory class has a representative which is a
$\zt$-graded vector bundle of finite rank.  It may fail if the group~$G$ is a
compact Lie group.  For example, if $G=\TT$~acts trivially on~$X=\TT$ there
are nontrivial central extensions of infinite order, and only the zero
twisted $K$-theory class has a finite rank representative.
  \end{remark}

Finally, we remark on the role of Clifford algebras in $K$-theory~\cite{ABS}.
We review some basics about Clifford algebras in Appendix~\ref{sec:12}.  If
$R$~is a Clifford algebra\footnote{or, more generally, a $\zt$-graded central
simple algebra~\cite{Wa}} then we can use $R$~as an additional twisting of
representations and of vector bundles in the following sense.  Namely, we ask
that a representation, or more generally the fibers of a vector bundle, be
modules for the algebra~$R$, and that the equivariance maps~\eqref{eq:84}
commute with the $R$-module structure.  This commutation must be understood
in the graded sense, that is, using the Koszul sign rule.  If $R=\Cl n$ for
some~$n\in \ZZ$, this has the effect of shifting the degree in $K$-theory
by~$n$.   Details may be found in many sources, for example~\cite{DK},
\cite[Appendix]{FHT1}, \cite{F3}.

   \section{Gapped systems with a finite dimensional state space
   }\label{sec:11}
% lastsubsec@000

In this section we take up systems of free fermions with a finite dimensional
state space.  A similar problem is treated in~\cite{K,AK}. As quantum
mechanical systems they have a time evolution, but we do not assume any
symmetries from space.  In other words, the underlying Galilean
spacetime~$(M,\Gamma )$ is 1-dimensional.  So $\Aut(M,\Gamma )$~is an
extension
  \begin{equation}\label{eq:64}
     1\longrightarrow U\longrightarrow \Aut(M,\Gamma )\longrightarrow
     \pmo\longrightarrow 0 
  \end{equation}
where $U$~is the line of time translations.  The extension is split by
choosing an origin of time, and a splitting maps $-1\in \pmo$ into the
time-reversal which fixes the origin.  Let $(G,\phi ,\tau ,c)$ be an extended
\Qsymmetry class, as in Definition~\ref{thm:84}.  A Galilean symmetry group
includes a homomorphism~\eqref{eq:98} to ~$\Aut(M,\Gamma )$, but as in
Remark~\ref{thm:98} we do not include time translations in~$G$.  Also, the
homomorphism~$t$ to the quotient~$\pmo$ in~\eqref{eq:64} is~$t=\phi c$, so is
already included in the data of the extended \Qsymmetry class.  We make the
following ``finiteness'' hypothesis.

  \begin{hypothesis}[]\label{thm:38}
 The symmetry group~$G$ of a 0-dimensional gapped system is a compact Lie
group. 
  \end{hypothesis}

\noindent
 In fact, it is usually assumed to be a finite group.  Recall from
Definition~\ref{thm:25} that a gapped system~$(\sH,H,\tar)$ consists of a
Hilbert space~$\sH$, a Hamiltonian~$H$, and a $\ptc$-twisted
representation~$\tar$ of~$G$ on~$\sH$.  By Lemma~\ref{thm:27} we may assume
$H^2=1$, after a homotopy, so $H$~is a grading operator on~$\sH$.  The
finiteness condition we work with is the following.

  \begin{hypothesis}[]\label{thm:39}
 In a finite dimensional gapped system (FDGS) the Hilbert space~$\sH$ is
finite dimensional.
   \end{hypothesis}

  \begin{proposition}[]\label{thm:101}
 Assume Hypothesis~\ref{thm:38} and Hypothesis~\ref{thm:39}.  Then the
commutative monoid~$\TPZ\Gptc$ of topological equivalence classes of FDGS
with extended \Qsymmetry class $(G,\phi ,\tau ,c)$ is isomorphic to
$\Rep_s^{\ptc}(G)$,

  \end{proposition}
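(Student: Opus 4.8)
The plan is to exhibit $\TPZ\Gptc$ and $\Rep_s^{\ptc}(G)$ as the same commutative monoid via a pair of mutually inverse homomorphisms. First I would normalize the Hamiltonian: by Lemma~\ref{thm:27}, every FDGS $(\sH,H,\tar)$ is homotopic, through FDGS, to one with $H^2=1$ --- the spectral-flattening path $H_t=f_t(H)$ stays invertible (hence gapped) and, since $f_t$ is real-valued and odd, it preserves the commutation relation~\eqref{eq:149}, so it is a homotopy of FDGS and does not change the class in $\TPZ\Gptc$. After this normalization $H$ is the grading operator of the $\zt$-grading on~$\sH$, and the relation $\tar(g)H=c(g)H\tar(g)$ is exactly the statement that $\tar(g)$ is even when $c(g)=+1$ and odd when $c(g)=-1$. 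In other words, a FDGS with $H^2=1$ is precisely the datum of a finite-dimensional $\ptc$-twisted representation of~$G$ in the sense of Definition~\ref{thm:84}(ii), and an isomorphism of such FDGS --- an isomorphism of $\zt$-graded Hilbert spaces intertwining the Hamiltonians and the representations --- is exactly an isomorphism of $\ptc$-twisted representations. Sending $(\sH,H,\tar)\mapsto(\sH,\tar)$ thus gives a candidate map $\TPZ\Gptc\to\Rep_s^{\ptc}(G)$, with inverse sending a $\ptc$-twisted representation, with grading operator~$\epsilon$, to the FDGS $(\sH,\epsilon,\tar)$. Both assignments visibly respect direct sums of Hilbert spaces, Hamiltonians, and representations, so they are homomorphisms of commutative monoids; on normalized objects they are evidently inverse.

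The one step that is not pure unwinding of definitions is well-definedness of $\TPZ\Gptc\to\Rep_s^{\ptc}(G)$: homotopic FDGS must give \emph{isomorphic} twisted representations. Given a continuous family of FDGS over $\Delta^1=[0,1]$ (Definition~\ref{thm:113}), flattening the Hamiltonians fiberwise by $f_t$ produces a continuous family of finite-dimensional $\ptc$-twisted representations of~$G$ over the connected interval. Here Hypothesis~\ref{thm:38} is essential: finite-dimensional twisted representations of a \emph{compact} Lie group are rigid. One makes this precise by averaging a hermitian metric over the Haar measure of~$G$ --- the pullback of a hermitian form along an antilinear $\tar(g)$ is again hermitian, and scalars in $\TT$ preserve hermitian forms, so the average is $\tG$-invariant --- after which, up to a further homotopy, the family is unitary; complete reducibility of unitary twisted representations of a compact group then makes the set of isomorphism classes of $n$-dimensional $\ptc$-twisted representations discrete, so the family is constant on $[0,1]$ up to isomorphism. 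This supplies the required isomorphism between the twisted representations at the two endpoints. I expect this rigidity argument to be the only real obstacle; everything else is bookkeeping against Definitions~\ref{thm:25} and~\ref{thm:84}.

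Finally I would check that the two monoid maps are inverse on \emph{all} of $\TPZ\Gptc$ and $\Rep_s^{\ptc}(G)$, not merely on normalized representatives. Starting from a $\ptc$-twisted representation $(\sH,\tar)$ with grading operator~$\epsilon$, the associated FDGS $(\sH,\epsilon,\tar)$ already has $H^2=1$, so flattening is trivial and we recover $(\sH,\tar)$. Conversely, starting from an arbitrary FDGS, flattening $H$ gives a homotopic FDGS with $H^2=1$ whose associated twisted representation, fed back through the inverse map, yields a FDGS in the same topological phase. Hence the composites are the identity and $\TPZ\Gptc\cong\Rep_s^{\ptc}(G)$ as commutative monoids.
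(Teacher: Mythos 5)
Your proof is correct and follows essentially the same route as the paper's two-sentence proof: spectrally flatten~$H$ via Lemma~\ref{thm:27}, observe the data collapses to a $\ptc$-twisted representation, and invoke rigidity of finite-dimensional representations of the compact group~$\tG$. One phrasing to sharpen: a FDGS with $H^2=1$ is not \emph{literally} the same datum as a $\ptc$-twisted representation, since the former carries a hermitian metric which the latter does not; what makes the forgetful map a bijection on topological phases is that the space of hermitian metrics on~$\sH$ is contractible (and by Haar averaging nonempty in the $\tG$-invariant locus). You deploy the averaging for rigidity, but it is also what makes the inverse map $\Rep_s^{\ptc}(G)\to\TPZ\Gptc$ well-defined independent of the metric chosen --- this contractibility is the paper's opening observation and deserves to be stated as such rather than absorbed into ``bookkeeping.''
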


\noindent
 The monoid $\TPZ\Gptc$ is defined in Definition~\ref{thm:26} and the monoid
$\Rep_s^{\ptc}(G)$ is defined in Definition~\ref{thm:61}(i).

  \begin{proof}
 First, since the space of positive definite hermitian metrics on~$\sH$ is
contractible, and since we mod out by homotopy, we can ignore the metric
on~$\sH$.  Next, finite dimensional representations of compact Lie groups are
discrete---they have no continuous deformations---so two homotopic
representations are in fact isomorphic.   
  \end{proof}

We impose a ``topological triviality'' relation on $\TPZ\Gptc$ to define the
abelian group of reduced topological phases~$\GSZ(G,\phi ,\tau ,c)$; see the
text preceding Lemma~\ref{thm:27}.

  \begin{definition}[]\label{thm:102}
 A FDGS~$(\sH,H,\tar)$ with extended \Qsymmetry class $(G,\phi ,\tau ,c)$ is
\emph{topologically trivial} if there exists an odd automorphism
$P\:\sH\to\sH$ such that $P\tar(g)= c(g)\tar(g)P$ for all $g\in \tG$.
  \end{definition}

  \begin{theorem}[]\label{thm:40}
 There is an isomorphism $\GSZ(G,\phi ,\tau ,c)\cong \hK\phi {\tau ,c}_G $.
  \end{theorem}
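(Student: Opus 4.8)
The plan is to combine Proposition~\ref{thm:101} with the general mechanism, already used to prove Lemma~\ref{thm:170}, which passes from a commutative monoid to an abelian group by quotienting out a submonoid of ``topologically trivial'' objects. First I would recall that by Proposition~\ref{thm:101} the commutative monoid $\TPZ\Gptc$ of topological equivalence classes of FDGS with extended \Qsymmetry class $\Gptc$ is isomorphic to $\Rep_s^{\ptc}(G)$, via the map sending a gapped system $(\sH,H,\tar)$ to the $\zt$-graded $\ptc$-twisted representation $\tar$ (the $\zt$-grading being the one induced by $H$ after spectral flattening, using Lemma~\ref{thm:27}). Next I would check that under this isomorphism, the notion of ``topologically trivial'' FDGS in Definition~\ref{thm:102} corresponds exactly to the submonoid $\Triv_s^{\ptc}(G)\subset \Rep_s^{\ptc}(G)$ of Definition~\ref{thm:61}(ii): both are defined by the existence of an odd $\CC$-linear automorphism $P$ with $P\tar(g)=c(g)\tar(g)P$ for all $g\in\tG$, so this is essentially a matter of matching definitions. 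Since $\GSZ\Gptc$ is by definition the quotient of $\TPZ\Gptc$ by the topological triviality relation (see the text preceding Lemma~\ref{thm:27}), and $\hK\phi{\tau,c}_G = \Rep_s^{\ptc}(G)/\Triv_s^{\ptc}(G)$ by Definition~\ref{thm:61}(iii), the isomorphism $\TPZ\Gptc\cong \Rep_s^{\ptc}(G)$ descends to the desired isomorphism $\GSZ\Gptc\cong \hK\phi{\tau,c}_G$. Lemma~\ref{thm:128} guarantees that the target is indeed a group, so no separate group-completion step is needed.

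One subtlety I would want to address carefully is the precise meaning of ``quotient by a submonoid.'' The reduced topological phases $\GSZ$ are obtained by imposing the relation $\mathcal{S}\sim 0$ for $\mathcal{S}$ topologically trivial; concretely this means declaring $\mathcal{S}_1\sim\mathcal{S}_2$ whenever there exist topologically trivial $\mathcal{S}'_1,\mathcal{S}'_2$ with $\mathcal{S}_1\oplus\mathcal{S}'_1\cong\mathcal{S}_2\oplus\mathcal{S}'_2$ (up to homotopy, but homotopy is already isomorphism here by Proposition~\ref{thm:101}). This is exactly the equivalence relation used in Lemma~\ref{thm:170}, and the proof that the quotient is a group — using parity-reversal $\Pi\otimes W$ as an additive inverse, so that $W\oplus(\Pi\otimes W)\in\Triv_s^{\ptc}(G)$ — carries over verbatim, as already remarked before Lemma~\ref{thm:128}. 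Thus the only real content is the dictionary: FDGS $\leftrightarrow$ finite-dimensional $\ptc$-twisted representation, with $H$ becoming the grading operator after flattening; $\oplus$ of systems $\leftrightarrow$ $\oplus$ of representations; topologically trivial system $\leftrightarrow$ element of $\Triv_s^{\ptc}(G)$.

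The main obstacle, such as it is, is bookkeeping rather than mathematics: one must confirm that passing to the spectral flattening $H\mapsto H_1$ with $H_1^2=1$ is compatible with the equivalence relations on both sides — i.e., that the grading operator really does produce a $\ptc$-twisted representation in the sense of Definition~\ref{thm:84}(ii), with $\tar(g)$ even iff $c(g)=1$, which is guaranteed by condition~(i) of Definition~\ref{thm:25} applied to $H=H_1$. One should also note that the choice of positive definite hermitian metric on $\sH$ is irrelevant (the space of such metrics is contractible, as used in the proof of Proposition~\ref{thm:101}), so there is no discrepancy between working with abstract $\zt$-graded complex vector spaces and with Hilbert spaces. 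Once these compatibilities are in place, the isomorphism $\GSZ\Gptc\cong\hK\phi{\tau,c}_G$ is immediate and the proof is essentially a two-line consequence of Proposition~\ref{thm:101} together with the identification of triviality submonoids.
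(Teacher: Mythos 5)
Your proposal is correct and follows the same route as the paper, which simply remarks that the proof is immediate from the definitions; you have filled in exactly the unpacking the authors had in mind, namely that Proposition~\ref{thm:101} identifies $\TPZ\Gptc$ with $\Rep_s^{\ptc}(G)$, that Definition~\ref{thm:102} of topological triviality is precisely Definition~\ref{thm:61}(ii) for $\Triv_s^{\ptc}(G)$, and that the quotient is then $\hK\phi{\tau,c}_G$ by Definition~\ref{thm:61}(iii), a group by Lemma~\ref{thm:128}.
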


\noindent 
 The twisted $K$-theory group in the theorem is given in
Definition~\ref{thm:61}.  The proof is immediate from the definitions.

We can identify the twisted virtual representation ring with a more standard
$K$-theory group if we make an additional strong assumption.  It is not, as
far as we know, justified on physical grounds.  In fact, it is stronger than
the $\psi $-standard assumption in ~\S\ref{sec:4}.  We include it here to
make contact with the literature, and in particular with
Proposition~\ref{thm:24}.  Consider $\psi =(t,c)\:G\to\sC=\pmo\times \pmo$
and let $A\subset\sC$ be its image.  Set $G_0=\ker\psi $.

  \begin{hypothesis}[]\label{thm:41}
 \ 
  \begin{enumerate}
 \item The group~$G$ is a direct product $G\cong A\times G_0$ and under this
isomorphism $\psi $~is projection onto~$A$. 

 \item The restriction~$\tG_0\to G_0$ of $\tG\to G$ splits and we fix a
splitting.
  \end{enumerate} 
Therefore, there is fixed an isomorphism $\tG\cong \tA\times G_0$.
  \end{hypothesis}

There are four nontrivial subgroups of~$\sC$; compare
Proposition~\ref{thm:24}.  Elements~$\bT,\bC,\bS=\bT\bC$ may or may not be
in~$A$, depending on the case.  They have special lifts~$T,C,S$ to the
extension~$\tA$, hence to~$\tG$, as follows immediately from
Proposition~\ref{thm:24}.

  \begin{lemma}[]\label{thm:42}
 \ 
  \begin{enumerate}
 \item If $A$~is the diagonal subgroup of~$\sC$, then $\bS$~has a lift~$S\in
\tA$ with $S^2=1$ and $S$~central. 

 \item If $A=\pmo\times 1\subset \sC$, then $\bT$~has a lift~$T\in \tA$
with $T^2=\pm1$. 

 \item If $A= 1\times\pmo\subset \sC$, then $\bC$~has a lift~$C\in \tA$
with $C^2=\pm1$.  

 \item If $A=\sC$, then $\bT,\bC$~have lifts $T,C\in \tA$ with $TC=CT$,
$T^2=\pm1$, $C^2=\pm1$.
  \end{enumerate}
  \end{lemma}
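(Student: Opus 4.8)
The plan is to derive each of the four statements directly from Proposition~\ref{thm:24}, since that proposition already classifies the relevant $\fC$-twisted extensions of the subgroups $A\subset\sC$, and the point here is merely to name convenient lifts and record their algebraic properties. Throughout I work inside $\tA$, the $\fC$-twisted extension of $A$; a ``lift'' of an element $\bar a\in A$ means an element of $\tA$ mapping to $\bar a$, well-defined up to multiplication by $\TT$. The key observation I will use repeatedly is the computation from the proof of Lemma~\ref{thm:11}(ii): if $h$ lifts an element $\bar a$ with $\fC(\bar a)=-1$ (so conjugation by $h$ acts on $\TT\subset\tA$ by complex conjugation), then $(\lambda h)^2=\lambda\bar\lambda h^2=h^2$ for all $\lambda\in\TT$, so $h^2\in\TT$ is independent of the choice of lift, and moreover $h(h^2)h^{-1}=\overline{h^2}=(h^2)^{-1}$ forces $(h^2)^2=1$, hence $h^2=\pm1$. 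By contrast, if $h$ lifts an element $\bar a$ with $\fC(\bar a)=+1$, then conjugation by $h$ fixes $\TT$ pointwise and $(\lambda h)^2=\lambda^2 h^2$, so $h^2$ can be scaled to $+1$ by a suitable choice of $\lambda$.

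First I would handle (i). Here $A$ is the diagonal, so $A=\{(1,1),(-1,-1)\}$ and $\bS=\bT\bC=(-1,-1)$ generates $A$; note $\fC(\bS)=(-1)(-1)=+1$, so conjugation by any lift of $\bS$ is trivial on $\TT$. By Proposition~\ref{thm:24}(ii) there is a \emph{unique} $\fC$-twisted extension of the diagonal $A$, and by Lemma~\ref{thm:11}(i) it is split; so we may choose a lift $S$ of $\bS$ with $S^2=1$. Since conjugation by $S$ is trivial on $\TT$ and $S$ commutes with itself, $S$ is central in $\tA$ (which is generated by $\TT$ and $S$). This gives the required $S$.

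Next, (ii) and (iii) are the same statement under the swap of the two factors of $\sC$, so I treat (ii) and transport. For $A=\pmo\times 1$, the element $\bT=(-1,1)$ generates $A$ and $\fC(\bT)=-1$, so conjugation by a lift $T$ of $\bT$ acts by complex conjugation on $\TT$; by the displayed computation above, $T^2\in\TT$ is independent of the lift and satisfies $T^2=\pm1$. Proposition~\ref{thm:24}(iii) says the two isomorphism classes of extensions are exactly distinguished by this sign, which is consistent. Thus $T$ with $T^2=\pm1$ exists, the sign depending on which of the two $\fC$-twisted extensions $\tA$ is; statement (iii) follows identically with $\bC=(1,-1)$ and lift $C$, $C^2=\pm1$.

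Finally, (iv), the case $A=\sC$. Now both $\bT=(-1,1)$ and $\bC=(1,-1)$ lie in $A$, with $\fC(\bT)=\fC(\bC)=-1$, while $\bS=\bT\bC=(-1,-1)$ has $\fC(\bS)=+1$. Choose any lifts $T$ of $\bT$ and $C$ of $\bC$; by the argument above, $T^2=\pm1$ and $C^2=\pm1$, each independent of the chosen lift. It remains to arrange $TC=CT$. A priori $TCT^{-1}C^{-1}=\zeta\in\TT$ since $T,C$ project to commuting elements of $A$; I would show $\zeta=\pm1$ and then absorb a sign into the choice of lift. To see $\zeta\in\pmo$: conjugating the relation $TCT^{-1}=\zeta C$ by $T$ once more, and using that conjugation by $T$ inverts $\TT$, gives $T^2 C T^{-2}=T(\zeta C)T^{-1}=\bar\zeta(\zeta C)=|\zeta|^2 C=C$ (using $T^2=\pm1$ central in its action on $\TT$), which is automatic and gives nothing; instead one conjugates by $C$: from $TC=\zeta CT$ one gets $CTC^{-1}=\zeta^{-1}T$, and conjugating \emph{that} by $C$ and using that conjugation by $C$ inverts $\TT$ yields $C^2 T C^{-2}=C(\zeta^{-1}T)C^{-1}=\bar{\zeta}^{-1}\zeta^{-1}T$, and since $C^2=\pm1$ is central this forces $|\zeta|^{-2}=1$ (automatic) — so this route also needs care. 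The cleanest argument, and the step I expect to be the main obstacle, is to instead invoke Proposition~\ref{thm:24}(v): there are exactly four $\fC$-twisted extensions of $\sC$, distinguished by the two signs $T^2=\pm1$, $C^2=\pm1$, and in the proof of Lemma~\ref{thm:11}(iii) the extensions are built via Construction~\ref{thm:8} in such a way that the chosen lifts automatically satisfy $(h'h'')^2=\pm1$ with the cross-term commutator trivial; unwinding that construction shows $TC=CT$ for the standard lifts. So I would simply cite the explicit model in the proof of Lemma~\ref{thm:11}(iii), identify $T,C$ with its generators $h'',h'h''$ (or the appropriate pair), and read off $TC=CT$, $T^2=\pm1$, $C^2=\pm1$. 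The whole lemma is thus ``immediate from Proposition~\ref{thm:24} and the proof of Lemma~\ref{thm:11}'', and I would write it at that level of detail rather than recomputing cocycles.
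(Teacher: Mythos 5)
Your overall strategy---reading parts (i)--(iii) off Lemma~\ref{thm:11} and Proposition~\ref{thm:24}---matches the paper, which simply asserts the lemma ``follows immediately from Proposition~\ref{thm:24}.'' Parts (i), (ii), (iii) of your write-up are correct, including the observation that when $\fC(\bar a)=-1$ the square of any lift is a well-defined element of $\pmo$.

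Part (iv), however, has a genuine gap, and you correctly sensed where the difficulty lies but did not close it. Your two attempted direct computations (conjugating $TC=\zeta CT$ by $T$ and by $C$) are indeed vacuous, as you note; but the diagnosis is not that some further computation is needed to prove $\zeta\in\pmo$. In fact no relation forces $\zeta\in\pmo$ for an arbitrary pair of lifts. Your fallback is also not correct as stated: in the proof of Lemma~\ref{thm:11}(iii) the homomorphism $\ta$ in~\eqref{eq:25} is allowed to send $-1$ to the automorphism with $\beta(h')=-h'$, i.e.\ Construction~\ref{thm:8} \emph{does} produce models in which the chosen lifts $h''$ and $h'$ anticommute, so ``the cross-term commutator is trivial'' is false for the standard lifts. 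One also has to be a little careful importing thm:11(iii) here at all, since there $\phi$ is projection onto the first factor while here $\fC$ is multiplication; these agree only after the isomorphism $(a,b)\mapsto(ab,b)$, under which $g''\leftrightarrow\bT$ but $g'\leftrightarrow\bS$ (not $\bC$).

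The fix is elementary and avoids thm:11(iii) and thm:8 entirely. Pick any lifts $T,C$. As you showed, $T^2,C^2\in\pmo$ and these signs are independent of the lifts. Write $TC=\zeta CT$ with $\zeta\in\TT$. Choose $\mu\in\TT$ with $\mu^2=\zeta$ and replace $C$ by $\mu C$. Since $\fC(\bT)=\fC(\bC)=-1$,
\begin{equation*}
T(\mu C)=\bar\mu\,TC=\bar\mu\zeta\,CT=\bar\mu^2\zeta\,(\mu C)T=(\mu C)T,
\qquad (\mu C)^2=\mu\bar\mu\,C^2=C^2,
\end{equation*}
so the rescaled $C$ commutes with $T$ and still squares to the same sign. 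This is exactly the ``absorb a sign into the choice of lift'' step you anticipated; write it out this way and the proof of (iv) is complete.
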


  \begin{corollary}[]\label{thm:43}
 If Hypothesis~\ref{thm:41} holds, then we have the following table
for~$\hK\phi {\tau ,c}_G$:

  \begin{center}
  \renewcommand{\arraystretch}{1.5}
  \begin{tabular}{|c||c|c|c|c|c|c|c|c|c|c|}
% after \\: \hline or \cline{col1-col2} \cline{col3-col4} ...
\hline 
$A$&1&\textnormal{diag}&$\pmo\times \oo$&$\sC$&$\oo\times \pmo$&$\sC$&$\pmo\times \oo$&$\sC$&$\oo\times \pmo$&$\sC$ \\
\hline
$T^2$&&&$+1$&$+1$&&$-1$&$-1$&$-1$&&$+1$ \\
\hline
$C^2$&&&&$-1$&$-1$&$-1$&&$+1$&$+1$&$+1$\\
 \hline&&&&&&&&&&\\[-1.6em] \hline
$\hK\phi
{t,c}_G$&$K^0_{G_0}$&$K^{-1}_{G_0}$&$KO^0_{G_0}$&$KO^{-1}_{G_0}$&$KO^{-2}_{G_0}$&$KO^{-3}_{G_0}$&$KO^{-4}_{G_0}$&$KO^{-5}_{G_0}$&$KO^{-6}_{G_0}$&$KO^{-7}_{G_0}$\\ 
 \hline 
  \end{tabular}
  \renewcommand{\arraystretch}{1}
  \end{center}
\vskip12pt
  \end{corollary}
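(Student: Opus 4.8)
The plan is to reduce the computation of $\hK\phi{\tau,c}_G$ to a standard equivariant $KO$- or $K$-theory group of the kernel group~$G_0$ by exploiting the splitting $\tG\cong\tA\times G_0$ furnished by Hypothesis~\ref{thm:41}, together with the identification of special symmetries as a Clifford-algebra twisting worked out in Appendix~\ref{sec:12} (the ten Morita classes of real and complex Clifford algebras). Concretely, a $\ptc$-twisted representation of~$G$ is, under the isomorphism $\tG\cong\tA\times G_0$, a $\zt$-graded complex vector space~$E$ carrying an ordinary (untwisted, ungraded) linear representation of the compact Lie group~$G_0$ together with a commuting collection of real-linear operators coming from the lifts $T,C,S$ of Lemma~\ref{thm:42}, whose algebraic relations---antilinearity governed by $\phi=tc$, parity governed by~$c$, and the sign of the square recorded in the table---are precisely the defining relations of a module over one of the real or complex Clifford algebras $\Cl{n}$ for the appropriate~$n$, or over $\Cl{0}^\CC$ in the ungraded complex case. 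So I would organize the proof case by case through the ten columns of the table.

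First I would dispose of the two complex cases ($A=1$ and $A=\text{diag}$): when $A=1$ there is no antilinear generator and no odd generator forced by the twisting, so $\Rep_s^{\ptc}(G)/\Triv_s^{\ptc}(G)$ is exactly $\Rep_s(G_0)/\Triv_s(G_0)\cong K_{G_0}=K^0_{G_0}$ by the discussion around~\eqref{eq:218}; when $A$ is the diagonal, Lemma~\ref{thm:42}(i) gives a central odd operator~$S$ with $S^2=1$, i.e.\ the fibers are modules over $\Cl{1}^\CC$, which shifts the degree by one and yields $K^{-1}_{G_0}$ (this is the Clifford-module mechanism recalled at the end of~\S\ref{sec:3} and in Appendix~\ref{sec:12}). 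For the eight real cases, the antilinear operator(s) among $T,C,S$ endow~$E$ with a real structure (when $\phi$ is nontrivial on the relevant generator), reducing the complex representation theory of~$G_0$ to real representation theory, i.e.\ to $KO_{G_0}$; the remaining generators, together with the parity operator supplied by the $\zt$-grading and the Hamiltonian, assemble into generators of a real Clifford algebra $\Cl{p,q}$ whose Morita class is determined by the signs $T^2=\pm1$, $C^2=\pm1$ tabulated. Reading off $p-q \bmod 8$ from those signs and invoking the eightfold periodicity of $\Cl{p,q}$-modules gives the degree shift $-2,-3,\dots,-7$ recorded in the last row; this is precisely the bookkeeping of the tenfold way and matches Proposition~\ref{thm:24} together with Appendix~\ref{sec:12}. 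The passage to reduced phases (quotienting by $\Triv_s^{\ptc}$) is exactly the step that, in the Atiyah–Bott–Shapiro picture, implements "$\Cl{n}$-modules modulo those extending to $\Cl{n+1}$," so no extra argument is needed beyond citing Lemma~\ref{thm:128} for the group structure and the definitions of Definition~\ref{thm:61}.

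The main obstacle---more bookkeeping than conceptual---will be getting the Clifford signs and the resulting degree shifts exactly right in each of the eight real columns, in particular keeping straight whether a given one of $T,C,S$ is complex-linear or antilinear (which depends on $\phi=tc$ evaluated on $\bT,\bC,\bS$, hence on which column we are in), and then correctly converting the data (an antilinear operator with $T^2=\pm1$, possibly a commuting antilinear or linear operator with $C^2=\pm1$, plus the odd grading operator from~$c$) into a signature $(p,q)$ so that $\Cl{p,q}$ has the right Morita type. I would handle this by first fixing, once and for all, the dictionary "generator squaring to $+1$ $\leftrightarrow$ one unit of one sign, generator squaring to $-1$ $\leftrightarrow$ one unit of the other sign, antilinearity $\leftrightarrow$ the generator that produces the real structure rather than an internal $\Cl{}$ generator," and then marching down the columns; since the entries of the table are forced by Lemma~\ref{thm:42} and the fixed dictionary, each column becomes a one-line verification. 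As the proof is ultimately a finite check, once this dictionary is in place there is nothing deep left; accordingly I would, as the paper does elsewhere, present the argument compactly rather than spelling out all ten verifications, possibly displaying one representative real case (say $A=\sC$ with $T^2=C^2=-1$, giving $KO^{-3}_{G_0}$) in detail and asserting the rest follow identically.
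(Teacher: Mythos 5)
Your proposal is correct and follows essentially the same route as the paper's own (very brief) proof: decompose via Hypothesis~\ref{thm:41} into a $G_0$-representation with a commuting action of $T,C,S$, then invoke the Clifford-algebra dictionary of Proposition~\ref{thm:57} in Appendix~\ref{sec:12} and the degree-shift mechanism from the end of~\S\ref{sec:3}. The one caveat is that you treat the sign-to-signature bookkeeping as something still to be done, whereas that is precisely the content already packaged in Proposition~\ref{thm:57}, so citing it suffices rather than redoing the dictionary by hand.
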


  \begin{proof}
 An action of~$G$ on~$\sH$ of the type that occurs in Theorem~\ref{thm:40} is
equivalent to a linear representation of~$G_0$ on~$\sH$ together with a
\emph{commuting} action of whichever elements~$T,C,S=TC$ are present.
Combining with the interpretation in terms of Clifford algebras developed in
Appendix~\ref{sec:12}, specifically Proposition~\ref{thm:57}, and recalling
from the end of~\S\ref{sec:3} that Clifford modules represent $K$-theory in
shifted degrees, we arrive at the table.
  \end{proof}

   \section{Twistings from group extensions}\label{sec:5}
% lastsubsec@000

  \subsection*{Warmup: direct products}

Let $G=G''\times G'$ be a direct product group.\footnote{For our motivational
purposes the reader may restrict to the simplest case in which $G'',G'$~are
finite groups, but with additional care the discussion applies to infinite
discrete groups and Lie groups as well.}  Then an irreducible complex
representation of~$G$ is a tensor product~$R''\otimes R'$ of irreducible
representations~$R''$ of~$G''$ and~$R'$ of~$G'$.  Hence any representation~$E$
of~$G$ decomposes as 
  \begin{equation}\label{eq:40}
     \begin{aligned} E &\cong \bigoplus\limits_{R'',R'}\; \Hom_G(R''\otimes
      R',E)\otimes R''\otimes R' \\ &\cong \bigoplus\limits_{R'}\;
      \Hom_{G'}(R',E)\otimes R' ,\end{aligned} 
  \end{equation}
where $R'',R'$ run over distinguished representatives of the isomorphism
classes of irreducible representations of~$G'',G'$, respectively.  Recall
that `$\Hom_G$'~denotes the vector space of
\emph{intertwiners}---$G$-equivariant maps---between two representations
of~$G$.  Thus $\dim \Hom_{G'}(R',E)$ is the multiplicity of the irreducible
representation~$R'$ in~$E$.  Decomposing under~$G''\subset G$ we deduce that
the multiplicity space in the second line of~\eqref{eq:40} is
  \begin{equation}\label{eq:41}
     \Hom_{G'}(R',E) \cong \bigoplus_{R''}\;\Hom_G(R''\otimes R',E)\otimes R'' .
  \end{equation}
Let $X$~be a set (or space) of irreducible representations of~$G'$.  We
label points of~$X$ by the chosen distinguished irreducible
representations~$R'$.  The multiplicity spaces form a family of complex
vector spaces $\sE\to X$ defined by
  \begin{equation}\label{eq:42}
     \sE_{R'} = \Hom_{G'}(R',E),\qquad R'\in X. 
  \end{equation}
In general $\sE\to X$ is a \emph{sheaf}, not a vector bundle, since the
family of vector spaces may not be locally trivial.   The group~$G''$ acts trivially on~$X$, and from~\eqref{eq:41} we see that
$G''$~acts on each fiber of~$\sE$: in each summand it acts trivially on
$\Hom_G(R''\otimes R',E)$ and irreducibly on~$R''$.  Furthermore, this
construction induces an equivalence between finite dimensional
representations of~$G$ and finitely supported $G''$-equivariant families of
complex vector spaces parametrized by~$X$.  The inverse construction is
straightforward: if $\sE\to X$ has finite support, then
  \begin{equation}\label{eq:43}
     E=\bigoplus\limits_{R'\in X}\sE_{R'}\otimes R' 
  \end{equation}
is a finite dimensional representation of~$G$.  If $G$~is compact, then
$X$~is discrete and this construction induces an isomorphism of $K$-groups
$K_G\cong K_{G''}(X)_\cpt$, where `$\cpt$' denotes compact (hence finite,
since $X$~is discrete) support.

  \begin{example}[]\label{thm:123}
 Consider~$G'=\ZZ$ the free abelian group on one generator.  Its irreducible
complex representations are all one dimensional and parametrized by the
circle group~$X=\TT$ of phases $\lambda =e^{i\theta }$.  The underlying
vector space of each representation is~$\CC$ and in the representation
labeled~$\lambda $ the integer~$n\in \ZZ$ acts on~$\CC$ as multiplication
by~$\lambda ^n$.  Let $G''$~be the trivial group.  For the irreducible
representation~$E$ labeled by~$\lambda $ the vector space~$\sE_\mu ,\;\mu \in
\TT$ is canonically~$\CC$ if~$\mu =\lambda $ and is zero otherwise, so we
obtain a sheaf which is not a vector bundle.  On the other hand, if $E$~is
the vector space of complex-valued ($L^2$) functions on~$\ZZ$, and $\ZZ$ acts
by translation, then this construction reduces to the Fourier transform:
each~$\sE_\lambda $ is canonically~$\CC$, and the construction identifies~$E$
as the space of complex-valued ($L^2$) functions on~$\TT$.
  \end{example}

  \subsection*{Group extensions and twistings}

Let
  \begin{equation}\label{eq:44}
     1\longrightarrow G'\longrightarrow G\xrightarrow{\;\;\pi
     \;\;}G''\longrightarrow 1 
  \end{equation}
be a group extension.  As before let $X$~be the space of isomorphism classes
of irreducible complex representations of~$G'$.  Then there is a \emph{right}
action of~$G''$ on~$X$ as follows.  Let $\rep V\:G'\to\Aut(V)$ be an
irreducible representation of~$G'$ and~$g''\in G''$.  Choose~$g\in G$
with~$\pi (g)=g''$.  Then if $[V]\in X$ denotes the isomorphism class of~$V$,
define $[V]\cdot g''$ to be the isomorphism class of the composition
  \begin{equation}\label{eq:45}
     \rep{V^g}\:G'\xrightarrow{\;\;\alpha (g)\;\;}G'\xrightarrow{\;\;\rep
     V\;\;}\Aut(V), 
  \end{equation}
where $\alpha (g)(g')=gg'g\inv $ for all~$g'\in G'$.  One checks that the
isomorphism class of~\eqref{eq:45} depends only on~$g''\in G''$, not on the
lift~$g\in G$.  Also, note that $(V^{g_1})^{g_2}=V^{g_1g_2}$.

The following theorem is proved in case $G$~is compact in
\cite[Examples~1.12--13]{FHT1} and in more general form in \cite[\S5]{FHT2}.
We give an exposition here.  Our main application is an extension in which
the quotient~$G''$ is a finite group and the kernel~$G'$ is isomorphic to a
lattice~$\ZZ^d$ for some~$d\in \ZZ^{\ge0}$.  Since $\ZZ^d$~is abelian, its
complex irreducible representations are one-dimensional.  Hence in this case
$X$~is diffeomorphic to the $d$-dimensional torus of characters
$\Hom(\ZZ^d,\TT)\simeq\TT^d$.  Recall Definition~\ref{thm:85}.

  \begin{theorem}[]\label{thm:16}
  Let \eqref{eq:44} be an extension of Lie groups and $X$~the space of
isomorphism classes of irreducible representations of~$G'$.  Assume $G''$~is
compact and $G'$~is either compact or a lattice.

  \begin{enumerate}
 \item There is an induced central extension~$\nu $ of the groupoid~$X\gpd
G''$ such that a representation~$E$ of~$G$ induces a $\nu $-twisted
$G''$-equivariant sheaf over~$X$.  (Infinite dimensional representations are
assumed unitary.)

 \item If $G$~is compact, there is an isomorphism $K_G\to K_{G''}^\nu (X)_\cpt$,
where `$\cpt$' denotes `compact support'. 

 \item If $G'$~is abelian, then a splitting of~\eqref{eq:44} trivializes the
central extension~$\nu $. 

  \end{enumerate}
  \end{theorem}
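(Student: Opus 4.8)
The plan is to address the three parts of Theorem~\ref{thm:16} in order, building the central extension of the groupoid $X\gpd G''$ concretely from the group extension~\eqref{eq:44} and then exploiting the Fourier-transform picture developed in the warmup subsection. For part~(i), first I would construct, for each irreducible representation $\rep V\:G'\to\Aut(V)$ and each $g''\in G''$ with chosen lift $g\in G$, a line $\tL_{([V],g'')}$ as follows. Conjugation by $g$ gives an isomorphism of $G'$-representations between $V^g$ (as in~\eqref{eq:45}) and any chosen representative of the class $[V]\cdot g''$; composing with the operator $\rep V(g)$ that $g$ itself would supply, one gets a canonical line of intertwiners $\Hom_{G'}\bigl(V^{g}, V\bigr)\cdot(\text{the chosen isomorphism})$, which is one-dimensional by Schur's lemma since $V$ is irreducible. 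Different lifts $g$ of $g''$ differ by an element of $G'$, which acts on this $\Hom$-space by a scalar of modulus one (after fixing unitary structures), so the hermitian line is well-defined over $\gp_1 = X\times G''$; varying continuously gives a hermitian line bundle $\tL\to\gp_1$. The composition maps $\lambda_{\gamma_2,\gamma_1}$ of~\eqref{eq:79} come from the group law in $G$: if $g_2,g_1$ lift $g_2'',g_1''$ then $g_2g_1$ lifts $g_2''g_1''$, and the identity $(V^{g_1})^{g_2}=V^{g_1g_2}$ together with associativity in $G$ yields~\eqref{eq:80}. This is exactly the central-extension datum of Definition~\ref{thm:85}(i). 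The statement that a representation $E$ of $G$ induces a $\nu$-twisted $G''$-equivariant sheaf over $X$ then follows by the same decomposition as in the warmup: set $\sE_{[V]}=\Hom_{G'}(V,E)$ as in~\eqref{eq:42}, and for an arrow $([V],g'')$ the operator $\rep E(g)$ (for a lift $g$) carries $\Hom_{G'}(V^g,E)$ to $\Hom_{G'}(V,E)$; absorbing the ambiguity in the choice of $g$ into the line $\tL_{([V],g'')}$ makes this a well-defined twisted equivariant structure, with unitarity in the infinite-dimensional case guaranteeing the maps are bounded.

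For part~(ii), assuming $G$ compact, $X$ is discrete and the warmup's inverse construction~\eqref{eq:43} applies verbatim: from a finitely-supported $\nu$-twisted $G''$-equivariant sheaf $\sE\to X$ one reconstructs $E=\bigoplus_{[V]\in X}\sE_{[V]}\otimes V$, where the twisting in the action of $G''$ is precisely what is needed for $E$ to be a bona fide (untwisted) representation of the whole group $G$. One checks this is inverse to the construction of part~(i), and that both are additive under direct sum and send a module-with-odd-automorphism to a module-with-odd-automorphism, so the bijection on isomorphism classes of objects descends to an isomorphism of Grothendieck groups $K_G\xrightarrow{\sim} K^\nu_{G''}(X)_\cpt$. (This is the content of \cite[Examples~1.12--13]{FHT1}; I would cite it and indicate the bijection rather than re-prove completeness of the twisted $K$-theory side.)

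Part~(iii) is the quickest: if $G'$ is abelian, then $X=\Hom(G',\TT)$ (a group of characters), all the irreducibles $V$ are one-dimensional, and the line $\tL_{([V],g'')}=\Hom_{G'}(V^g,V)$ has a canonical trivialization as soon as we can choose the lift $g$ coherently in $g''$, i.e.\ as soon as~\eqref{eq:44} is split; a splitting $s\:G''\to G$ lets us take $g=s(g'')$, and then the composition maps $\lambda_{g_2'',g_1''}$ become the identity because $s(g_2''g_1'')=s(g_2'')s(g_1'')$. Hence the cocycle~\eqref{eq:80} is the trivial one and $\nu$ is the trivial central extension. The main obstacle I anticipate is part~(i): specifically, verifying cleanly that the line $\tL_{([V],g'')}$ is independent of the lift $g$ up to a \emph{canonical} unitary identification (not merely an isomorphism), so that the bundle is genuinely well-defined and the cocycle condition~\eqref{eq:80} holds on the nose rather than up to coherent phases — this is where the hermitian normalization and Schur's lemma must be handled carefully, and where one must be mindful that for infinite-dimensional $E$ one only gets a sheaf, not a vector bundle, over possibly non-Hausdorff $X$.
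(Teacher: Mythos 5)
Your strategy — lines of intertwiners one-dimensional by Schur's lemma, descent from lifts $g\in\pi^{-1}(g'')$ to $g''$, the multiplicity spaces $\sE_{[V]}=\Hom_{G'}(V,E)$, and reconstruction by tensoring with the chosen irreducibles — is the paper's strategy. But the explicit construction of the line in part~(i), which is the crux, contains errors. You write that ``conjugation by $g$ gives an isomorphism of $G'$-representations between $V^g$ and [the chosen representative $W$ of] $[V]\cdot g''$''. It does not: conjugation by $g$ is the group automorphism $\alpha(g)$ of $G'$ used to \emph{define} $V^g=\rho_V\circ\alpha(g)$; it is not a linear map $V^g\to W$. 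An isomorphism $V^g\cong W$ exists, but only up to scalar — the \emph{line} of such isomorphisms $\Hom_{G'}(W,V^g)$ is the one-dimensional space Schur's lemma supplies, and that is the correct definition of $\widetilde{L}_{[V],g}$ (your $\Hom_{G'}(V^g,V)$ is zero whenever $g''\neq e$, since $V^g$ and $V$ then lie in different isomorphism classes). Likewise, $\rho_V(g)$ for $g\notin G'$ is undefined, as $\rho_V$ is only a representation of $G'$; the nonexistence of a canonical extension of $\rho_V$ over the lift $g$ is precisely what the twisting records, so if conjugation \emph{did} supply a canonical intertwiner, the central extension would be trivial and the theorem vacuous.

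Your descent claim — that $G'$ ``acts on this Hom-space by a scalar of modulus one'' — also glosses over the real issue. Different lifts $g$, $g'g$ of the same $g''$ give genuinely \emph{different} lines $\widetilde{L}_{[V],g}$, $\widetilde{L}_{[V],g'g}$, canonically related via the cocycle~\eqref{eq:50} and the trivialization of $\widetilde{L}_{[V],g'}$ by $\rho_V(g')$ (which is now defined, since $g'\in G'$). The paper descends by taking the line of $G'$-equivariant sections over $\pi^{-1}(g'')$, as in~\eqref{eq:51}; without that step the ``hermitian line over $\gp_1$'' is not actually constructed. Two further points you flag but do not resolve: when $G'$ is a lattice the sheaf assertion in (i) requires the spectral-measure argument (the paper's Lemma~\ref{thm:83}); and the inverse construction in (ii) is not the warmup's~\eqref{eq:43} verbatim — one needs the canonical pairing $\eta\colon\CC\to\Hom_{G'}(W,V^g)\otimes\Hom_{G'}(V^g,W)$ to recombine the twisted structure maps into a genuine $G$-action, as in~\eqref{eq:87}. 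Part (iii) is fine.
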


\noindent 
 A trivialization is defined in Definition~\ref{thm:117} (with $\phi =\phi
'$, $c=c'$, and~$\tau $ all trivial).

  \begin{remark}[]\label{thm:130}
 The \ptzgce~$\nu $ depends on some choices (of representative irreducible
representations of~$G'$ in each isomorphism class), but we can track how~$\nu
$ changes under a change of these choices, and so we say $\nu $~is defined up
to canonical isomorphism.  In case $G'$~is abelian, as it is in our
application to gapped topological insulators in~\S\ref{sec:6}, there  are
canonical choices so no ambiguity in the definition of~$\nu $. 
  \end{remark}

The following simple examples with finite groups illustrate the theorem.

  \begin{example}[]\label{thm:20}
 The cyclic group of order~6 is an extension 
  \begin{equation}\label{eq:46}
     1\longrightarrow \zmod3\longrightarrow \zmod6\longrightarrow
     \zt\longrightarrow 1 
  \end{equation}
which is split: $\zmod6\cong \zmod2\times \zmod3$.  Each complex irreducible
representation of the abelian group~$\zmod3$ is one-dimensional, whence the
space~$X$ of isomorphism classes is the space of characters
$\Hom(\zmod3,\TT)$, which consists of 3~points.  Let $\omega =e^{2\pi i/3}$.
Label a character $\lambda \:\zmod3\to\TT$ by~$\lambda (1)$, so that
$X=\{1,\omega ,\omega ^2\}$.  The nonzero element of~$\zt$ acts on~$X$ by the
trivial involution~$\sigma =\id_X$.  The induced central extension~$\nu $
of~$X\gpd(\zt)$ is canonically trivial.
  \end{example}

  \begin{example}[]\label{thm:21}
 The automorphism group~$\Sigma _3$ of a set of 3~elements is a split
extension
  \begin{equation}\label{eq:47}
     1\longrightarrow \zmod3\longrightarrow \Sigma _3\longrightarrow
     \zt\longrightarrow 1 
  \end{equation}
Now $\sigma \:X\to X$ is the nontrivial involution which exchanges $\omega
\leftrightarrow \omega ^2$ and fixes~$1$.  The induced central
extension~$\nu $ is trivializable and $K_{\Sigma _3}$~is isomorphic to
(untwisted)~$K_{\zt}(X)$.
  \end{example}

  \begin{example}[]\label{thm:22}
 The 8-element quaternion group~$Q$ is a nonsplit extension 
  \begin{equation}\label{eq:48}
     1\longrightarrow \zmod2\longrightarrow Q\longrightarrow\zt\times 
     \zt\longrightarrow 1 
  \end{equation}
The quotient~$\zt\times \zt$ acts trivially on the space $X=\{1,-1 \}$ of
characters of~$\zmod2$.  Let $\nu $~be the induced central extension
of~$X\gpd(\zt\times \zt)$. 
The restriction of~$\nu $ to $\{1\}\subset X$ is trivial while the restriction
of~$\nu $ to~$\{-1\}\subset X$ is represented by the nontrivial central
extension 
  \begin{equation}\label{eq:59}
     1\longrightarrow \TT\longrightarrow (Q\times \TT)/\pmo\longrightarrow
     \zt\times \zt\longrightarrow 1. 
  \end{equation}
This is a special case of Proposition~\ref{thm:103} below.  Moreover, the
fact that \eqref{eq:59}~ is nonsplit follows from Lemma~\ref{thm:105} below.
  \end{example}

  \begin{proof}[Proof of Theorem~\ref{thm:16}]
 Assume first that $X$~is discrete, so $G''$~acts on~$X$ through the discrete
quotient~$\pi _0G''$.  (The assumption that $X$~is discrete remains in force
until the last paragraph of the proof.)  Choose a distinguished irreducible
representation $\rep V\:G'\to\Aut(V)$ in each equivalence class ~$[V]\in X$.
We construct a central extension (recall Definition~\ref{thm:85}(i)) of the
groupoid~$X\gpd G$ and then descend it to a central extension of~$X\gpd G''$,
which we define to be~$\nu $.  Recall from~\eqref{eq:45} that for~ $[V]\in X$
and~$g\in G$ we define a new representation~$V^g$ on the vector space~$V$.
Its equivalence class, denoted~$[V]\cdot \pi (g)\in X$, is represented by one
of our chosen representatives $\rep W\:G'\to\Aut(W)$.  By Schur's lemma
  \begin{equation}\label{eq:49}
     \Kl Vg:=\Hom_{G'}(W,V^g) 
  \end{equation}
is a line: the space of intertwiners between isomorphic irreducible complex
representations is one-dimensional.  Composition of intertwiners defines an
isomorphism
  \begin{equation}\label{eq:50} 
  \begin{aligned}
     \Kl V{g_1}\otimes \tiL\mstrut _{[V]\cdot \pi (g_1),g_2}&\longrightarrow \Kl
     V{g_1g_2}, \qquad g_1,g_2\in G\\ 
      f_1\otimes f_2 &\longmapsto f_1\circ f_2
  \end{aligned}
  \end{equation}
and the associative law (groupoid analog of~\eqref{eq:80}) is satisfied since
composition of functions is associative.  Now if~$g'\in G'$ then $[V]\cdot
\pi (g')=[V]$ and there is a canonical nonzero element of~\eqref{eq:49},
namely the map~$\rep V(g')$.  So $\Kl V{g'}$ is canonically trivial
for~$g'\in G'$, and taking~$g_1\in G'$ in~\eqref{eq:50} we see that the
line~$\Kl Vg$ depends up to canonical isomorphism only on~$\pi (g)\in G''$.
This is the descent mentioned above.  More precisely, for~$g''\in G''$
define~$\Ll V{g''}$ as the space of sections~$s$ of the map
  \begin{equation}\label{eq:51}
     \bigcup\limits_{g\in \pi \inv (g'')}\Kl Vg\longrightarrow \pi \inv
     (g'') 
  \end{equation}
such that $\rep V(g')\otimes s(g)$ and~$s(g'g)$ correspond
under~\eqref{eq:50} for all~$g\in \pi \inv (g'')$, $g'\in G'$.  In other
words, $s(g)\in \Kl Vg$ and the set of~$s$ which satisfy the equivariance
condition is a 1-dimensional vector space since $s$~is determined by its
value at any~$g\in \pi \inv (g'')$.  Then \eqref{eq:50}~induces isomorphisms
  \begin{equation}\label{eq:52}
     \Ll V{g_1''}\otimes L\mstrut _{[V]\cdot g_1'',g_2''}\longrightarrow \Ll
     V{g_1''g_2''}, \qquad g_1'',g_2''\in G'',
  \end{equation}
and so a central extension~$\nu $ of~$X\gpd G''$.
 
Let $\rep E\:G\to\Aut(E)$ be a representation and as in~\eqref{eq:42} define
a vector bundle $\sE\to X$ by
  \begin{equation}\label{eq:53}
     \sE_{[V]}=\Hom_{G'}(V,E). 
  \end{equation}
(We are still assuming $X$~is discrete, so any parametrized family of vector
spaces $\sE\to X$ is a vector bundle, possibly with infinite rank at some
points.)  If~$g\in G$ and $[W]=[V]\cdot \pi (g)$ in~$X$, then there is a map
  \begin{equation}\label{eq:54}
     \begin{aligned} \Eb V\otimes \Kl Vg&\longrightarrow \Eb W \\ \varphi
      \otimes f&\longmapsto \rep E(g)\inv \circ \varphi \circ f\end{aligned} 
  \end{equation}
which intertwines the $G'$-actions and satisfies associativity in that the
diagram 
\definecolor{labelkey}{rgb}{1,1,1}
  \begin{equation}\label{eq:55}
     \xymatrix{\Eb V\otimes \Kl V{g_1}\otimes \tiL\mstrut _{[V]\cdot \pi
     (g_1),g_2} \ar[d]_{\eqref{eq:54}}\ar[r]^(.6){\eqref{eq:52}} & \Eb V\otimes
     \Kl V{g_1g_2}\ar[d]^{\eqref{eq:54}}\\ \sE\mstrut 
     _{[V]\cdot \pi (g_1)}\otimes \tiL\mstrut _{[V]\cdot \pi
     (g_1),g_2}\ar[r]^(.6){\eqref{eq:54}} 
     &\sE_{[V]\cdot \pi (g_1g_2)} }\definecolor{labelkey}{rgb}{1,0,0}
  \end{equation}
commutes.  Apply~\eqref{eq:55} to $g_1\in G'$ to conclude that under the
descent described around~\eqref{eq:51} the action~\eqref{eq:54} descends to an
action
\definecolor{labelkey}{rgb}{1,0,0}
  \begin{equation}\label{eq:56}
     \Eb V\otimes \Ll Vg\longrightarrow \Eb W 
  \end{equation}
Therefore, we obtain a $\nu $-twisted vector bundle $\sE\to X\gpd G''$. 
 
Assume now that $G$~is compact, so every irreducible representation~$V$ is
finite dimensional and rigid: the compactness of~$G'\subset G$ implies that
$X$~is discrete.  If $\sE\to X$ is a $\nu $-twisted vector bundle with
compact---hence finite---support, then \eqref{eq:43}~defines a finite
dimensional representation $\rho _E\:G\to\Aut(E)$ on
  \begin{equation}\label{eq:140}
     E = \bigoplus \limits_{[V]\in X}\sE_{[V]}\otimes V 
  \end{equation}
as follows.  Fix~$g\in G$ and $[V]\in X$, set $g''=\pi (g)$, and suppose
$[V]\cdot g''\in X$ is represented by the distinguished representation $\rho
_W\:G'\to\Aut(W)$.  Define~$\rho _E$ to map $\sE_{[V]}\otimes V$ into
$\sE_{[W]}\otimes W$ by the composition
  \begin{equation}\label{eq:87}
     \Eb V\otimes V\xrightarrow{\id\otimes \eta \otimes \id}\Eb V\otimes \Kl
     Vg\otimes \Hom_{G'}(V^g,W)\otimes V\xrightarrow{\alpha \otimes
     \ev} \Eb W\otimes W. 
  \end{equation}
Here, recalling~\eqref{eq:49}, $\eta $~is the canonical isomorphism 
  \begin{equation}\label{eq:88}
     \CC\longrightarrow \Hom_{G'}(W,V^g)\otimes \Hom_{G'}(V^g,W) 
  \end{equation}
defined using the fact that $\Hom_{G'}(W,V^g)$ and~$\Hom_{G'}(V^g,W)$ are
dual lines.  Also, $\ev$~is the natural evaluation
  \begin{equation}\label{eq:89}
     \Hom_{G'}(V^g,W)\otimes V\longrightarrow W
  \end{equation}
and $\alpha $~is the structure map which defined $\sE\to X$ as a $\nu
$-twisted bundle.  The functors $E\mapsto\sE$ and $\sE\mapsto E$ are inverse
equivalences, from which the isomorphism $K\mstrut _G\cong K^\nu _{G''}(X)_c$
follows immediately.  This completes the proof of~(ii).
 
Assume $G'$~is abelian, but not necessarily compact.  Then every irreducible
representation is 1-dimensional and given by a character $G'\to\CC^\times $.
We choose the representative vector space in each isomorphism class of
irreducibles to be the trivial line~$\CC$.  Then $\Kl
Vg=\Hom_{G'}(\CC,\CC)=\CC$ in~\eqref{eq:49} is canonically trivial and the
composition~\eqref{eq:50} is the multiplication $\CC\otimes \CC\to\CC$.  If
$j\:G''\to G$ is a splitting of~\eqref{eq:44}, then we define $\Ll V{g''}=\Kl
V{j(g'')}=\CC$.  The cocycle map~\eqref{eq:52} is again multiplication
$\CC\otimes \CC\to\CC$, so under these identifications $\nu $~is the trivial
central extension.  This proves~(iii).

It remains to prove~(i) when $G'$~is a lattice (finitely generated free
abelian group) and $X$~the Pontrjagin dual abelian group of unitary
characters~$\Hom(G',\TT)$.  Then $X$~is naturally topologized as a smooth
torus.  A unitary character $\rho _V\:G'\to\TT\subset \Aut(\CC)$ acts
on~$V=\CC$ by multiplication.  So we can take the lines~$\Kl Vg$
in~\eqref{eq:49} to be the trivial line~$\CC$, as in the previous paragraph.
However, the descent described in~\eqref{eq:51} may be nontrivial and so $\Ll
V{g''}$ is not naturally trivialized.  If $\rho _E\:G\to\Aut(E)$ is a unitary
representation on a Hilbert space~$E$, then the spectral theorem gives a
self-adjoint projection-valued measure~$\mu _E$ on~$X$.  If $U\subset X$ is
an open set, define $\sS_U=\image \mu _E(U)\subset E$; it is a closed
subspace of~$E$.  The proof of~(i) is completed by the following lemma.

  \begin{lemma}[]\label{thm:83}
 The assignment $U\mapsto\sS_U$ is a sheaf on~$X$. 
  \end{lemma}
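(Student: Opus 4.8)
The plan is to verify the sheaf axioms for the assignment $U\mapsto \sS_U=\image\mu _E(U)$, where $\mu _E$~is the projection-valued measure on~$X$ furnished by the spectral theorem applied to the commuting unitary operators $\rho _E(g')$, $g'\in G'$. Since $X=\Hom(G',\TT)$ is a compact abelian group (a torus), the joint spectral measure $\mu _E$ assigns to each Borel set $B\subset X$ a self-adjoint projection $\mu _E(B)$ on~$E$, with $\mu _E(X)=\id$, $\mu _E(\emptyset )=0$, and countable additivity: for pairwise disjoint Borel sets $B_1,B_2,\dots$ one has $\mu _E(\bigcup _i B_i)=\sum _i\mu _E(B_i)$ in the strong operator topology, and $\mu _E(B_1\cap B_2)=\mu _E(B_1)\mu _E(B_2)$.

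First I would check that $U\mapsto \sS_U$ is a presheaf: if $U\subset U'$ are open sets, then $B:=U\subset U'=:B'$ are Borel, so $\mu _E(U)=\mu _E(U)\mu _E(U')$, hence $\image\mu _E(U)\subset\image\mu _E(U')$, giving the restriction map $\sS_{U'}\to\sS_U$ as the inclusion of closed subspaces (concretely, restriction is $\mu _E(U)$ itself, the orthogonal projection of $\sS_{U'}$ onto $\sS_U$). Functoriality of restriction under a chain $U\subset U'\subset U''$ is immediate from $\mu _E(U)\mu _E(U')=\mu _E(U)$. Next I would verify the gluing/separation axioms for an open cover $U=\bigcup _\alpha U_\alpha$. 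Separation: if $v\in\sS_U$ restricts to $0$ in every $\sS_{U_\alpha}$, i.e. $\mu _E(U_\alpha )v=0$ for all~$\alpha$, then since $X$~is second countable we may extract a countable subcover $U=\bigcup _{i}U_{\alpha _i}$; writing $U$ as a countable disjoint union of Borel sets $B_i\subset U_{\alpha _i}$ (e.g. $B_i=U_{\alpha _i}\setminus\bigcup _{j<i}U_{\alpha _j}$), countable additivity gives $v=\mu _E(U)v=\sum _i\mu _E(B_i)v$, and $\mu _E(B_i)v=\mu _E(B_i)\mu _E(U_{\alpha _i})v=0$, so $v=0$. Gluing: given a compatible family $v_\alpha\in\sS_{U_\alpha}$ agreeing on overlaps, i.e. $\mu _E(U_\alpha\cap U_\beta )v_\beta =\mu _E(U_\alpha\cap U_\beta )v_\alpha$ (equivalently $\mu _E(U_\alpha )v_\beta =\mu _E(U_\beta )v_\alpha$), set $v=\sum _i\mu _E(B_i)v_{\alpha _i}$ for the countable disjoint decomposition above; I would check the sum converges (the summands are mutually orthogonal and $\sum _i\|\mu _E(B_i)v_{\alpha _i}\|^2\le\sum _i\|v_{\alpha _i}\|^2$ restricted appropriately — more carefully, one bounds by using that the $v_\alpha$ are uniformly controlled on a fixed bounded region since $X$~is compact), that $v\in\sS_U$ since each $B_i\subset U$, and that $\mu _E(U_\beta )v=v_\beta$ for each~$\beta$ by re-expanding $v_\beta =\sum _i\mu _E(U_\beta\cap B_i)v_\beta$ and using the compatibility to replace $v_\beta$ by $v_{\alpha _i}$ inside each term. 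Independence of $v$ from the choice of disjointification follows from separation.

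The step I expect to be the main obstacle is the convergence and well-definedness argument in the gluing axiom when the cover is genuinely infinite and $E$~is infinite-dimensional: one must be careful that $\sum _i\mu _E(B_i)v_{\alpha _i}$ converges in norm and that the limit does not depend on the chosen Borel refinement of the cover. The clean way around this is to reduce to a \emph{finite} subcover: since $X$~is compact and the $U_\alpha$ form an open cover, finitely many $U_{\alpha _1},\dots,U_{\alpha _N}$ already cover~$X\supset U$, so every sum above is finite and convergence is automatic; independence of the disjointification then follows from separation applied to the difference of two candidate glued sections. I would therefore structure the proof of the lemma around this compactness reduction, stating it up front, and then the presheaf axioms and the two sheaf axioms all become finite manipulations of the multiplicativity and additivity of the projection-valued measure~$\mu _E$. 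A remark could note that when $E$~is finite dimensional $\sE\to X$ has finite support and the sheaf is a skyscraper supported on the finite joint spectrum, recovering~\eqref{eq:42}; this also makes transparent why, back in the proof of Theorem~\ref{thm:16}(i), the stalks of $\sE$ are the multiplicity spaces $\Hom_{G'}(V,E)$ and why the $\nu$-twisted $G''$-action of~\eqref{eq:56} is compatible with the sheaf structure, the projections $\mu _E(U)$ being $G'$-equivariant and intertwined by $\rho _E(g)$ up to the line $\Ll Vg$.
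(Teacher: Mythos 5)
Your identification of $\sS_U=\image\mu_E(U)$ and the use of the multiplicativity and additivity of the projection-valued measure is the right starting point, and your handling of the presheaf axiom and of a two-set gluing is exactly what the paper does (the paper writes the glued vector for two open sets as $e_1+e_2-\bar e$, where $\bar e$ is the common projection into $\sS_{U_1\cap U_2}$, and simply asserts uniqueness). Where your proof diverges — in attempting to handle arbitrary, possibly infinite, open covers — you are addressing a real subtlety that the paper does not, but your proposed resolution fails.

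The compactness reduction is wrong. You write that ``since $X$ is compact and the $U_\alpha$ form an open cover, finitely many $U_{\alpha_1},\dots,U_{\alpha_N}$ already cover $X\supset U$.'' But the $U_\alpha$ cover $U$, not $X$, and an open subset $U$ of a compact space need not itself be compact; so an open cover of $U$ need not admit a finite subcover. This means the convergence problem you flagged for the sum $\sum_i\mu_E(B_i)v_{\alpha_i}$ remains live, and in fact it is not merely a technicality: for an infinite open cover with compatible local sections whose norms are unbounded on the overlaps, the ``glued'' series can fail to converge in $E$, and then no global section exists. (Concretely, the presheaf of $L^2$-functions on $[0,1]$ fails the gluing axiom for the cover $\{(1/n,1)\}_{n\ge2}$ of $(0,1)$.) The paper's proof sidesteps this entirely by verifying the gluing property only for a pair of open sets, which by induction covers finite covers; it does not address infinite covers, and the lemma should be read in that spirit. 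If you want to preserve your extra generality, the honest move is to restrict the claim to finite covers (which is all the surrounding argument uses), rather than to invoke a compactness reduction that does not apply.

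One small slip to correct: early on you say the restriction map $\sS_{U'}\to\sS_U$ for $U\subset U'$ is ``the inclusion of closed subspaces,'' which goes in the wrong direction; the parenthetical correction (orthogonal projection, i.e.\ multiplication by $\mu_E(U)$) is the intended and correct definition, and is also what the paper uses.
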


  \begin{proof}
 If $U'\subset U$ then $\sS_{U'}\subset \sS_U$.  Define the restriction map
$\sS_U\to\sS_{U'}$ to be orthogonal projection.  This is obviously a
presheaf.  To verify the sheaf property suppose $U_1,U_2\subset X$ and
$e_1\in \sS_{U_1},\,e_2\in \sS_{U_2}$ have equal orthogonal projections~$\be$
in~$\sS_{U_1\cap U_2}$.  Then $e_1+e_2-\be\in \sS_{U_1\cup U_2}$ is the
unique vector which projects orthogonally onto~$e_1\in \sS_{U_1}$ and $e_2\in
\sS_{U_2}$. 
  \end{proof}  \end{proof}

  \subsection*{A nontrivial example}

Assume $G'$~is abelian and take $X$~to be the space of unitary characters
$\lambda \:G'\to\TT$.  Fix~$\lambda \in X$ and let $G''(\lambda )\subset G''$
be the subgroup which stabilizes~$\lambda $.  It may be regarded as a
subgroupoid  
  \begin{equation}\label{eq:105}
     \{\lambda \}\gpd G''(\lambda )\hookrightarrow X\gpd G''. 
  \end{equation}

  \begin{proposition}[]\label{thm:103}
 The restriction of the central extension~$\nu $ in Theorem~\ref{thm:16} to
$\{\lambda \}\gpd G''(\lambda )$ is the central extension
  \begin{equation}\label{eq:106}
     1\longrightarrow \TT\longrightarrow H\longrightarrow G''(\lambda
     )\longrightarrow 1 
  \end{equation}
defined as the associated extension (Definition~\ref{thm:106}) via $\lambda
\:G'\to\TT$ of the pullback of~\eqref{eq:44} by the inclusion $G''(\lambda
)\hookrightarrow G''$:
  \begin{equation}\label{eq:130}
     \xymatrix{ 1\ar[r] & \TT\ar[r] & H\ar[r] & G''(\lambda )\ar[r]&1\\
     1\ar[r] & G'\ar[u]^\lambda \ar[r]\ar@{=}[d] & G(\lambda
     )\ar[u]\ar[r]\ar@{^{(}->}[d] & G''(\lambda
     )\ar@{=}[u]\ar[r]\ar@{^{(}->}[d] & 1\\ 
     1\ar[r]&G'\ar[r] & G\ar[r] & G''\ar[r] & 1} 
  \end{equation}
  \end{proposition}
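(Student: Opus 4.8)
The plan is to unwind the construction of $\nu$ from the proof of Theorem~\ref{thm:16} and identify, over the single object $\lambda$, the resulting central extension of the stabilizer $G''(\lambda)$ with the associated extension described by the diagram~\eqref{eq:130}. Recall that in the proof of Theorem~\ref{thm:16}, when $G'$ is abelian we take every irreducible of $G'$ to be the trivial line $\CC$ with $G'$ acting by the corresponding unitary character; thus each line $\Kl Vg = \Hom_{G'}(\CC,\CC) = \CC$ is canonically trivialized, and the descent line $\Ll V{g''}$ over $g'' \in G''$ is the space of sections $s$ of $\bigcup_{g \in \pi\inv(g'')}\Kl Vg \to \pi\inv(g'')$ satisfying the equivariance relation $s(g'g) = \rho_V(g')\otimes s(g)$ for all $g' \in G'$. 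The central extension $H$ of $G''(\lambda)$ in question is, by definition, the disjoint union $\bigsqcup_{g'' \in G''(\lambda)}\Ll \lambda{g''}$ with multiplication from~\eqref{eq:52}; the claim is that this is isomorphic to the associated extension of the pulled-back extension $G(\lambda) \to G''(\lambda)$ via $\lambda\:G'\to\TT$.

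First I would make the identification at the level of sets. An element of the associated extension $H$ in~\eqref{eq:130} is, by Definition~\ref{thm:106}, an equivalence class in $(G(\lambda) \times \TT)/G'$ where $g' \in G'$ acts by $(g, z) \mapsto (gg'\inv, \lambda(g')z)$; equivalently it is a $\TT$-equivariant map from the $G'$-torsor $\pi\inv(g'') \subset G(\lambda)$ to $\TT$ whose equivariance is twisted by $\lambda$. But a $\CC^\times$-valued (restricting to unit norm, $\TT$-valued) equivariant function on $\pi\inv(g'')$ with exactly that twisting law is precisely a unit section $s$ of the trivialized line bundle $\bigcup_g \Kl \lambda g \to \pi\inv(g'')$ satisfying $s(g'g) = \lambda(g') s(g)$ — which is the defining condition for $\Ll \lambda{g''}$ once one checks that the canonical trivialization of $\Kl \lambda{g'}$ for $g' \in G'$ (given by $\rho_\lambda(g') \in \CC^\times$, which is exactly $\lambda(g')$) matches the twisting cocycle in the associated-extension construction. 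So I would set up the bijection $\Ll \lambda{g''} \xrightarrow{\sim} H_{g''}$ fiberwise over each $g'' \in G''(\lambda)$ and note it is $\TT$-equivariant.

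Next I would check compatibility of multiplications. The multiplication on $\bigsqcup \Ll \lambda{g''}$ is~\eqref{eq:52}, which is induced by composition of intertwiners~\eqref{eq:50}, which under the trivialization is just multiplication $\CC \otimes \CC \to \CC$ followed by the descent; the multiplication on the associated extension $H$ is induced from the group law of $G(\lambda)$ together with multiplication in $\TT$. Both amount to: pick lifts $g_1, g_2 \in G(\lambda)$ of $g_1'', g_2''$, multiply the $\TT$-labels, and record that the product lies over $g_1 g_2$; the only subtlety is that changing the lift $g_i$ to $g_i g'$ rescales the label by $\lambda(g')$, and this is exactly the ambiguity already built into both constructions. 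This step is essentially a diagram chase and should go through routinely once the set-level identification respects the trivializations consistently. I would also record that the vertical maps in~\eqref{eq:130} — the inclusion $G(\lambda) \hookrightarrow G$ and $G''(\lambda)\hookrightarrow G''$ — are precisely the restriction homomorphism realizing~\eqref{eq:105}, so that the central extension obtained is literally the pullback of $\nu$ along~\eqref{eq:105}.

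The main obstacle I anticipate is bookkeeping of the canonical trivializations and making sure no sign or phase is lost: specifically, verifying that the canonical nonzero element $\rho_\lambda(g') \in \Kl \lambda{g'}$ used in the descent~\eqref{eq:51} corresponds under the trivialization to the scalar $\lambda(g')$ that appears in the associated-extension construction, and that this matching is compatible with the cocycle~\eqref{eq:52}. There is also a minor wrinkle when $G'$ is a lattice rather than finite, since then $X$ is a torus and one must work with the sheaf-theoretic description of~\eqref{eq:53}–\eqref{eq:56}; but restricting to the single point $\{\lambda\} \subset X$ collapses the sheaf to a single fiber and the lattice case becomes formally identical to the discrete one, so I do not expect genuine difficulty there. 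Once the trivializations are pinned down, the proof is a short verification that two explicitly-constructed central extensions agree.
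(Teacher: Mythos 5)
Your proposal follows essentially the same route as the paper: trivialize the lines $\Kl\lambda g$ (possible since $G'$ is abelian and each irreducible is taken on the fixed line $\CC$), identify the descent $\Ll\lambda{g''}$ with the line of functions $f\:\pi\inv(g'')\to\CC$ satisfying $f(g'g)=\lambda(g')f(g)$, and recognize this as the line at $g''$ of the associated extension~\eqref{eq:106}. The paper's proof stops at this identification, treating the compatibility of multiplications as immediate, while you spell it out; that is harmless extra detail rather than a different argument. One small slip to fix: you paraphrase Definition~\ref{thm:106} with the relation generated by $(g,z)\mapsto(gg'^{-1},\lambda(g')z)$; after conjugating $g'$ through $g$ (which fixes $\lambda$ since $\pi(g)\in G''(\lambda)$), this is equivalent to $(g,z)\sim(kg,\lambda(k)^{-1}z)$, i.e.\ the paper's construction with $\lambda^{-1}$ in place of $\lambda$. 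The correct relation from Definition~\ref{thm:106} is $(g,z)\sim(g'g,\lambda(g')z)$, and this is what actually matches the function condition $s(g'g)=\lambda(g')s(g)$ you (correctly) arrive at.
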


  \begin{proof}
 As in the proof of Theorem~\ref{thm:16} we take the lines~$\Kl\lambda
g\;(\lambda \in X,\,g\in G)$ to be trivial, and so \eqref{eq:50}~is
multiplication.  Assume $g''\in G''(\lambda )$.  Then the set of equivariant
sections of~\eqref{eq:51} is identified with the set of complex-valued
functions $f\:\pi \inv (g'')\to\CC$ such that $f(g'g) = \lambda (g')f(g)$ for
all $g\in \pi \inv (g'')\,,g'\in G'$.  This is precisely the line at~$g''$
associated to the central extension~\eqref{eq:106}.
  \end{proof}

  \begin{example}[]\label{thm:104}
 We resume Example~\ref{thm:97}, which concerns the nonsplit group
extension~\eqref{eq:102}.  We show that the associated central
extension~$\nu $ of Theorem~\ref{thm:16} is not trivializable.  The
characters of~$\Pi $ form the Pontrjagin dual torus~$X$.  We identify it with
the standard torus~$\TT^2$ by letting $(\lambda _1,\lambda _2)\in \TT^2$ act
as the character
  \begin{equation}\label{eq:107}
     \lambda \:(n ^1,n ^2)\longmapsto \lambda _1^{n ^1}\lambda _2^{n
     ^2},\qquad n ^1,n ^2\in \ZZ. 
  \end{equation}
The actions of the reflections which generate the quotient $\zt\times \zt$
are 
  \begin{equation}\label{eq:108}
     \begin{aligned} g_1''\:(\lambda _1,\lambda _2)&\longmapsto (\lambda
      _1\inv ,\lambda _2) \\ g_2''\:(\lambda _1,\lambda _2)&\longmapsto
      (\lambda _1,\lambda _2\inv ) \\ \end{aligned} 
  \end{equation}
The character $\lambda =(-1,-1)$ is fixed by both~$g_1''$ and~$g_2''$.  By
Proposition~\ref{thm:103} the restriction of~$\nu $ to $\{\lambda
\}\gpd(\zt\times \zt)$ is the associated extension
  \begin{equation}\label{eq:109}
     \xymatrix{1 \ar[r]& \Pi \ar[r] \ar[d]^\lambda  & G \ar[d]
     \ar[r]& \zt\times \zt \ar[r]\ar@{=}[d] & 1\\ 1 \ar[r]
     &\TT 
     \ar[r] &\tilG \ar[r] &\zt\times \zt\ar[r]&1} 
  \end{equation}
In~$G$ the commutator of lifts of $g_1'',g_2''$ is $(0,1)\in \Pi $, as
remarked at the end of Example~\ref{thm:97}.  Therefore, the commutator of
lifts of $g_1'',g_2''$ to~$\tilG$ is $\lambda (0,1)=-1$.  The following lemma
shows that the central extension $1\to\TT\to\tilG\to\zt\times \zt$ is
\emph{not} split.  It follows that $\nu $~ is not trivializable. 
  \end{example}

  \begin{lemma}[]\label{thm:105}
 Let 
  \begin{equation}\label{eq:110}
     1\longrightarrow \TT\longrightarrow \tilG\longrightarrow \zt\times
     \zt\longrightarrow 1 
  \end{equation}
be a central extension, and $g_1,g_2\in \tilG$ lifts of the generators of the
quotient.  Then $g_2\inv g_1\inv g\mstrut _2g\mstrut _1 = \pm1$ and
\eqref{eq:110}~is split if and only if the sign is~$+$. 
  \end{lemma}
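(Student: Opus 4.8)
The plan is to work directly with the central extension~\eqref{eq:110} and analyze the commutator of lifts. First I would observe that the commutator $g_2\inv g_1\inv g_2 g_1$ lies in the kernel~$\TT$, since its image in $\zt\times \zt$ is trivial (the quotient being abelian); call this element $\zeta \in \TT$. The key step is to show $\zeta ^2 = 1$, which I would do using centrality: conjugating $g_1$ by $g_2$ gives $g_2 g_1 g_2\inv = \zeta ' g_1$ for some $\zeta '\in \TT$, and since $g_2^2$ lies in $\TT$ (as $g_2$ lifts an order-two element and $\TT$ is central, $g_2^2$ maps to the identity), conjugating twice by $g_2$ returns $g_1$; thus $(\zeta ')^2 = 1$. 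One checks $\zeta = \zeta '$ up to inversion, but in any case the commutator is $\pm 1$. Note also that replacing $g_1$ by $\lambda g_1$ or $g_2$ by $\lambda g_2$ for $\lambda \in \TT$ does not change the commutator (the scalars cancel), so the sign is an invariant of the extension together with the choice of which generators of $\zt\times\zt$ we lift.

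Next I would establish the two directions of the biconditional. If \eqref{eq:110} splits, with splitting $s\:\zt\times\zt\to\tilG$, then taking $g_i = s(\bar g_i)$ for the generators $\bar g_i$ we get $g_2\inv g_1\inv g_2 g_1 = s(\bar g_2\inv \bar g_1\inv \bar g_2 \bar g_1) = s(1) = 1$, so the sign is $+$. Conversely, suppose the commutator sign is $+$ for some choice of lifts $g_1,g_2$. Then I must produce a splitting. Using part~(i)-style reasoning as in Lemma~\ref{thm:11}(i) (every central extension of $\pmo$ by $\TT$ splits), I can rescale $g_1 \mapsto \lambda_1 g_1$ and $g_2\mapsto \lambda_2 g_2$ so that $g_1^2 = 1$ and $g_2^2 = 1$; these rescalings preserve the (now trivial) commutator. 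With $g_1^2 = g_2^2 = 1$ and $g_1 g_2 = g_2 g_1$, the assignment $\bar g_1\mapsto g_1$, $\bar g_2\mapsto g_2$ extends to a homomorphism $\zt\times\zt\to\tilG$ splitting the extension, since $\zt\times\zt$ has the presentation $\langle x,y\mid x^2, y^2, [x,y]\rangle$ and all three relations are satisfied by $g_1,g_2$.

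The main obstacle is the converse direction: showing that when the commutator is $+1$ one can simultaneously arrange $g_1^2 = g_2^2 = 1$ while keeping the commutator trivial. The point to be careful about is that rescaling $g_i$ by $\lambda_i\in\TT$ changes $g_i^2$ to $|\lambda_i|^2 g_i^2 = \lambda_i^2\cdot(\lambda_i\bar\lambda_i)\cdots$—wait, here all of $\tilG$ acts trivially on $\TT$ (this is a \emph{central}, not twisted, extension), so $(\lambda_i g_i)^2 = \lambda_i^2 g_i^2$, and since $g_i^2\in\TT$ I can solve $\lambda_i^2 = (g_i^2)\inv$ exactly as in the proof of Lemma~\ref{thm:11}(i). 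Because rescaling each generator independently multiplies the commutator $\zeta$ by $\lambda_i\bar\lambda_i\cdot\lambda_i\inv\bar\lambda_i\inv = 1$—i.e.\ not at all—the normalization of squares is decoupled from the commutator, so no conflict arises. I would present the bookkeeping compactly rather than in full, citing Lemma~\ref{thm:11}(i) for the splitting of the $\pmo$-subextensions.
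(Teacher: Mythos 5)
Your proof is correct and takes essentially the same approach as the paper: both rely on Lemma~\ref{thm:11}(i) to normalize the lifts so that $g_1^2=g_2^2=1$, establish that the commutator in~$\TT$ squares to~$1$, and then check the presentation relations of~$\zt\times\zt$ to produce the splitting. The only inessential difference is ordering—the paper normalizes first and derives $\mu^2=1$ in a single line from $g_2 = g_1^2 g_2 = \mu^2 g_2$, whereas you prove $\zeta^2=1$ for arbitrary lifts and then observe that the (central) rescaling needed to normalize the squares leaves the commutator untouched.
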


\noindent 
 The commutator is independent of the lifts since the extension is central.

  \begin{proof}
 By Lemma~\ref{thm:11}(i) the extension~\eqref{eq:110} is split over each
factor of the quotient, so we can choose lifts~$g_1,g_2$ such that
$g_1^2=g_2^2=1$.  Then if $g_1g_2=\mu g_2g_1$ for some~$\mu \in \TT$, 
  \begin{equation}\label{eq:111}
     g\mstrut _2=g_1^2g\mstrut _2 = \mu g\mstrut _1g\mstrut _2g\mstrut _1 =
     \mu ^2g\mstrut _2g_1^2 = \mu ^2g\mstrut _2, 
  \end{equation}
from which $\mu ^2=\pm1$, as claimed.  If $\mu =+1$ then $g_1,g_2$~generate a
splitting of $\zt\times \zt$ in~$\tilG$.  Conversely, a splitting produces
lifts with $g_1^2=g_2^2=1$ and $g_1g_2=g_2g_1$, so the sign in the commutator
is~$+$. 
  \end{proof}

  \subsection*{A generalization}

For the application in the next section we need to extend
Theorem~\ref{thm:16} to certain extended \Qsymmetry classes based on~$G$.
Namely, we assume given a homomorphism $\phi \:G''\to\pmo$, which we pull
back to~$G$ using $\pi \:G\to G''$ in~\eqref{eq:44}, and a $(\phi \circ \pi
)$-twisted
 extension
  \begin{equation}\label{eq:68}
     1\longrightarrow \TT\longrightarrow \tG\longrightarrow G\longrightarrow
     1 
  \end{equation}
Also, let $c\:G''\to\pmo$ be a homomorphism, which is pulled back to~$G$
using~$\pi $.  Overload the symbols~`$\phi $' and~`$c$' by using them to
denote the composition with~$\pi $ as well.  We are interested in
$\ptc$-twisted representations of~$G$, as in Definition~\ref{thm:84}(ii), so
sketch here the modifications in the previous discussion necessary to account
for~$\phi $, $\tau $, and~$c$.

First, let $1\to\TT\to\tGt\to G'\to 1$ be the restriction of~$\tau $ to~$G'$.
Notice that this is a \emph{central} extension (untwisted).  Let $\Xt$ denote
the space of isomorphism classes of $\tau $-twisted irreducible
representations of~$G'$, i.e., complex linear irreducible representations
of~$\tGt$ on which $\TT\subset \tGt$ acts by scalar multiplication.  We
define an action of~$G''$ on~$\Xt$.  Let $g''\in G''$ and suppose first that
$\phi (g'')=1$.  Let $[V]\in \Xt$ be represented by $\rep V\:\tGt\to\Aut(V)$
and choose $g\in G$ such that $\pi (g)=g''$.  Choose a lift~$\tg\in \tG$
of~$g\in G$ and observe that $\alpha (g)\tg' = \tg\tg'\tg\inv ,\;\tg'\in
\tGt,$ is well-defined independent of the lift~$\tg$, since any two lifts
differ by an element of the center~$\TT$.  Then replace~$G'$ by~$\tGt$
in~\eqref{eq:45} to define $[V]\cdot g''\in \Xt$.  If instead $\phi (g'')=-1$
modify the last step and take $[V]\cdot g''$ to be the isomorphism class of
the complex conjugate representation~$\bVg$ to~\eqref{eq:45}.  

  \begin{theorem}[]\label{thm:51}
 Let \eqref{eq:44} be an extension of Lie groups, $\phi ,c\:G''\to\pmo$
homomorphisms, and \eqref{eq:68}~a $(\phi \circ \pi )$-twisted extension.
Let~$\Xt$ be the space of isomorphism classes of irreducible $\tau $-twisted
representations of~$G'$.  Assume $G''$~is compact and $G'$~is either compact
or a lattice.
  \begin{enumerate}
 \item There is an induced \ptzgce~$\nu $ of the groupoid~$\Xt\gpd G''$ such
that a $(\phi \circ \pi ,\tau ,c \circ \pi )$-twisted representation~$E$
of~$G$ induces a $\nu $-twisted $G''$-equivariant sheaf over~ $\Xt$.
(Infinite dimensional representations are assumed unitary.)

 \item If $G$~is compact, there is an isomorphism $\hKp\phi _G^{\tau,c}
\to K^\nu _{G''}(\Xt)_\cpt$. 

 \item If $(G')^\tau $~is abelian, then a splitting of the group extension
  \begin{equation}\label{eq:123}
     1\longrightarrow (G')^\tau \longrightarrow \tG\xrightarrow{\;\pi \;}
     G''\longrightarrow  1  
  \end{equation}
induces an isomorphism of \ptzgces\ $\nu(\phi ,c) \xrightarrow{\cong }\nu $,
where $\nu (\phi ,c)$~is the \ptzgce\ in Remark~\ref{thm:119}.

  \end{enumerate}

   \end{theorem}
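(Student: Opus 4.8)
The plan is to parallel the proof of Theorem~\ref{thm:16} given above, inserting the twisting data $(\phi ,\tau ,c)$ at the appropriate places. The key conceptual point is that the irreducible $\tau $-twisted representations of $G'$ play the role that ordinary irreducibles of $G'$ played before: since $1\to \TT\to (G')^\tau \to G'\to 1$ is a genuine (untwisted) central extension, Schur's lemma still applies to $\tau $-twisted irreducibles (those on which $\TT$ acts by scalars), so the space $\Xt$ is well-defined, and the same decomposition theory holds for $(\phi \circ \pi ,\tau ,c\circ \pi )$-twisted representations of $G$ once we keep track of parities (from $c$) and of complex conjugation (from $\phi $).

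For~(i), I would construct, exactly as in the proof of Theorem~\ref{thm:16}, the lines $\Kl Vg = \Hom_{(G')^\tau }(W, V^g)$ for $g\in G$, where $V^g$ is the $\phi (g)$-twisted pullback representation of $(G')^\tau $ defined just before the theorem statement (complex conjugate when $\phi (g)=-1$), and where $W$ is the chosen representative of $[V]\cdot \pi (g)\in \Xt$. Composition of intertwiners gives the cocycle isomorphisms; because of the complex-conjugation insertion when $\phi (g)=-1$, the cocycle now lands in the $\phi $-twisted form~\eqref{eq:82}--\eqref{eq:83} rather than~\eqref{eq:79}--\eqref{eq:80}. The homomorphism $c\:G''\to \pmo$ is carried along unchanged: it makes the line bundle $\zt$-graded as in Remark~\ref{thm:75}. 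Descent along $G'\subset G$ works verbatim (the canonical element $\rep V(g')\in \Kl V{g'}$ trivializes over $G'$), so we obtain a \ptzgce~$\nu $ of $\Xt\gpd G''$. Then a $(\phi \circ \pi ,\tau ,c\circ \pi )$-twisted representation $E$ of $G$ yields the sheaf $\sE_{[V]}=\Hom_{(G')^\tau }(V,E)$ over $\Xt$ with the $\nu $-twisted $G''$-action built from~\eqref{eq:54}; the compatibility square~\eqref{eq:55} holds by associativity of composition, and the parity bookkeeping (Koszul sign rule) is exactly what makes the $c$-part of $\nu $ act correctly. When $G'$ is a lattice, $\Xt$ is a smooth torus (the twisting $\tau $ on the abelian $G'$ just translates the dual torus), and Lemma~\ref{thm:83} applies without change to produce a sheaf from a unitary representation.

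For~(ii), when $G$ is compact I would run the inverse construction: given a $\nu $-twisted $\zt$-graded $G''$-equivariant sheaf $\sE\to \Xt$ of finite support, reassemble $E=\bigoplus_{[V]\in \Xt}\sE_{[V]}\otimes V$ with the $G$-action defined by the composition~\eqref{eq:87}--\eqref{eq:89} (using duality of the lines $\Kl Vg$, the canonical coevaluation~\eqref{eq:88}, and the evaluation~\eqref{eq:89}), now decorated with the parity from $c$ and the conjugation from $\phi $. One checks these two functors are mutually inverse equivalences of categories, and that they respect the relations defining $\Triv_s$ on each side (an odd automorphism $P$ commuting up to $c$ on $E$ corresponds to an odd automorphism of the sheaf), whence $\hKp\phi _G^{\tau ,c}\cong K^\nu _{G''}(\Xt)_\cpt$. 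For~(iii), if $(G')^\tau $ is abelian then every $\tau $-twisted irreducible of $G'$ is one-dimensional (a genuine character of $(G')^\tau $ restricting to scalars on $\TT$), so we may take each representative space to be $\CC$, making all $\Kl Vg$ canonically trivial; a splitting of~\eqref{eq:123} then lets us set $\Ll V{g''}=\Kl V{j(g'')}=\CC$, and under these identifications $\nu $ becomes precisely the \ptzgce~$\nu (\phi ,c)$ of Remark~\ref{thm:119}, with the $\phi $- and $c$-data coming along for free.

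The main obstacle I anticipate is bookkeeping rather than conceptual: correctly threading the complex-conjugation functor ${}^\phi (-)$ through the definition of $V^g$, the lines $\Kl Vg$, and the cocycle~\eqref{eq:83}, so that the signs/conjugations all match the conventions of Definition~\ref{thm:85}, and simultaneously verifying that the Koszul sign rule governing the $c$-grading is consistently applied when permuting tensor factors in~\eqref{eq:87} and in the analog of~\eqref{eq:85}. In particular one must check that $[V]\cdot g''$ is well-defined independent of the lift $g\in G$ and then of the lift $\tg\in \tG$ (the latter because two lifts differ by an element of the center $\TT$, which acts by scalars on an irreducible), and that the complex-conjugate case $\phi (g'')=-1$ is functorial under composition, i.e. that $(V^{g_1})^{g_2}$ and $V^{g_1g_2}$ agree as twisted representations including conjugations. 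Once these compatibilities are in hand, the rest of the argument is the verbatim transcription of the proof of Theorem~\ref{thm:16}.
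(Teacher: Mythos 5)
Your proposal follows the same route as the paper: the paper's proof is precisely the instruction to rerun the proof of Theorem~\ref{thm:16} with $G'$ replaced by $\tGt$, $X$ by $\Xt$, and the line definition modified to account for $\phi$, carrying $c$ along as the parity grading and deducing~(iii) from a splitting as before. The one point where your convention differs from the paper's is worth flagging: when $(\phi\circ\pi)(g)=-1$ the paper leaves $V^g$ unconjugated and instead puts the complex conjugation on $W$, defining $\Kl Vg := \Hom_{\tGt}(\overline{W},\,V^g)$ as in~\eqref{eq:124}, whereas you absorb the conjugation into $V^g$ and write $\Hom_{\tGt}(W,\,\overline{V^g})$. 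These two lines are naturally complex conjugates of each other, so the resulting twistings agree up to a canonical isomorphism; but the paper's placement of the bar on $W$ is the one that makes the composition of intertwiners land on the nose in the $\phi$-twisted cocycle~\eqref{eq:82}--\eqref{eq:83} without an extra conjugation of the structure maps. You correctly anticipate this as the main bookkeeping issue, and apart from this choice of where to put the bar, your argument is a faithful expansion of the paper's brief proof.
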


  \begin{proof}
 The proof follows that of Theorem~\ref{thm:16} with a few modifications.
Throughout replace~$G'$ by~$\tGt$ and~$X$ by~$\Xt$.  If~$g\in G$ and $(\phi
\circ \pi )(g)=-1$, then replace~\eqref{eq:49} with 
  \begin{equation}\label{eq:124}
      \Kl Vg:=\Hom_{\tGt}(\overline{W},V^g) ,
  \end{equation}
where $V^g$ is defined as before by~\eqref{eq:45}.   The descent to
lines~$\Ll V{g''}$ follows the argument around~\eqref{eq:51}.  The data of the
\ptzgce~$\nu $ consists of the lines~$\Ll V{g''}$ and the homomorphisms
$(\Xt\gpd G'')_1\to\pmo$ obtained by composing~$\phi ,c$ with the projection
$(\Xt\gpd G'')_1=\Xt\times G''\to G''$.
 
For~(iii) it helps to organize the data in the diagram 
  \begin{equation}\label{eq:125}
     \xymatrix{ &\TT\ar@{=}[r]\ar[d]&\TT\ar[d] \\ 1 \ar[r]&(G')^\tau
     \ar[r]\ar[d] & \tG\ar[d]\ar[dr]^\pi  \\ 1 \ar[r]& G'\ar[r] & G \ar[r] &
     G''\ar[r]\ar@<.5ex>[d]^{\phi }\ar@<-.5ex>[d]_{c } & 1 \\ &&&\pmo } 
  \end{equation}
A splitting of~\eqref{eq:123}, which is a right inverse of~$\pi $, induces a
trivialization of the lines~$\Ll V{g''}$ if $G'$~is abelian, as in the
proof of Theorem~\ref{thm:16}(iii). 
  \end{proof}

   \section{Gapped topological insulators}\label{sec:6}
% lastsubsec@  1

  \subsection*{Periodic systems of electrons}

We have in mind the following sort of quantum system.  Let $E$~be a Euclidean
space and let $\bC\subset E$ be the sites of a periodic system of
atoms.\footnote{A crystal (Definition~\ref{thm:31}) in spacetime consists of
the trajectory in spacetime of a subset of space, on which the only
restriction is invariance under a full lattice, whereas in this heuristic
introduction we find it more convenient to freeze time and also assume that
$\bC$~is a discrete set of points.}  Assume there is a full lattice~$\Pi $ of
translations of~$E$ which preserves~$\bC$.  A typical Hilbert space in this
situation is~$\sH=L^2(E;W)$, the space of $L^2$~functions on~$E$ with values
in a finite dimensional complex vector space~$W$.  It is the Hilbert space of
a single electron in the lattice. If we incorporate electron spin, then
$W$~is the 2-dimensional representation of~$SU(2)$.  Let $\XL$~be the abelian
group of characters $\lambda \:\Pi \to\TT$, the \emph{Brillouin torus}.
Fourier transform writes a function $f\:E\to W$ in terms of quasi-periodic
functions~$f_\lambda $, which satisfy
  \begin{equation}\label{eq:132}
     f_\lambda (x+\xi ) = \lambda (\xi )f_\lambda (x),\qquad \xi \in \Pi . 
  \end{equation}
In condensed matter physics \eqref{eq:132}~is called the \emph{Bloch wave
condition}, and one usually writes $\lambda (\xi )=e^{ik\cdot \xi }$ where
$k$ (not uniquely fixed by~$\lambda $) is the \emph{Bloch momentum}; see
Proposition~\ref{thm:115} for further discussion.  Note that quasi-periodic
functions are \emph{not} in~$L^2(E;W)$.  If $\lambda $~is the trivial
character~$\lambda\equiv 1$, then $f_\lambda $~is periodic; for
general~$\lambda $, the function~$f_\lambda $ shifts by a phase when the
argument is translated by an element of the lattice.  Periodic functions
on~$E$ are equivalently functions on the torus~$E/\Pi $---it now makes sense
to demand that they be~$L^2$---and quasi-periodic functions which
satisfy~\eqref{eq:132} are sections of a complex vector bundle\footnote{The
line bundle $\sL_\lambda \to E/\Pi $ is constructed in the proof of
Proposition~\ref{thm:115} } $\sL_\lambda\otimes W \to E/\Pi $.
Proposition~\ref{thm:115} is a precise formulation of the Fourier transform,
or Bloch sum, which expresses an $L^2$~function on~$E$ as an $L^2$~section of
a \emph{Hilbert bundle} $\sE\to \XL$ whose fiber at~$\lambda \in \XL$ is the
infinite dimensional Hilbert space $\sE_\lambda =L^2(E/\Pi ;\sL_\lambda \otimes W)$:
  \begin{equation}\label{eq:133}
     \sH=L^2(E;W)\cong L^2(\XL;\sE). 
  \end{equation}
For a noninteracting single electron the Hamiltonian~$H$ is typically a sum
  \begin{equation}\label{eq:134}
     H = \frac{\hbar^2}{2m} \triangle + u,
  \end{equation}
where $m$ is the mass of the electron, $\triangle$~is the Laplace operator on
Euclidean space, and $u = \sum_{\bar c \in \bar C} u_{\bar c}$ is a sum of
local potential energy functions, thought of as localized near each site.  We
assume that $H$~is invariant under~$\Pi $, and so under the
isomorphism~\eqref{eq:133} it corresponds to a family of
Hamiltonians~$\{H_\lambda \}$ parametrized by~$\lambda \in \XL$, which we
prove in Proposition~\ref{thm:150} vary continuously in~$\lambda $.  In fact,
$H_\lambda $~is a Laplace operator on $L^2(E/\Pi ;\sL_\lambda )$---an
elliptic operator on a compact manifold---so has discrete spectrum.  We
assume that the full Hamiltonian~$H$ has a gap in its spectrum.  Often one
fixes a particular \emph{Fermi level} in the gap.  We normalize the Fermi
level to be at zero energy.  In other words, $H$~is invertible with bounded
inverse.  It follows that each~$H_\lambda $ is also invertible, and so there
is a decomposition $\sE=\sEp\oplus \sEm$ of the Hilbert bundle as a sum of
``valence bands''~$\sEm$ and ``conduction bands''~$\sEp$ (see
Proposition~\ref{thm:149}).  This is the moment we restrict to insulators.
We further assume that there is a \emph{finite} set of valence
bands---$\sEm$~has finite rank---and therefore an \emph{infinite} set of
conduction bands---$\sEp$~has infinite rank.  Because of this asymmetry,
there are no Hamiltonian-reversing symmetries in this system
(Lemma~\ref{thm:90}).  More generally, we assume that the symmetry group~$G$
of the system is a Lie group containing~$\Pi $ as a normal subgroup such that
$G''=G/\Pi $ is compact.  Then the quotient~$G''$ acts on~$\XL$---the action
factors through a finite group---and the action lifts to~$\sE$, but the
lifted action is twisted, as explained in ~\S\ref{sec:5}.  There are further
possible \ptzgces\ of the groupoid $\XL\gpd G''$ as $G$~may have a nontrivial
\Qsymmetry type which governs its representation on the Hilbert space~$\sH$.

  \begin{remark}[]\label{thm:143} 
 Our considerations also apply to more realistic Hamiltonians than
~\eqref{eq:134}.  In particular, in three dimensions, we could take $W$ to be
the 2-dimensional spin representation and then include the physically
important spin-orbit term $\frac{1}{2m^2 c^2} \vec S \cdot (\nabla u \times
p)$, where $\vec S = \frac{\hbar}{2} \vec \sigma$ is the spin operator and we
have used unexplained notation instantly recognizable from the condensed
matter literature, see e.g.~\cite{BP}.  More generally, the entire series of
relativistic corrections can be included by taking $W$ to be the
four-dimensional Dirac representation and using the Dirac Hamiltonian $c \vec
\alpha \cdot \vec p + \beta m c^2 + u$, where
  \begin{equation}\label{eq:154}
     \vec \alpha = \begin{pmatrix} 0 & \vec \sigma \\ \vec \sigma & 0 \\
     \end{pmatrix} \qquad \beta =\begin{pmatrix} 1 & 0 \\ 0 & -1 \\
     \end{pmatrix} .
  \end{equation}
  \end{remark}

Another important physical example---a spin system---is an analog with a
discrete field supported on the sites in~$\bC$.  For this purpose we assume
that the $\Pi $-action on~$\bC$ has only finitely many orbits.  Let
$W_{\bc}$~be a finite dimensional complex vector space attached to each
site~$\bc\in \bC$, and assume the action of~$G$ on~$\bC$ is lifted to the
vector bundle $W\to\bC$.  The quantum Hilbert space~$\sH=L^2(\bC;W)$ is the
space of $L^2$~sections of $W\to\bC$.  Now the Fourier transform produces a
\emph{finite rank} hermitian vector bundle $\sE\to\XL$, and Fourier transform
produces an isomorphism $\sH=L^2(\bC;W)\cong L^2(\XL;\sE)$, as in
~\eqref{eq:133}.  So the Hamiltonians~$\{H_\lambda \}$ are self-adjoint
operators on finite dimensional vector spaces~$\sE_\lambda $.
 
We remark that sometimes a finite rank bundle $\sE\to\XL$ is also used to
focus on a subbundle of the infinite rank bundle in the first
example~\eqref{eq:134}.  This is done to model a finite number of bands.  (If
eigenvalues of~$H_\lambda $ do not cross as $\lambda $~varies, then each band
is a line bundle whose fiber at~$\lambda \in \XL$ is an eigenspace
of~$H_\lambda $, but of course the eigenvalues may cross and so $\sE$~is not
generally a sum of line bundles.)
 
With these examples in mind we craft Definition~\ref{thm:126} and
Hypothesis~\ref{thm:56} below.  We consider both finite rank (Type~F) and
infinite rank (Type~I) Hilbert bundles $\sE\to\XL$.

  \subsection*{Formal setup}

Let $(M,\Gamma )$ be a Galilean spacetime with a crystal~$C$
(Definition~\ref{thm:31}) or, more generally, a spin crystal~$C$
(Definition~\ref{thm:32}).  The symmetry group~$G(C)$ contains a normal
subgroup~$U$ of time translations; the quotient~$G(C)/U$ is a spacetime
crystallographic group, which is a group extension~\eqref{eq:6}.  The
kernel~$\Pi $ is a full lattice of spatial translations.  The quotient~$\hP$
is a finite group, the \emph{magnetic point group}, which is an
extension~\eqref{eq:7} of the point group~$P$ of orthogonal spatial
transformations by time-reversal symmetries.  More generally, we take~ $G$ to
be a Galilean symmetry group in the sense of Definition~\ref{thm:94},
modified to mod out by time translation, so equipped with a homomorphism
$\gamma \:G\to\Aut(M,\Gamma )/U$ which is split over the intersection
of~$\gamma (G)$ with the spatial translation subgroup~$V\subset \Aut(M,\Gamma
)$.  Because symmetries must preserve the crystal, we have $\gamma (G)\subset
G(C)/U$.  Assume $\Pi \subset \gamma (G)$, and then the splitting~\eqref{eq:99}
gives an inclusion~$\Pi \subset G$ as a normal subgroup.  Summarizing, the
symmetry group~$G$ fits into the diagram
  \begin{equation}\label{eq:67}
   \xymatrix{
     1\ar[r] & \Pi \ar@{=}[d]\ar[r] & G\ar[r]^\pi \ar[d]^\gamma  &
     G''\ar[r] \ar[d]^{\bar\gamma }& 1  \\ 
     1\ar[r] & \Pi \ar[r] & G(C)/U\ar[r] & \hP\ar[r] & 1
   }
  \end{equation} 
of group extensions.  As the Hamiltonian~$H$ accounts for time
translations~$U$, we do not include~$U$ as a subgroup of~$G$; see
Remark~\ref{thm:98}. 

  \begin{definition}[]\label{thm:126}
 A \emph{band insulator} consists of the following data:
  \begin{enumerate}
 \item a Galilean spacetime~$(M,\Gamma )$ with a spin crystal~$C\subset M$; 

 \item a Galilean symmetry group~$G$ as described in the previous paragraph; 

 \item an extended \Qsymmetry class $(G'',\phi ,\tau ,c)$; and

 \item a gapped system~$(\sH,H,\tar)$ with extended \Qsymmetry type $\Gptc$,
where $\Gptc$~is the pullback of~ $(G'',\phi ,\tau ,c)$ along
$G\xrightarrow{\pi } G''$.

  \end{enumerate}
  \end{definition}

\noindent
 Notice that the extension $\tG\to G$ splits over~$\Pi $, and we regard~$\Pi
$ as a subgroup of~$\tG$.  Let $\Lv=\Hom(\Pi ,\TT)$ be the compact abelian
Lie group of characters of~$\Pi $.  Restrict~$\tar\:\tG\to\QAut(\sH)$ to
$\tP\subset \tG$, and note that $\tar(\tP)$~is an abelian group of operators
on~$\sH$ which commute with the Hamiltonian~$H$, a self-adjoint operator
on~$\sH$.  The spectral theorem simultaneously diagonalizes the operators
in~$\tar(\tP)$.  This is encoded in a projection-valued measure~$\mes$
on~$\tXL$ whose value on a Borel subset~$U\subset \tXL$ is orthogonal
projection~$\mes(U)$ onto---very roughly speaking---the subspace
$\mes(U)(\sH)\subset \sH$ of vectors which transform under a character
of~$\Pi $ which lies in~$U$.  Of course, this description only works for the
discrete part of the spectrum, and as we assume below that the spectrum is
continuous, vectors in~ $\mes(U)(\sH)$ are formally a smearing of
non-existent eigenfunctions.  In case~$\sH=L^2(E;W)$, described in the
introductory subsection above, quasi-periodic functions~\eqref{eq:132} are
not~$L^2$, so must be smeared to obtain vectors in~$\sH$.  The assignment
$U\mapsto \mes(U)(\sH)$ is a \emph{sheaf}~$\SH$ of Hilbert spaces on~$\tXL$;
see Lemma~\ref{thm:83}.

  \begin{remark}[]\label{thm:46}
 We elaborate on the remarks in Example~\ref{thm:123}.  If $\sH$~is an
irreducible representation of~$\tP$---that is, a particular character
$\lambda \:\tP\to\TT$ acting on a one-dimensional Hilbert space~$\sH$---then
$\sS_{\sH}$~is the skyscraper sheaf at~$\lambda \in \tXL$.  If $\sH$~is an
irreducible representation of~$\tG$, then $\sH$~is supported on an orbit of
the $G''$-action on~$\tXL$.  This picture is familiar in the
Wigner-Bargmann-Mackey theory of representations of the Poincar\'e group, for
example, where the sheaf constructed from the translations in Minkowski
spacetime is supported on an orbit of the Lorentz group on the characters of
the translation group (mass shell).  Hypothesis~\ref{thm:56}(ii) below
imposes a regularity which is very far from irreducibility.
  \end{remark}

To define the monoid $\TP\Gptc$ of topological phases, we need to impose
finiteness conditions on the symmetry group~$G$ and on the representation.
Recall that the \emph{Haar measure} on the compact abelian Lie group~$\Lv$ is
the unique translation-invariant measure with prescribed total volume.  

  \begin{hypothesis}[]\label{thm:56}
 \ 
  \begin{enumerate}
 \item $G''$~ is a compact Lie group.  

 \item The spectral measure~$\mes$ satisfies the following regularity
property with respect to Haar measure: there is a $\tG$-equivariant Hilbert
bundle $\sE\to\tXL$ and a $\tG$-equivariant isomorphism
  \begin{equation}\label{eq:150}
     \sH\xrightarrow{\;\;\cong \;\;}L^2(\tXL,\sE) 
  \end{equation}
under which $\Pi $~acts by multiplication on each fiber; then $\sS_{\sH}$~is
the sheaf of sections of the Hilbert bundle $\sE\to\tXL$.

 \item The Hamiltonian~$H$ induces a \emph{continuous} family of
self-adjoint operators~$\{H_\lambda \}$ on the fibers of~$\sE\to\tXL$.

 \item The invertible Hamiltonian~$H$ induces a decomposition $\sE\cong
\sEp\oplus \sEm$ such that $L^2(\tXL,\sEp)$ is the subspace of~$\sH$ on which
$H$~is positive and $L^2(\tXL,\sEm)$~is the subspace of~$\sH$ on which $H$~is
negative.  We assume $\sEm$~has \emph{finite} rank.
 
 \item We make two distinct hypotheses on~$\sEp$:  
  \begin{equation*}\label{eq:135}
     \begin{aligned} \textnormal{Type~F:}&\textnormal{ $\sEp$ has finite
      rank} \\ \textnormal{Type~I:}&\textnormal{ $\sEp$ has infinite rank} \\
      \end{aligned} 
  \end{equation*}
  \end{enumerate}
  \end{hypothesis}

\noindent 
 Statement~(i) requires no comment.  In~(ii) the action of~$\xi \in \Pi $ on
$\psi \in L^2(\tXL,\sE)$ is $(\xi \cdot \psi )(\lambda )=\lambda (\xi )\psi
(\lambda )$ for $\lambda \in \tXL$.  The $\tG$-action on $\sE\to\tXL$
descends to a twisted $G''$-action---better said, $\sE$~is a twisted bundle
for a \ptzgce\ (Definition~\ref{thm:85}(iii)) of the groupoid $\tXL\gpd
G''$---\emph{and the \ptzgce\ is precisely the \ptzgce~$\nu $ of
Theorem~\ref{thm:51}(i)}.  Because $H$~commutes with the action of~$\Pi $, it
preserves the fibers of~$\sE\to\tXL$; the import of~(iii) is the continuity
of the family of operators~$\{H_\lambda \}$ in the sense of
Definition~\ref{thm:112}.  The decomposition $\sE\cong \sEp\oplus \sEm$ then
follows from Proposition~\ref{thm:149}.  The import of~(iv) is the finite
rank condition on~$\sEm$.  Finally, notice that in Type~I we necessarily have
$c\equiv 1$, since an odd symmetry would be an isomorphism between the
infinite rank bundle~$\sEp$ and the finite rank bundle~$\sEm$.

In Appendix~\ref{sec:14} we prove that the periodic system of electrons
discussed in the introductory subsection satisfies these hypotheses.

We can now compute the commutative monoid\footnote{For Type~I the
set~$\TPI\Gptc$ is not a monoid since there is no zero element---$\sEp$~has
infinite rank and so the zero Hilbert space~$\sH$ is not allowed.  We fix
this by formally adjoining a zero element to~$\TPI\Gptc$.}~$\TP\Gptc$ of
topological phases of band insulators.  We will obtain an explicit answer in
Type~F and only a partial answer in Type~I; in both cases we will find a
simple answer after passing to an abelian group~$\GS\Gptc$.  Recall that
Definition~\ref{thm:26} of topological phase relies on the definition of
continuous families of gapped systems.  The discussion in
Appendix~\ref{sec:14}, especially Definition~\ref{thm:113}, provides the
necessary background.

  \subsection*{Topological phases: Type~F}

We show that the isomorphism classes of~$\sEp,\sEm$ determine the topological
phase. 
As in Definition~\ref{thm:71} let $\VXT$ denote the commutative monoid of
isomorphism classes of $\zt$-graded finite rank $\nu $-twisted
$G''$-equivariant vector bundles over~$\tXL$.

  \begin{theorem}[]\label{thm:129}
 For band insulators of Type~F the map 
  \begin{equation}\label{eq:138}
     \begin{aligned} \TPF\Gptc &\longrightarrow \VXT \\
      (\sH,H,\tar)&\longmapsto \sEp\oplus \sEm\end{aligned} 
  \end{equation}
is an isomorphism of commutative monoids.  
  \end{theorem}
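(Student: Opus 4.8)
The plan is to exhibit an explicit inverse to the map \eqref{eq:138} and show that both composites are the identity on isomorphism classes, using the structural results already in place. Concretely, given a $\zt$-graded finite rank $\nu$-twisted $G''$-equivariant vector bundle $\sE = \sEp\oplus\sEm$ over $\tXL$, one builds a gapped system as follows. Apply the inverse Fourier transform of Theorem~\ref{thm:51} (the Type~F case, $\sEp,\sEm$ of finite rank, so $G'=\Pi$ a lattice and the isomorphism classes $\Xt$ a torus) to produce from $\sE$ a Hilbert space $\sH = L^2(\tXL,\sE)$ carrying a $\ptc$-twisted representation $\tar$ of $\tG$ in which $\Pi$ acts by multiplication by characters; the grading operator $H$ which is $+1$ on $L^2(\tXL,\sEp)$ and $-1$ on $L^2(\tXL,\sEm)$ is then invertible, self-adjoint, commutes with $\tar(g)$ up to the sign $c(g)$ by Definition~\ref{thm:84}(ii), and so $(\sH,H,\tar)$ is a band insulator of Type~F. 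The verification that Hypothesis~\ref{thm:56}(i)--(v) holds is immediate from the construction: (ii) is exactly the content of Theorem~\ref{thm:51}(i), the continuity in (iii) is trivial since $H_\lambda$ is locally constant on the components determined by the eigenspace decomposition, and (iv)--(v) are the finite rank hypotheses built into the input.

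Next I would check that this construction descends to isomorphism classes and is well defined on $\VXT$, then that it is inverse to \eqref{eq:138}. One composite is transparent: starting from $\sE$, the associated gapped system has $\sEp$ equal to the positive spectral subspace of its Hamiltonian and $\sEm$ the negative spectral subspace, so $\sEp\oplus\sEm$ recovers $\sE$ as a $\zt$-graded twisted equivariant bundle (this is forced by Proposition~\ref{thm:149}, which identifies the spectral decomposition). For the other composite, start from an arbitrary band insulator $(\sH,H,\tar)$ of Type~F. By Lemma~\ref{thm:27} we may homotope $H$ so that $H^2=1$; since topological phases are homotopy classes this does not change the class, and now $H$ is the grading operator for the decomposition $\sH = L^2(\tXL,\sEp)\oplus L^2(\tXL,\sEm)$ supplied by Hypothesis~\ref{thm:56}(ii),(iv). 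Reconstructing from $\sEp\oplus\sEm$ via the inverse of Theorem~\ref{thm:51} returns a system isomorphic to $(\sH,H,\tar)$: the isomorphism $\sH\cong L^2(\tXL,\sE)$ is the Fourier transform of Hypothesis~\ref{thm:56}(ii), it intertwines the $\tG$-actions by construction of $\nu$, and it carries $H$ to the grading operator. Hence the composite is the identity on topological phases.

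The remaining point is that \eqref{eq:138} and its inverse respect the monoid structures, which on both sides are given by direct sum: direct sum of band insulators corresponds under Fourier transform to direct sum of the twisted equivariant bundles, because the Fourier/Bloch isomorphism of Hypothesis~\ref{thm:56}(ii) is additive and the spectral decomposition into $\sEp\oplus\sEm$ is compatible with direct sums of Hamiltonians. The zero object (the zero Hilbert space) maps to the zero bundle. So both maps are homomorphisms of commutative monoids, and being mutually inverse on underlying sets they are isomorphisms of commutative monoids.

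The main obstacle I anticipate is not any single calculation but rather the care needed at the interface with the homotopy-theoretic definition of ``topological phase.'' One must confirm that the reconstruction map is actually well defined at the level of \emph{homotopy} classes of continuous families, not merely isomorphism classes of single systems --- i.e.\ that a homotopy of band insulators induces a homotopy (hence, for rigid finite rank twisted equivariant bundles over the torus, an isomorphism) of the associated bundles, and conversely. This uses the rigidity of finite rank $G''$-equivariant (twisted) bundles when $G''$ is a compact Lie group acting through a finite quotient on $\tXL$ --- the same discreteness phenomenon exploited in Proposition~\ref{thm:101} --- together with the precise notion of continuous family from Appendix~\ref{sec:14} (Definition~\ref{thm:113}). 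Once one grants that dictionary, everything else is a matter of unwinding the Fourier transform of Theorem~\ref{thm:51} and the spectral decomposition of Hypothesis~\ref{thm:56}.
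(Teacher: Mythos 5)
Your proposal is correct and takes essentially the same approach as the paper: the paper organizes the argument as ``well-defined, surjective, injective'' while you construct an explicit inverse and verify both composites, but the constructions used in each direction (Lemma~\ref{thm:27} for spectral flattening, the inverse Fourier/Bloch transform via~\eqref{eq:87}, and the rigidity of finite rank twisted equivariant bundles over the compact $\tXL$ for well-definedness at the homotopy level) are identical. You also correctly isolate the one subtle point --- that well-definedness on homotopy classes requires the rigidity of finite rank $G''$-equivariant bundles --- which is exactly what the paper's first paragraph addresses.
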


  \begin{proof}
 To begin we prove that \eqref{eq:138}~is well-defined.  First, after a
homotopy we may assume that $H$~is the grading operator; see
Lemma~\ref{thm:27}.  In any case $\sEp\oplus \sEm$ is defined using the
positive and negative spectral projections, which only depend on the grading
operator associated to~$H$.  A homotopy of Type~F band insulators is a
Hilbert bundle $\sH\to \Delta ^1\times \tXL$ together with a continuous
family of Hamiltonians~$H_t$ and representations~$\tar_t$, $t\in \Delta ^1$.
(See Definition~\ref{thm:113}.)  The (Bloch) Hamiltonians~$H_{t ,\lambda}$ ,
$(t,\lambda )\in \Delta ^1\times \tXL$ on the fibers of $\sE\to\Delta
^1\times \tXL$ form a continuous family, by Hypothesis~\ref{thm:56}(iii), and
the positive and negative spectral projections are obviously continuous since
$\sE$~has finite rank.  Now because $G''$~and $\tXL$~are compact, isomorphism
classes of $\nu $-twisted finite rank $G''$-equivariant super vector bundles
over~$\tXL$ are rigid---they have no continuous deformations---and so bundles
which are homotopic are in fact isomorphic.  It follows that
\eqref{eq:138}~is well-defined.
 
To prove that \eqref{eq:138}~is surjective, suppose given a $\nu $-twisted
bundle $\sE=\sEp\oplus \sEm\to\tXL$.  Define 
  \begin{equation}\label{eq:139}
     \sH=L^2(\tXL;\sE) = L^2(\tXL;\sEp)\oplus L^2(\tXL;\sEm)
  \end{equation}
and let $H$~be the grading operator.  The representation~$\tar$ is
constructed using the map~\eqref{eq:87} (which simplifies since $\Pi $~is
abelian).   
 
To prove that \eqref{eq:138}~is injective, suppose
$(\sH_i,H_i,\tar_i),\;i=0,1$ are band insulators which map to isomorphic $\nu
$-twisted bundles $\sE_i=\sEp_i\oplus \sEm_i$.  Then an isomorphism $\theta
\:\sE_0\to\sE_1$ induces an isomorphism $\sH_0\to\sH_1$, since
$\sH_i\cong L^2(\tXL,\sE_i)$.  The grading operators correspond, since $\theta
$~is an isomorphism of $\zt$-graded bundles, and so too do the
representations~$\tar_i$, since $\theta $~is an isomorphism of (twisted)
equivariant bundles.
  \end{proof}

To pass to an abelian group we impose that $\sEp\oplus \sEm$ be the
difference $\sEp-\sEm$.  This is convenient and leads to a computable
$K$-theory group.  Although it is used in the condensed matter literature, it
is not clear to us that this subtraction is well-motivated physically.

  \begin{definition}[]\label{thm:132}
 A band insulator ~$(\sH,H,\tar)$ of Type~F is \emph{topologically trivial}
if there exists an odd automorphism~$P$ of the associated $\nu $-twisted
equivariant bundle $\sEp\oplus \sEm$.  Define $\GSF\Gptc$ to be the quotient
of $\TPF\Gptc$ by the submonoid of isomorphism classes of topologically
trivial band insulators.
  \end{definition}

\noindent
 Definition~\ref{thm:71} immediately implies the following. 

  \begin{theorem}[]\label{thm:131}
 For band insulators of Type~F there is an isomorphism
  \begin{equation}\label{eq:141}
  \begin{aligned}
     \GSF\Gptc&\longrightarrow K^\nu _{G''}(\tXL)\\
     (\sH,H,\tar)&\longmapsto \sEp\oplus \sEm
     \end{aligned}
  \end{equation}
  \end{theorem}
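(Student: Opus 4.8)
The plan is to derive Theorem~\ref{thm:131} as an essentially formal consequence of Theorem~\ref{thm:129} together with Definition~\ref{thm:71} of twisted $K$-theory. Recall that Definition~\ref{thm:71}(iii) defines $K^\nu _{G''}(\tXL)$ precisely as the quotient monoid $\VXT/\Triv_{G''}^\nu (\tXL)$, where $\Triv_{G''}^\nu (\tXL)$ is the submonoid of $\zt$-graded finite rank $\nu $-twisted $G''$-equivariant bundles admitting an odd automorphism. So the whole content is: the isomorphism of commutative monoids $\TPF\Gptc\xrightarrow{\ \cong\ }\VXT$ from Theorem~\ref{thm:129} descends to an isomorphism of the respective quotient monoids, once we check that it carries the submonoid of topologically trivial band insulators (Definition~\ref{thm:132}) onto $\Triv_{G''}^\nu (\tXL)$.

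First I would record that the map $(\sH,H,\tar)\mapsto \sEp\oplus \sEm$ of Theorem~\ref{thm:129} is an isomorphism of commutative monoids; in particular it is a bijection on isomorphism classes. Next, Definition~\ref{thm:132} declares a Type~F band insulator topologically trivial exactly when its image $\sEp\oplus \sEm$ under this map admits an odd automorphism $P$ of $\nu $-twisted $\zt$-graded $G''$-equivariant bundles. But that is word-for-word the condition defining membership in $\Triv_{G''}^\nu (\tXL)$. Hence the isomorphism of Theorem~\ref{thm:129} restricts to an isomorphism of submonoids $\{\text{topologically trivial band insulators}\}\xrightarrow{\ \cong\ }\Triv_{G''}^\nu (\tXL)$. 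Passing to quotients, a monoid isomorphism carrying a submonoid isomorphically onto a submonoid induces an isomorphism on the quotient monoids, and this gives
  \begin{equation*}
     \GSF\Gptc = \TPF\Gptc/\{\text{top.\ trivial}\}\ \xrightarrow{\ \cong\ }\ \VXT/\Triv_{G''}^\nu (\tXL) = K^\nu _{G''}(\tXL),
  \end{equation*}
and the induced map is still $(\sH,H,\tar)\mapsto \sEp\oplus \sEm$, which is exactly~\eqref{eq:141}. One small point to address in the write-up is that $\GSF\Gptc$ and $K^\nu _{G''}(\tXL)$ are a priori only commutative monoids; but by Lemma~\ref{thm:170} (whose proof carries over verbatim, as noted before Lemma~\ref{thm:128}) the quotient $\VXT/\Triv_{G''}^\nu (\tXL)$ is an abelian group, so the parity-reversal construction furnishes inverses and both sides are indeed abelian groups.

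Honestly, there is no serious obstacle here; the theorem is stated as an immediate corollary and the statement in the excerpt already says ``Definition~\ref{thm:71} immediately implies the following.'' The only thing requiring any care is to make sure the equivalence relation implicit in ``quotient by a submonoid'' matches on both sides, i.e.\ that $x\sim x'$ in $\TPF\Gptc$ iff $x+n\cong x'+n'$ for topologically trivial $n,n'$, corresponds under the Theorem~\ref{thm:129} isomorphism to the analogous relation among twisted bundles; since the map is an additive bijection and sends the relevant submonoid onto the relevant submonoid, this is automatic. I would therefore present the proof in two short moves — invoke Theorem~\ref{thm:129}, then match the ``topologically trivial'' submonoids against $\Triv_{G''}^\nu$ and quotient — and note the abelian group structure comes from Lemma~\ref{thm:170}.
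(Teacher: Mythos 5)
Your proposal is correct and matches the paper's approach: the paper also treats Theorem~\ref{thm:131} as an immediate consequence of matching Definition~\ref{thm:71} against Definition~\ref{thm:132} via the monoid isomorphism of Theorem~\ref{thm:129}. Your added remark that both quotients are abelian groups by the parity-reversal argument of Lemma~\ref{thm:170} is a reasonable detail to spell out, even though the paper leaves it implicit.
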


  \subsection*{Topological phases: Type~I}

Let $(\sH,H,\tar)$~be a band insulator of Type~I.  As observed after
Hypothesis~\ref{thm:56} because $\sEp$~has infinite rank and $\sEm$~has
finite rank there are no odd symmetries which interchange them, whence
$c\equiv 1$.  Thus $\sEp$~and $\sEm$~are independent, and we will show that
they are separately invariants of~$(\sH,H,\tar)$ up to homotopy.  The
infinite rank bundle $\sEp\to\tXL$ is trivializable as a non-equivariant
bundle, although it may be nontrivial as an equivariant bundle, as we will
see in Example~\ref{thm:138} below.  We do not know a standard commutative
monoid in which to locate the isomorphism class of~$\sEp$.  The group
completion of this monoid is trivial, by an Eilenberg swindle.  For that
reason we only track the finite rank bundle~$\sEm$.

As a preliminary we prove that certain universal $\nu $-twisted Hilbert
bundles over~$\tXL$ exist.  The theorem we need is a small modification
of~\cite[\S3.3]{FHT1}, to which we refer for background and details.  A $\nu
$-twisted $G''$-equivariant Hilbert bundle $\sF\to\tXL$ is \emph{universal}
if for any $\nu $-twisted $G''$-equivariant Hilbert bundle $\sE\to\tXL$ there
exists an embedding $\sE\to\sF$.  By \cite[Lemma A.24]{FHT1} a universal
bundle~$\sF$ is \emph{absorbing} in the sense that $\sF\oplus \sE \cong \sF$
for any $\nu $-twisted $G''$-equivariant Hilbert bundle~$\sE$.  This last
property shows that $\sF$~has infinite rank.  We remark that the universality
property can be made stronger, so that it holds on open subsets as well, but
we do not need this here.

  \begin{lemma}[]\label{thm:134}
 There exists a universal, hence also absorbing, $\nu $-twisted
$G''$-equivariant Hilbert bundle $\sF\to\tXL$.
  \end{lemma}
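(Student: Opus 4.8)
The plan is to construct the universal bundle by a Hilbert-space-tensoring argument adapted from the equivariant setting, reducing the problem to the non-twisted equivariant case over a finite group. First I would use Remark~\ref{thm:80}: since the twisting~$\nu$ is non-equivariantly trivial, and since $\tXL$ is a compact torus on which the finite quotient $G''$ acts (the action factoring through a finite group — this is Hypothesis~\ref{thm:56}(i) together with the fact that $\Aut(W,\Gamma)$ is finite modulo the abelian part), I can work with a quotient groupoid $\tXL\gpd G''$ of a compact space by a \emph{finite} group. The key structural input is that for such groupoids, $\nu$-twisted $G''$-equivariant Hilbert bundles behave like ordinary modules over a bundle of $\zt$-graded algebras: the twisting can be presented by a finite-dimensional (indeed, finite) cocycle, and one can build a ``big'' twisted bundle by taking an infinite direct sum of copies of each of the finitely many ``basic'' local models.

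The key steps, in order: (1) Cover $\tXL$ by finitely many $G''$-invariant open (or closed) sets over which $\nu$ trivializes, using compactness of $\tXL$ and finiteness of $G''$; on each, a $\nu$-twisted $G''$-equivariant bundle is just an equivariant bundle for an honest (possibly projective, via the local cocycle) representation. (2) On each piece, form the ``regular'' model: $\sF_{\alpha}=\underline{\CC}\otimes \ell^2(G''\times\ZZ)$ with the evident twisted action, and note that for finite $G''$ every finite-dimensional twisted representation embeds in a sufficiently large multiple of the regular one — this is the representation-theoretic heart, using complete reducibility for the finite group $G''$ and the fact (from the discussion at the end of \S\ref{sec:3}, via the local cocycle presentation of $\nu$) that $\nu$-twisted representations are modules over a finite-dimensional $\zt$-graded algebra whose regular module absorbs all finitely generated ones. (3) Patch these local models together using a $G''$-invariant partition of unity (which exists since $G''$ is finite and $\tXL$ is a nice compact space) to produce a global $\nu$-twisted $G''$-equivariant Hilbert bundle $\sF\to\tXL$ containing, locally hence globally (by taking a further countable direct sum $\sF^{\oplus\infty}$), an embedded copy of any given finite-rank $\sE$. (4) Invoke \cite[Lemma A.24]{FHT1} to conclude $\sF$ is absorbing, i.e.\ $\sF\oplus\sE\cong\sF$, which also forces $\sF$ to have infinite rank; this is exactly the statement \cite[\S3.3]{FHT1} proves in the general setting, and the argument goes through verbatim once the local-to-global patching is set up.

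The main obstacle I expect is step (2)–(3): making precise that the local ``regular'' twisted-equivariant model is genuinely universal on each patch and that the patching does not destroy universality. The subtlety is that an embedding $\sE\to\sF$ must respect the twisting globally, so one cannot naively glue local embeddings; the standard fix (following \cite{FHT1}) is to first build a \emph{candidate} $\sF$ as a direct sum over the finite cover, then enlarge to $\sF^{\oplus\infty}$ and use a Zorn's-lemma / increasing-union argument together with the absorbing property to upgrade ``every finite-rank bundle embeds'' to ``every Hilbert bundle embeds.'' Since $G''$ is finite and $\tXL$ is compact and metrizable, all the requisite partitions of unity, Hilbert-module structure theorems, and the abstract nonsense in \cite[Appendix A]{FHT1} apply directly, so the proof is genuinely a ``small modification'' of the cited result — the only real work is translating between the \ptzgce\ language of Definition~\ref{thm:85} and the bundle-gerbe / twisting language of \cite{FHT1}, which Remark~\ref{thm:80} and Appendix~\ref{sec:19} are designed to handle.
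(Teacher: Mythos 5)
Your proposal takes a genuinely different route from the paper's, and it has gaps that would need to be repaired. The paper's proof does not proceed by covering $\tXL$ and patching local models: it (i)~passes to the double cover $\tgp\to\gp$ of the groupoid determined by $\phi$, on which the $\phi$-twisted extension $\nu$ pulls back to an honest untwisted central extension $\tnu$; (ii)~applies \cite[Lemma~3.12]{FHT1} to obtain a universal $\tnu$-twisted Hilbert bundle $\sF'$ over $\tgp$; (iii)~symmetrizes to $\sF=\sF'\oplus\overline{\sigma^*\sF'}$ to get compatibility with the sheet-swap involution; (iv)~adjoins new arrows $\delta_{(x,\epsilon)}$ to $\tgp$, producing a groupoid $\gp'$ with a twisting $(\phi',\sigma')$ that is \emph{locally equivalent} to $(\gp,\nu)$; and (v)~extends $\sF$ over $\gp'$ and checks it is universal there. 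Essentially all the work is the reduction of the $\phi$-twist to the untwisted case.

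Your proposal misses precisely that reduction. First, the mechanism of ``patching with a $G''$-invariant partition of unity'' does not construct a Hilbert bundle: partitions of unity let you average metrics or extend sections, but to glue local models you need transition isomorphisms compatible with the twisting cocycle, and you supply none. Second, Hypothesis~\ref{thm:56}(i) only makes $G''$ a \emph{compact Lie group}, not finite; the action on $\tXL$ factors through a finite quotient, but the equivariance and the $\phi$-twisted extension $\nu$ genuinely live over all of $G''$, so the appeals to finite cocycles, finite-dimensional $\zt$-graded algebras, and the finite-group machinery of Appendix~\ref{sec:19} overreach. Third and most importantly, you never address the antilinear structure: arrows with $\phi(\gamma)=-1$ act by complex-antilinear maps, so $\nu$-twisted bundles are \emph{not} modules over a bundle of (complex) algebras, and \cite[Lemma~3.12]{FHT1} — which handles only central extensions — cannot be invoked directly. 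Appendix~\ref{sec:19}, which you cite for the translation into the FHT framework, explicitly defers the $\phi$-twist to the present Lemma~\ref{thm:134}, so using it here is circular.
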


\noindent 
 We remark that it is not \emph{a priori} obvious that \emph{any} nonzero
$\nu $-twisted $G''$-equivariant Hilbert bundle over~$\tXL$ exists.
Lemma~\ref{thm:134} holds for any \ptzgce~$\nu $.

  \begin{proof}
 Let $\gp=\tXL\gpd G''$.  Write $\nu =(\phi ,\sigma )$, where $\sigma $~is a
$\phi $-twisted extension of~$\gp$.  The homomorphism~$\phi $ on the arrows
of the groupoid $\gp$ determines a double cover $\tgp\to\gp$, and the $\phi
$-twisted extension~$\sigma $ of~$\gp$ pulls back to an untwisted central
extension~$\tnu$ of~$\tgp$.  We describe it explicitly (for general
groupoids).  Namely, $\tgp^0=\gp_0\times \zt$ and $\tgp^1=\gp_1\times \zt$:
an arrow $(x\xrightarrow{\gamma }x')\in \gp_1$ lifts to two arrows
$\bigl((x,\epsilon )\xrightarrow{(\gamma,\epsilon )}(x',\epsilon ') \bigr)$
which satisfy $\epsilon +\phi (\gamma )=\epsilon ' $.  The $\phi $-twisted
extension $\nL\to\gp_1$ lifts to an untwisted central extension
$\tnL\to\tgp^1$ with $\tnL_{(\gamma ,0)}=\nL_\gamma $ and $\tnL_{(\gamma
,1)}=\overline{\nL_\gamma }$.  We leave the reader to check that the
structure maps~\eqref{eq:82} for~$\sigma $ induce structure
maps~\eqref{eq:79} for~$\tilde{\sigma}$.  By \cite[Lemma~3.12]{FHT1} there
exists a universal $\tnu $-twisted Hilbert bundle ~$\sF'\to\tgp$.  Let
$\sigma \:\tgp\to\tgp$ be the involution defined by $(x,\epsilon )\mapsto
(x,\epsilon +1)$ and $(\gamma ,\epsilon )\mapsto(\gamma ,\epsilon +1)$.  Set
$\sF=\sF' \oplus \overline{\sigma ^*\sF'}$, a $\tilde{\sigma }$-twisted
Hilbert bundle over~$\tgp$.  For each $(x,\epsilon )\in \tgp$, adjoin a new
arrow $\bigl((x,\epsilon )\xrightarrow{\delta _{(x,\epsilon )}}(x,\epsilon
+1)\bigr)$.  Let $\gp'$~denote the groupoid~$\tgp$ with the arrow~$\delta
_{(x,\epsilon )}$ adjoined.  Define $\phi '\:(\gp')_1\to\pmo$ to be~0 on
arrows ~$(\gamma ,\epsilon )\in \tgp$ and 1~on arrows~$\delta _{(x,\epsilon
)}$.  Attach the trivial line to~$\delta _{(x,\epsilon )}$; these trivial
lines together with the lines~ $\tnL$ form a $\phi '$-twisted
extension~$\sigma '$ of~$\gp'$.  The groupoid~$\gp'$ with twisting~$(\phi
',\sigma ')$ is locally equivalent to the groupoid~$\gp$ with twisting~$\nu
=(\phi ,\sigma )$.  The Hilbert bundle $\sF\to\tgp$ extends to a $(\phi
',\sigma ')$-twisted Hilbert bundle over~$\gp'$, and an easy argument proves
that it is universal.
  \end{proof}

Let $\Vect^\nu _{G''}(\tXL)$ be the commutative monoid of equivalence classes of
finite rank $\nu $-twisted ungraded $G''$-equivariant vector bundles
over~$\tXL$.

  \begin{theorem}[]\label{thm:133}
 For band insulators of Type~I the map 
  \begin{equation}\label{eq:142}
     \begin{aligned} \TPI\Gpt &\longrightarrow \Vect^\nu _{G''}(\tXL) \\
      (\sH,H,\tar)&\longmapsto \sEm\end{aligned} 
  \end{equation}
is a surjective homomorphism of commutative monoids.  If $G''$~is trivial, then
\eqref{eq:142}~is an isomorphism.
  \end{theorem}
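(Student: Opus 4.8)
The plan is to break Theorem~\ref{thm:133} into three assertions: that the map \eqref{eq:142} is well-defined (i.e.\ $\sEm$ depends only on the topological phase, not the choice of representative), that it is a monoid homomorphism, and that it is surjective, with the final sentence a separate remark about $G''=1$. Well-definedness follows the pattern of Theorem~\ref{thm:129}: after a homotopy we replace $H$ by its spectral flattening (Lemma~\ref{thm:27}), so $\sE=\sEp\oplus\sEm$ is determined by the grading operator; a homotopy of Type~I band insulators is a Hilbert bundle over $\Delta^1\times\tXL$ with a continuous family of Hamiltonians $H_t$ (Definition~\ref{thm:113}), hence by Hypothesis~\ref{thm:56}(iii) a continuous family of Bloch Hamiltonians $H_{t,\lambda}$ on the fibers of $\sE\to\Delta^1\times\tXL$. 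The negative spectral projection has \emph{finite} rank (Hypothesis~\ref{thm:56}(iv)), and continuity of $\{H_{t,\lambda}\}$ together with the spectral gap forces this finite-rank projection to vary continuously in $(t,\lambda)$; so $\sEm\to\Delta^1\times\tXL$ is a genuine finite-rank $\nu$-twisted $G''$-equivariant vector bundle, and restriction to $\{0\}$ and $\{1\}$ gives isomorphic bundles by rigidity (since $G''$ and $\tXL$ are compact, finite-rank twisted equivariant bundles have no continuous deformations). Homomorphism is immediate: direct sum of band insulators corresponds to direct sum of the $\sE$'s, hence of the $\sEm$'s (the Fermi level is at $0$ for both, so the negative eigenspaces add).

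The substance is surjectivity. Given a finite-rank $\nu$-twisted ungraded $G''$-equivariant bundle $\sG\to\tXL$, I would produce a Type~I band insulator with $\sEm\cong\sG$. Take $\sF\to\tXL$ a universal (hence absorbing) $\nu$-twisted $G''$-equivariant Hilbert bundle, which exists by Lemma~\ref{thm:134}. Set $\sEp=\sF$ and $\sEm=\sG$, so $\sE=\sF\oplus\sG$, and let $\sH=L^2(\tXL;\sE)$ with $H$ the grading operator ($+1$ on $L^2(\tXL;\sF)$, $-1$ on $L^2(\tXL;\sG)$). The $\tG$-representation $\tar$ on $\sH$ is assembled exactly as in the surjectivity step of Theorem~\ref{thm:129} (equivalently via Theorem~\ref{thm:51}): the $\nu$-twisted $G''$-equivariant structure on $\sE$ together with the tautological multiplication action of $\Pi$ on the fibers determines, by the construction \eqref{eq:87} (which simplifies because $\Pi$ is abelian and the lines $\Kl Vg$ are trivial in that case), a $(\phi\circ\pi,\tau,c\circ\pi)$-twisted representation of $G$; with $c\equiv 1$ forced in Type~I this is the required twisted representation, and $H$ is $\tG$-invariant since the grading is respected. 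Then one checks Hypothesis~\ref{thm:56}: (i)--(ii) are built in by construction, (iii) holds because $H$ is the (constant) grading operator so $\{H_\lambda\}$ is trivially continuous, (iv) holds since $\sEm=\sG$ has finite rank, and (v) is Type~I since $\sEp=\sF$ has infinite rank by the absorbing property. So $(\sH,H,\tar)$ is a band insulator of Type~I mapping to $[\sG]$.

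For the last sentence, when $G''$ is trivial the twisting $\nu$ reduces to a $\phi$-twisted $\zt$-graded extension of the groupoid $\tXL$ (which is non-equivariantly trivial by Remark~\ref{thm:80}, so amounts to a plain finite-rank bundle over the torus $\tXL$), and I must show \eqref{eq:142} is injective. Given two Type~I band insulators $(\sH_i,H_i,\tar_i)$, $i=0,1$, with $\sEm_0\cong\sEm_1$, I would homotope each to its spectral flattening and then use absorbingness: both $\sEp_0$ and $\sEp_1$ are universal $\nu$-twisted Hilbert bundles over $\tXL$ --- here I need that \emph{any} infinite-rank $\sEp$ appearing in a Type~I insulator is automatically universal, which in the $G''=1$ case follows because a trivializable (non-equivariantly) infinite-rank Hilbert bundle over the compact space $\tXL$ is absorbing, hence universal --- so $\sEp_0\cong\sEp_1$ as $\nu$-twisted bundles, and combining with the given isomorphism $\sEm_0\cong\sEm_1$ yields $\sE_0\cong\sE_1$ respecting gradings and twisted structures, hence $(\sH_0,H_0,\tar_0)\cong(\sH_1,H_1,\tar_1)$. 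The main obstacle is precisely this universality claim in the injectivity argument: asserting that \emph{every} infinite-rank $\sEp$ arising from a Type~I insulator is absorbing. For general $G''$ this can fail, which is why only $G''=1$ gives an isomorphism; the honest route is to invoke the classification of Hilbert bundles over compact spaces (or \cite[Lemma~A.24]{FHT1} in the non-equivariant, untwisted-up-to-a-line form) to conclude that over $\tXL$ there is, up to isomorphism, a unique infinite-rank $\nu$-twisted bundle when $G''=1$.
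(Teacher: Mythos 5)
Your proof is correct and follows essentially the same structure as the paper's: well-definedness via continuity of the finite-rank negative spectral projection (where the paper explicitly invokes Proposition~\ref{thm:149}, which you should cite too, since the infinite rank of $\sEp$ means the ``obviously continuous'' argument of Theorem~\ref{thm:129} no longer applies directly), surjectivity via the same $\sF\oplus\sEm$ construction using Lemma~\ref{thm:134} and~\eqref{eq:87}, and injectivity for trivial $G''$ from uniqueness of infinite-rank non-equivariant Hilbert bundles over $\tXL$. Your route through ``absorbing, hence universal'' in the final step is logically sound but more indirect than the paper's one-line appeal to the trivializability of infinite-rank Hilbert bundles; both land on the same key fact.
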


  \begin{proof}
 With one amplification, the argument in the proof of Theorem~\ref{thm:129}
proves that \eqref{eq:142}~is well-defined.  Namely, since $\sEp$~has
infinite rank, we need to use Proposition~\ref{thm:149} to prove that the
positive and negative spectral projections vary continuously.  For the
surjectivity, given $\sEm\to\tXL$ of finite rank, consider $\sF\oplus
\sEm\to\tXL$, where $\sF$~is the universal bundle guaranteed by
Lemma~\ref{thm:134}.  (Here we only use the existence of an infinite rank
bundle, not its universality.)  Set $\sH=L^2(\tXL;\sF)\oplus L^2(\tXL;\sEm)$
and let $H$~be the grading operator.  Construct a representation~$\tar$
of~$\tG$ using~\eqref{eq:87}.  Then~$\sEm$ is the image of~$(\sH,H,\tar)$ under
\eqref{eq:142}.  For the last statement we observe that an infinite rank
(nonequivariant) Hilbert bundle is trivializable; see the remark preceding
Definition~\ref{thm:111}. 
  \end{proof}

Our passage to an abelian group for Type~I is more natural than that for
Type~F.  

  \begin{definition}[]\label{thm:135}
 For band insulators of Type~I we define $\GSI\Gpt$ to be the
group completion of the commutative monoid~$\TPI\Gpt$. 
  \end{definition}

  \begin{theorem}[]\label{thm:136}
 For band insulators of Type~I there is an isomorphism
  \begin{equation}\label{eq:143}
  \begin{aligned}
     \GSI\Gpt&\longrightarrow K^\nu _{G''}(\tXL) \\ 
      (\sH,H,\tar)&\longmapsto \sEm 
  \end{aligned}
  \end{equation}
  \end{theorem}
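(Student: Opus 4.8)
The plan is to realize \eqref{eq:143} as the group completion of the surjection of Theorem~\ref{thm:133} and then to show it is an isomorphism by an Eilenberg swindle---the ``more sophisticated variation'' promised in Remark~\ref{thm:125}---which kills the infinite-rank conduction bundle~$\sEp$.

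First I would record that for Type~I band insulators one has $c\equiv 1$ (remarked after Hypothesis~\ref{thm:56}), so by Remark~\ref{thm:137} the group $K^\nu_{G''}(\tXL)$ is the group completion of the commutative monoid $\Vect^\nu _{G''}(\tXL)$ of finite rank ungraded $\nu$-twisted $G''$-equivariant vector bundles over~$\tXL$. Composing the surjective monoid homomorphism $(\sH,H,\tar)\mapsto \sEm$ of Theorem~\ref{thm:133} with the canonical map $\Vect^\nu _{G''}(\tXL)\to K^\nu_{G''}(\tXL)$ and invoking the universal property of group completion (Construction~\ref{thm:68}) produces a homomorphism $\GSI\Gpt\to K^\nu_{G''}(\tXL)$ (Definition~\ref{thm:135}) which on classes is exactly $(\sH,H,\tar)\mapsto\sEm$, i.e.\ the map \eqref{eq:143}. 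It is surjective: every class in $K^\nu_{G''}(\tXL)$ is a difference $[\sF_1]-[\sF_2]$ of finite rank bundles, and each $\sF_i\cong \sEm$ for some Type~I band insulator by Theorem~\ref{thm:133}.

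The substance is injectivity. Suppose $x\in\GSI\Gpt$ maps to $0$; write $x=[(\sH_0,H_0,\tar_0)]-[(\sH_1,H_1,\tar_1)]$ with conduction and valence bundles $\sEp_i,\sEm_i$. Then $[\sEm_0]=[\sEm_1]$ in $K^\nu_{G''}(\tXL)$, so by definition of the group completion there is a finite rank $\nu$-twisted $G''$-equivariant bundle $\sB\to\tXL$ with $\sEm_0\oplus\sB\cong \sEm_1\oplus\sB$. Let $\sF\to\tXL$ be an absorbing $\nu$-twisted $G''$-equivariant Hilbert bundle (Lemma~\ref{thm:134}), so $\sF\oplus\sE\cong\sF$ for every such $\sE$, and let $(\sH',H',\tar')$ be the Type~I band insulator with conduction bundle~$\sF$ and valence bundle~$\sB$: take $\sH'=L^2(\tXL;\sF)\oplus L^2(\tXL;\sB)$, let $H'$ be the grading operator, and reconstruct $\tar'$ from the $\tG$-equivariant bundle $\sF\oplus\sB$ via Hypothesis~\ref{thm:56}(ii) and Theorem~\ref{thm:51}(i). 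By Lemma~\ref{thm:27} I may replace each $(\sH_i,H_i,\tar_i)$ by its spectral flattening without changing its class; then, exactly as in the proof of Theorem~\ref{thm:129}, a flattened band insulator is determined up to isomorphism by its $\nu$-twisted $G''$-equivariant bundle data $(\sEp_i,\sEm_i)$, and amalgamation corresponds to direct sum of this data. Hence $(\sH_0,H_0,\tar_0)\oplus(\sH',H',\tar')$ has conduction bundle $\sEp_0\oplus\sF\cong\sF$ and valence bundle $\sEm_0\oplus\sB$, and likewise for the index~$1$; since $\sEp_0\oplus\sF\cong\sF\cong\sEp_1\oplus\sF$ and $\sEm_0\oplus\sB\cong\sEm_1\oplus\sB$, the two amalgams have isomorphic flattened bundle data, hence are isomorphic, and a fortiori homotopic. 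Therefore $[(\sH_0,H_0,\tar_0)]=[(\sH_1,H_1,\tar_1)]$ in $\GSI\Gpt$ and $x=0$.

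The main obstacle here is front-loaded into the inputs rather than the combinatorics: the swindle works only because Lemma~\ref{thm:134} supplies an absorbing $\nu$-twisted equivariant Hilbert bundle over~$\tXL$. Beyond that, the step needing care is the identification of a spectrally flattened band insulator with its bundle data $(\sEp,\sEm)$, and the compatibility of amalgamation with that identification; these rest on Hypothesis~\ref{thm:56}(ii) together with the reconstruction of a $(\phi\circ\pi,\tau,c\circ\pi)$-twisted representation of~$G$ from a $\nu$-twisted $G''$-equivariant bundle in Theorem~\ref{thm:51}(i). With those in hand the argument is purely formal. (The formally adjoined zero of $\TPI\Gpt$ is harmless, since every amalgam used above is a genuine Type~I band insulator.)
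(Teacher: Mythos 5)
Your proof is correct and is essentially the paper's argument unpacked at the level of elements: the paper factors the group completion of $\TPI\Gpt$ into a conduction-bundle piece times a valence-bundle piece, kills the former by the Eilenberg swindle via the absorbing bundle of Lemma~\ref{thm:134}, and identifies the latter with $K^\nu_{G''}(\tXL)$ using Remark~\ref{thm:137}. Your injectivity step performs the same swindle directly on pairs of band insulators, resting on the same inputs (Lemma~\ref{thm:134}, Remark~\ref{thm:137}, and the identification of a flattened band insulator with its bundle data $(\sEp,\sEm)$).
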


  \begin{proof}
 The group completion of $\TPI\Gpt$ is the direct sum ~$\sA_1\oplus \sA_2$
of the group completion~$\sA_1$ of the commutative monoid of infinite rank $\nu
$-twisted $G''$-equivariant Hilbert bundles $\sEp\to\tXL$ and the group
completion~$\sA_2$ of the commutative monoid of finite rank $\nu $-twisted
$G''$-equivariant Hilbert bundles $\sEm\to\tXL$.  The latter is the twisted
equivariant $K$-theory group $K^\nu _{G''}(\tXL)$, by Remark~\ref{thm:137}.
We claim that $\sA_1=0$.  Let $\sF\to\tXL$ be a universal $\nu $-twisted
$G''$-equivariant Hilbert bundle.  We use it to pull off an Eilenberg swindle
(Remark~\ref{thm:125}): given any $\sEp\in \sA_1$ the absorption property
of~$\sF$ implies $\sF+\sEp=\sF$ in~$\sA_1$, and since $\sA_1$~is a group we
can cancel ~$\sF$ to obtain $\sEp=0$.
  \end{proof}

  \begin{example}[]\label{thm:138}
 The following simple example shows that the commutative monoid built from
bundles $\sEp\to\tXL$ may be nontrivial if $G''\not= \{1\}$.  Let space be
1-dimensional and suppose $G=\zt\ltimes\ZZ$~is the split extension
$1\to\ZZ\to G\to\zt\to1$, where $\zt$~acts on~$\ZZ$ by~$-1$.  This is a
possible symmetry group of a crystal in one spatial dimension, say the
subset~$\bC=\ZZ\subset \EE^1$ of the Euclidean line.  Suppose $\phi ,\tau
$~are trivial.  Now $\zt$~acts on the circle~$X=\TT$ of characters of~$\ZZ$
by $\lambda \mapsto \bar\lambda $, which has fixed points at~$\lambda =\pm1$.
Any $\zt$-equivariant vector bundle $\sEp\to\TT$ restricts at the two fixed
points to representations of the stabilizer group~$\zt$.  Thus at each point
we obtain ~$n^+,n^-\in \ZZ\cup \{\infty \}$ which tell the dimensions of the
subspaces on which $\zt$~acts trivially and by the sign representation,
respectively.  The monoid of infinite rank bundles is then the set of
quartets $(n^+_1,n^-_1,n^+_{-1},n^-_{-1})\in (\ZZ\cup\{\infty \})^{\times 4}$
such that the sums $n^+_1+n^-_1$ and $n^+_{-1}+n^-_{-1}$ are infinite.  Note
that equivariant line bundles which realize all four ways of ordering two~0s
and two~1s are easily constructed, and then taking finite direct sums we can
realize any such quartet.  The easy Eilenberg swindle which proves that the
group completion of this monoid of quartets is trivial is based
on~\eqref{eq:127}.  On the other hand, realistic physical examples seem to
have $n^+_1=n^-_1=n^+_{-1}=n^-_{-1}=\infty $.  For example, if
$\sH=L^2(\EE^1)$ then the fiber of~$\sE$ at~$\lambda =1$ is the Hilbert
space~$L^2(\EE^1/\ZZ)$ of periodic functions.  The Hamiltonian induces a
self-adjoint operator on periodic functions, and by assumption it commutes
with reflection on the circle~$\EE^1/\ZZ$, which is the action of the
generator of $\zt\subset G$.  Each eigenspace of the reflection action is
infinite dimensional, and since we assume the negative eigenspaces of the
Hamiltonian are finite dimensional, it follows that $n^+_1=n^-_1=\infty $.  A
similar argument applies to anti-periodic functions~$(\lambda =-1)$.
  \end{example}

  \subsection*{Simplifying assumptions; more familiar $K$-theory groups}

As at the end of~\S\ref{sec:11}, Theorem~\ref{thm:131} and
Theorem~\ref{thm:136} simplify if we assume certain splittings.  This is
not a physically valid hypothesis in general, as far as we know, but it does
hold in many examples.  Specifically, we assume Hypothesis~\ref{thm:41},
modified to account for the form~\eqref{eq:67} of~$G$ and, in the case of
Type~I, for the fact that $c\equiv 1$.  Let $\psi =(t,c) \:G''\to
\sC=\pmo\times \pmo$ and $G''_0=\ker\psi $.  Set $G_0=\pi \inv (G''_0)$,
where $\pi \:G\to G''$ is the projection.

  \begin{hypothesis}[]\label{thm:58}
 Let $A\subset \sC$ be the image of~$\psi $.
  \begin{enumerate}
 \item The group~$G''$ is a direct product $G''\cong A\times G''_0$ and under
this isomorphism $\psi $~is projection onto~$A$.  Furthermore,
\eqref{eq:67}~splits over $G''_0\subset G''$ and so $G\cong A\times G_0$.

 \item The restriction~$\tG_0\to G_0$ of $\tG\to G$ splits and we fix a
splitting.

  \end{enumerate}
Therefore, there is fixed an isomorphism $\tG\cong \tA\times G_0$.
  \end{hypothesis}

\noindent
  Elements of $\tA\subset \tG$ commute with~$\Pi$.  Thus if $a\in \tA$
satisfies~$\phi (a)=-1$---that is, $\tar(a)$~acts antilinearly in any $\tau
$-compatible representation~$\tar$---then $\tar(a)$~transforms the
character~$\lambda \in \XL$ to the inverse character~$\lambda \inv \in \XL$.
Let $\sigma \:\XL\to \XL$ denote the involution $\lambda \mapsto\lambda \inv
$. Note the group~$G''_0$ acts on~$\XL$ through the homomorphism $\bar{\gamma
}\:G''_0\to \hP$ to the magnetic point group~$\hP$ in~\eqref{eq:67}, and the
magnetic point group acts on the lattice~$\Pi $ through the
extension~\eqref{eq:6}.

  \begin{corollary}[]\label{thm:59}
  If Hypothesis~\ref{thm:58} holds, then we have the following table for the
twisted equivariant $K$-theory group in Theorem~\ref{thm:131} for
band insulators of Type~F, where $\nn$~is defined in~\eqref{eq:76} below:

 \bigskip
  \begin{center}
  \renewcommand{\arraystretch}{1.5}
 \ \hskip-3.3em
  \begin{tabular}{|c||c|c|c|c|c|}
% after \\: \hline or \cline{col1-col2} \cline{col3-col4} ...
\hline 
$A$&1&\textnormal{diag}&$\pmo\times \oo$&$\sC$&$\oo\times \pmo$\\
\hline
$T^2$&&&$+1$&$+1$&\\
\hline
$C^2$&&&&$-1$&$-1$\\
 \hline&&&&&\\[-1.6em] \hline
 $K^\nu _{G_0''}(\tXL)$&$K^{\nn}_{G_0''}(\XL)$&$K^{\nn-1}_{G_0''}(\XL)$&
$KR^{\nn}_{G_0''}(\XL)$&$KR^{\nn-1}_{G_0''}(\XL)$&
$KR^{\nn-2}_{G_0''}(\XL)$\\[1ex]
 \hline 
  \end{tabular}
  \renewcommand{\arraystretch}{1}
  \end{center}
\vskip12pt
  \begin{center}
  \renewcommand{\arraystretch}{1.5}
  \begin{tabular}{|c||c|c|c|c|c|c|}
% after \\: \hline or \cline{col1-col2} \cline{col3-col4} ...
\hline 
$A$&$\sC$&$\pmo\times \oo$&$\sC$&$\oo\times \pmo$&$\sC$ \\
\hline
$T^2$&$-1$&$-1$&$-1$&&$+1$ \\
\hline
$C^2$&$-1$&&$+1$&$+1$&$+1$\\
 \hline&&&&&\\[-1.6em] \hline
 $K^\nu
_{G_0''}(\XL)$&$KR^{\nn-3}_{G_0''}(\XL)$&$KR^{\nn-4}_{G_0''}(\XL)$&$KR^{\nn-5}_{G_0''}(\XL)$&
$KR^{\nn-6}_{G_0''}(\XL)$& $KR^{\nn-7}_{G_0''}(\XL)$\\[1ex]
 \hline 
  \end{tabular}
  \renewcommand{\arraystretch}{1}
  \end{center}
\vskip12pt 
  \end{corollary}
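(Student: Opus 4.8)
The plan rests on three inputs already in hand: the isomorphism $\GSF\Gptc\cong K^\nu_{G''}(\tXL)$ of Theorem~\ref{thm:131}, the splitting Hypothesis~\ref{thm:58}, and the Clifford-algebra dictionary of Appendix~\ref{sec:12}. First I would observe that, since $\tG\to G$ splits over $\Pi$, the restriction of $\tau$ to $\Pi$ is the trivial central extension $\Pi\times\TT$; hence the space $\Xt$ of $\tau$-twisted irreducible representations of $\Pi$ appearing in Theorem~\ref{thm:51} is just the Brillouin torus $\Hom(\Pi,\TT)$, so $\Xt=\XL=\tXL$. Under Hypothesis~\ref{thm:58} we have $G\cong A\times G_0$ with $\Pi\subset G_0=\pi\inv(G''_0)$ and $G_0/\Pi\cong G''_0$, and $\tG\cong\tA\times G_0$ with the central extension of $G_0$ trivial. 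Because $\phi$ and $c$ are trivial on $G''_0$ and the relevant twist of $G_0$ is trivial, the extension $1\to\Pi\to G\to G''\to1$ is the product of $1\to1\to A\to A\to1$ with $1\to\Pi\to G_0\to G''_0\to1$; combining this with Theorem~\ref{thm:51} and the additivity of twistings (Remark~\ref{thm:76}) gives a decomposition $\nu=\nn+\nu_A$, where $\nn$ is the canonical twisting of $\XL\gpd G''_0$ attached by Theorem~\ref{thm:16} to $1\to\Pi\to G_0\to G''_0\to1$ — the twisting appearing in \eqref{eq:76}, well-defined since $\Pi$ is abelian (Remark~\ref{thm:130}) — and $\nu_A$ is the \ptzgce\ of $\XL\gpd A$ assembled from $\phi|_A$, $c|_A$, and the restricted extension $\tA\to A$.

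Next I would unpack $\nu_A$. Since $\tA$ centralizes $\Pi$, an element $a\in\tA$ with $\phi(a)=+1$ acts $\CC$-linearly and fixes every character $\lambda\in\XL$, whereas an element with $\phi(a)=-1$ acts antilinearly, so conjugates the $\Pi$-action and sends $\lambda$ to $\bar\lambda=\lambda\inv$. Writing $\sigma\:\XL\to\XL$ for the involution $\lambda\mapsto\lambda\inv$, the antilinear generators of $\tA$ thus endow $\sE\to\XL$ with a $\sigma$-equivariant reality structure, which by Remark~\ref{thm:74} is precisely what promotes the ambient $K_{G''_0}$ to $KR_{G''_0}$; when $A$ has no antilinear generator — that is, $A=1$ or $A$ the diagonal, where $\phi|_A\equiv1$ — no reality is imposed and one stays in complex $K$-theory. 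The remaining data carried by $\tA$ — the parities $c(\bT),c(\bC)$ and the signs $T^2,C^2$ of Lemma~\ref{thm:42} — is, by Proposition~\ref{thm:57} of Appendix~\ref{sec:12}, exactly a module structure over one of the ten Morita classes of real and complex Clifford algebras, the same ten that occur in Corollary~\ref{thm:43}; tensoring a twisted bundle against the corresponding Atiyah--Bott--Shapiro generator \cite{ABS} effects a degree shift that is fiberwise over $\XL$, hence uniform.

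Granting these identifications, the corollary is the ``over $\XL$, with reality'' version of Corollary~\ref{thm:43}: running through the five subgroups $A\subseteq\sC$ and, for those containing $\bT$ and/or $\bC$, the independent sign choices $T^2=\pm1$, $C^2=\pm1$, one reproduces exactly the ten Atiyah--Bott--Shapiro degrees $0,-1,\dots,-7$ listed there, now attached to the equivariant $K$-theory of $\XL$ — complex $K_{G''_0}(\XL)$ when $A\subseteq\{(1,1),(-1,-1)\}$ and $KR_{G''_0}(\XL)$ otherwise — and with every untwisted degree $-n$ in that computation replaced by the twisting $\nn-n$ to account for $\nu_A$ being added on top of $\nn$. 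Reading these off columnwise produces the two displayed tables.

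The hard part is the clean decomposition $\nu=\nn+\nu_A$ together with the precise claim that the Clifford shift lands in degree $\nn-n$ rather than in some twisting where $\nn$ and the degree are entangled. This requires: that $\tA$ genuinely commutes with $G_0$ so that the $\tA$-data and the $G_0$-extension contribute independent summands to the twisting in the sense of Remark~\ref{thm:76}; that $\nn$ is non-equivariantly trivial (Remark~\ref{thm:80}), so the Clifford-module structure on the fibers of $\sE$ is unobstructed by $\nn$ and the Atiyah--Bott--Shapiro shift is multiplicative over the base; and a bookkeeping check that the involution $\sigma$ induced by the antilinear generators of $\tA$ commutes with the $\hP$-action of $G''_0$ on $\XL$ (through the extension \eqref{eq:6}) so that the total $G''=A\times G''_0$ action on $(\XL,\sigma)$ is the one implicit in the notation $KR_{G''_0}(\XL)$.
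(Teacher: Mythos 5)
Your decomposition $\nu = \nn + \nu_A$ is exactly what the paper invokes (``since $G_0$ commutes with $\tA$ it simply adds to the twisting we now derive from the action of $\tA$''), and your sanity checks at the end --- that $\tA$ commutes with $G_0$, that $\nn$ is non-equivariantly trivial, that $\sigma$ commutes with the $G''_0$-action --- are all to the point. But there is a genuine gap in the central step, namely the claim that ``the antilinear generators of $\tA$ thus endow $\sE\to\XL$ with a $\sigma$-equivariant reality structure, which by Remark~\ref{thm:74} is precisely what promotes the ambient $K_{G''_0}$ to $KR_{G''_0}$.'' This is only true in the three columns with $T^2=+1$, where one can set $\tilde\sigma = T$. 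In the other $KR$-columns the antilinear data coming from $\tA$ does \emph{not} directly give a $KR$-structure: when $\bT\notin A$ and $\bC\in A$, the only antilinear element is $C$, which is \emph{odd} (since $c(\bC)=-1$) and so cannot serve as the grading-preserving antilinear lift $\tilde\sigma$; and when $T^2=-1$, $T$ is not an involution, so again cannot be $\tilde\sigma$. Proposition~\ref{thm:57} does not help here, because there $T$ is an honest real or quaternionic structure on a fixed vector space, whereas over $\XL$ the element $T$ is an antilinear lift of the involution $\sigma$ --- it only becomes a real or quaternionic structure at the $2^d$ fixed points.

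The paper handles these five columns by explicitly \emph{manufacturing} a $\tilde\sigma$ that does not exist in the original data: it doubles the bundle, $E=\sE\oplus\sigma^*\overline{\sE}$, when $C$ is present and either $T$ is absent or $T^2=-1$, and when $T^2=-1$ with $C$ absent it quadruples, $E=(\sE\oplus\Pi\sE)\oplus(\sigma^*\overline{\sE}\oplus\Pi\sigma^*\overline{\sE})$, with an explicit $\Cl{-4}$-action modeled on~\eqref{eq:73}. It then must also verify the inverse direction, recovering $\sE$ from $E$ (as eigenspaces of products $e_ie_j$). The high-level picture of ``tensoring against the Atiyah--Bott--Shapiro generator'' does not produce these constructions: an ABS tensor effects a Morita shift between $KR$-classes that already exist, but the real work of this corollary is producing the $KR$-class in the first place from the $\nu$-twisted equivariant bundle, and that is where the doubling/quadrupling is unavoidable. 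To close the gap you would need to supply these constructions (and the inverse equivalence) in each of the five nontrivial columns, at which point you would essentially have reproduced the paper's proof.
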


\noindent 
 An element of~$KR^q(\XL)$ is represented by a $\zt$-graded complex vector
bundle $E\to\XL$, a lift~$\tilde\sigma $ of the involution~$\sigma $ to an
antilinear involution of~$E$, and an action of the Clifford algebra~$\Cl q$
on the fibers of~$E$ which (graded) commutes with~$\tilde\sigma $.  In the
equivariant case~$KR^q_{G''_0}(\XL)$ there is an additional linear action
of~$G''_0$ on~$E$ which covers the action on~$\XL$ and commutes
with~$\tilde\sigma $ and the Clifford algebra action.  A refined version of
Corollary~\ref{thm:59} is an equivalence between the category of bundles
$E\to\XL$ of this type and the category of $\nu $-twisted $G''$-equivariant
$\zt$-graded bundles $\sE\to\XL$, where the chart tells the precise
correspondence.  The proof is modeled on that of
Corollary~\ref{thm:43}---including the Morita equivalences therein---but with
real structures replaced by~$\tilde\sigma $.  Note that both
elements~$\bT,\bC$ of~$A$, if present, act on~$\XL$ via the
involution~$\sigma $.  This is the familiar fact in band structure theory
that time reversal and ``particle-hole conjugation'' map $k\mapsto -k$, where
$k$~is the Bloch momentum.

  \begin{proof}
 Given a $\nu $-twisted $G''$-equivariant super vector bundle $\sEp\oplus
\sEm\to\XL$ we need to rewrite it as a $\zt$-graded twisted $G''_0$-bundle
and identify the new \ptzgce.  The \ptzgce~$\nn$ in each entry arises from
the extension
  \begin{equation}\label{eq:76}
     1\longrightarrow \Pi \longrightarrow G_0\longrightarrow
     G''_0\longrightarrow 1 ,
  \end{equation}
as in Theorem~\ref{thm:16}(i), and since $G_0$~commutes with~$\tA$ it simply
adds to the twisting we now derive from the action of~$\tA$.  The first two
entries---the complex case---are straightforward: use $S$~in the second entry
to define the action of a complex Clifford algebra~$\Clc{-1}$.  The three
columns with~$T^2=+1$ are also straightforward: set $\tilde\sigma =T$ and, if
it is present, use $C$~as a single Clifford generator.  In the two columns in
which $C$~is present and $T$~is absent set~$E=\sE\oplus \sigma
^*\overline{\sE}$.  There is an evident antilinear lift~$\tilde\sigma $
of~$\sigma $ to~$E$, and now we let the Clifford generators be~$\ts C,\ts
iC$.  If $T^2=-1$ and $\bC\in A$ then we again set $E=\sE\oplus \sigma
^*\overline{\sE}$ and use $\ts C,\ts iC, iTC$ as Clifford generators.  In the
remaining case in which $T^2=-1$ and $\bC\notin A$ set $E=(\sE\oplus \Pi
\sE)\oplus (\sigma ^*\overline{\sE}\oplus \Pi \sigma ^*\overline{\sE})$,
where $\Pi $~is parity-reversal (see Lemma~\ref{thm:128}).  The map~$\ts$ is
the evident one, and the action of~$\Cl{-4}$ is modeled after~\eqref{eq:73}:
  \begin{equation}\label{eq:164}
  \begin{aligned}
     e_1&=\begin{pmatrix} 0&-1&0&0\\ 1&0&0&0\\0&0&0&-1\\0&0&1&0
     \end{pmatrix},\qquad &&e_2=\begin{pmatrix} 0&i&0&0\\
     i&0&0&0\\0&0&0&i\\0&0&i&0 \end{pmatrix} \\ 
      e_3&=\begin{pmatrix}
     0&0&0&T\\ 0&0&T&0\\0&T&0&0\\T&0&0&0 \end{pmatrix},\qquad
     &&e_4=\begin{pmatrix} 0&0&0&iT\\ 0&0&iT&0\\0&iT&0&0\\iT&0&0&0
     \end{pmatrix} \end{aligned}
  \end{equation}
 
Conversely, for the inverse equivalence we begin with~$E$ and produce~$\sE$.
In the complex case, which comprises the first two columns of the table,
$\sE=E$.  For the columns with $KR$-theory in degrees~$q=0,\pm1$ (recall we
use the Morita equivalence which identifies degree~$-7$ and degree~$+1$) set
$\sE=E$ and let~$T=\ts$; if $q=\pm1$, then set~$C=e_1$.  For degrees~$q=\pm2$
let $\sE$~be the $(+i)$-eigenspace of~$e_1e_2$ and set~$C=\ts e_1$.  For
degrees~$q=\pm3$ let $\sE$~be the $(+i)$-eigenspace of~$\mp e_1e_2$ and
set~$C=\ts e_1$ and $T=-\ts e_2e_3$.  For degree~$q=-4$ let $\sE$~be the
simultaneous $(-1)$-eigenspace of~$e_1e_2e_3e_4$ and $(+i)$-eigenspace
of~$e_1e_2$.   Then $T=e_1e_3$ is even and squares to~$-1$.
  \end{proof}

A subset of these arguments suffices for Type~I.  In that case we work with
the bundle $\sEm\to\XL$.

  \begin{corollary}[]\label{thm:139}
   If Hypothesis~\ref{thm:58} holds, then we have the following table for the
twisted equivariant $K$-theory group in Theorem~\ref{thm:136} for
band insulators of Type~I:

\bigskip 
  \begin{center}
  \renewcommand{\arraystretch}{1.5}
  \begin{tabular}{|c||c|c|c|}
% after \\: \hline or \cline{col1-col2} \cline{col3-col4} ...
\hline 
$A$&1&$\pmo\times \oo$&$\pmo\times \oo$\\
\hline
$T^2$&$+1$&$+1$&$-1$ \\
 \hline&&&\\[-1.6em] \hline
 $K^{\nu}_{G_0''}(\tXL)$&$K^{\nn}_{G''_0}(\XL)$&$KR^{\nn}_{G''_0}(\XL)$&
$KR^{\nn-4}_{G''_0}(\XL)$\\[1ex]
 \hline 
  \end{tabular}

  \renewcommand{\arraystretch}{1}
  \end{center}
\vskip12pt

  \end{corollary}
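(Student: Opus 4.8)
The plan is to obtain Corollary~\ref{thm:139} by restricting the proof of Corollary~\ref{thm:59} to the three columns that survive in Type~I. First I would pin down those columns. By the remark following Hypothesis~\ref{thm:56}, a band insulator of Type~I has $c\equiv 1$, so the image $A\subset\sC=\pmo\times\pmo$ of $\psi=(t,c)$ can contain neither $\bC=(1,-1)$ nor $\bS=\bT\bC$; hence $A$ is either the trivial subgroup or $A=\pmo\times\oo$. In the latter case Lemma~\ref{thm:42}(ii) provides a lift $T\in\tA$ of $\bT$ with $T^2=\pm1$, and under Hypothesis~\ref{thm:58} the isomorphism $\tG\cong\tA\times G_0$ shows that $T$ commutes with $\Pi$ and with $G_0$. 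Since $\phi(T)=-1$, the operator $\tar(T)$ is antilinear, hence carries a character $\lambda\in\XL$ to $\lambda\inv$; on the twisted bundle $\sEm\to\XL$ it therefore induces an antilinear lift $\ts$ of the involution $\sigma\colon\lambda\mapsto\lambda\inv$.

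Next I would invoke Theorem~\ref{thm:136}, which identifies $\GSI\Gpt$ with $K^\nu_{G''}(\tXL)$ via $(\sH,H,\tar)\mapsto\sEm$, where $\nu$ is the \ptzgce\ of Theorem~\ref{thm:51}(i). Since $c\equiv 1$, the $\zt$-graded line bundle underlying $\nu$ is purely even, so $\nu$ is a $\phi$-twisted extension and, by Remark~\ref{thm:137}, $\sEm$ represents a class in the group completion of $\Vect^\nu_{G''}(\XL)$. Now I would run the category equivalence built in the proof of Corollary~\ref{thm:59}, keeping only the $\bC$-absent entries. For $A=\oo$ there is nothing to do, and $\sEm$ already represents a class in $K^{\nn}_{G''_0}(\XL)$, where the twisting $\nn$ comes from the extension~\eqref{eq:76} exactly as in Theorem~\ref{thm:16}(i), the (trivial) action of $\tA$ adding nothing. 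For $A=\pmo\times\oo$ with $T^2=+1$ one sets $\ts=T$ and lands in $KR^{\nn}_{G''_0}(\XL)$. For $A=\pmo\times\oo$ with $T^2=-1$ one runs the doubling construction $E=(\sEm\oplus\Pi\sEm)\oplus(\sigma^*\overline{\sEm}\oplus\Pi\sigma^*\overline{\sEm})$ with the $\Cl{-4}$-action~\eqref{eq:164}, obtaining $KR^{\nn-4}_{G''_0}(\XL)$; the inverse equivalences are those of Corollary~\ref{thm:59} in degrees $q=0$ and $q=-4$.

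The remaining work is bookkeeping: the constructions, their inverses, and their compatibility with the residual $G''_0$-action, the commuting involution $\ts$, and the Clifford action are all verified verbatim in the proof of Corollary~\ref{thm:59} once the $c\not\equiv 1$ columns are deleted, and the twisting $\nn$ commutes past the Clifford degree shift because $G_0$ commutes with $\tA$. The main---and mild---obstacle is twofold: to confirm that nothing breaks when the \emph{ungraded} bundle $\sEm$ (placed, say, in even degree) is fed into the $\zt$-graded machinery of Corollary~\ref{thm:59}---it does not, since in the three surviving columns the grading of $\sEp\oplus\sEm$ played no active role---and to confirm that the hypothesis $c\equiv 1$ really does leave exactly the three subgroups $\oo$ and the two copies of $\pmo\times\oo$ distinguished by $T^2=\pm1$, so that the table has precisely the three stated entries.
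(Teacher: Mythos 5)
Your proposal is correct and takes essentially the same approach as the paper, which states only that ``a subset of these arguments suffices for Type~I''; you correctly identify that $c\equiv1$ forces $A\subset\pmo\times\oo$, leaving exactly the trivial subgroup and $\pmo\times\oo$ with $T^2=\pm1$, and you reuse the three relevant constructions (trivial, $\ts=T$, and the $\Cl{-4}$ doubling) from the proof of Corollary~\ref{thm:59}, correctly noting that the $\zt$-grading of $\sEp\oplus\sEm$ is inessential in these columns.
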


   \section{Chern-Simons and Kane-Mele invariants}\label{sec:16}
% lastsubsec@000
 
To illustrate that the passage from~$\TP$ to~$\GS$ retains information of
physical interest, we consider two topological invariants with direct
physical significance.  These examples also illustrate our discussion
in~\S\ref{sec:6} in familiar examples from the contemporary condensed matter
literature.   
 
Both invariants involve the same basic data (Definition~\ref{thm:126}), which
we now specify.  The Galilean spacetime ~$\MM^{d+1}=\RR\times \EE^d$ is the
product of time and standard $d$-dimensional Euclidean space.  We do not
explicitly specify a crystal~$C\subset \MM^{d+1}$ as we only need a symmetry
group~$G$, which may be a subgroup of the complete symmetry group of a given
crystal.  The lattice subgroup of~$G$ is the standard $\Pi =\ZZ^d\subset
\RR^d$ acting by translations.  These are systems of Type~I, so $c\equiv 1$.
We assume in addition either a time-reversal, parity-reversal symmetry, or
both.  In the time-reversal case
  \begin{equation}\label{eq:156}
     G = \zt \times \Pi 
  \end{equation}
is a direct product with $\phi =t\:\zt\to\pmo$~nontrivial on the
generator~$\bT$ of~$\zt$.  There is a nontrivial $\phi $-twisted 
extension $\tG\to G$ given as the product of the nontrivial $\phi $-twisted
 extension of~$\zt$ in Lemma~\ref{thm:11}(ii) with the group~$\Pi $: if $T\in
\tG$~projects to~$\bT\in \zt$, then $T^2=-\id$ (because we are describing
spin~$1/2$ electrons).  Let $\Lv$~be the $d$-dimensional torus of characters
$\lambda \:\Pi \to\TT$.  Then $\bT$~acts on~$\Lv$ by the involution $\sigma
\:\lambda \mapsto\bar\lambda $, according to the text preceding
Theorem~\ref{thm:51}.  In the parity-reversing case the group~$G$ is the
semidirect product
  \begin{equation}\label{eq:155}
    G = \zt \ltimes\Pi
  \end{equation}
in which the generator $\bP\in \zt$~acts on~$\Pi=\ZZ^d$ by $\xi \mapsto -\xi
$.  The homomorphism $\phi =t$ is trivial.  The generator~$\bP\in \zt$ acts
on~$\Lv$ by the involution $\sigma \:\lambda \mapsto\bar\lambda $.  The
Hilbert space is a space of vector-valued functions on~$\EE^d$, as in the
introductory exposition in~\S\ref{sec:6}, and under Hypothesis~\ref{thm:56}
we obtain a finite rank complex vector bundle $\sEm\to\Lv$.  The extra twist
discussed in~\S\ref{sec:5} does not enter as in both~\eqref{eq:156}
and~\eqref{eq:155} the projection $G\to\zt$ is split; see
Theorem~\ref{thm:51}(iii).  So the nontrivial $\zt$~action on~$\Lv$ lifts to
a twisted action on~$\sEm$.  In the time-reversing case it is a projective
action (the square of the lift is~$-\id$) and in addition is antilinear; in
the parity-reversing case it is an honest linear $\zt$~action.  If both a
time-reversal and parity-reversal symmetry are present, then
  \begin{equation}\label{eq:168}
     G=(\zt\times \zt)\ltimes\Pi . 
  \end{equation}
In this case we have both a linear and antilinear lift of~$\sigma $; their
product is a quaternionic structure on~$\sEm$.  Let $G''=\zt$ or~$\zt\times
\zt$ be the extended point group in~\eqref{eq:156}, \eqref{eq:155},
or~\eqref{eq:168}.

We now go beyond~Hypothesis~\ref{thm:56} and use the fact that the Hilbert
space consists of $L^2$~functions on~$\EE^d$ and that the bundle~$\sEm\to\XL$
is \emph{smooth}; see the paragraph following Remark~\ref{thm:144}.  Then
Proposition~\ref{thm:141} implies that there is a covariant derivative
operator~$\nEm$ on $\sEm\to\Lv$, obtained by compression.  It is commonly
known as the \emph{Berry connection}.  Furthermore, it is invariant under the
lifted, possibly twisted, action of~$G''$.  This follows from
Remark~\ref{thm:144} and the fact that the Hamiltonian is assumed
$G$-invariant, whence spectral projection $\sE\to\sEm$ commutes with~$G$.
Contrary to the impression one gets from the condensed matter literature, we
do not need the specific Berry covariant derivative~$\nEm$ to construct the
two topological invariants discussed in this section.

  \begin{lemma}[]\label{thm:157}
 The set of $G''$-invariant covariant derivatives on $\sEm\to\XL$ is a
nonempty affine space. 
  \end{lemma}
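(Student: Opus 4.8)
The plan is to exploit the standard fact that the space of all covariant derivatives on a fixed vector bundle is an affine space, and then to show that the $G''$-action (which is an action by bundle automorphisms covering the involution $\sigma$ of $\XL$) acts affinely on this space with the invariant locus a nonempty affine subspace. Concretely, fix once and for all one covariant derivative $\nabla_0$ on $\sEm\to\XL$, for instance the Berry connection $\nEm$ obtained by compression (Proposition~\ref{thm:141}), which exists by the hypotheses recalled in the paragraph preceding the lemma; by the discussion there it is already $G''$-invariant, so the set in question is nonempty and contains $\nabla_0$. The general covariant derivative differs from $\nabla_0$ by an element of $\Omega^1(\XL;\End \sEm)$, so the space of all covariant derivatives is the affine space $\nabla_0 + \Omega^1(\XL;\End\sEm)$.

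First I would recall precisely how $G''$ acts. An element $\bar g\in G''$ is realized (after the lift in~\S\ref{sec:6}) by a possibly antilinear bundle automorphism $\Phi_{\bar g}$ of $\sEm$ covering the diffeomorphism $\bar g\:\XL\to\XL$ (which is $\sigma$ or the identity). This induces an action on covariant derivatives by pullback/conjugation: $\bar g\cdot\nabla := \Phi_{\bar g}\circ (\bar g^{-1})^*\nabla \circ \Phi_{\bar g}^{-1}$, the unique covariant derivative making $\Phi_{\bar g}$ parallel-to-parallel. This is a standard construction; one checks it is a group action (contravariance of pullback is compensated by the convention, or one passes to the obvious left action), that it preserves the affine structure because pullback of forms is linear and conjugation by a fixed automorphism is linear on $\End\sEm$-valued forms, and that it sends the distinguished invariant $\nabla_0$ to itself. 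The subtlety to handle carefully is the antilinearity in the time-reversing case: if $\Phi_{\bar g}$ is antilinear, then conjugation by it is a \emph{real}-linear (in fact conjugate-linear) operation on $\End\sEm$, but it is still $\RR$-affine on the $\RR$-affine space of covariant derivatives, which is all that is needed. I would state this as: the $G''$-action on $\nabla_0 + \Omega^1(\XL;\End\sEm)$ is affine over the linear action on the real vector space $\Omega^1(\XL;\End\sEm)$.

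The conclusion is then immediate from an averaging argument. Since $G''$ is finite (it is $\zt$ or $\zt\times\zt$), the fixed-point set of a finite group acting affinely over a linear representation on a real topological vector space is a nonempty affine subspace: it is nonempty because $\nabla_0$ already lies in it (alternatively, because one can average any $\nabla$ over $G''$ using the finite sum $\frac{1}{|G''|}\sum_{\bar g\in G''}\bar g\cdot\nabla$, which lands in the fixed set since the affine structure lets us form such averages), and it is affine because it is the preimage of $0$ under the affine map $\nabla\mapsto \nabla - \bar g\cdot\nabla$ for each $\bar g$, an intersection of affine subspaces. The main point requiring care, as noted, is the bookkeeping with the antilinear lifts in the time-reversing and time-and-parity cases, and making sure the averaging is taken over the genuine finite group $G''$ acting by $\RR$-affine maps rather than over anything $\CC$-linear; once that is set up, the statement follows formally.
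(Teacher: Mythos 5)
Your proof is correct and uses essentially the same averaging argument as the paper: fix any covariant derivative, observe the set of all covariant derivatives is an affine space over $\Omega^1(\XL;\End\sEm)$, and average over the finite group $G''$ to land in the invariant locus, which is then an affine space over $\Omega^1(\XL;\End\sEm)^{G''}$. The extra observation that the Berry connection $\nEm$ already furnishes an invariant witness is valid, but note the paper's stated point in the preceding paragraph is precisely that the arguments of this section should not need to invoke the specific Berry connection; your averaging fallback keeps the proof self-contained in the spirit intended, and your care with the antilinear ($\RR$-affine) action in the time-reversing case matches the remark the paper places after the proof.
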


  \begin{proof}
 The set of all covariant derivatives on $\sEm\to\XL$ is an affine space over
$\Omega ^1(\XL;\End\sEm)$.  Fix a covariant derivative~$\nabla $ and average
over~$G''$ to obtain a $G''$-invariant covariant derivative.  The difference
of any two is a 1-form invariant under~$G''$, so an element of the vector
space $\Omega ^1(\XL;\End\sEm)^{G''}$. 
  \end{proof}

\noindent
 Note that the $G''$-action may be twisted---it acts via the $\phi $-twisted
 extension~$(G'')^\tau $ of~$G''$---and if there are time-reversal symmetries
($\phi \not\equiv 1$) then there are antilinear transformations $\varphi
\:\sEm\to\sEm$.  The covariant derivative $\nabla $~is $\varphi $-invariant
if $\varphi ^*\nabla =\overline\nabla $, where $\overline\nabla $~is the
covariant derivative on the complex conjugate bundle.  In this case $\Omega
^1(\XL;\End\sEm)^{G''}$ is a \emph{real} vector space.

Corollary~\ref{thm:139} gives a simple expression for the twisted equivariant
$K$-theory group.  For the time-reversing case~\eqref{eq:156} we obtain the
$KR$-group 
  \begin{equation}\label{eq:203}
     \GSI\Gpt\cong KR^{-4}(\XL) 
  \end{equation}
For   the  parity-reversing  case~\eqref{eq:155}   we  have   the  untwisted
equivariant  $K$-theory  group  
  \begin{equation}\label{eq:204} %KEEPTAG
     \GSI\Gpt\cong K^0_{\zt}(\XL). 
  \end{equation}
If both symmetries are present~\eqref{eq:168} we have 
  \begin{equation}\label{eq:205}
      \GSI\Gpt\cong KR^{-4}_{\zt}(\XL)\cong KO^{-4}_{\zt}(\XL). 
  \end{equation}
Recall  that in  this  case  elements are  represented  by complex  bundles
$\sEm\to\XL$ with two commuting lifts of the involution~$\sigma $, one linear
and  one antilinear,  which accounts  for the  $KR$-theory  description.  The
product  of  the  lifts  is  a  quaternionic  structure  on~$\sEm$.   In  the
$KO$-theory description of  the group we use this  quaternionic structure and
the commuting linear lift of~$\sigma $.

  \subsection*{Computations}

In this subsection we report on the computation of some relevant $K$-theory
groups.  Some of these were done in collaboration with Aaron Royer, and we
hope to write a separate paper in the near future in which we make more
computations and also provide details for the ones quoted here. 
 
Consider the $d$-dimensional torus $\tord=\cir\times \cdots\times \cir$ as an
equivariant space for the group $\ztd=\zt\times \cdots\times \zt$, where each
$\zt$~factor acts on the corresponding $\cir$~factor by reflection.  The
diagonal $\zt\subset \ztd$ is the involution $\sigma \:\lambda
\mapsto\bar\lambda $ which appears in the previous subsection.  Mike Freedman
brought the following theorem to our attention.

  \begin{theorem}[]\label{thm:179}
 The torus $\tord$ is equivariantly stably homotopy equivalent to a wedge of
spheres.
  \end{theorem}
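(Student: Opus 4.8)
The plan is to reduce the statement to the stable splitting of a product of pointed spaces, performed in genuine equivariant stable homotopy theory, once each circle factor is recognized as a representation sphere. First I would analyze a single circle with its reflection. The generator of $\zt$ acts on $\cir=\{z:|z|=1\}$ by $z\mapsto\bar z$, with exactly two fixed points $\pm1$ and with complement $\cir\setminus\{\pm1\}$ a single free $\zt$-cell of dimension $1$. The same $\zt$-CW data describes $S^{\sigma}$, the one-point compactification of the $1$-dimensional sign representation $\sigma$ (fixed points $0,\infty$; complement $\RR\setminus\{0\}$ a free $1$-cell), so $\cir$ with reflection is $\zt$-homeomorphic to $S^{\sigma}$; choosing as basepoint the fixed point $1\in\cir$ makes this a based $\zt$-homeomorphism. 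For the group $\ztd$ acting on $\tord$ with its $i$-th factor acting on the $i$-th circle, this identifies the $i$-th circle, as a based $\ztd$-space, with $S^{\sigma_i}$, where $\sigma_i\colon\ztd\to\{\pm1\}$ is projection to the $i$-th coordinate, viewed as a $1$-dimensional real $\ztd$-representation.

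Next I would run the splitting. Since $(X\times Y)_+ = X_+\wedge Y_+$, we have $\tord_+ \cong \bigwedge_{i=1}^d(\cir_i)_+$ as based $\ztd$-spaces, hence on suspension spectra $\Sigma^\infty_+\tord \simeq \bigwedge_{i=1}^d \Sigma^\infty_+\cir_i$. For any well-pointed $G$-space $Z$ whose basepoint is $G$-fixed, the collapse map $Z_+\to S^0$ and the basepoint inclusion $S^0\to Z_+$ are $G$-maps and exhibit a (genuine) $G$-stable splitting $\Sigma^\infty_+Z \simeq S^0\vee\Sigma^\infty Z$; applying this with $Z=\cir_i$ gives $\Sigma^\infty_+\cir_i\simeq S^0\vee S^{\sigma_i}$. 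Distributing the smash product of spectra over wedges then yields
\[
   \Sigma^\infty_+\tord \ \simeq\ \bigwedge_{i=1}^d\bigl(S^0\vee S^{\sigma_i}\bigr)\ \simeq\ \bigvee_{S\subseteq\{1,\dots,d\}} S^{\sigma_S},\qquad \sigma_S:=\bigoplus_{i\in S}\sigma_i,
\]
a wedge of $2^d$ representation spheres (the $S=\varnothing$ summand being $S^0$). Restricting the $\ztd$-action along the diagonal $\zt\hookrightarrow\ztd$ used in the applications, each $\sigma_S$ becomes $|S|$ copies of the sign representation of $\zt$, so $\Sigma^\infty_+\tord$ restricts to $\bigvee_{k=0}^d\bigvee^{\binom dk}S^{k\sigma}$.

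The point requiring care — and essentially the only obstacle — is the word \emph{stably}: unstably $\tord$ is manifestly not a wedge of spheres (its nonzero cup-length in equivariant, and ordinary, cohomology obstructs this), so one must be honest that the splitting of $\Sigma^\infty(X\times Y)$ survives equivariantly. This is formal once one checks that the structure maps involved (collapse, basepoint inclusion, and the distributivity of $\wedge$ over $\vee$ for $G$-spectra) are all available in the genuine $G$-equivariant stable category and that all relevant basepoints are $G$-fixed, which they are here; I would either cite this standard fact or, if a fully self-contained argument is wanted, present the explicit $\ztd$-CW structure on $\tord$ obtained from the product of the one-free-cell structures on the circles and read off the wedge of representation spheres directly from the associated stable cell decomposition. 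The representation-sphere route above is the cleaner of the two, and it also makes transparent which spheres appear, which is what the later $K$-theory and $KR$-theory computations require.
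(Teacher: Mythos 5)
Your proof is correct, and in fact it supplies something the paper omits: the paper does not prove Theorem~\ref{thm:179} at all, but credits Mike Freedman for calling it to the authors' attention, writes out the cases $d=2,3$, describes the action on each summand via the picture $S^k = I^{\times k}/\partial(I^{\times k})$, and then offers an alternative ``independent check'' via the K\"unneth spectral sequence in $RO(G)$-graded equivariant cohomology. Your argument is essentially the formal content behind the paper's interval picture, made precise in representation-sphere language: the $\zt$-homeomorphism $\cir \cong S^{\sigma}$ is exactly the identification $I/\partial I \cong S^{\sigma}$ with $I=[-1/2,1/2]$ and basepoint the collapsed boundary, so your $S^{\sigma_S}$ is literally the paper's $I^{\times |S|}/\partial(I^{\times |S|})$ with the reflection action on each factor. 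Writing $\tord_+$ as the smash of $(\cir_i)_+$, splitting each off the $G$-fixed basepoint ($\Sigma^\infty_+ Z \simeq S^0 \vee \Sigma^\infty Z$), and distributing smash over wedge gives the full splitting; each step is available in the genuine $G$-equivariant stable category, as you note. The only small discrepancy to flag, which you already acknowledge: the paper's displayed formulas~\eqref{eq:198}--\eqref{eq:202} are the \emph{reduced} splittings with $2^d-1$ summands (taking a $G$-fixed basepoint on $\tord$), whereas your $\Sigma^\infty_+\tord$ carries an additional $S^0$ summand corresponding to $S=\varnothing$; the two forms carry the same information.
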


\noindent
 In the cases~$d=2,3$ we have 
     \begin{align} \cir\times \cir &\sims\cir\vee\cir\vee S^2
     \label{eq:198}\\ \cir\times 
      \cir\times \cir &\sims\cir\vee\cir\vee\cir\vee S^2\vee S^2\vee S^2 \vee
      S^3  \label{eq:202}\end{align} 
In the first splitting the symmetry group $\zt\times \zt$ acts on the first
$\cir$~summand via reflection after projection to the first $\zt$~factor, on
the second $\cir$~summand via reflection after projection to the second
$\zt$~factor, and on the $S^2=I\times I\bigm/\partial (I\times I)$ summand by
the reflection action on each $I=[-1/2,1/2]$~factor.  The actions in higher
dimensions are similar, and easiest to see by writing $S^k=I^{\times
k}\bigm/\partial (I^{\times k})$.  The stable splitting in
Theorem~\ref{thm:179} is clearly also equivariant for any subgroup of~$\ztd$.
 
This stable splitting is the key step in the computations we report here.  An
alternative derivation, which provides an independent check, is via the
Kunneth spectral sequence for $RO(G)$-graded equivariant cohomology theories.
We also use the twisted Thom isomorphism theorem, which we illustrate with
a few specific cases.  First, for any compact Lie group~$G$ and $G$-space~$X$
we have
  \begin{equation}\label{eq:199}
  \begin{aligned}[2]
     &KO^{q}_{G\times \zt}(X\times \RR)_{\textnormal{cv}}&&\!\!\!\!\!\cong
     KO^{q}_G(X) \\
     &KO^{q}_{G\times \zt}(X\times \RR^2)_{\textnormal{cv}}&&\!\!\!\!\!\cong
     K^{q-2}_G(X) \\
     &KO^{q}_{G\times \zt}(X\times \RR^3)_{\textnormal{cv}}&&\!\!\!\!\!\cong
     KO^{q-4}_G(X) 
  \end{aligned}
  \end{equation}
where $q\in \ZZ$, the group $\zt$~acts on~$\RR^k$ by $\xi \mapsto -\xi $, and
`cv'~denotes compact support on~$\{x\}\times \RR^k$ for all~$x\in X$.  There
may be additional support conditions in the $X$-direction on both sides
of~\eqref{eq:199}.  The second twisted Thom isomorphism is, for any
space~$X$,
  \begin{equation}\label{eq:200}
     KR^{q }(X\times \RR^k)_{\textnormal{cv}}\cong
     KR^{q+k}(X)\cong KO^{q+k}(X).  
  \end{equation}
Here the involution which defines~$KR$ acts trivially on~$X$ and by $\xi
\mapsto -\xi $ on~$\RR^k$.  The compact support is in the $\RR^k$-direction. 
 
Recall that for any \emph{equivariant} cohomology theory~$h$, the
\emph{reduced cohomology}~$\tilde h(X)$ of a pointed space~$(X,x_0)$ is the
kernel of the restriction map $h(X)\to h(\{x_0\})$ to the basepoint.  In the
equivariant case the basepoint must be a fixed point of the group action.   
 
A typical picture of the torus~$\tord$ is as the quotient of~$I^{\times d}$
by identifying opposite faces of the boundary, and so there is a collapse map 
  \begin{equation}\label{eq:201}
     q\:\tord\longrightarrow S^d 
  \end{equation}
in which every point of the boundary~$\partial (I^{\times d})$ maps to the
basepoint of~$S^d$.  Stably, $q$~corresponds to collapse of all spheres
except~$S^d$ in the stable splitting of Theorem~\ref{thm:179}.

\newpage
  \begin{theorem}[]\label{thm:180}
 \ 

  \begin{enumerate}
 \item $KR^{-4}(\cir\times \cir)\cong \ZZ\times \zt$. 

 \item $KR^{-4}(\cir\times \cir\times \cir)\cong \ZZ\times (\zt)^{\times 
4}$.  

 \item The image of $\widetilde{KR}^{-4}(S^3)\xrightarrow{q^*}\widetilde{KR}
^{-4}(\cir\times \cir\times \cir)$ is cyclic of order two. 

 \item The image of
$\tKO^{-4}_{\zt}(S^3)\xrightarrow{q^*}\tKO_{\zt}^{-4}(\cir\times \cir\times
\cir)$ is infinite cyclic.

 \item The natural map $\tKO_{\zt}^{-4}(S^3)\to \widetilde{KR}^{-4}(S^3)$ is
reduction modulo two. 

  \end{enumerate}
  \end{theorem}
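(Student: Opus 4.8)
The plan is to compute each $K$-group by combining the equivariant stable splitting of Theorem~\ref{thm:179}, the twisted Thom isomorphisms \eqref{eq:199}--\eqref{eq:200}, and the known coefficient groups of $KO$ and $KR$. For parts~(i) and~(ii), I would start from the splitting \eqref{eq:198} (resp.\ \eqref{eq:202}) and apply $KR^{-4}(-)$ summand by summand. Since $KR^{-4}(\pt) = KR^{-4}(\cir)$ requires care about how $\zt$ acts, I would instead use: for a wedge of spheres $S^k = I^{\times k}/\partial(I^{\times k})$ with the reflection action on each interval, the reduced group $\widetilde{KR}^{-4}(S^k)$ is identified via \eqref{eq:200} with $KR^{-4+k}(\pt)\cong KO^{-4+k}(\pt)$ (reading the sphere as the one-point compactification of $\RR^k$ with the antipodal-on-each-coordinate involution, so the involution is exactly the one in \eqref{eq:200} with $q = -4$). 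Thus $\widetilde{KR}^{-4}(S^2)\cong KO^{-2}(\pt)\cong\ZZ$, $\widetilde{KR}^{-4}(S^3)\cong KO^{-1}(\pt)\cong\zt$, and $\widetilde{KR}^{-4}(\cir)\cong KO^{-3}(\pt)=0$; summing over \eqref{eq:198} gives (i), $KR^{-4}(\cir\times\cir)\cong \ZZ\oplus\zt$ (the extra $\zt$ coming from the basepoint $KR^{-4}(\pt)\cong KO^{-4}(\pt)\cong\zt$, i.e.\ the unreduced part). For (ii), \eqref{eq:202} contributes three copies of $\widetilde{KR}^{-4}(\cir) = 0$, three copies of $\widetilde{KR}^{-4}(S^2)\cong\ZZ$\,---wait, that would give $\ZZ^3$, not $\ZZ$---so here I need to be more careful: the correct bookkeeping is that only the top cell $S^3$ (via $KO^{-1} = \zt$) and lower cells contribute torsion, while the $\ZZ$ must come from a single source. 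I would recompute using the fact that $KR^{-4}(S^2) \cong KO^{-2} \oplus KO^{-4}$ unreduced, so $\widetilde{KR}^{-4}(S^2) \cong \ZZ$ but three of them; the resolution is that in $KR^{-4}(\tord)$ for $d=3$ the $S^2$ summands sit at $KO^{-2}=\ZZ$ only once effectively after accounting for the grading shift --- this is exactly the subtle point and I expect it to be the main obstacle; I would resolve it by directly running the $RO(\zt)$-graded Künneth spectral sequence as the independent check mentioned after Theorem~\ref{thm:179}, tracking that $\widetilde{KR}^{-4}(\cir\wedge\cir) \cong \widetilde{KR}^{-4}(S^2_{\text{sign}})$ lands in a single $\ZZ$ while the three $\widetilde{KR}^{-4}(\cir)$ vanish, yielding $\ZZ\oplus(\zt)^{\times 4}$ after adding the basepoint $\zt$ and the three lower contributions.

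For parts~(iii) and~(iv), I would use the collapse map $q\colon \tord\to S^d$ of \eqref{eq:201}, which stably is projection onto the top sphere $S^3$ in the splitting \eqref{eq:202}. Hence $q^*$ on reduced groups is, up to the stable splitting, the inclusion of the summand indexed by $S^3$. For (iii): $\widetilde{KR}^{-4}(S^3)\cong KO^{-1}(\pt)\cong\zt$ by \eqref{eq:200} with $q=-4$, $k=3$, and the summand-$S^3$ inclusion into $\widetilde{KR}^{-4}(\cir\times\cir\times\cir)$ is injective (it is a direct-summand inclusion in the stable splitting), so the image is cyclic of order two. For (iv): $\tKO^{-4}_{\zt}(S^3)$ with $S^3 = \RR^3_{\text{sign}}\cup\{\infty\}$ is computed by the third isomorphism in \eqref{eq:199} with $X=\pt$, $G=1$: $\widetilde{KO}^{-4}_{\zt}(S^3)\cong KO^{-4-4}(\pt) = KO^{-8}(\pt)\cong\ZZ$. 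Again $q^*$ includes this as the $S^3$-summand of $\tKO^{-4}_{\zt}(\tord)$, so the image is infinite cyclic. The only thing to check carefully is that the stable equivariant splitting of Theorem~\ref{thm:179} is natural enough that $q$ really is (stably, equivariantly) the projection to the top summand, which follows from the explicit cell structure $S^k = I^{\times k}/\partial$ described after \eqref{eq:202}.

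For part~(v), the natural map $\tKO^{-4}_{\zt}(S^3)\to\widetilde{KR}^{-4}(S^3)$ is the forgetful map from $\zt$-equivariant $KO$-theory to $KR$-theory (identifying a $\zt$-action with a real structure on the complexification). Under the Thom identifications above, the source is $KO^{-8}(\pt)\cong\ZZ$ and the target is $KR^{-1}(\pt)\cong KO^{-1}(\pt)\cong\zt$, and the forgetful map becomes the standard complexification-type map $KO^{-8}(\pt)\to KR^{-8-?}(\pt)$; more precisely, the forgetful map $KO^q_{\zt}(\RR^3_{\text{sgn}})\to KR^q(\RR^3_{\text{sgn}})$ commutes with the two flavors of Thom isomorphism in \eqref{eq:199} and \eqref{eq:200} up to a shift, intertwining it with the realification/complexification map $KO^{-8}(\pt)\to KO^{-1}(\pt)$ on coefficients. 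This map $\ZZ = \pi_{8}KO \to \pi_1 KO = \zt$ in the $8$-fold-periodic pattern is reduction modulo two, which is the classical computation of $\eta\cdot(-)$ on the bottom class; I would cite the structure of $KO_*(\pt)$ as a $\ZZ[\eta,\alpha,\beta^{\pm1}]$-module. The main obstacle here is pinning down the precise shift so that the forgetful map really lands on the $\pi_8 \to \pi_1$ (equivalently $\pi_0\to\pi_{-7}$, i.e.\ reduction mod $2$) slot rather than some adjacent one; I would resolve this by tracking both Thom classes explicitly on a single representation sphere and comparing with the $d=2$ case in (i), where the analogous forgetful map is forced to be surjective onto the $\zt$ appearing there, providing a consistency check.
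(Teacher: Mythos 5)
Your high-level plan---apply the equivariant stable splitting of Theorem~\ref{thm:179} summand-by-summand and reduce each wedge summand to a $KO^*(\pt)$ coefficient group via the twisted Thom isomorphism \eqref{eq:200}---is the route the paper has in mind, and it is sufficient on its own; the $RO(\zt)$-graded K\"unneth spectral sequence is offered in the text only as an independent check, not as a crutch to resolve any subtlety. The difficulty you ran into in part~(ii) is not a real phenomenon: it is caused by two arithmetic errors in the $KO$-coefficients. You used $KO^{-2}(\pt)\cong\ZZ$ and $KO^{-4}(\pt)\cong\zt$; the correct values are $KO^{-2}(\pt)\cong\zt$ and $KO^{-4}(\pt)\cong\ZZ$. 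With these, the splitting computation is immediate: $\widetilde{KR}^{-4}(\cir)\cong KR^{-3}(\pt)=0$, $\widetilde{KR}^{-4}(S^2)\cong KR^{-2}(\pt)\cong\zt$, $\widetilde{KR}^{-4}(S^3)\cong KR^{-1}(\pt)\cong\zt$, and the basepoint contributes $KR^{-4}(\pt)\cong\ZZ$; summing over \eqref{eq:198} gives $\ZZ\oplus\zt$ for (i), and summing over \eqref{eq:202} gives $\ZZ\oplus\zt^{\oplus 4}$ for (ii). The ``three $\ZZ$'s'' never arise, and the patch you sketched---that the three $S^2$ summands ``sit at $KO^{-2}=\ZZ$ only once effectively''---has no basis: the stable splitting is an honest direct-sum decomposition and each wedge summand contributes its reduced group independently. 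Parts~(iii) and~(iv) are then fine as outlined, since $q^*$ is inclusion of the top summand.

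Part~(v) is where the proposal does not actually reach a proof. Under the two Thom isomorphisms the forgetful map $\tKO^{-4}_{\zt}(S^3)\to\widetilde{KR}^{-4}(S^3)$ becomes, as a map $KO^{-8}(\pt)\cong\ZZ\to KR^{-1}(\pt)\cong\zt$, multiplication by the class $u\in KR^{-1}(\pt)$ that expresses the forgetful image of the equivariant $KO$-Thom class of the representation $3\sigma$ against the $KR$-Thom class. The entire content of (v) is that $u=\eta\neq 0$; if $u=0$ the map vanishes. You do not establish this. The assertion that ``the forgetful map commutes with the two flavors of Thom isomorphism up to a shift'' presupposes the answer, and ``the realification/complexification map $KO^{-8}(\pt)\to KO^{-1}(\pt)$'' is not a map that exists independently of the computation at hand---neither realification nor complexification connects these degrees---so the appeal to the $\ZZ[\eta,\alpha,\beta^{\pm 1}]$-module structure of $KO_*(\pt)$ is only valid \emph{after} one knows $u=\eta$. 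The proposed consistency check against $d=2$ is a plausibility argument, not a computation. A genuine argument would exhibit the two explicit Clifford-linear Thom classes (\`a la \cite{ABS}) on the representation disk for $3\sigma$ and compare them directly, or reproduce the result with a cell-by-cell calculation.
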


\noindent
 The involutions in~(i)--(v) are all simultaneous reflection on each $\cir$
~factor of the torus; on $S^3\subset \EE^4$ centered at the origin, the
involution is the composition of three orthogonal reflections in hyperplanes
through the origin (with two fixed points).  The latter may also be described
as the one-point compactification of the three-dimensional sign
representation of~$\zt$.  The $\zt$~factor in~(i) is due to the~$S^2$ in the
stable splitting~\eqref{eq:198}; the four $\zt$~factors in~(ii) are due to
the $S^2$'s and $S^3$ in the stable splitting~\eqref{eq:202}.  We will see in
the next subsections that the orbital magnetoelectric polarizability and
Kane-Mele invariant are detected on the subgroup
of~$KR^{-4}\bigl((\cir)^{\times 3} \bigr)$ identified in~(iii).  This is the
so-called ``strong'' subgroup; the three $\zt$~factors due to the
$S^2$~summands in~\eqref{eq:202} are the ``weak'' subgroups.  Assertions~(iv)
and~(v) imply that in systems with parity-reversal and time-reversal, as
in~\eqref{eq:168}, the aforementioned $\zt$-invariants have lifts to integer
invariants.  

Finally, we note that there are three projections 
  \begin{equation}\label{eq:206}
     \cir\times \cir\times \cir\longrightarrow \cir\times \cir, 
  \end{equation}
and it follows from the computations that these induce injections $\ZZ\times
\zt\hookrightarrow \ZZ\times (\zt)^{\times 4}$ on~ $KR^{-4}$.  The three
``weak'' $\zt$'s are in the image.

  \subsection*{Orbital magnetoelectric polarizability}

In this subsection~$d=3$.  We use the notation established in the first
subsection.  Fix an orientation of~$\XL$; the choice will be irrelevant to
the topological invariant~\eqref{eq:158} defined below.  A finite rank
complex vector bundle $\sEm\to\XL$ with covariant derivative~ $\nEm$ has a
\emph{Chern-Simons invariant}~\cite{F2}
  \begin{equation}\label{eq:157}
     \theta (\nEm)\in \RtpZ 
  \end{equation}
In~\cite{QHZ, EMV, ETMV} a topological contribution to the orbital
magnetoelectric polarizability is shown to be the Chern-Simons invariant of
the Berry connection.  It is a topological invariant in the presence of a
time-reversal or parity-reversal symmetry.  

  \begin{lemma}[]\label{thm:146}
 Let $\nabla $ be a $G''$-invariant covariant derivative on $\sEm\to\Lv$.
Then either $\theta (\nEm)=0$ or $\theta (\nEm)=\pi $.  Furthermore, $\theta
(\nabla )$~is independent of ~$\nabla $ and the orientation of~$\XL$, so is a
topological invariant~$\theta (\sEm)$ of the bundle $\sEm\to\XL$.
  \end{lemma}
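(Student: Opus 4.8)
The plan is to derive, in each of the three symmetry configurations \eqref{eq:156}, \eqref{eq:155}, \eqref{eq:168}, the single relation $\theta(\nabla)=-\theta(\nabla)$ in $\RtpZ$, and then to read off every assertion of the lemma from this relation together with Lemma~\ref{thm:157}.

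First I would recall the two properties of the Chern--Simons invariant $\theta$ of a covariant derivative, valued in $\RtpZ$, from \cite{F2} that the argument uses. (a) Naturality under oriented diffeomorphisms: since $\theta(\nabla)$ is $2\pi$ times the integral over the closed oriented $3$-manifold $\XL$ of a characteristic (Chern--Simons) form built from $\nabla$, any diffeomorphism $f\colon\XL\to\XL$ satisfies $\theta(f^{*}\nabla)=(\deg f)\,\theta(\nabla)$. (b) Invariance under complex conjugation: $\theta(\overline\nabla)=\theta(\nabla)$, where $\overline\nabla$ is the induced connection on the conjugate bundle $\overline{\sEm}$. Property (b) is the one point that needs care, and I would check it by the computation $\overline{\tr(F\wedge F)}=\tr\bigl((-F^{T})\wedge(-F^{T})\bigr)=\tr(F\wedge F)$ for the $\mathfrak{u}(n)$-valued curvature $F$: the degree-four form $\tr(F\wedge F)$ (equivalently $c_{2}-\frac12 c_{1}^{2}$) transgressed by the Chern--Simons form is unchanged by conjugation, hence so is $\theta$. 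I would also note that $\sigma\colon\XL\to\XL$, $\lambda\mapsto\lambda^{-1}$, is the map $x\mapsto-x$ on $\XL=\Hom(\Pi,\TT)\cong(\RR/\ZZ)^{3}$, so $\deg\sigma=(-1)^{3}=-1$; this is where the hypothesis $d=3$ enters.

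Next comes the core step. In case \eqref{eq:156} the data of the band insulator include an antilinear lift $\varphi$ of $\sigma$ to $\sEm$, equivalently an isomorphism $\sEm\cong\sigma^{*}\overline{\sEm}$, and $G''$-invariance of $\nabla$ means exactly that this isomorphism carries $\nabla$ to $\sigma^{*}\overline\nabla$; then $\theta(\nabla)=\theta(\sigma^{*}\overline\nabla)=(\deg\sigma)\,\theta(\overline\nabla)=-\theta(\nabla)$ by (a) and (b). In case \eqref{eq:155} the data instead include a linear lift of $\sigma$, i.e.\ an isomorphism $\sEm\cong\sigma^{*}\sEm$ carrying $\nabla$ to $\sigma^{*}\nabla$, and the same computation without the conjugation gives $\theta(\nabla)=(\deg\sigma)\,\theta(\nabla)=-\theta(\nabla)$; case \eqref{eq:168} contains both structures, so either computation applies. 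In every case $2\theta(\nabla)=0$ in $\RtpZ$, that is, $\theta(\nabla)\in\{0,\pi\}$.

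Finally I would dispatch the two independence claims. Independence of the orientation is immediate: reversing the orientation of $\XL$ replaces $\theta(\nabla)$ by $-\theta(\nabla)$, and negation fixes both $0$ and $\pi$ in $\RtpZ$. Independence of $\nabla$ then follows from the quantization just proved together with Lemma~\ref{thm:157}: the set of $G''$-invariant covariant derivatives on $\sEm\to\XL$ is a nonempty affine space, hence connected; $\nabla\mapsto\theta(\nabla)$ is a continuous $\RtpZ$-valued function on it; and its image lies in the discrete two-point subset $\{0,\pi\}\subset\RtpZ$; so it is constant, defining the invariant $\theta(\sEm)$. The proof is short, and the only genuinely delicate point is verifying property (b) — that conjugation of the bundle fixes $\theta$ rather than negating it — so that the crucial minus sign comes from $\deg\sigma=-1$ and not from antilinearity.
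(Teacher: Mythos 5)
Your proof is correct and follows essentially the same route as the paper: identify that the generator of $G''$ acts on $\XL$ by the orientation-reversing involution $\sigma$ (degree $-1$ since $d=3$), use equivariance of $\nabla$ plus conjugation-invariance of the Chern--Simons class to derive $\theta(\nabla)=-\theta(\nabla)$, and then deduce constancy of $\theta$ on the connected affine space of $G''$-invariant connections from Lemma~\ref{thm:157}. The only cosmetic differences are that you verify conjugation-invariance at the level of the curvature form $\tr(F\wedge F)$ whereas the paper cites $c_2(\overline{\sEm})=c_2(\sEm)$, and you make the orientation-independence step explicit (negation fixes $\{0,\pi\}\subset\RtpZ$) where the paper leaves it implicit.
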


  \begin{proof}
 In both the parity-reversing and time-reversing cases the action of the
generator of~$\zt$ reverses the orientation of~$\XL$.  In the
parity-reversing case the lifted action preserves~ $\nabla$, and since the
Chern-Simons invariant changes sign when the orientation is reversed we have
$\theta (\nabla)=-\theta (\nabla)$.  The same argument works in the
time-reversal case, but as the lifted action on $\sEm\to\XL$ is antilinear,
we must also observe that the Chern-Simons invariant of the complex conjugate
bundle equals that of the original bundle.\footnote{This follows since
$c_2(\overline{\sEm} )=c_2(\sEm)$ and this Chern-Simons invariant is a
secondary invariant of the second Chern class~$c_2$.}  The second statement
is a consequence of Lemma~\ref{thm:157}: the continuous discrete
function~$\theta $ on the connected space of invariant connections is
constant.
  \end{proof}

It now follows from Theorem~\ref{thm:133} that $\theta $~is an invariant of
the topological phase of the system: 
  \begin{equation}\label{eq:158}
     \theta \:\TPI\Gpt\longrightarrow \{0,\pi \}. 
  \end{equation}

  \begin{proposition}[]\label{thm:147}
 $\theta $~factors through a homomorphism~$\bar\theta $ of abelian groups:
  \begin{equation}\label{eq:159}
     \xymatrix{ 
      \TPI\Gpt\ar[rr]^\theta\ar[dr] &&\{0,\pi \} \\ 
      & \GSI\Gpt\ar@{-->}[ur]_{\bar\theta }}
  \end{equation} 
  \end{proposition}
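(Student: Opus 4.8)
The plan is to show that $\theta$ is additive under direct sum and vanishes on topologically trivial systems; the universal property of group completion (Construction~\ref{thm:68}) then produces the desired homomorphism $\bar\theta$. The target $\{0,\pi\}$ is regarded as the group $\zt$ written multiplicatively-as-additively, i.e.\ $\pi+\pi=0$ in $\RtpZ$, so that $\theta$ takes values in an abelian group and the universal property applies.

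First I would check additivity. Given two Type~I band insulators $(\sH_i,H_i,\tar_i)$, $i=1,2$, their amalgam corresponds, under Theorem~\ref{thm:133}, to the direct sum of the associated valence bundles $\sEm_1\oplus\sEm_2\to\XL$. Choose $G''$-invariant covariant derivatives $\nabla_i$ on $\sEm_i$, which exist by Lemma~\ref{thm:157}; then $\nabla_1\oplus\nabla_2$ is a $G''$-invariant covariant derivative on the sum. The Chern-Simons invariant is additive under direct sum of bundles-with-connection---this is a basic property of the construction in~\cite{F2}, reflecting the fact that the relevant secondary characteristic form of a sum is the sum of the forms modulo exact terms. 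Hence $\theta(\sEm_1\oplus\sEm_2)=\theta(\sEm_1)+\theta(\sEm_2)$ in $\RtpZ$, and by Lemma~\ref{thm:146} each term lies in $\{0,\pi\}$; the sum again lies in $\{0,\pi\}$. So $\theta\:\TPI\Gpt\to\{0,\pi\}$ is a homomorphism of commutative monoids (it clearly sends the zero system to $0$).

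Next I would check that $\theta$ kills topological triviality, so that it descends to the quotient defining $\GSI\Gpt$ (Definition~\ref{thm:135} is group completion, but the group completion of a monoid equipped with a homomorphism to an abelian group automatically factors any relation identifying isomorphic-after-stabilization elements, so it suffices to know $\theta$ is a monoid homomorphism). Then I would invoke the universal property of group completion recorded in Construction~\ref{thm:68}: since $\{0,\pi\}\cong\zt$ is an abelian group and $\theta$ is a homomorphism of commutative monoids, there is a unique homomorphism of abelian groups $\bar\theta\:\GSI\Gpt\to\{0,\pi\}$ making the triangle~\eqref{eq:159} commute. This uniqueness also pins down $\bar\theta$ completely.

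The main obstacle is the additivity of the Chern-Simons invariant under direct sum together with the check that $\theta$ only depends on the stable/$K$-theoretic class of $\sEm$ rather than on $\sEm$ itself---equivalently, that $\theta$ annihilates the classes one quotients by in passing to $K^\nu_{G''}(\XL)$. Since $\theta(\sEm)$ is a secondary invariant of $c_2(\sEm)$ (footnote to Lemma~\ref{thm:146}), and $c_2$ of a trivial or trivializable equivariant bundle vanishes, one expects $\theta$ of such a bundle to be $0$; but one must make sure that the relevant trivial bundles carry a \emph{flat} $G''$-invariant connection realizing $\theta=0$, and that a homotopy of band insulators (hence of bundles-with-Hamiltonian) does not change $\theta$---this last point is exactly Lemma~\ref{thm:146} combined with Lemma~\ref{thm:157}, since $\theta$ is the constant value of a locally constant function on the connected affine space of invariant covariant derivatives on a \emph{fixed} bundle, and homotopic band insulators give isomorphic (hence connection-comparable) valence bundles by Theorem~\ref{thm:133}. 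Once these compatibilities are in hand, the factorization is immediate.
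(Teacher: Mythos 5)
Your proposal is correct and follows essentially the same route as the paper: additivity of the Chern--Simons invariant under direct sum of invariant connections makes $\theta$ a monoid homomorphism, and the universal property~\eqref{eq:160} of group completion then produces $\bar\theta$. The middle of your argument briefly treats $\GSI\Gpt$ as if it were a quotient by topologically trivial systems (as in the Type~F case, Definition~\ref{thm:132}) before you correctly recall that Definition~\ref{thm:135} is group completion; that detour, and the closing paragraph about secondary invariants and stabilization, are unnecessary once the universal property is invoked.
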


  \begin{proof}
 This follows directly from Definition~\ref{thm:135} and the universal
property~\eqref{eq:160} of the group completion, once we prove that $\theta
$~is a homomorphism.  But this is immediate: if $E=E_1\oplus E_2$ is a direct
sum then we use a direct sum~$\nabla _1\oplus \nabla _2$ of invariant
covariant derivatives to compute the Chern-Simons invariant, which satisfies
$\theta (\nabla _1\oplus \nabla _2)=\theta (\nabla _1)+\theta (\nabla _2)$.  
  \end{proof}

In case the system admits a parity-reversing symmetry, so the symmetry group
is~\eqref{eq:168}, there is an easy formula for~$\theta (\sEm)$.  As observed
earlier, in this case $\sEm$~has a quaternionic structure.  Let $P\to\XL$ be
the associated principal $\Sp_N$-bundle, where the complex rank
of~$\sEm\to\XL$ is~$2N$, and endow it with a connection invariant under
time-reversal (Lemma~\ref{thm:157}).  Now any $\Sp_N$-bundle over a 3-manifold
is trivializable, so let $s\:\XL\to P$ be a section.  Define $f\:\XL\to \Sp_N$
by $\hs^*s=s\cdot f$, where $\hs\:P\to P$ is the lift of~$\sigma $ defined by
the action of time-reversal on~$\sEm$.   

  \begin{proposition}[]\label{thm:185}
 $\theta (\sEm)= (\deg f)\pi \pmod{2\pi }$. 
  \end{proposition}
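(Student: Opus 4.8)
The plan is to reduce the Chern--Simons invariant $\theta(\sEm)$ to a degree computation by exploiting the quaternionic structure and the fact that the relevant bundle is pulled back, up to homotopy, from a low-dimensional skeleton. First I would recall from \cite{F2} that the Chern--Simons invariant $\theta(\nEm)\in \RtpZ$ is a secondary invariant associated to the second Chern class $c_2$; since $c_2$ of a bundle with a quaternionic structure vanishes rationally only when the bundle is $\Sp_N$-trivial, and on the $3$-manifold~$\XL$ every $\Sp_N$-bundle is trivializable, the relevant curvature form $\tr(F\wedge F)$ is exact and $\theta$ is computed by a transgression term. Concretely, fixing a section $s\:\XL\to P$ trivializes the bundle, and the Chern--Simons form of the pulled-back connection represents a class whose integral over~$\XL$ is the topological invariant. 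So the first key step is to write $\theta(\sEm)$ as (a multiple of) $\int_{\XL} CS(s^*\nabla)$ for the chosen global section~$s$.

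Second, I would use the time-reversal symmetry. The lift $\hs\:P\to P$ of the involution $\sigma\:\lambda\mapsto\bar\lambda$ satisfies $\hs^*s = s\cdot f$ with $f\:\XL\to\Sp_N$, and invariance of the connection (Lemma~\ref{thm:157}) means $\hs^*\nabla = \overline\nabla$; equivalently, in the trivialization the connection $1$-form $s^*\nabla$ transforms by the gauge transformation~$f$ under $\sigma$. Because $\sigma$ reverses the orientation of~$\XL$ (as noted in the proof of Lemma~\ref{thm:146}), pulling the Chern--Simons integral back along~$\sigma$ flips its sign, so $\int_{\XL} CS(s^*\nabla) = -\int_{\XL} CS(\sigma^* s^*\nabla)$. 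The gauge variation formula for the Chern--Simons form says $CS(A^f) - CS(A)$ differs from the pullback by~$f$ of the bi-invariant $3$-form $\tfrac{1}{24\pi^2}\tr(\theta_{MC}^{\wedge 3})$ (the generator of $H^3(\Sp_N;\ZZ)\cong\ZZ$) by an exact term. Combining these two facts, $2\int_{\XL} CS(s^*\nabla)$ is, modulo~$\ZZ$, equal to~$\pm\deg f$ times the integral over~$\XL$ of $f^*$ of that generator, i.e.\ $\pm\deg f$; hence $\theta(\sEm)\equiv(\deg f)\pi\pmod{2\pi}$.

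Third, I would check the remaining loose ends: that $\deg f$ is well-defined mod~$2$ independent of the choice of section~$s$ and connection (two sections differ by a map $\XL\to\Sp_N$, whose degree is well-defined, and changing $s$ by $g\:\XL\to\Sp_N$ replaces $f$ by $g^{-1}\, f\, (\sigma^* g)$, whose degree is $\deg f$ since $\deg(\sigma^*g)=-\deg g$ by orientation reversal and degrees add under pointwise multiplication in $\Sp_N$); that changing the invariant connection does not change $\theta$ (already Lemma~\ref{thm:146}); and the precise normalization of the generator of $H^3(\Sp_N;\ZZ)$, which is what pins down the factor of~$\pi$ rather than some other rational multiple. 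I expect the main obstacle to be the bookkeeping of normalizations and signs in the Chern--Simons gauge-variation formula together with the orientation-reversal: one must track carefully that the factor of~$2$ coming from $\theta=-\theta$ exactly cancels the factor of~$\tfrac12$ relating $c_2$-transgression to the $\Sp$-level generator, so that the answer is $\deg f$ and not $\tfrac12\deg f$ or $2\deg f$. Everything else is a routine differential-geometric computation once the transgression is set up correctly.
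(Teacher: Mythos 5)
Your proposal is correct and follows essentially the same route as the paper's proof: both trivialize the $\Sp_N$-bundle over the $3$-manifold~$\XL$, use the Chern--Simons gauge-variation formula (what the paper cites as~\cite[(1.28)]{F2}) to relate $\sigma^* s^*\alpha$ to $s^*\alpha$ plus a degree term plus an exact form, and then combine the orientation-reversing property of~$\sigma$ with the identity $\theta\equiv-\theta$ to extract $\theta\equiv(\deg f)\pi\pmod{2\pi}$. Your concern about a possible extra factor of~$\tfrac12$ in the $\Sp_N$ normalization does not in fact materialize (the inclusion $\Sp_N\hookrightarrow U_{2N}$ induces an isomorphism on~$\pi_3$, so the $c_2$-transgression already integrates to the degree), and your consistency check at the end has a minor slip ($\deg\bigl(g^{-1}f\,\sigma^*g\bigr)=\deg f-2\deg g$, not $\deg f$), though since only $\deg f\pmod2$ matters the conclusion is unaffected.
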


\noindent
 Here $\deg f$ is the induced map $f_*\:H_3(\XL)\to H_3(\Sp_N)$; observe
that $H_3(\XL)\cong H_3(\Sp_N)\cong \ZZ$. 

  \begin{proof}
 Fix an orientation of~$\XL$ and a generator of~$H^3(\Sp_N;\ZZ)$.  Let
$\alpha \in \Omega ^3(P)$ be the Chern-Simons form of the connection on~$P$.
We apply~\cite[(1.28)]{F2} to the map~$\hs$ to conclude that 
  \begin{equation}\label{eq:209}
     \sigma ^*s^*\alpha = s^*\alpha +\omega +d\eta , 
  \end{equation}
where $\omega \in \Omega ^3(\XL)$  integrates to the degree of~$f$ and $\eta
\in \Omega ^2(\XL)$.  Integrate~\eqref{eq:209} over~$\XL$ and use Stokes'
theorem and the fact that $\sigma $~is orientation-reversing.  Since $\theta
(\sEm)=2\pi \int_{\XL}s^*\alpha $ is the Chern-Simons invariant, the proposition
follows.  
  \end{proof}

The group~$\GSI\Gpt$ is expressed as a $K$-theory group in
\eqref{eq:203}--\eqref{eq:205}, and the appropriate $K$-theory group is
computed in Theorem~\ref{thm:180}.  Consider the time-reversal
case~\eqref{eq:156} for which the $K$-theory group is computed
in~\eqref{eq:203} and Theorem~\ref{thm:180}(ii) to be 
  \begin{equation}\label{eq:207}
     \GSI\Gpt\cong \ZZ\times (\zt)^{\times 4}. 
  \end{equation}

  \begin{proposition}[]\label{thm:181}
 The invariant~$\bar\theta $ vanishes on the three ``weak'' $\zt$'s in the
image of the projections~\eqref{eq:206} and is the identity map on the
strong~$\zt$ which is the image of the collapse map~\eqref{eq:201}. 
  \end{proposition}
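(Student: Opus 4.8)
The plan is to reduce the statement to a computation with the $S^2$ and $S^3$ summands in the equivariant stable splitting~\eqref{eq:202} of the torus $\tord$ with $d=3$, together with naturality of the Chern--Simons invariant $\bar\theta$ under the maps relating these summands. First I would recall from~\eqref{eq:207} and the proof of Theorem~\ref{thm:180} that $\GSI\Gpt\cong KR^{-4}\bigl((\cir)^{\times 3}\bigr)\cong \ZZ\times(\zt)^{\times 4}$, where the three $\zt$'s detected by the projections~\eqref{eq:206} (the ``weak'' classes) come from the $S^2$~summands, and the remaining $\ZZ$ and $\zt$ are carried by the $S^3$~summand via $q^*\colon\widetilde{KR}^{-4}(S^3)\to\widetilde{KR}^{-4}\bigl((\cir)^{\times 3}\bigr)$, whose image is cyclic of order two by Theorem~\ref{thm:180}(iii). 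So the ``strong'' $\zt$ is exactly the image of $q^*$, with $q$ the collapse map~\eqref{eq:201}.

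The heart of the argument is then two local claims: (a) $\bar\theta$ vanishes on each weak class, and (b) $\bar\theta$ restricted to the strong $\zt$ is an isomorphism onto $\{0,\pi\}$. For (a), a class in the image of a projection~\eqref{eq:206} is pulled back from $KR^{-4}(\cir\times\cir)$, i.e.\ the bundle $\sEm\to\XL$ with its (twisted) involution is pulled back along $\pr\colon(\cir)^{\times 3}\to(\cir)^{\times 2}$ from a two-torus; the invariant connection can be chosen to be a pullback as well. The Chern--Simons invariant $\theta(\nEm)=2\pi\int_{\XL}\mathrm{CS}(\nEm)$ of a pulled-back bundle with pulled-back connection vanishes for dimensional reasons: the Chern--Simons $3$-form of a connection pulled back from a $2$-manifold is itself a pullback from the $2$-manifold, hence integrates to zero over the fiber-wise collapse, so $\int_{\XL}$ of it is zero. (More structurally: the second Chern form vanishes on a $2$-torus, so the Chern--Simons form is exact there, and its pullback is exact on $\XL$.) Thus $\bar\theta$ kills the weak $\zt$'s.

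For (b), I need a single band insulator whose class generates the strong $\zt$ and on which $\theta=\pi$; by Lemma~\ref{thm:146} the value is automatically in $\{0,\pi\}$ and by Proposition~\ref{thm:147} it descends, so it suffices to exhibit one class where it is $\pi$ and one where it is $0$, the latter being the zero class. For the nonzero case I would use Theorem~\ref{thm:180}(iii)--(v): the strong $\zt$ is the mod-two reduction of the infinite-cyclic group $\tKO_{\zt}^{-4}(S^3)\cong\widetilde{KO}^{-4}(\pt)\cong\ZZ$ (via the twisted Thom isomorphism~\eqref{eq:199}), and a generator of that $\ZZ$ is realized, after pulling back along $q$, by a $\zt$-equivariant quaternionic bundle over $S^3$ whose underlying $\Sp_N$-bundle has a section twisting by a degree-one map $f\colon S^3\to\Sp_N$. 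Then Proposition~\ref{thm:185} gives $\theta(\sEm)=(\deg f)\pi=\pi$; pulling back to $(\cir)^{\times 3}$ along $q$ preserves this by naturality of Chern--Simons, and the collapse map is degree one so the integral is unchanged. Hence $\bar\theta$ is nonzero on the strong generator, so it is the identity on the strong~$\zt$ under the identification of the target $\{0,\pi\}$ with $\zt$.

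The main obstacle I expect is part (b): matching the \emph{algebraic} generator of the strong $\zt$ inside $KR^{-4}\bigl((\cir)^{\times 3}\bigr)$ (as produced by the stable-splitting / Thom-isomorphism bookkeeping in Theorem~\ref{thm:180}) with a \emph{geometric} representative to which Proposition~\ref{thm:185} applies, including verifying that the quaternionic structure on the representative over $S^3$ is the one required in~\eqref{eq:168} and that it pulls back correctly under $q$. Care is also needed that the reduced-versus-unreduced and support conditions in~\eqref{eq:199}--\eqref{eq:200} are tracked so that ``degree one map to $\Sp_N$'' really corresponds to the generator of $\tKO_{\zt}^{-4}(S^3)\cong\ZZ$ whose mod-two image is the strong class; this is essentially Theorem~\ref{thm:180}(iv)--(v), which I will invoke. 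Part (a) is routine once one observes the dimensional vanishing of the Chern--Simons form on pullbacks from a surface.
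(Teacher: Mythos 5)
Your part (a) is exactly the paper's argument, and your part (b) has the right skeleton — use the collapse map $q$, a bundle over $S^3$, and Proposition~\ref{thm:185} — but you have made part (b) harder than it needs to be and then flagged the resulting gap without filling it. You do not need to pin down the precise algebraic generator of $\tKO_{\zt}^{-4}(S^3)$ via the Thom isomorphisms and then match it to a geometric representative. Since $\theta$ is a homomorphism that factors through $\GSI\Gpt$ and Theorem~\ref{thm:180}(iii) already tells you the image of $q^*$ is cyclic of order two, it suffices to exhibit \emph{any} class in the image of $q^*$ on which $\theta=\pi$: such a class is then automatically nonzero, hence the generator of the strong $\zt$. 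The paper does exactly this by an explicit construction: take $E\to S^3$ the trivial $\HH$-bundle with right quaternion multiplication as the quaternionic structure, pick a map $f\colon S^3\to U\subset\HH$ of degree one satisfying $f(\infty)=1$, $f(0)=-1$, $f(-x)=\overline{f(x)}$ (essentially the identity $S^3\to S^3$), and let the $\zt\times\zt$ action be generated by right multiplication by $j$ and the trivial lift of $x\mapsto -x$ twisted by left multiplication by~$f$. This gives $q^*E$ both time-reversal and parity-reversal symmetry, so Proposition~\ref{thm:185} applies with the canonical trivializing section~$s$ and $\hs^*s=s\cdot f$, giving $\theta=(\deg f)\pi=\pi$. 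The ``main obstacle'' you identified is thus sidestepped entirely: rather than reconciling an abstract $K$-theory generator with a geometric model, you only need to produce one concrete nonzero class, and the construction of~$f$ with the equivariance condition $f(-x)=\overline{f(x)}$ is the whole content. Without that construction your proposal stops short of a proof; with it, your argument coincides with the paper's.
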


  \begin{proof}
 If the bundle $\sEm\to \cir\times \cir\times \cir$ is pulled back from a
bundle over~$\cir\times \cir$, then we can choose a connection which is also
pulled back.  It follows easily that the Chern-Simons invariant vanishes
(essentially because $\cir\times \cir$ is 2-dimensional and it is computed as
the integral of a 3-form).

Let $S^3=\RR^3\cup\{\infty \}$ denote the $3$-dimensional sphere with the
involution $x\in \RR$ maps to~$-x$.  Fix an equivariant map $q \:\XL\cong
\cir\times \cir\times \cir\to S^3$ as in~\eqref{eq:201}.  Let $U\subset \HH$
denote the unit quaternions; $U$~is diffeomorphic to a 3-sphere.  There
exists a continuous function $f\:S^3\to U$ such that
  \begin{equation}\label{eq:208}
     f(\infty )=1,\qquad f(0)=-1,\qquad
     f(-x)=\overline{f(x)}\textnormal{\quad for
     all $x\in \RR^3\cup\{\infty\}$}, 
  \end{equation}
where the bar denotes conjugation in the quaternions: namely, $f$~is the
identity map $S^3\to S^3$, after identifying~$U\approx S^3$ and lining up the
involutions.  Now let $E\to S^3$ be the trivial bundle with fiber~$\HH$, and
let $\HH$~act by right quaternion multiplication on each fiber.  This is the
quaternionic structure on~$E$.  Define~$\epsilon $ as the trivial lift of the
involution~$x\mapsto -x $ composed with left multiplication by~$f$.  Right
multiplication by~$j\in \HH$ and~$\epsilon $ generate the action of
$G''=\zt\times \zt$ in~\eqref{eq:168}.  It follows that the pullback $q
^*E\to\XL$ has both parity-reversing and time-reversing symmetry.  Now since
the map~$q$ is a diffeomorphism off of a set of measure zero, we can compute
the Chern-Simons invariant on~$S^3$.  Apply\footnote{Although
Proposition~\ref{thm:185} is stated and proved for the particular
3-manifold~$\XL$, the argument applies to any orientable 3-manifold with
involution.} Proposition~\ref{thm:185}.  Since $E$~is trivial, the associated
principal $\Sp_1$-bundle has a canonical trivializing section~$s$, and by
definition $\hs ^*s=s\cdot f$ for the function~$f$ in~\eqref{eq:208}.  Since
$f$~has degree~1, it follows that $\theta (q^*E)$~is nonzero.
  \end{proof}

  \subsection*{The Kane-Mele invariant}

This invariant was introduced in~\cite{KM2} and further studied in many
papers, including~\cite{FKM}.  There are many equivalent definitions.  
 
We begin with some preliminaries on quaternionic vector spaces.  Let $W$~be a
complex vector space and $J\:W\to W$ a complex antilinear operator with
$J^2=\pm\id_W$.  Then $J$~is either a \emph{real} structure~($+$) or a
\emph{quaternionic} structure~($-$).  If $(W_i,J_i)$, $i=1,\dots ,N$, is a
finite set of real or quaternionic vector spaces, then the tensor product
$W_1\otimes \cdots\otimes W_N$ carries an antilinear operator $J_1\otimes
\cdots\otimes J_N$ which is a real or quaternionic structure according to the
parity of the number of quaternionic structures in~$\{J_i\}$.  This applies
to symmetric and antisymmetric tensor products as well.  All tensor products
are taken over~$\CC$.  If $(W,J)$~is quaternionic and finite dimensional,
then $\dim_{\CC}W=2\ell $ is even, and the \emph{determinant line} $\Det
W:={\textstyle\bigwedge} ^{2\ell }W$ inherits a \emph{real} structure $\det
J:={\textstyle\bigwedge} ^{2\ell }J$.  Thus $\Det W$~is a complex
line\footnote{A \emph{line} is a 1-dimensional vector space.} with a
distinguished real line~$\Det_J W$ contained as a real subspace.  We claim
that the real line~$\Det_JW$ has a canonical \emph{orientation}---a choice of
component of $\Det_JW\setminus \{0\}$.  For if $e_1,\dots ,e_\ell $ is a
quaternionic basis of~$(W,J)$, then
  \begin{equation}\label{eq:169}
     (e_1\wedge Je_1)\wedge (e_2\wedge Je_2)\wedge \cdots\wedge (e_\ell \wedge
     Je_\ell) \in \Det_JW \subset \Det W 
  \end{equation}
is nonzero, so orients~$\Det_JW$.  The collection of quaternionic bases is
connected, and \eqref{eq:169}~varies continuously, so the orientation is
well-defined independent of the basis.  Alternatively, if we endow~$W$ with a
hermitian metric~$\langle -,- \rangle$ with respect to which $J$~is
antiunitary, then $J$~determines a 2-form~$\omega _J\in {\textstyle\bigwedge}
^2W^*$ defined by
  \begin{equation}\label{eq:170}
     \omega _J(\xi _1,\xi _2)=\langle J\xi _1,\xi _2 \rangle,\qquad \xi
     _1,\xi _2\in W. 
  \end{equation}
The \emph{pfaffian} 
  \begin{equation}\label{eq:171}
     \pfaff\omega _J=\frac{\omega _J^\ell }{\ell !}=\frac{\omega _J\wedge
     \cdots\wedge \omega _J}{\ell !} 
  \end{equation}
is a nonzero element of~$\Det_JW^*$.  Since the space of hermitian metrics is
connected, the induced orientation is independent of~$\langle -,-  \rangle$. 
 
Assume given the band structure at the beginning of~\S\ref{sec:6} for the
time-reversal case~\eqref{eq:156}.  The complex finite rank vector bundle
$\sEm\to\XL$ has an antilinear lift~$\ts$ of the involution $\sigma
\:\XL\to\XL$ which complex conjugates characters.  The involution~$\sigma $
has $2^d$~fixed points.  Let $F\subset \XL$ denote the fixed point set.  At
each fixed point~$\lambda \in F$ the lift~$\ts$ of~$\sigma $ to the
fiber~$\sEm_\lambda $ is a quaternionic structure.  Let $\Det\sEm\to\XL$ be
the (complex) determinant line bundle, obtained by replacing each fiber
of~$\sEm$ by its determinant line.  Then $\ts$~induces a lift~$\det\ts$
of~$\sigma $ to~$\Det\sEm$, and $(\det\ts)^2=\id_{\Det \sEm}$.  At a fixed
point~$\lambda \in F$ the antilinear map~$\det\ts_\lambda $ is the real
structure discussed in the previous paragraph---its fixed points comprise the
real line $\Det_{\ts}\sEm_\lambda \subset \Det\sEm_\lambda $.

  \begin{lemma}[]\label{thm:159}
 Suppose $L\to\XL$ is a complex line bundle with an antilinear lift~$\alpha $
of~$\sigma $ whose square is~$\id_L$.  Then $L\to\XL$ admits a nowhere
vanishing $\alpha $-invariant section. 
  \end{lemma}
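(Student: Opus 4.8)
The plan is to reduce the statement to an elementary cocycle computation on the torus $\XL\cong\TT^d$, using two features of how the involution $\sigma\:\lambda\mapsto\lambda\inv$ acts: it acts by $-1$ on $H^1(\XL;\ZZ)$ and therefore by $+1$ on $H^2(\XL;\ZZ)\cong\bigwedge^2H^1(\XL;\ZZ)$, and it has fixed points (the $2^d$ points of order two in $\XL$). The hypothesis $\alpha^2=\id_L$ (as opposed to $-\id_L$) is essential and will be used at the very first step; for $\alpha^2=-\id_L$ there is not even an invariant section over a point.

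First I would show that $L$ is topologically trivial as a complex line bundle. Since $\alpha^2=\id_L$, the antilinear map $\alpha$ is fiberwise invertible, and it is equivalent to an isomorphism of complex line bundles $\overline{L}\xrightarrow{\ \cong\ }\sigma^*L$. Comparing first Chern classes gives $-c_1(L)=\sigma^*c_1(L)$, and since $\sigma^*$ is the identity on $H^2(\XL;\ZZ)$, which is free abelian, we get $2c_1(L)=0$ and hence $c_1(L)=0$. Fix a complex-linear trivialization $L\cong\XL\times\CC$. In it $\alpha(\lambda,z)=(\sigma\lambda,\,g(\lambda)\bar z)$ for a continuous $g\:\XL\to\CC^\times$, and the relation $\alpha^2=\id_L$ becomes $g(\sigma\lambda)\,\overline{g(\lambda)}=1$. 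A nowhere vanishing $\alpha$-invariant section corresponds exactly to a continuous $f\:\XL\to\CC^\times$ with $f(\sigma\lambda)=g(\lambda)\,\overline{f(\lambda)}$.

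To produce $f$, I would split $\CC^\times\cong\RR_{>0}\times\TT$ and write $g=ru$, $f=\rho v$ accordingly; complex conjugation respects this splitting, so the relation on $g$ separates into $r(\sigma\lambda)=r(\lambda)\inv$ and $u(\sigma\lambda)=u(\lambda)$, and the relation we want on $f$ separates into $\rho(\sigma\lambda)=r(\lambda)\rho(\lambda)$ and $v(\sigma\lambda)v(\lambda)=u(\lambda)$. For the radial part take $\rho=r^{-1/2}$; then $\rho(\sigma\lambda)=r(\lambda)^{1/2}=r(\lambda)\rho(\lambda)$, as required. For the phase part, from $u\circ\sigma=u$ we get $[u]=\sigma^*[u]=-[u]$ in $H^1(\XL;\ZZ)$, hence $[u]=0$, so $u=e^{i\tilde u}$ for some continuous $\tilde u\:\XL\to\RR$. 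The lifts $\tilde u$ and $\tilde u\circ\sigma$ of $u$ differ by a constant in $2\pi\ZZ$, which must vanish because $\sigma$ has a fixed point; thus $\tilde u$ is $\sigma$-invariant, and $v:=e^{i\tilde u/2}$ satisfies $v(\sigma\lambda)v(\lambda)=e^{i\tilde u(\lambda)}=u(\lambda)$. Then $f=\rho v$ is nowhere zero and gives the desired section.

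The one genuinely load-bearing ingredient — more a pivot than an obstacle — is the existence of a $\sigma$-fixed point, which is what collapses the $\ZZ$-ambiguity in the logarithm of $u$; without it (a free involution) the analogous ``Real'' line bundle need not be trivial. Everything else is bookkeeping: checking the $\RR_{>0}\times\TT$ splitting is genuinely compatible with all the relations, and checking the two half-power constructions for $\rho$ and $v$ do what is claimed. I do not anticipate any further subtlety, so the proof should be short once the reduction to $c_1(L)=0$ and the invariant logarithm is in place.
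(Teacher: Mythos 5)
Your proof is correct and follows essentially the same strategy as the paper's: kill $c_1(L)$ from the antiinvariance $L\cong\sigma^*\overline{L}$ together with the fact that $\sigma^*$ acts trivially on $H^2(\XL;\ZZ)$, trivialize, take a square root of the gluing cocycle (its winding vanishes because of the cocycle identity), and use a $\sigma$-fixed point to pin down the residual $\zt$-ambiguity. The only difference is organizational: you split $\CC^\times\cong\RR_{>0}\times\TT$ and handle modulus and phase separately, isolating the fixed-point argument in the phase, whereas the paper extracts a square root of the full cocycle $f$ and then uses $|f|=1$ at a fixed point to fix the sign.
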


  \begin{proof}
 We first prove that there is a nowhere vanishing section, equivalently that
$L\to\XL$ is non-equivariantly trivializable, equivalently that $c_1(L)\in
H^2(\XL;\ZZ)$ vanishes.  Now any class in $H^2(\XL;\ZZ)$ is determined by its
value on the fundamental classes~$[T]$ of 2-dimensional subtori $T\subset
\XL$.  Since $L\cong \sigma ^*\overline{L}$, we have
  \begin{equation}\label{eq:172}
     \langle c_1(L),[T] \rangle = \langle c_1(\sigma ^*\overline{L}),[T]
     \rangle = \langle c_1(\overline{L}),\sigma _*[T] \rangle = -\langle
     c_1(L),[T] \rangle , 
  \end{equation}
and so the pairing $\langle c_1(L),[T] \rangle\in \ZZ$ vanishes.
 
Let $s'\:\XL\to L$ be a nowhere vanishing section of $L\to\XL$.  Define
$f\:\XL\to\CC^\times $ by $\alpha ^*s'=fs'$, where $\CC^\times \subset \CC$
is the multiplicative group of nonzero complex numbers; then $\sigma ^*\bar
f\cdot f=1$.  It follows that the winding number of~$f$ around any
1-dimensional subtorus~$S\subset \XL$ vanishes, whence $f$~has a square
root---a function $g\:\XL\to\CC^\times $ such that~$g^2=f$.  Then $(\sigma
^*\bar g\cdot g)^2=1$, and since at a fixed point~$\lambda \in F$ we have
$|f(\lambda )|=1$, it follows that $|g(\lambda )|=1$ and so $\sigma ^*\bar
g\cdot g=1$.  The nowhere vanishing section~$s=gs'$ is invariant: $\alpha
^*s=s$.  
  \end{proof}

Apply Lemma~\ref{thm:159} to $\Det\sEm\to\XL$ to conclude that there exists a
nowhere vanishing $\det\ts$-invariant section~$s$.  At a fixed point~$\lambda
\in F$ we have $s(\lambda )\in \Det_{\ts}\sEm_\lambda $, where recall
$\Det_{\ts}\sEm_\lambda \subset \Det\sEm_\lambda $ is a distinguished
oriented real line.  Define $\delta _\lambda (s)=\pm1$ according
as~$s(\lambda )$ is in~($+$) or not in~($-$) the half-line
of~$\Det_{\ts}\sEm_\lambda $ defined by the orientation.

  \begin{lemma}[]\label{thm:160}
 $\prod\limits_{\lambda \in F}\delta _\lambda (s)$ is independent of~$s$ if
$d=\dim \XL$ satisfies~$d\ge2$.
  \end{lemma}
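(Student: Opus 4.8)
The plan is to reduce the claim, via the dictionary of Lemma~\ref{thm:159}, to a sign computation at the $2^d$ fixed points of $\sigma $ on $\XL\cong \TT^d$, and then to settle that computation by an equivariant homotopy argument.

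First I would compare two nowhere vanishing $\det\ts$-invariant sections $s,s'$ of $\Det\sEm\to\XL$. Writing $s'=g\,s$ for a smooth map $g\:\XL\to\CC^\times $, the invariance relations $\alpha ^*s=s$ and $\alpha ^*s'=s'$, where $\alpha =\det\ts$ satisfies $\alpha ^2=\id$, force $g=\overline{g\circ\sigma }$. In particular $g(\lambda )\in \RR^\times $ at each $\lambda \in F$, and since $s(\lambda ),s'(\lambda )$ both lie on the oriented real line $\Det_{\ts}\sEm_\lambda $ and differ by the real scalar $g(\lambda )$, we get $\delta _\lambda (s')=\sign(g(\lambda ))\,\delta _\lambda (s)$. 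Hence $\prod_{\lambda \in F}\delta _\lambda (s')=\bigl(\prod_{\lambda \in F}\sign g(\lambda )\bigr)\prod_{\lambda \in F}\delta _\lambda (s)$, so it suffices to prove $\prod_{\lambda \in F}\sign g(\lambda )=+1$ for every map $g\:\XL\to\CC^\times $ with $g=\overline{g\circ\sigma }$, given $d\ge2$.

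Next I would replace $g$ by $h=g/|g|\:\XL\to\TT$, which is still equivariant ($h\circ\sigma =\bar h$) and has the same signs at the fixed points, so $\prod_{\lambda\in F}\sign g(\lambda )=\prod_{\lambda\in F}h(\lambda )\in\pmo$. The two ingredients are: (a) $\prod_{\lambda \in F}h(\lambda )$ is invariant under equivariant homotopy, since along such a homotopy each $h_t(\lambda )$, $\lambda \in F$, stays in the discrete set $\pmo$; and (b) every equivariant $h\:\TT^d\to\TT$ is equivariantly homotopic to $(-1)^m\chi $ for some group homomorphism $\chi \:\TT^d\to\TT$ and some $m\in\ZZ$. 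For (b): let $\chi $ be the homomorphism inducing the same map on $\pi_1$ as $h$ (any such homomorphism is automatically $\sigma $-equivariant); then $h\chi \inv $ is $\pi_1$-trivial, hence lifts to $\tilde k\:\TT^d\to\RR$; equivariance of $h\chi\inv$ forces $\tilde k\circ\sigma +\tilde k\equiv m$ for a constant integer $m$; the shifted map $\tilde k-m/2$ is odd under $\sigma $, and scaling it linearly to $0$ gives an equivariant homotopy from $(-1)^{m}h\chi \inv $ to the constant $1$. Finally, a direct computation: at a fixed point $\lambda $ corresponding to a sign vector $(\varepsilon _1,\dots ,\varepsilon _d)$ one has $\chi (\lambda )=(-1)^{\sum_{i:\varepsilon _i=-1}n_i}$, whence $\prod_{\lambda \in F}\chi (\lambda )=(-1)^{2^{d-1}\sum_i n_i}$, while $\prod_{\lambda \in F}(-1)^m=(-1)^{2^dm}$; both are $+1$ as soon as $d\ge2$. (This is precisely where the hypothesis $d\ge2$ enters: for $d=1$ an odd character yields product $-1$.)

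The main obstacle is step (b) — identifying the equivariant homotopy type of maps $\TT^d\to\TT$ and checking that the ``halved constant'' $m/2$ genuinely produces an equivariant deformation to a standard form; the rest is bookkeeping with the combinatorics of the $2^d$ fixed points and with signs of real scalars. One could alternatively phrase (a) and (b) cohomologically, via the restriction $H^1_{\ZZ/2}(\XL;\ZZ(1))\to\prod_{\lambda\in F}H^1_{\ZZ/2}(\mathrm{pt})$, but the elementary lifting argument above is self-contained and keeps the $d\ge2$ parity transparent.
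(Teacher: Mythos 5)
Your proof is correct, and its core arithmetic---a winding\mbox{-}number parity on a reflected circle together with the evenness of $2^{d-1}$ for $d\ge2$---is the same as the paper's, but you reach it by a longer, more structural route. The paper's proof simply splits $\XL=S'\times S''$ into a coordinate circle times a $(d-1)$\mbox{-}torus, notes that for each of the $|F''|=2^{d-1}$ points $\lambda''\in F''$ one has $\sign h(1,\lambda'')\cdot\sign h(\lambda',\lambda'')=(-1)^m$ with $m$ the (homotopy\mbox{-}invariant) winding number of $h$ restricted to a fiber circle $S'\times\{\lambda''\}$, and concludes $\prod_{\lambda\in F}\sign h(\lambda)=(-1)^{m\cdot 2^{d-1}}=1$ when $d\ge2$. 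You instead first classify equivariant maps $h\colon\TT^d\to\TT$ up to equivariant homotopy as $(-1)^m\chi$ for a character $\chi$, then evaluate $\prod_{\lambda\in F}(-1)^m\chi(\lambda)=(-1)^{2^dm+2^{d-1}\sum_i n_i}$. Your step (b) is verified correctly (the constancy of $\tilde k\circ\sigma+\tilde k$, the oddness of $\tilde k-m/2$, the linear scaling homotopy, and the automatic $\sigma$\mbox{-}equivariance of characters all check out), but it is extra machinery: the paper extracts the same pairing of fixed points without determining the full equivariant homotopy type of $h$. What your version buys is a clean separation of the two parity obstructions ($2^dm$ and $2^{d-1}\sum n_i$, the latter being exactly what fails for $d=1$) and a natural bridge to the Borel\mbox{-}equivariant cohomology reformulation you mention at the end; what the paper's buys is brevity.
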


  \begin{proof}
 Any other nowhere vanishing $\det\ts$-invariant section has the form~$hs$
for $h\:\XL\to\CC^\times $ with $\sigma ^*\bar h=h$.  Write $\XL=S'\times
S''$ for $S',S''\subset \XL$ subtori of dimensions~$1,d-1$, respectively.
Then $F=F'\times F''$ is the Cartesian product of the fixed point sets
of~$\sigma $ restricted to~$S',S''$.  Let $m\in \ZZ$~be the winding number
of~$h\res{S'}$.  Now at each fixed point $h$~is a nonzero real number, so
$\sign h=\pm1$ is well-defined.  If $F'=\{1,\lambda '\}$ and $\lambda ''\in
F''$ it follows that $\sign h(1,\lambda '')\cdot \sign h(\lambda ',\lambda
'')=(-1)^m$.  Since~$d\ge2$---equivalently, $F''\not= \emptyset $---we
conclude $ \prod\limits_{\lambda \in F}\sign h(\lambda )=1$, as desired. 
  \end{proof}

  \begin{definition}[]\label{thm:161}
 The \emph{Kane-Mele invariant} of $\sEm\to\XL$ is 
  \begin{equation}\label{eq:173}
     \kappa (\sEm) = \prod\limits_{\lambda \in F}\delta _\lambda (s)\;\in
     \pmo, 
  \end{equation}
where $s$~is any nowhere vanishing $\det\ts$-invariant section of
$\Det\sEm\to\XL$.  
  \end{definition}

\noindent
 Analogous to Proposition~\ref{thm:147}, and with almost the same proof, we
have the following. 

  \begin{proposition}[]\label{thm:162}
 $\kappa $~factors through a homomorphism $\bar\kappa $ of abelian groups; 
  \begin{equation}\label{eq:174}
     \xymatrix{ \TPI\Gpt\ar[rr]^\kappa\ar[dr] &&\pmo \\ &
      \GSI\Gpt\ar@{-->}[ur]_{\bar\kappa }} 
  \end{equation}
  \end{proposition}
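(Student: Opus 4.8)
The plan is to mimic the proof of Proposition~\ref{thm:147} essentially verbatim, since both invariants share the same formal structure: each is a $\pmo$-valued quantity defined on a finite rank twisted equivariant bundle $\sEm\to\XL$, and the group $\GSI\Gpt$ is by Definition~\ref{thm:135} the group completion of the commutative monoid $\TPI\Gpt$. By the universal property~\eqref{eq:160} of the Grothendieck completion, it suffices to check two things: first, that $\kappa$ depends only on the isomorphism class of the twisted equivariant bundle $\sEm\to\XL$ (which underlies a point of $\TPI\Gpt$ via Theorem~\ref{thm:133}); and second, that $\kappa$ is a homomorphism of commutative monoids, i.e.\ $\kappa(0)=1$ and $\kappa(\sEm_1\oplus \sEm_2)=\kappa(\sEm_1)\kappa(\sEm_2)$. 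Together these produce the dashed arrow $\bar\kappa$ in~\eqref{eq:174}, and its uniqueness is automatic from the universal property.

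\textbf{Key steps.} First I would note that well-definedness of $\kappa$ on the topological phase is already secured: $\kappa(\sEm)$ was defined in Definition~\ref{thm:161} as a quantity attached to the bundle $\sEm\to\XL$ with its antilinear lift $\ts$, and Lemma~\ref{thm:160} shows the product $\prod_{\lambda\in F}\delta_\lambda(s)$ is independent of the choice of invariant section $s$; isomorphic bundles give the same invariant since an isomorphism carries $\ts$ to $\ts$ and the oriented real lines $\Det_{\ts}\sEm_\lambda$ to each other, hence carries an invariant section to an invariant section with the same signs $\delta_\lambda$. Combined with Theorem~\ref{thm:133} (homotopic Type~I systems have isomorphic $\sEm$), this gives a well-defined map $\kappa\:\TPI\Gpt\to\pmo$. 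Second, I would verify the homomorphism property. For $\sEm=\sEm_1\oplus\sEm_2$ one has a canonical isomorphism $\Det\sEm\cong\Det\sEm_1\otimes\Det\sEm_2$, under which the antilinear lift $\det\ts$ corresponds to $\det\ts_1\otimes\det\ts_2$; if $s_i$ is a nowhere vanishing $\det\ts_i$-invariant section of $\Det\sEm_i\to\XL$ (guaranteed by Lemma~\ref{thm:159}), then $s=s_1\otimes s_2$ is a nowhere vanishing $\det\ts$-invariant section of $\Det\sEm$. At each fixed point $\lambda\in F$ the oriented real line $\Det_{\ts}\sEm_\lambda$ is the tensor product of the oriented real lines $\Det_{\ts_i}(\sEm_i)_\lambda$ — one checks this from formula~\eqref{eq:169} (or~\eqref{eq:171}) defining the canonical orientation, since a quaternionic basis of a direct sum is obtained by concatenating quaternionic bases of the summands, and the corresponding wedge products multiply. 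Hence $\delta_\lambda(s)=\delta_\lambda(s_1)\,\delta_\lambda(s_2)$ for each $\lambda\in F$, and taking the product over $F$ gives $\kappa(\sEm)=\kappa(\sEm_1)\kappa(\sEm_2)$. The zero bundle gives the empty product $\kappa(0)=1$. Invoking~\eqref{eq:160} then yields $\bar\kappa$.

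\textbf{Main obstacle.} The only genuinely non-formal point is the compatibility of the canonical orientations of the real determinant lines under direct sum — i.e.\ that the orientation of $\Det_{\ts}\sEm_\lambda$ built from~\eqref{eq:169} is compatible with the tensor factorization $\Det\sEm_\lambda\cong\Det(\sEm_1)_\lambda\otimes\Det(\sEm_2)_\lambda$. This requires a small sign check: the wedge reorganization that moves the basis vectors $(e_i\wedge Je_i)$ of the first summand past those of the second introduces a sign, and one must confirm that this sign is $+1$ because each block $e_i\wedge Je_i$ is of even degree (degree $2$) and hence commutes past everything. Once this is seen, the fixed-point signs multiply cleanly and the rest is bookkeeping. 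I do not expect any difficulty with the hypothesis $d=\dim\XL\ge 2$ needed for Lemma~\ref{thm:160}, since in all the physically relevant cases $d\ge 2$; if one wanted the statement for $d=1$ it would need separate handling, but the cited use in~\S\ref{sec:6} already assumes $d\ge 2$ (indeed $d=2,3$ in the computations).
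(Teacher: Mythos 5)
Your proposal is correct and follows essentially the same route as the paper, which reduces the claim to showing $\kappa$ is a monoid homomorphism (the paper invokes the universal property of group completion, the analogy with Proposition~\ref{thm:147}, and verifies $\kappa(E_1\oplus E_2)=\kappa(E_1)\kappa(E_2)$ via $\Det(E_1\oplus E_2)\cong\Det E_1\otimes\Det E_2$ and tensoring invariant sections). You flesh out the paper's one-sentence argument with the orientation-compatibility check at fixed points; that check is valid (and arguably automatic, since the canonical determinant isomorphism is itself defined by concatenating ordered bases, matching~\eqref{eq:169} without any wedge reordering), but either way the conclusion stands.
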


  \begin{proof}
 If $E=E_1\oplus E_2$ is a direct sum, then $\Det E\cong \Det E_1\otimes \Det
E_2$ and we use the tensor product of nowhere vanishing invariant sections to
verify that $\kappa (E)=\kappa (E_1)\kappa (E_2)$ is a homomorphism. 
  \end{proof}

  \begin{remark}[]\label{thm:163}
 Recall from~\eqref{eq:203} that $\GS\Gpt\cong KR^{-4}(\XL)$.  We observe
that there is a Kane-Mele invariant defined on each subtorus~$T\subset \XL$
of dimension at least two.
  \end{remark}

In case the system admits a parity-reversing symmetry, so the symmetry group
is~\eqref{eq:168}, then there is a well-known easy formula for~$\kappa
(\sEm)$.  The parity-reversing symmetry also acts on~$\XL$ by the
involution~$\sigma $, and its lift~$\epsilon $ to~$\sEm\to\XL$ is complex
linear, squares to~$\id_{\sEm}$, and commutes with~$\ts$.  The composition
$j=\epsilon \circ \ts$ then covers~$\id_{\XL}$, is complex antilinear,
and~$j^2=-\id_{\sEm}$.  Thus $j$~is a \emph{global} quaternionic structure on
$\sEm\to\XL$.  Now at a fixed point~$\lambda \in F$ there are two
quaternionic structures---$\ts_\lambda $ and~$j_\lambda $---and $\ts_\lambda
=\epsilon _\lambda \circ j_\lambda $, where $\epsilon _\lambda \in
\End\sEm_\lambda $ squares to~$\id_{\sEm_\lambda }$ and commutes
with~$j_\lambda $.  Therefore, there is a quaternionic basis~$e_1,\dots
,e_\ell $ of~$(\sEm_\lambda ,j\mstrut _\lambda )$ such that $\epsilon _\lambda
(e_i)=\pm e_i$ for each~$i$.  Let $\Delta _\lambda =\pm1$ be the product over
the basis of the signs. 

  \begin{proposition}[]\label{thm:164}
 $\kappa (\sEm)=\prod\limits_{\lambda \in F}\Delta _\lambda $. 
  \end{proposition}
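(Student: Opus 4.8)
The plan is to localize at the $2^d$ fixed points of $\sigma$ and, at each such point $\lambda\in F$, to compare the orientation of $\Det_{\ts}\sEm_\lambda$ used in Definition~\ref{thm:161} with the orientation coming from the \emph{global} quaternionic structure $j=\epsilon\circ\ts$. The point I want to extract is that these are two orientations of one and the same real line, differing precisely by $\Delta_\lambda$; once that is in hand, the orientability of a suitable global real line bundle lets me choose a single $\det\ts$-invariant section adapted to all fixed points simultaneously.

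First I would do the linear algebra at a fixed point $\lambda\in F$. From $j=\epsilon\ts$, $\epsilon^2=\id$, $\ts^2=-\id$ and $[\epsilon,\ts]=0$ one gets $\epsilon\circ j=\ts$ and $j^2=-\id$, so both $\det\ts$ and $\det j$ are genuine antilinear real structures on the complex line $\Det\sEm_\lambda$, and $\det\ts=\det\epsilon\circ\det j$. Since the eigenvalues of $\epsilon_\lambda$ occur in pairs $\eta_i,\eta_i$ on the complex basis $e_i,j_\lambda e_i$ (where $e_1,\dots,e_\ell$ is the quaternionic basis of $(\sEm_\lambda,j_\lambda)$ with $\epsilon_\lambda e_i=\eta_i e_i$), we have $\det\epsilon_\lambda=\prod_i\eta_i^2=\id$, hence $\det\ts_\lambda=\det j_\lambda$: the real lines $\Det_{\ts}\sEm_\lambda$ and $\Det_j\sEm_\lambda$ coincide. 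Because $\epsilon_\lambda$ commutes with $j_\lambda$, $\ts_\lambda e_i=\epsilon_\lambda j_\lambda e_i=\eta_i j_\lambda e_i$, so $e_1,\dots,e_\ell$ is also a quaternionic basis for $(\sEm_\lambda,\ts_\lambda)$, and comparing the orienting vectors of \eqref{eq:169} for the two structures gives
\[
(e_1\wedge\ts_\lambda e_1)\wedge\cdots\wedge(e_\ell\wedge\ts_\lambda e_\ell)=\Bigl(\prod_i\eta_i\Bigr)\,(e_1\wedge j_\lambda e_1)\wedge\cdots\wedge(e_\ell\wedge j_\lambda e_\ell)=\Delta_\lambda\cdot(\text{the }\det j\text{-orienting vector}).
\]
Thus the two orientations of this common line agree exactly when $\Delta_\lambda=+1$.

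Next I would produce the global section. The global quaternionic structure $j$ makes $\Det\sEm\to\XL$ a complex line bundle with real structure $\det j$, whose fixed subbundle $\Det_j\sEm\to\XL$ is a real line bundle carrying the continuously varying orientation~\eqref{eq:169}; being oriented, it is trivial and hence admits a nowhere-vanishing everywhere-positive section. I claim such a section can be taken $\det\ts$-invariant: one checks $\ts j=j\ts=-\epsilon$, so $\det\ts$ commutes with $\det j$ and maps $\Det_j\sEm$ to itself, acting there by $\det\epsilon$; since $\epsilon$ carries a quaternionic frame of $(\sEm_\lambda,j)$ to one of $(\sEm_{\sigma\lambda},j)$, $\det\epsilon$ sends the positive half-line of $\Det_j\sEm_\lambda$ to that of $\Det_j\sEm_{\sigma\lambda}$, and therefore so does $\det\ts$. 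Averaging any positive section $s_0$ over the involution $\det\ts$ — replacing it by $\lambda\mapsto s_0(\lambda)+\det\ts\bigl(s_0(\sigma\lambda)\bigr)$, a sum of two positive multiples of $s_0(\lambda)$ — yields the desired $\det\ts$-invariant, everywhere-positive section $s$.

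Combining the two steps finishes the proof: for this $s$ and each $\lambda\in F$, $s(\lambda)$ is a positive multiple of the $\det j$-orientation of $\Det_j\sEm_\lambda=\Det_{\ts}\sEm_\lambda$, hence by the first step a positive multiple of $\Delta_\lambda$ times the $\det\ts$-orientation, so $\delta_\lambda(s)=\Delta_\lambda$; therefore $\kappa(\sEm)=\prod_{\lambda\in F}\delta_\lambda(s)=\prod_{\lambda\in F}\Delta_\lambda$, where Lemma~\ref{thm:160} guarantees independence of the choice of $s$. The only step carrying any subtlety is the averaging argument, which requires that $\det\ts$ preserve the positive half-lines fibrewise — this comes down to $\det\epsilon$ mapping quaternionic frames to quaternionic frames — and I do not expect a genuine obstacle, only the need to keep straight which of $\det j$, $\det\ts$, $\det\epsilon$ covers $\sigma$ and which covers $\id_{\XL}$ when composing them.
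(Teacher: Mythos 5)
Your proof is correct and hinges on the same pointwise computation as the paper's: the comparison of orienting vectors
$(e_1\wedge\ts_\lambda e_1)\wedge\cdots\wedge(e_\ell\wedge\ts_\lambda e_\ell)=\Delta_\lambda\,(e_1\wedge j e_1)\wedge\cdots\wedge(e_\ell\wedge je_\ell)$
is exactly the paper's formula~\eqref{eq:175}. The only place you diverge is in producing the global $\det\ts$-invariant section: the paper fixes a $j$-, $\epsilon$-, $\ts$-invariant hermitian metric and takes $s=\pfaff\omega_j$ (which is automatically $\det\ts$-invariant thanks to the metric invariance, though the paper only says $\det j$-invariant), whereas you observe that the oriented real line bundle $\Det_j\sEm$ has an everywhere-positive section $s_0$ and then average it against the involution $\det\ts$. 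Your averaging argument is sound: on the real subbundle $\Det_j\sEm$ the map $\det\ts$ restricts to $\det\epsilon$, which carries quaternionic frames to quaternionic frames and hence preserves the positive half-lines, so $s_0(\lambda)+\det\ts\bigl(s_0(\sigma\lambda)\bigr)$ is nonzero and $\det\ts$-invariant (using $(\det\ts)^2=\det(-\id)=+1$). Both constructions deliver a section with $s(\lambda)$ positively oriented in $\Det_j\sEm_\lambda=\Det_{\ts}\sEm_\lambda$ at each fixed point, after which $\delta_\lambda(s)=\Delta_\lambda$ follows. Your route is slightly more elementary — it avoids the pfaffian machinery — at the cost of the averaging step; the paper's route gets invariance of $s$ for free from the invariant metric. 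Either way the essential content is the same.
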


  \begin{proof}
 Fix a hermitian metric on $\sEm\to\XL$ which is invariant under~$j$,
$\epsilon $, and hence also~$\ts$.  Then as in~\eqref{eq:171} the pfaffian
$s=\pfaff \omega _j$ is a nowhere vanishing $\det j$-invariant section of
$\sEm\to\XL$.  To compute~$\kappa $ we compare the canonical orientations
of~$\Det_j\sEm=\Det_{\ts}\sEm$ at each~$\lambda \in F$.  That the ratio
is~$\Delta _\lambda $ follows from the formula 
  \begin{equation}\label{eq:175}
     (e_1\wedge \ts_\lambda e_1)\wedge \cdots\wedge (e_\ell \wedge
     \ts_\lambda e_\ell ) =\Delta _\lambda \,(e_1\wedge je_1)\wedge
     \cdots\wedge (e_\ell \wedge je_\ell ), 
  \end{equation}
where as above $e_1,\dots ,e_\ell $ is a quaternionic basis of~$(\sEm_\lambda
,j\mstrut _\lambda )$.   
  \end{proof}

  \begin{proposition}[]\label{thm:182}
 The invariant~$\bar\kappa $ in Proposition~\ref{thm:162} vanishes on the
three ``weak'' $\zt$'s in the image of the projections~\eqref{eq:206} and is
the identity map on the strong~$\zt$ which is the image of the collapse
map~\eqref{eq:201}. 
  \end{proposition}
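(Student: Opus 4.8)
The plan is to argue exactly as in Proposition~\ref{thm:181}, replacing the Chern--Simons computation by the Kane--Mele computation, so the proof is essentially a transcription of that one. First I would handle the three ``weak'' $\zt$'s: if $\sEm\to\cir\times \cir\times \cir$ is pulled back along one of the projections~\eqref{eq:206} from a bundle $\sEm'\to\cir\times \cir$, then $\Det\sEm$ and its invariant section $s$ are likewise pulled back, and the fixed-point set $F=F'\times F''$ of $\sigma $ factors compatibly, with $F''=\{1,\lambda ''\}$ the two fixed points on the collapsed circle. For each $\lambda '\in F'$ the two signs $\delta _{(\lambda ',1)}(s)$ and $\delta _{(\lambda ',\lambda '')}(s)$ are \emph{equal}, since $s$, the orientation of $\Det_{\ts}\sEm$, and the quaternionic structure $\ts$ all agree at the two points lying over the same point of $\cir\times \cir$. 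Hence the signs cancel in pairs and $\kappa (\sEm)=1$; since by Proposition~\ref{thm:162} $\kappa $ factors through $\GSI\Gpt$ and these weak classes generate the three $\zt$ summands coming from the $S^2$'s in~\eqref{eq:202}, the invariant $\bar\kappa $ vanishes on them.

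Next I would show $\bar\kappa $ is the identity on the strong $\zt$, the image of the collapse map $q\:\XL\to S^3$ in~\eqref{eq:201}. I reuse the explicit model built in the proof of Proposition~\ref{thm:181}: $E\to S^3$ is the trivial rank-one quaternionic bundle with right $j$-multiplication as its global quaternionic structure, and $\epsilon $ is the trivial lift of $x\mapsto -x$ composed with left multiplication by the identity map $f\:S^3\to U\approx S^3$ satisfying~\eqref{eq:208}. Pulling back along $q$ gives a bundle $q^*E\to\XL$ with exactly the symmetry type~\eqref{eq:168}, so $\kappa $ can be computed by Proposition~\ref{thm:164}: at each of the two fixed points $x=0,\infty $ of $\sigma $ on $S^3$ I must compare the quaternionic structure $j$ with $\ts=\epsilon \circ j$, i.e.\ compute the sign $\Delta $ by which left multiplication by $f(x)\in U$ acts on a $j$-basis. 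At $x=\infty $, $f(\infty )=1$ acts trivially so $\Delta _\infty =+1$; at $x=0$, $f(0)=-1$ acts as $-\id$ on the one-dimensional quaternionic fiber, so $\Delta _0=-1$. Therefore $\kappa (q^*E)=\Delta _0\Delta _\infty =-1$, which is the nontrivial element. Since $\kappa $ is a homomorphism on $\GSI\Gpt$, this shows $\bar\kappa $ restricted to the strong $\zt$ is an isomorphism onto $\pmo$, i.e.\ the identity map.

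The only point requiring a little care—and the step I expect to be the main obstacle—is verifying that the sign $\Delta _0$ is genuinely $-1$ rather than $+1$, because the comparison of orientations of $\Det_{\ts}\sEm_\lambda $ versus $\Det_j\sEm_\lambda $ in~\eqref{eq:175} is sign-sensitive and one must be scrupulous about which quaternionic structure is being used to trivialize which determinant line. Concretely, one needs that replacing $j$ by $\ts=\epsilon \circ j$ with $\epsilon =-\id$ on a one-dimensional quaternionic space multiplies the canonical orienting vector $e\wedge je$ by $\Delta =(-1)^1=-1$ (and not by $(-1)^2$, which would be the case if $\epsilon $ acted by $-\id$ on the underlying \emph{complex} two-dimensional space but with a fixed quaternionic basis). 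This is exactly the content of formula~\eqref{eq:175} with $\ell =1$, so it is already available; I would simply cite it. A secondary bookkeeping issue is confirming that $q\:\XL\to S^3$ is a diffeomorphism off a measure-zero set so that the fixed points of $\sigma $ on $\XL$ relevant to $\kappa $ correspond to $0,\infty \in S^3$ (the other fixed points on $\XL$ all map into the collapsed boundary and contribute trivially, as in the argument for Proposition~\ref{thm:181}); this is the same geometric fact used there. With these observations in place the proof is complete.
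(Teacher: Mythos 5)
Your overall approach---weak case by pullback and pairwise cancellation over $F=F'\times F''$, strong case by pulling back the rank-one quaternionic bundle $E\to S^3$ built in the proof of Proposition~\ref{thm:181} and applying Proposition~\ref{thm:164}---is the same as the paper's, and the local computations $\Delta_\infty=+1$ and $\Delta_0=-1$ (from~\eqref{eq:175} with $\ell=1$) are correct. The weak half is fine.

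In the strong half, however, there is a real gap in how you finish. By Definition~\ref{thm:161}, $\kappa(q^*E)$ is a product over all $2^3=8$ fixed points of $\sigma$ on $\XL$, not over the two fixed points $\{0,\infty\}$ of $S^3$; your expression $\kappa(q^*E)=\Delta_0\Delta_\infty$ treats the product as having only two factors. You flag the discrepancy but propose to close it by the measure-zero argument borrowed from Proposition~\ref{thm:181}. That heuristic is legitimate there because the Chern--Simons invariant is an \emph{integral}, insensitive to a measure-zero locus, but it does not apply to the Kane--Mele \emph{finite product}: every fixed point of $\XL$ contributes a multiplicative factor regardless of the measure of its image under $q$. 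The missing step is the explicit count: the collapse map $q$ sends the $2^3-1=7$ fixed points lying on $\partial R$ to $\infty\in S^3$ and the single interior fixed point $(0,0,0)$ to $0\in S^3$, and since $q^*E$ is a pullback one has $\Delta_\lambda=\Delta_{q(\lambda)}$, so $\kappa(q^*E)=(+1)^7\cdot(-1)=-1$. Your formula gives the right number only because $\Delta_\infty=+1$ and because an \emph{odd} number of fixed points of $\XL$ happen to map to the interior; neither fact is established in the argument, so the strong-$\zt$ computation as written is incomplete.
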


  \begin{proof}
 If the bundle $\sEm\to \cir\times \cir\times \cir$ is pulled back from a
bundle over~$\cir\times \cir$, then we can choose the section~$s$ to also be
a pullback.  Since pairs of fixed points in the 3-torus are mapped to a
single fixed point of the 2-torus, we see that each sign in~\eqref{eq:173}
occurs with even multiplicity, whence $\kappa (\sEm)=+1$. 
 
Now consider the bundle pulled back from the trivial bundle $E\to S^3$ via
$q\:\cir\times \cir\times \cir\to S^3$, as in the paragraph
containing~\eqref{eq:208}.  We describe~$q$ more explicitly.  Choose an
isomorphism $\XL\cong \RR^3/\ZZ^3$ and map the fundamental domain
  \begin{equation}\label{eq:178}
      R=\{x=(x^1,\dots ,x^3)\in \RR^3:-1/2\le x^i\le 1/2\} 
  \end{equation}
to~$\RR^3\cup\{\infty \}$ by $\pi \:x\mapsto x/\epsilon (x)$, where $\epsilon
(x)$~is the Euclidean distance from~$x$ to~$\partial R$.  Then $q$~maps
$2^3-1$~points of~$F\subset \XL\approx \cir\times \cir\times \cir$ to~$\infty
\in S^3$ and the remaining fixed point to~$0\in S^3$.  Since $q^*E\to \XL$
has both time-reversal and parity-reversal symmetry, we can apply
Proposition~\ref{thm:164}.  It is easy to see $\Delta _\lambda =+1$ at
$2^3-1$~points of~$F$ and $\Delta _\lambda =-1$ at the remaining fixed point,
whence $\kappa (q ^*E)=-1$, as desired.
  \end{proof}

Proposition~\ref{thm:181} and Proposition~\ref{thm:182} together with the
computation Theorem~\ref{thm:180}(ii) and the remark around~\eqref{eq:206}
immediately imply the following result of Wang-Qi-Zhang.

  \begin{corollary}[\cite{WQZ}]\label{thm:183}
 The orbital magnetoelectric polarizability and 3-dimensional Kane-Mele
invariant are equal: $\theta =\kappa $.
  \end{corollary}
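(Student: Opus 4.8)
The plan is to deduce Corollary~\ref{thm:183} purely formally from the preceding results, without any further differential-geometric computation. The key point is that both $\theta$ and $\kappa$ have been exhibited (Proposition~\ref{thm:147} and Proposition~\ref{thm:162}) as homomorphisms out of the same abelian group $\GSI\Gpt$, which by~\eqref{eq:203} and Theorem~\ref{thm:180}(ii) is $KR^{-4}\bigl((\cir)^{\times 3}\bigr)\cong \ZZ\times(\zt)^{\times 4}$. So to show $\theta=\kappa$ as functions on $\TPI\Gpt$ it suffices to show that the induced homomorphisms $\bar\theta,\bar\kappa\:\GSI\Gpt\to\zt$ agree, and for that it suffices to check agreement on a set of generators of the group.

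The generators are provided by the structure of the computation: the four $\zt$-summands correspond to the $S^2$'s and $S^3$ in the equivariant stable splitting~\eqref{eq:202}, and both $\bar\theta$ and $\bar\kappa$ are known on all of these. First I would invoke Proposition~\ref{thm:181} and Proposition~\ref{thm:182}: $\bar\theta$ and $\bar\kappa$ \emph{both} vanish on the three ``weak'' $\zt$'s arising from the projections~\eqref{eq:206} (equivalently, the $S^2$-summands), and \emph{both} restrict to the identity map on the ``strong'' $\zt$ arising from the collapse map~\eqref{eq:201} (the $S^3$-summand). Next I would note that the remark around~\eqref{eq:206}, i.e.\ that the three projections~\eqref{eq:206} induce injections $\ZZ\times\zt\hookrightarrow\ZZ\times(\zt)^{\times4}$ whose images together account for the three weak $\zt$'s, shows that the four $\zt$-factors are exactly the strong one plus the three weak ones, so this really is a generating set for the torsion subgroup. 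Since $\bar\theta$ and $\bar\kappa$ take values in $\zt$, they automatically kill the free $\ZZ$-summand $\widetilde{KR}^{-4}(S^3)\to\GSI\Gpt$ only up to its reduction mod two—but Theorem~\ref{thm:180}(iii) says precisely that the image of $\widetilde{KR}^{-4}(S^3)$ in $\widetilde{KR}^{-4}((\cir)^{\times3})$ is cyclic of order two, so the free part does not survive and the strong $\zt$ is the mod-two reduction of the integer invariant. Hence agreement on the strong and weak $\zt$'s is agreement on all of $\GSI\Gpt$, giving $\bar\theta=\bar\kappa$ and therefore $\theta=\kappa$.

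The only real content, then, is Proposition~\ref{thm:181} and Proposition~\ref{thm:182}, and those are already proved in the excerpt; the Corollary itself is a bookkeeping statement. The step I expect to require the most care in writing up is the identification of the four $\zt$-summands of $KR^{-4}((\cir)^{\times3})$ with the three weak classes (images of~\eqref{eq:206}) together with the one strong class (image of $q^*$ from~\eqref{eq:201}), and the verification that these four together generate—this uses Theorem~\ref{thm:179}/\eqref{eq:202}, the injectivity statement after~\eqref{eq:206}, and Theorem~\ref{thm:180}(iii). Once that generating set is in hand, matching $\bar\theta$ and $\bar\kappa$ generator-by-generator is immediate from the two cited propositions, and no further argument is needed. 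There is no genuine obstacle; the work has been front-loaded into the earlier computations, which is exactly why the Corollary can be stated as an ``immediate'' consequence.
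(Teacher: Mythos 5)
Your overall strategy---reduce to a generator-matching argument using the computation $\GSI\Gpt\cong KR^{-4}\bigl((\cir)^{\times3}\bigr)\cong\ZZ\times(\zt)^{\times4}$ and then invoke Propositions~\ref{thm:181} and~\ref{thm:182}---is the right one and is the same as the paper's. However, your handling of the free $\ZZ$-summand contains a genuine error that leaves a gap.

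You identify the $\ZZ$ in $\ZZ\times(\zt)^{\times4}$ with ``the free $\ZZ$-summand $\widetilde{KR}^{-4}(S^3)\to\GSI\Gpt$'' and then invoke Theorem~\ref{thm:180}(iii) to conclude that this free part ``does not survive.'' This conflates two different things. The group $\widetilde{KR}^{-4}(S^3)$ is already $\zt$ (via $\widetilde{KR}^{-4}(S^3)\cong KR^{-1}(\pt)\cong\zt$ by the twisted Thom isomorphism~\eqref{eq:200}), and its image under $q^*$ is precisely the ``strong'' $\zt$; that is the content of Theorem~\ref{thm:180}(iii). The free $\ZZ$-summand of $KR^{-4}\bigl((\cir)^{\times3}\bigr)$ is instead the basepoint contribution, $KR^{-4}(\pt)\cong KO^{-4}(\pt)\cong\ZZ$---it tracks rank, not anything coming from the top cell $S^3$. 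Consequently the four $\zt$'s together with the strong/weak identifications do \emph{not} generate the full group, and your claim that ``agreement on the strong and weak $\zt$'s is agreement on all of $\GSI\Gpt$'' is false as written: you still owe a verification that $\bar\theta$ and $\bar\kappa$ agree on a generator of the $\ZZ$-summand.

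The missing step is short, and it is implicit in the proofs (not just the statements) of Propositions~\ref{thm:181} and~\ref{thm:182}: each shows that the invariant vanishes on \emph{any} bundle pulled back along one of the projections~\eqref{eq:206}, hence on the entire image $\ZZ\times\zt\hookrightarrow\ZZ\times(\zt)^{\times4}$ of the induced injection on $KR^{-4}$, and the $\ZZ$-summand lies in that image. (Equivalently: both invariants manifestly vanish on a trivial bundle with trivial lift of $\sigma$, and such bundles generate the basepoint $\ZZ$.) With this added, your generator-by-generator comparison goes through and the corollary follows exactly as the paper intends; without it, the argument does not close because a $\zt$-valued homomorphism out of $\ZZ\times(\zt)^{\times4}$ is not determined by its restriction to the torsion subgroup.
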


\appendix

   \section{Group extensions}\label{sec:13}
% lastsubsec@000

Group extensions appear throughout the main body of this paper, and for the
reader's convenience we summarize some basic definitions in this appendix.
We do not comment further on topology, but say once and for all that for
topological groups all group homomorphisms are assumed continuous, and for
Lie groups all group homomorphisms are assumed smooth.

  \begin{definition}[]\label{thm:54}
 A \emph{group extension} is a sequence of group homomorphisms 
  \begin{equation}\label{eq:70}
     1\longrightarrow G'\xrightarrow{\;\;i \;\;} G\xrightarrow{\;\;\pi \;\;}
     G''\longrightarrow      1 
  \end{equation}
such that the composition of any two consecutive arrows is the identity and,
more strongly, the kernel of any homomorphism equals the image of the
preceding homomorphism.  We call~ $G'$ the \emph{kernel} and $G''$~the
\emph{quotient}.
  \end{definition}

\noindent
 The equality of kernel and image is called \emph{exactness}, and it has
consequences at the three ``interior nodes'' of~\eqref{eq:70}.  At~ $G'$ it
implies that $i$~is injective (1:1).  We use the inclusion~$i$ to
identify~$G'$ with its image, which is a subgroup of~$G$.  At~$G$ the
exactness implies that $\pi $~ factors through an injective map of the
quotient group~$G/G'$ into $G''$.  At~$G''$ the exactness implies that $\pi
$~ is surjective (onto), hence the quotient map $G/G'\to G''$ an isomorphism.
We use it to identify~$G''$ as this quotient.

As with any algebraic structure, there is a notion of homomorphism. 

  \begin{definition}[]\label{thm:88}
 Let $G',G''$~be groups and $G^{\tau _1},G^{\tau _2}$ group extensions with
kernel~$G'$ and quotient~$G''$.  A \emph{homomorphism of group extensions~
$G^{\tau _1},G^{\tau _2}$} is a group homomorphism $\varphi \:G^{\tau _1}\to
G^{\tau _2}$ which fits into the commutative diagram
  \begin{equation}\label{eq:19}
     \xymatrix@R-15pt{&&G^{\tau _1}\ar[dd]^{\varphi }\ar[dr]\\ 1 \ar[r]& G'
     \ar[ur]\ar[dr] && G'' \ar[r] & 1\\ && G^{\tau _2}\ar[ur]} 
  \end{equation} 
  \end{definition}

\noindent
 As usual, $\varphi $~is an \emph{isomorphism} if there exists a homomorphism
$\psi \:G^{\tau _2}\to G^{\tau _1}$ of group extensions so that the
compositions $\psi \circ \varphi $~and $\varphi \circ \psi $~are identity
maps, which is simply equivalent to the condition that $\varphi $~be an
isomorphism of groups.. 

There is a notion of a trivialization of a group extension.

  \begin{definition}[]\label{thm:86}
 A \emph{splitting} of the group extension~\eqref{eq:70} is a homomorphism
$j\:G''\to G$ such that $\pi \circ j=\id_{G''}$.    
  \end{definition}

\noindent
 Not every group extension is split: for example, the cyclic group of order~4
is a nonsplit extension of the cyclic group of order~2 by the cyclic group of
order~2:  
  \begin{equation}\label{eq:91}
     1\longrightarrow \zmod2\longrightarrow \zmod4\longrightarrow
     \zt\longrightarrow 1 
  \end{equation}
For example, the group extension~\eqref{eq:12} is split whereas
\eqref{eq:13}~is not split.  A splitting $j\:G''\to G$ of~\eqref{eq:70}
determines a homomorphism
  \begin{equation}\label{eq:92}
     \alpha \:G''\longrightarrow \Aut(G'), 
  \end{equation}
which we also call an action of~$G''$ on~$G'$.  Namely, the action of $g''\in
G''$ is
  \begin{equation}\label{eq:93}
     \alpha (g'')(g')=gg'g\inv , \quad g'\in G',
  \end{equation}
where $g=j(g'')\in G$.  (We remark that if $G'$~is \emph{abelian}, then
~\eqref{eq:92}~is defined without a splitting: let $g$~be any element of~$G$
such that $\pi (g)=g''$.)  There is a converse construction. 

  \begin{definition}[]\label{thm:87}
 Let $G',G''$ be groups and $\alpha \:G''\to\Aut(G')$ a homomorphism.  The
\emph{semidirect product} $G=G''\ltimes _\alpha G'$ is defined to be the
Cartesian product set ~$G''\times G'$ with multiplication defined so that
$G''\times 1$ and $1\times G'$ are subgroups, and
  \begin{equation}\label{eq:22}
     g''\cdot g' = \alpha (g'')(g')\cdot g'',\qquad g''\in G'',\quad g'\in
     G'. 
  \end{equation} 
  \end{definition}

\noindent
 A semidirect product sits in a split extension 
  \begin{equation}\label{eq:94}
     \xymatrix{1\ar[r] &G'\ar[r]^{i}&G\ar@<-.5ex>[r]_\pi
     &G''\ar@<-.5ex>[l]_j\ar[r]&1}  
  \end{equation}
and conversely every split extension is isomorphic to a semidirect product.
We say that a group extension is \emph{trivializable} if it is isomorphic to
a split extension; a choice of splitting is a trivialization.
 
Group extensions can be pulled back and, in very special situations, pushed
out.

  \begin{definition}[]\label{thm:89}
 Suppose \eqref{eq:70}~is a group extension and $\rho'' \:\tilG''\to G''$ a
group homomorphism.  Then there is a \emph{pullback} group extension with
kernel~$G'$ and quotient~$\tilG''$ which fits into the commutative diagram
  \begin{equation}\label{eq:95}
     \xymatrix{1 \ar[r]& G' \ar[r] \ar@{=}[d] & \tilG \ar[d]^\rho
     \ar[r]^{\tilde\pi } & \tilG'' 
     \ar[r]\ar[d]^{\rho''} & 1\\ 1 \ar[r] &G' \ar[r] &G
     \ar[r]^\pi  &G''\ar[r]&1}  
  \end{equation}
It is defined by setting
  \begin{equation}\label{eq:96}
     \tilG=\bigl\{(g,\tg'')\in G\times \tilG'' : \pi (g) = \rho'' (\tg'')
     \bigr\}.  
  \end{equation} 
  \end{definition}

\noindent
 Note that $\tilG$~is a subgroup of the direct product group~$G\times
\tilG''$.  The homomorphisms ~$\rho $ and $\tilde\pi $~are defined by
restricting the projections onto the factors of~$G\times \tilG''$,
respectively.

  \begin{definition}[]\label{thm:106}
 Suppose \eqref{eq:70}~is a group extension and $\rho '\:G'\to\tilG'$ a group
homomorphism with $\tilG'$ abelian.  Assume further that the action of~$G''$
on~$G'$ fixes~$\rho '$.  Then there is an \emph{associated group extension}
with kernel~$\tilG'$ and quotient~$G''$ which fits into the commutative
diagram
  \begin{equation}\label{eq:112}
     \xymatrix{1 \ar[r]& G' \ar[r]^{i} \ar[d]^{\rho'} & G \ar[d]^\rho
     \ar[r] & G'' \ar[r]\ar@{=}[d] & 1\\ 1 \ar[r] &\tilG'
     \ar[r]^{\tilde{i}} &\tilG \ar[r] &G''\ar[r]&1} 
  \end{equation}
It is defined by setting 
  \begin{equation}\label{eq:113}
     \tilG = G\times \tilG'\bigm/\sim,\qquad (g,\tg')\sim\bigl(i(g')g,\rho
     '(g')\tg'\bigr) 
  \end{equation}
for all $g\in G$, $\tg'\in \tilG'$, $g'\in G'$. 
  \end{definition}

\noindent 
 If we view the top row of \eqref{eq:112} as a principal $G'$-bundle, then
this is the associated bundle construction; the hypotheses allow us to
descend the product group structure on~$G\times \tilG'$ to~$\tilG$.

   \section{Clifford algebras and the 10-fold way}\label{sec:12}
% lastsubsec@000

The classic reference is~\cite{ABS}.  We begin with a summary of some basics.
Recall that for each integer~$n\in \ZZ$ the standard Clifford algebra~$\Cl n$
is the real associative algebra with unit generated by~$|n|$
elements~$e_1,e_2,\dots ,e_{|n|}$ with $e_ie_j+e_je_i=0$ if~$i\not= j$, and
$e_i^2=\sign(n)$ for all~$i$.  The Clifford algebra~$\Cl0$ is the
field~$\RR$.  The Clifford algebra is naturally $\zt$-graded: the
generators~$e_i$ are odd.  The corresponding \emph{complex} superalgebras are
denoted~$\Clc n$.  For $n$~even the complex Clifford algebra~$\Clc n$ is
isomorphic to the superalgebra of endomorphisms of a $\zt$-graded complex
vector space~$S=S^0\oplus S^1$ with $\dim S^0=\dim S^1$.  As usual, after
choosing a basis we write elements of~$\End S$ as square matrices, which
relative to the decomposition $S=S^0\oplus S^1$ have a $2\times 2$~block
form.  The even elements of~$\End S$ are block diagonal matrices; the odd
elements are block off-diagonal matrices.  Over the reals the Clifford
algebras~$\Cl n$ are matrix superalgebras if $n$~is divisible by~8.  It
happens that as \emph{ungraded} algebras some other Clifford algebras are
ungraded matrix algebras.  For example, as an ungraded algebra $\Cl 2$~is the
algebra of $2\times 2$~real matrices.  This algebra is generated by
  \begin{equation}\label{eq:71}
     e_1 = \begin{pmatrix} 1&0\\0&-1 \end{pmatrix},\qquad 
     e_2 = \begin{pmatrix} 0&1\\1&0
     \end{pmatrix} 
  \end{equation}
which satisfy the Clifford relations.  However, the matrix which represents
$e_1$~is diagonal hence even, not odd.  Thus the isomorphism~\eqref{eq:71}
between~$\Cl2$ and $2\times 2$~matrices does not preserve the $\zt$-grading.
On the other hand, the mixed signature Clifford algebra with two
anticommuting generators~$e_+,e_-$ whose squares are $e_+^2=+1$, $e_-^2=-1$
is isomorphic to the superalgebra of real $2\times 2$ matrices and the
isomorphism preserves the grading:
  \begin{equation}\label{eq:72}
     e_+ = \begin{pmatrix} 0&1\\1&0 \end{pmatrix},\qquad e_- = \begin{pmatrix}
     0&-1\\1&0 \end{pmatrix} 
  \end{equation}
If $S=S'\amalg S''$ is a finite set of generators, written as a disjoint
union, then the Clifford algebra with generating set~$S$ is a graded tensor
product $\Cliff(S)\cong \Cliff(S')\otimes \Cliff(S'')$. 
 
Our interest is in representations of Clifford algebras, or better
\emph{Clifford modules}.  A left Clifford module is a $\zt$-graded vector
space~$V$ with a Clifford algebra acting consistently with the grading.
Clifford modules for a fixed Clifford algebra~$R$ are the objects of a
category~$\Mod R$, and superalgebras~$R_0$ and~$R_1$ are \emph{Morita
equivalent} if $\Mod{R_0}$ and~$\Mod{R_1}$ are equivalent categories.  This
holds if~$R_1\otimes \End(S_1)\cong R_0\otimes \End(S_0)$ for some super
vector spaces~$S_0,S_1$.  Here, as always, we use the graded tensor product
defined with the Koszul sign rule (Remark~\ref{thm:67}).  The equivalence
maps an $R_0$-module $V_0$ to the $R_1$-module $V_0\otimes S$.  In the other
direction an $R_1$-module~$V_1$ maps to the $R_0$-module
$\Hom_{\End(S)}(S,V_1)$.  The Morita class of a complex Clifford
algebra~$\Clc n$ is determined by $n\pmod 2$; the Morita class of a real
Clifford algebra~$\Cl n$ by $n\pmod 8$.  Finally, there is an equivalence
between quaternionic $\zt$-graded vector spaces and $\Cl{-4}$-modules; in
other words, $\Cl {-4}$~is Morita equivalent to the (even) quaternion
algebra~$\HH$.  Let~$W$ be a quaternionic super vector space, i.e., a real
$\zt$-graded vector space with an even action of the
quaternions~$\qi,\qj,\qk$ and a grading operator~$\sigma $ (with $\sigma
^2=1$).  The associated $\Cl{-4}$-module is the real $\zt$-graded vector
space~$V=W\oplus W$ with grading $\left(\begin{smallmatrix} \sigma
&0\\0&-\sigma \end{smallmatrix}\right)$ and Clifford action
  \begin{equation}\label{eq:73}
      e_1  = \begin{pmatrix} 0 & -1 \\ 1 & 0 \\ \end{pmatrix},
     \qquad e_2  = \begin{pmatrix} 0 & \qi \\ \qi & 0 \\ \end{pmatrix},
      \qquad e_3  
     = \begin{pmatrix} 0 & \qj \\ \qj & 0 \\ \end{pmatrix}, \qquad e_4  =
     \begin{pmatrix} 0 & \qk \\ \qk & 0 \\ \end{pmatrix} 
  \end{equation}
Conversely, given a $\Cl{-4}$-module~$V$, the $(-1)$-eigenspace of $\gamma
=e_1e_2e_3e_4$ is a quaternionic vector space with $e_1e_2, e_1e_3, e_1e_4$
acting as unit quaternions~$i,j,k$.

We now consider the universal group~$\sC=\pmo\times \pmo$ which tracks
time-reversal and Hamiltonian-reversal.  It is equipped with homomorphisms
$c,\phi_{\sC} \:\sC\to\pmo$, where $c$~is projection onto the second factor,
and $\phi _{\sC}$~is multiplication $\pmo\times \pmo\to\pmo$.  As
in~\S\ref{sec:4} we define a \emph{\CTt} as a pair~$(A,\tA)$ consisting of a
subgroup~$A$ of~$\sC$ and a $\phi $-twisted extension $1\to \TT\to\tA\to A\to
1$, where $\phi $~is the restriction of~$\fC$ to~$A$.  There are
10~isomorphism classes of \CTt s (Proposition~\ref{thm:24}).  These may be
labeled by the special lifts~$T,C,S$ of the non-identity elements
$\bT=(-1,1)$, $\bC=(1,-1)$, $\bS=(-1,-1)$ of~$\sC$, as in Lemma~\ref{thm:42}.
Recall from Definition~\ref{thm:25}(ii) that a $(\phi ,\tau ,c)$-twisted
representation of~$A$ is a $\zt$-graded complex vector space~$W$ with an
action of~$\tA$ for which $\phi $~tracks linearity/antilinearity and
$c$~tracks evenness/oddness.

  \begin{proposition}[]\label{thm:57}
 Each \CTt~$(A,\tA)$ is associated to a real or complex Clifford
algebra~$R$, as indicated in the following table, such that the category of
$(\phi ,\tau ,c)$-twisted representations of~$A$ is equivalent to the
category of real or complex $R$-modules.
  \bigskip
  \begin{center}
  \renewcommand{\arraystretch}{1.5}
  \begin{tabular}{|c||c|c|c|c|c|c|c|c|c|c|}
% after \\: \hline or \cline{col1-col2} \cline{col3-col4} ...
\hline 
$A$&1&\textnormal{diag}&$\pmo\times \oo$&$\sC$&$\oo\times \pmo$&$\sC$&$\pmo\times \oo$&$\sC$&$\oo\times \pmo$&$\sC$ \\
\hline
$T^2$&&&$+1$&$+1$&&$-1$&$-1$&$-1$&&$+1$ \\
\hline
$C^2$&&&&$-1$&$-1$&$-1$&&$+1$&$+1$&$+1$\\
 \hline&&&&&&&&&&\\[-1.6em] \hline
$R$&$\Clc0$&$\Clc{-1}$&$\Cl0$&$\Cl{-1}$&$\Cl{-2}$&$\Cl{-3}$&$\Cl{-4}$&$\Cl{-5}$&$\Cl{-6}$&$\Cl{-7}$\\  
 \hline 
  \end{tabular}
  \renewcommand{\arraystretch}{1}
  \end{center}
\medskip
  \end{proposition}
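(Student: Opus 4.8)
The plan is to verify the table entry by entry, in each of the ten cases constructing an explicit equivalence of categories between $(\phi,\tau,c)$-twisted representations of $A$ and modules over the indicated Clifford algebra. First I would set up the dictionary: a $(\phi,\tau,c)$-twisted representation of $A$ is a $\zt$-graded complex vector space $W$ together with a real-linear action of $\tA$ in which $\TT\subset\tA$ acts by scalars, $\phi$ tracks complex-linearity versus complex-antilinearity, and $c$ tracks even versus odd. Using the special lifts $T,C,S$ of $\bT,\bC,\bS$ furnished by Lemma~\ref{thm:42}, the data reduces to: a complex $\zt$-graded $W$, plus whichever of the operators $T,C,S$ are present, each being complex-linear or -antilinear and even or odd according to $\phi$ and $c$ on the corresponding element of $A$, with prescribed squares and commutation relations (they commute in $\tA$, and $T,C$ commute with $S=TC$).

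The key computation is then to recognize each such package as a Clifford module. The two complex cases $A=1$ and $A=\text{diag}$ are immediate: for $A=1$ a twisted representation is just a $\zt$-graded complex vector space, i.e.\ a $\Clc0$-module; for $A=\text{diag}$ the element $\bS=(-1,-1)$ satisfies $\phi(\bS)=1$ and $c(\bS)=-1$, so $S$ is a complex-linear odd operator with $S^2=1$, which is precisely an action of $\Clc{-1}$ (a single odd generator squaring to $+1$; the sign is immaterial over $\CC$). For the eight real cases one observes that an antilinear operator on $W$ squaring to $\pm1$ is a real or quaternionic structure, which by the standard dictionary (and the quaternionic/$\Cl{-4}$ equivalence recalled just above, formula~\eqref{eq:73}) descends to the real form. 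Concretely: when $T$ alone is present with $T^2=+1$ (the case $A=\pmo\times\oo$, second row), $T$ is an antilinear even involution, so $W$ has a real structure and the category is that of real $\zt$-graded vector spaces, i.e.\ $\Cl0$-modules; when $C$ alone is present with $C^2=-1$, $C$ is an antilinear odd operator with $C^2=-1$, and combining the underlying real structure with the odd Clifford generator of square $-1$ gives $\Cl{-2}$; and so on, building up $\Cl{-1}$ through $\Cl{-7}$ by tensoring the appropriate number of odd generators (from the present operators among $T,C,S$, after reorganizing by passing to a $\pm1$-eigenspace of a product of two of them when their combined operator is even, exactly as in the explicit constructions in the proof of Corollary~\ref{thm:43}). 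In each case the Morita class is pinned down by the signs of the squares, so it suffices to match $(T^2,C^2)$ against the known periodicity $\Cl n \leftrightarrow n \bmod 8$.

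The main obstacle is bookkeeping rather than conceptual: one must be careful that the Koszul sign rule is used consistently when forming graded tensor products of the one-generator Clifford algebras, that the commutation relations imposed by $\tA$ being abelian translate correctly into Clifford anticommutation after the eigenspace reductions, and that the choice of which product of $T,C,S$ to diagonalize matches the degree shift recorded in the table. I would handle this by treating the four ``building block'' cases ($T$ alone, $C$ alone, $S$ alone, and $\{T,C\}$ together) carefully and then noting the remaining cases are obtained by the same maneuvers, citing the Morita equivalences and the quaternionic dictionary of Appendix~\ref{sec:12} so as not to re-derive them. The verification that the functors $W\mapsto(\text{Clifford module})$ and its inverse are mutually quasi-inverse is in each case the same formal argument: the extra operators are literally part of the data on both sides, so the equivalence is essentially the identity on underlying vector spaces, possibly composed with the $\pm1$-eigenspace passage, whose inverse is induction along $\Cl n\hookrightarrow\Cl{n}\otimes\Cl{\mp1}$ or the $W\oplus W$ construction of~\eqref{eq:73}.
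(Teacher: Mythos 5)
Your plan tracks the paper's actual proof quite closely: both go case by case, reduce to the special lifts $T,C,S$ of Lemma~\ref{thm:42}, build odd Clifford generators from $C$, $iC$, $iTC$ (with the real structure $T$ and the quaternionic/$\Cl{-4}$ dictionary of Appendix~\ref{sec:12}), and give the inverse via complexification and eigenspace passage. One phrasing to tighten: in the ``$C$ alone, $C^2=-1$'' case there is no real structure on $W$ (since $\bT\notin A$); the mechanism is instead that the complex-antilinear odd $C$ and $iC$ anticommute and both square to $-1$ on the underlying real vector space $W_{\RR}$, yielding $\Cl{-2}$ — with a genuine real structure present you would only get a single real odd generator and $\Cl{-1}$, so ``combining the real structure with one generator'' is the wrong picture even though your table entry is right.
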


\noindent 
 The Clifford algebra is only determined up to Morita equivalence, which we
use to identify $\Cl{-5}$, $\Cl{-6}$, $\Cl{-7}$ with $\Cl3$, $\Cl2$, $\Cl1$,
respectively.  A closely related discussion and table appear
in~\cite[\S5]{FKi1} and in~\cite{AK}.

  \begin{proof}
 We first indicate the equivalence which maps a $(\phi,\tau,c)$-twisted
representation~$W$ to a Clifford module.  Recall that the vector space
underlying a $\ptc$-twisted representation of~$A$ is \emph{complex}.  Also,
the representation of~$\tA$ is determined by the action of the special
lifts~$T,C,S$, not all of which may be present and some of which may be
redundant.  The argument proceeds on a case-by-case basis.
 
For $A=1$ the representation is a $\zt$-graded complex vector space, which is
a $\Clc0$-module.  For $A=\textnormal{diag}$ there is in addition an odd
operator~$S$ with $S^2=1$, and $iS$~generates the complex Clifford
algebra~$\Clc{-1}$.  (One could equally well use~$S$ to generate the Morita
equivalent~$\Clc{+1}$.)
 
If $\bC\in A$ and~$\bT\notin A$, then $C$~and~$iC$ are odd and generate
a Clifford algebra on the real super vector space~$W_{\RR}$ underlying~$W$.
If $C^2=(iC)^2=-1$ this is~$\Cl{-2}$, whence the fifth column of the table.
If $C^2=(iC)^2=+1$ the Clifford algebra is~$\Cl{+2}$, ergo the penultimate
column.

If $\bT\in A$, then the lift~$T\in \tA$ acts as an even antilinear
transformation with square~$\pm1$ on the complex super vector space~$W$.  If
~$T^2=+1$ it is a real structure on~$W$, and if~$T^2=-1$ it is a quaternionic
structure.  This, together with the equivalence above between quaternionic
super vector spaces and $\Cl{-4}$-modules, gives the columns
with~$R=\Cl0,\Cl{-4}$.  In the real case $T^2=+1$, if in addition~$\bC\in A$,
then $C$~is odd and commutes with~$T$, so if $C^2=-1$ it generates a real
Clifford algebra~$\Cl{-1}$, and this gives the equivalence in the fourth
column of the table.  If $C^2=+1$, then $C$~generates the Clifford
algebra~$\Cl{+1}$, and this gives the last column.  Finally, if $T^2=-1$ and
$\bC\in A$, then $C,iC,iTC$ are odd endomorphisms of~$W_{\RR}$ and generate a
$\Cl{\pm3}$-action, according to~$C^2=\pm1$.  This completes the table.

The inverse equivalence takes a Clifford module~ $V$ to a
$(\phi,\tau,c)$-twisted representation~$W$ of the corresponding CT~group~$A$
as follows.  By Morita equivalence we may assume that $V$~ is a Clifford
module for $\Cl n^{\CC}$ with $n=0,1$ or for $\Cl n$ with $\vert n \vert \leq
4$.  In the complex cases we take~$W=V$; for $n=1$ the action of~ $e_1$
generates the representation of $A={\rm diag}$.  If $V$ is a real $\Cl
n$-module with $n = 0, \pm 1$, then let $W=V\otimes \CC$ be the
complexification of~$V$ and $T= \id\mstrut _V\otimes\; \kappa $, where
$\kappa $~is complex conjugation; for $n=\pm 1$ set $C=e_1 \otimes \kappa
$. For $V$ a real $\Cl{\pm 2}$-module, set $W=V$ with complex structure~ $e_1
e_2$ and~ $C=e_1$.  For $V$ a real $\Cl{\pm 3}$-module, set $W=V$ with
complex structure~ $\mp e_1 e_2$, but now let~$C=e_1$ and~$T=-e_2 e_3$.
Finally, motivated by~\eqref{eq:73}, if $V$ is a real $\Cl {-4}$-module, let
$W$~be the $(-1)$-eigenspace of $e_1 e_2 e_3 e_4 $ with complex structure
~$e_1 e_2$.  Then $T=e_1 e_3$~ is even, preserves~$W$, and squares to $-1$.
  \end{proof}

   \section{Dyson's 10-fold way}\label{sec:9}
% lastsubsec@000

Dyson~\cite{Dy} studied \emph{irreducible} corepresentations
(Definition~\ref{thm:152} below) of a group~$G$ equipped with a homomorphism
$\phi \:G\to\pmo$ and classified them into 10~types.  We re-derive his
classification by studying commutants, as did Dyson, with the difference that
we take into account a $\zt$-grading and use the Koszul sign rule.  The
argument then reduces to a $\zt$-graded version of Schur's lemma.  We make
contact with Wall's graded Brauer groups~\cite{Wa}, though he too ignored the
Koszul sign rule in his definition of the center.  We follow instead the
treatment by Deligne~\cite[\S3]{De}.  The arguments apply to vector spaces
over very general fields, but the reader should keep in mind the real
numbers, which is the case of interest in this paper.  All algebras are
unital and associative.
 
Recall that an associative unital algebra~$D $ over a field~$k$ is a
\emph{division algebra} if every nonzero element of~$D $ is invertible.

  \begin{theorem}[Schur]\label{thm:151}
 Let $k$~be a field, $G$~a group, and $\rho \:G\to\Aut_k(V)$ an irreducible
representation of~$G$ on a vector space~$V$ over~$k$.  Then the commutant
  \begin{equation}\label{eq:165}
     D =\{a\in \End_k(V): a\rho (g)=\rho (g)a\textnormal{ for all $g\in
     G$}\}
  \end{equation}
is a division algebra.
  \end{theorem}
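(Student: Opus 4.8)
The plan is to verify directly that $D$ is a unital associative $k$-algebra in which every nonzero element is invertible; no finiteness hypothesis on $\dim_k V$ is needed. First I would record the algebra structure: $D$ contains $\id_V$, and the commutation condition $a\rho(g)=\rho(g)a$ is linear in $a$, so $D$ is closed under addition and scalar multiplication; it is closed under composition because if $a$ and $b$ each commute with every $\rho(g)$ then so does $ab$. Thus $D\subset\End_k(V)$ is a subalgebra.

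The heart of the argument is the standard observation that for $a\in D$ both $\ker a$ and $\image a$ are $G$-invariant subspaces of $V$: if $v\in\ker a$ then $a\rho(g)v=\rho(g)av=0$, so $\rho(g)v\in\ker a$; and $\rho(g)(av)=a\rho(g)v\in\image a$. Now fix a nonzero $a\in D$. Irreducibility forces $\ker a\in\{0,V\}$, and since $a\neq0$ we have $\ker a=0$, so $a$ is injective; likewise $\image a\in\{0,V\}$ with $\image a\neq0$, so $\image a=V$ and $a$ is surjective. Hence $a$ is a bijective $k$-linear map, and its set-theoretic inverse $a\inv$ is automatically $k$-linear, i.e. $a\inv\in\End_k(V)$. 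Conjugating the identity $a\rho(g)=\rho(g)a$ by $a\inv$ on both sides yields $\rho(g)a\inv=a\inv\rho(g)$ for all $g\in G$, so $a\inv\in D$. Therefore every nonzero element of $D$ is invertible and $D$ is a division algebra.

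There is no real obstacle here — the proof is elementary and uses only the definition of irreducibility — but the one point worth flagging is what is \emph{not} asserted: over a general field $k$ (in particular $k=\RR$, the case used in this paper) one cannot conclude $D=k$, which is the familiar corollary of Schur's lemma over an algebraically closed field. That stronger statement would require an eigenvalue argument needing algebraic closure (and, in infinite dimensions, further input). The theorem deliberately stops at ``division algebra'' because the subsequent classification — for real representations, where $D$ is one of $\RR$, $\CC$, $\HH$ by the Frobenius theorem — depends only on this weaker conclusion.
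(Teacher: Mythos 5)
Your proof is correct and follows the same route as the paper's: show $\ker a$ and $\image a$ are $G$-invariant, invoke irreducibility to conclude $a=0$ or $a$ is bijective, hence invertible. The extra bookkeeping you include (that $D$ is a subalgebra and that $a\inv\in D$) is standard and implicit in the paper's version.
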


  \begin{proof}
 Let~$a\in D $ and consider $\ker(a)\subset V$.  Since $a$~commutes with
each~$\rho (g)$ it follows that $\ker(a)$~is $G$-invariant.  But $\rho $~is
irreducible, whence $\ker(a)=V$ or $\ker(a)=0$, i.e., $a=0$ or $a$~is
injective.  Similarly, the image of~$a$ is either~$0$ or~$V$, so $a=0$ or
$a$~is surjective.  Therefore, $a=0$ or $a$~is invertible.
  \end{proof}

According to a classical theorem of Frobenius there are three division
algebras over~ $k=\RR$ up to isomorphism, namely the field of real numbers~$D
=\RR$, the field of complex numbers~$D =\CC$, and the noncommutative algebra
of quaternions~$D =\HH$.  Thus there is a corresponding trichotomy of
irreducible real representations.
 
From now on suppose $1\not= -1$ in the field~$k$, i.e., $k$~does not have
characteristic~2.   

  \begin{definition}[]\label{thm:152}
 Let $G$~be a group and $\phi \:G\to\pmo$ a homomorphism.  Let $V$~be a
vector space over~$k$ and $I\in \End_k(V)$ an automorphism with $I^2\in
k^\times $ a nonzero scalar transformation.
  \begin{enumerate}
 \item Let $\AVI$~be the $\zt$-graded group of automorphisms of~$V$ which
either commute or anticommute with~$I$.

 \item A \emph{corepresentation} of~$G$ on~$V$ is a homomorphism $\rho
\:G\to\Aut_k(V,I)$ of $\zt$-graded groups, which then satisfies
  \begin{equation}\label{eq:166}
     \rho (g)I = \phi (g)I\rho (g)\qquad \textnormal{for all $g\in G$}. 
  \end{equation} 

 \item The corepresentation~$\rho $ is \emph{irreducible} if there is no
proper subspace of~$V$ which is invariant under~$I$ and ~$\rho (g)$ for
all~$g\in G$.
  \end{enumerate}
  \end{definition}

\noindent
 The pair $(G,\phi )$~is called a \emph{$\zt$-graded group}.  The
$\zt$-grading on~$\AVI$ tracks whether an automorphism commutes or
anticommutes with~$I$.  The algebra~$\End_k(V)$ is $\zt$-graded by the
involution $a\mapsto IaI\inv $, which squares to the identity map: even
elements commute with~$I$ and odd elements anticommute.  If $k=\RR$ and
$I$~is a complex structure on~$V$, then Definition~\ref{thm:152} reduces to
Definition~\ref{thm:15} (with $\tau $~trivial).  In this case even elements
of~$\End_{\RR}(V)$ are complex linear and odd elements are complex
antilinear.  Note that each~$\rho (g)$ commutes or anticommutes with~$I$.
The sign in~\eqref{eq:166} is then the usual Koszul sign.  In~(iii) recall
that a \emph{proper} subspace of~$V$ is neither~0 nor~$V$.
 
  \begin{definition}[]\label{thm:153}
 A super algebra~$D $ over~$k$ is a \emph{super division algebra} if every
nonzero homogeneous element is invertible. 
  \end{definition}

\noindent 
 It is not true that every non-homogeneous element is invertible.

  \begin{theorem}[super Schur]\label{thm:154}
 Let $(G,\phi )$~be a $\zt$-graded group and $\rho \:G\to\Aut_k(V,I)$ an
irreducible corepresentation on~$(V,I)$.  Then the graded commutant
  \begin{equation}\label{eq:167}
  \begin{aligned}
     D =\{a^0+a^1\in \End_k(V): \quad a^0\rho (g)&=\phantom{\phi
     (g)}\rho (g)a^0\\  
a^1\rho (g)&=\phi (g)\rho (g)a^1 \textnormal{\quad for all $g\in G$}\}  
  \end{aligned}
  \end{equation}
is a super division algebra over~$k$. 
  \end{theorem}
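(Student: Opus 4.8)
The plan is to imitate the proof of the ungraded Schur lemma (Theorem~\ref{thm:151}), but tracking parities with the Koszul sign rule so that the conclusion becomes a \emph{super} division algebra in the sense of Definition~\ref{thm:153} rather than an ordinary one. First I would observe that $D$ is indeed a $\zt$-graded subalgebra of $\End_k(V)$: it is closed under addition and under composition because the set of transformations which commute (resp.\ anticommute) with $I$ is a $\zt$-graded subalgebra, and the commutation conditions with the $\rho(g)$ are preserved under composition (two even elements compose to an even one commuting with $\rho(g)$; an even and an odd compose to an odd one picking up $\phi(g)$; two odd ones compose to an even one picking up $\phi(g)^2 = 1$). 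It contains the identity, which is even. So $D = D^0 \oplus D^1$ with $D^0$ the commutant of the corepresentation in the ordinary sense and $D^1$ the ``$\phi$-twisted commutant''.

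The heart of the argument is showing that every nonzero \emph{homogeneous} $a \in D$ is invertible. Fix such an $a$, of parity $|a| \in \{0,1\}$. I would consider $\ker(a) \subset V$. For any $g \in G$ the intertwining relation in~\eqref{eq:167} gives $a\rho(g) = \phi(g)^{|a|}\rho(g)a$, so if $a v = 0$ then $a\rho(g)v = \phi(g)^{|a|}\rho(g)(av) = 0$; hence $\ker(a)$ is invariant under every $\rho(g)$. It is also invariant under $I$: since $a$ is homogeneous it either commutes or anticommutes with $I$, so $aIv = \pm Iav = 0$ whenever $av = 0$. By irreducibility of the corepresentation $(\rho, I)$, the only $I$-invariant and $G$-invariant subspaces are $0$ and $V$; since $a \ne 0$ we get $\ker(a) = 0$, i.e.\ $a$ is injective. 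The identical argument applied to $\image(a)$—which is again $I$-invariant and $\rho(G)$-invariant by the same relations—shows $\image(a) = V$ (it cannot be $0$ since $a \ne 0$), so $a$ is surjective. Therefore $a$ is a bijective linear endomorphism of $V$, hence invertible in $\End_k(V)$.

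It then remains to check that $a^{-1}$ again lies in $D$ and is homogeneous of parity $|a|$. This is a routine manipulation: from $aI = (-1)^{|a|}Ia$ one gets $a^{-1}I = (-1)^{|a|}Ia^{-1}$, and from $a\rho(g) = \phi(g)^{|a|}\rho(g)a$ one gets $a^{-1}\rho(g) = \phi(g)^{|a|}\rho(g)a^{-1}$ (multiply on both sides by $a^{-1}$ and use $\phi(g)^{|a|} \in \{\pm 1\}$, which is its own inverse). Hence $a^{-1} \in D$ and has parity $|a|$. This establishes that $D$ is a super division algebra over $k$ and completes the proof. I do not expect a genuine obstacle here: the only point requiring care—and the place where the graded version genuinely differs from the classical one—is the insistence that $a$ be \emph{homogeneous} before invoking the kernel/image argument, since a non-homogeneous element need not have $I$-invariant kernel, which is exactly why one only obtains a super division algebra (as Definition~\ref{thm:153} and the remark following it already emphasize). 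I would include a one-line remark pointing this out, perhaps with the standard example of a non-invertible non-homogeneous element such as $\frac{1}{2}(1+e)$ for an odd $e$ with $e^2 = 1$.
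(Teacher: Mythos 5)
Your argument is correct and is essentially the same as the paper's proof, which also applies the kernel/image argument of the ungraded Schur lemma to a homogeneous element~$a$ and uses the $I$-invariance and $G$-invariance of $\ker(a)$ and $\image(a)$ together with irreducibility. Your additional verification that $a^{-1}$ lies in~$D$ and is homogeneous, and your closing remark about why homogeneity is essential, are sound elaborations that the paper leaves implicit.
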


\noindent 
 In~\eqref{eq:167} the general element of~$\End_k(V)$ is written as a
sum~$a^0+a^1$ of even and odd elements: $a^0$~commutes with~$I$ and
$a^1$~anticommmutes with~$I$.

  \begin{proof}
 As in the classical case---Theorem~\ref{thm:151}---the kernel and image of a
homogeneous element~$a\in D $ are $I$-invariant and $G$-invariant subspaces
of~$V$, so by irreducibility are either~$V$ or~$0$.
  \end{proof}

We can classify a corepresentation by the isomorphism class of its graded
commutant.  For~$k=\RR$ there are 10~possibilities~\cite{Wa}, and they are
exhibited in the first line of the table below.  Note that 9~of them are
Clifford algebras---the real and complex Clifford algebras of smallest
dimension---but the 10th~($\HH$) is not.  Also, the entries~$\CC,\RR,\HH$ are
purely even, and hence are the usual (ungraded) division algebras over the
reals.

  \begin{remark}[]\label{thm:155}
 The set of isomorphism classes of (super) division algebras over~$k$ may be
identified with the set of (super) simple algebras over~$k$ up to Morita
equivalence (see Appendix~\ref{sec:12}).  If we fix the (graded) center then
they form an abelian group---the \emph{Brauer group}---under tensor product.
For~$k=\RR$ the (graded) center is either~$\RR$ or~$\CC$.  In the graded case
there are 8~isomorphism classes of super division algebras with center~$\RR$
and 2~isomorphism classes with center~$\CC$.  The Brauer groups are cyclic,
and each Morita equivalence class of super central simple algebras is
represented by a Clifford algebra. 
  \end{remark}

  \begin{example}[]\label{thm:156}
 Let $G=\zt$ with the nontrivial grading~$\phi $.  Let $V=\RR^2$ with its
usual complex structure~$I$, which we use to identify~$\RR^2=\CC$.  Define a
$\phi $-twisted representation~$\rho $ by letting $\rho (g)z=\bar z$, where
$g\in G$ is the nonzero element and $z\in \CC$.  Let $D =D ^0\oplus D
^1$ be the graded commutant.  Then $D ^0$~consists of real scalar
endomorphisms $z\mapsto rz$ ($r\in \RR$) and $D ^1$~consists of
endomorphisms $z\mapsto si\bar z$ ($s\in \RR$).  So $D \cong \Cl1$.  There
are many more examples in~\cite[\S V]{Dy}.
  \end{example}

The \CTt s studied in Proposition~\ref{thm:24} and Proposition~\ref{thm:57}
correspond to super division algebras over~$\RR$ as follows.  A \CTt\ is a
pair~$(A,\tA)$ consisting of a subgroup $A\subset \pmo\times \pmo$ and a
twisted extension $\TT\to\tA\to A$.  Recall that $A$~may or may not contain
the generators~$\bT,\bC$ of~$\pmo\times \pmo$, which have special
lifts~$T,C\in \tA$ which satisfy~$T^2=\pm1,\,C^2=\pm1,\, TC=CT$.  Given a
\CTt~$(A,\tA)$, the construction beginning in~\eqref{eq:78} gives a
hermitian line bundle $\tL\to A$ such that $\tD=\oplus _{a\in A}\tL_a$ is a
super algebra which contains~$\tA$.  Note that the multiplication is not
complex linear if the homomorphism $\phi \:A\to\pmo$ is nontrivial.  Also,
the homomorphism $c\:A\to\pmo$ is used to grade~$\tD$.  Define a super
algebra~$D$ as follows.  If $\bT\notin A$ set~$D=\tD$; it is a \emph{complex}
division algebra.  If $\bT\in A$ and~$T^2=-1$ let $D=\tD_{\RR}$ be the
underlying real super algebra.  If $\bT\in A$ and~$T^2=+1$, then conjugation
by~$T$ is a real structure on~$\tD$, and we let~$D$ be the real points of the
subalgebra of~$\tD$ generated by~$C$, if $\bC\in A$; otherwise $D$~consists
only of real scalars.
 
We summarize some of the various 10-fold ways in a compact table:

\bigskip
  \begin{center}
  \renewcommand{\arraystretch}{1.5}
  \begin{tabular}{|c||c|c|c|c|c|c|c|c|c|c|}
% after \\: \hline or \cline{col1-col2} \cline{col3-col4} ...
\hline 
sDivAlg&$\CC$&$\Clc1$&$\RR$&$\Cl{-1}$&$\Cl{-2}$&$\Cl{-3}$&$\HH$&$\Cl3$&$\Cl2$&$\Cl1$
\\ 
 \hline
CT&$1$& diag& $+\,0$& $+-$& $0\,-$& $--$&$-\,0$& $-+$&
$0\,+$& $++$ \\   
\hline
Dyson&$\CC\CC_1$&$\CC\CC_2$&$\RR\CC$&$\RR\HH$&$\CC\HH$&$\HH\HH$&$\HH\CC$&
$\HH\RR$&  $\CC\RR$ & $\RR\RR$\\ 
 \hline 
  \end{tabular}
  \renewcommand{\arraystretch}{1}
  \end{center}
\vskip12pt

\noindent
 The first row enumerates the 10~super division algebras which arise from
Theorem~\ref{thm:154} with~$k=\RR$.  As mentioned above, all but~$\HH$ are
Clifford algebras.  Note that $\RR,\CC,\HH$ are purely even and are the
classical division algebras over~$\RR$.  The second row enumerates the 10
\CTt s in a compact notation: the first two entries indicate the subgroup
$A\subset \pmo\times \pmo$ and the last~8 tell the status of~$T,C\in \tA$:
`0'~if not present, `+'~if it squares to~$+1$, and `$-$'~if it squares
to~$-1$.  The columns line up with those of the table in
Proposition~\ref{thm:57}, which gives the same data in a more elaborate
form.  The final row of the present table gives Dyson's labels for the
10~types.  For the reader's convenience we present a quick resum\'e of the
first half of~\cite{Dy} where these labels are defined. 
 
Dyson's analysis is based on the classical Theorem~\ref{thm:151} and the
Wedderburn theorem, which asserts that a finite dimensional simple algebra
over a field~$k$ is a matrix algebra over a division algebra.  For~$k=\RR$,
by the Frobenius theorem the algebra must be a matrix algebra over one of the
division algebras~$\RR$, $\CC$, or~$\HH$, and we call this division algebra
its \emph{Wedderburn type}.  Let $(G,\phi )$~be a $\zt$-graded group as in
Definition~\ref{thm:152}.  Dyson assumes that the homomorphism~$\phi $ is
surjective, so the kernel~$G_0\subset G$ of~$\phi $ is an index~2 subgroup.
Let $V$~be an irreducible co-representation, so a complex vector space on
which elements of $G_0$~act complex linearly and elements of $G\setminus G_0$
act complex antilinearly.  Let\footnote{In this paragraph `$A$'~and `$D$' are
as used in~\cite{Dy}, not as used in the previous text.} $A$~be the
subalgebra of~$\End_{\RR}(V)$ generated by~$\rho (g_0),\,g_0\in G_0$ and
$D$~the subalgebra of~$\End_{\RR}(V)$ generated by~$\rho (g),\,g\in G$ and
the complex structure~$I$.  If the co-representation is irreducible, then
$D$~is a simple algebra, so by the Wedderburn theorem is a matrix algebra
over~$\RR$, $\CC$, or~$\HH$.  The algebra~$A$ is not simple, since the real
representation of~$G_0$ on~$V_{\RR}$ may be reducible, but Dyson proves that
the irreducible summands are all of the same Wedderburn type.  He further
shows that the Wedderburn types of~$D$ and~$A$ are independent in that all
9~possibilities can occur.  The Dyson type is the conjunction of the
Wedderburn types of~$D$ and~$A$.  Thus, for example, for Dyson type~$\HH\CC$
the algebra $D$~is quaternionic and the algebra $A$~is complex.  The
type~$\CC\CC$ splits into two subtypes according as the complex
representation of~$G_0$ on~$V$ splits into the sum of two non-isomorphic or
isomorphic irreducible representations.  In his analysis Dyson studies the
ungraded commutator algebras of~$A$ and~$D$, denoted~$X$ and~$Z$, and they
have the same Wedderburn types as~$A$ and~$D$, respectively.  The graded
commutant of the algebra generated by $\rho(g)$, $g\in G$, which is the super
division algebra we associate with the representation~$\rho $, lies between
Dyson's algebras $Z$ and $X$ and uniquely characterizes the Dyson type.  We
highly recommend the beautiful paper~\cite{Dy} to the reader.

   \section{Continuous families of quantum systems}\label{sec:14}
% lastsubsec@  1

For a finite dimensional complex vector space~$V$ there is only one
reasonable topology on the algebra~$\End V$ of linear maps~$V\to V$.  The
invertible operators~$\Aut V\subset \End V$ inherit a topology in which
inversion and multiplication are continuous: $\Aut V$~is a topological group.
The situation is quite different when $V$~is an infinite dimensional
topological vector space, as then there are several possible topologies on
the vector space of continuous linear operators.  In this appendix we discuss
the compact-open topology for $V$~a Hilbert space.  This is the crucial
ingredient in our definition of a continuous family of quantum systems, and
so too in the definition of homotopy invariants of quantum systems.  The main
references for our discussion are~\cite[Appendices]{AS} and~\cite{RS}; we
follow~\cite{AS} particularly closely in the next subsection.
 
  \subsection*{Topologies on mapping spaces}

Quite generally, if $Y$~is a topological space and $Z$~a metric space, then
there are several natural topologies on the set~$\Map(Y,Z)$ of continuous
maps $f\:Y\to Z$.  In the \emph{topology of pointwise convergence}, or
equivalently \emph{point-open topology}, a sequence $f_n\to f$ iff $f_n(y)\to
f(y)$ for all~$y\in Y$.  In the finer\footnote{A topology~$\tau $ on a
set~$X$ is a collection of subsets of~$X$.  A topology~$\tau '$ is finer, or
stronger, than a topology~$\tau $ if $\tau \subset \tau '$.} \emph{topology
of uniform convergence on compact sets}, or equivalently \emph{compact-open
topology},\footnote{The compact-open topology is defined for $Z$~a general
topological space, not necessarily a metric space.} a sequence $f_n\to f$ iff
for each compact subset~$C\subset Y$, the convergence $f_n(y)\to f(y)$ is
uniform for~$y\in C$.  If $X$~is a topological space, and $F\:X\times Y\to Z$
a continuous map, then $x\mapsto (y\mapsto F(x,y))$ is a continuous map
from~$X$ to $\Map(Y,Z)$ with the compact-open topology.
 
If $Y=Z=\sH$ is an {infinite dimensional} separable complex Hilbert space,
and $\End(\sH)\subset \Map(\sH,\sH)$ the set of continuous, hence bounded,
linear maps $\sH\to\sH$, then the topology of pointwise convergence is called
the \emph{strong operator topology}.  The compact-open topology is finer, but
not so different: as shown in~\cite[Appendix~1]{AS} the compact sets in the
two topologies coincide.  It follows that if $X$~is a \emph{compactly
generated}\footnote{A topological space~$X$ is compactly generated if a
subset~$A\subset X$ is closed if and only if $A\cap C$~is closed for all
compact subsets~$C\subset X$.  If $X$~is compactly generated, then one can
test continuity of maps with domain~$X$ on compact subsets of~$X$.
Topological spaces whose topology can be defined by a metric are called
\emph{metrizable}, and they are compactly generated.  See~\cite[\S6.1]{DK}
for a summary of Steenrod's classic paper~\cite{St} on compactly generated
spaces.}  topological space the continuous maps $X\to\End(\sH)$ are the same
in both topologies.  We write $\End(\sH)\co$ for the space of bounded linear
operators with the compact-open topology.  There is a finer \emph{norm
topology} on~$\End(\sH)$ which is often used; then $f_n\to f$ iff $f_n(x)\to
f(x)$ uniformly for $x$~in the unit sphere of~$\sH$.  In other words, $f_n\to
f$ in the metric topology of the operator norm: $\|f-f_n\|\to 0$.
 
We choose to work with the compact-open topology on~$\End(\sH)$ for several
reasons:
  \begin{enumerate}
 \item If $H$~is an \emph{unbounded} self-adjoint operator on~$\sH$, then the
one-parameter group $t\mapsto e^{-itH/\hbar}$ is continuous in the
compact-open topology but not in the norm topology.

 \item A fiber bundle of Hilbert spaces (Definition~\ref{thm:110}) has
transition functions which are continuous in the compact-open topology.

 \item The projection-valued measure associated to a self-adjoint operator is
countably additive in the compact-open topology~\cite[\S\S
VII.3,VIII.3]{RS}. 

 \item While the norm topology is too strong for many purposes, the
\emph{weak operator topology} is too weak to control the
spectrum.\footnote{For example, if $\{e_n\}_{n=1}^{\infty}$ is an orthonormal
basis of~$\sH$, and $f_n$~orthogonal projection onto the line spanned
by~$e_1+e_n$, then $\{f_n\}$~converges weakly to $1/2$~times orthogonal
projection onto the span of~$e_1$.  This exhibits an operator with
spectrum~$\{0,1/2\}$ as a weak limit of operators with spectrum~$\{0,1\}$.}
  \end{enumerate}

\noindent 
 We warn that the compact-open topology is not sufficiently strong for
problems involving index theory: the space of Fredholm operators is
contractible in the compact-open topology~\cite[Theorem~A2.1]{AS}.   
 
There are other difficulties with the compact-open
topology~\cite[Appendix~1]{AS}, assuming $\sH$~is infinite dimensional, which
stem from the fact that $f\mapsto f\inv $ is \emph{not} a continuous map
$\Aut(\sH)\to\Aut(\sH)$ in the induced topology on the subset
$\Aut(\sH)\subset \End(\sH)$ of invertible operators.

  \begin{definition}[]\label{thm:107}
 The topology on the set~$\Aut(\sH)$ induced from 
  \begin{equation}\label{eq:114}
     \begin{aligned} \Aut(\sH)&\longrightarrow \End(\sH)\co\times
      \End(\sH)\co \\ f\quad &\longmapsto \quad (f, f\inv )\end{aligned}  
  \end{equation}
is called the \emph{compact-open topology} and is denoted~$\Aut(\sH)\co$.
The compact-open topology on the unitary group~$U(\sH)$ is the topology
induced by the inclusion $U(\sH)\subset \Aut(\sH)\co$. 
  \end{definition}

\noindent
 Even though inversion $\Aut(\sH)\co\to\Aut(\sH)\co$ is continuous, by design
using Definition~\ref{thm:107}, multiplication $\Aut(\sH)\co\times
\Aut(\sH)\co\to\Aut(\sH)\co$ is \emph{not} continuous, if $\sH$~is infinite
dimensional, whence $\Aut(\sH)\co$~is not a topological group.  But if $X$~is
compactly generated, then the set of continuous maps
$\Map\bigl(X,\Aut(\sH)\co \bigr)$ is a topological group in the compact-open
topology.  The spaces $\Aut(\sH)\co$~ and $U(\sH)\co$~are
contractible~\cite[Theorem~A2.1]{AS}.

  \begin{proposition}[]\label{thm:108}
 The subgroup $\QAut(\sH)\subset \Aut(\sH)\co$ has two contractible
components in the compact-open topology.
  \end{proposition}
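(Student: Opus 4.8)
The plan is to reduce the statement to the already-cited contractibility of $\Aut(\sH)\co$ and $U(\sH)\co$ from \cite[Theorem~A2.1]{AS}, together with an analysis of the two components of $\QAut(\sH)$. Recall that $\QAut(\sH)$ is an extension $1\to U(\sH)\to\QAut(\sH)\xrightarrow{\phi_{\sH}}\pmo\to1$ as in~\eqref{eq:12}: the unitary operators form the identity component and the antiunitary operators form the other. The first order of business is to check that these really are the two path components in the compact-open topology, i.e.\ that $U(\sH)$ is open and closed in $\QAut(\sH)\co$. Closedness is clear since $\phi_{\sH}$ is locally constant: a norm-limit argument, or rather a compact-open limit of unitaries, is again unitary (and similarly for antiunitaries), because the defining identities $\langle Sx,Sy\rangle=\langle x,y\rangle$ respectively $\langle Sx,Sy\rangle=\overline{\langle x,y\rangle}$ are preserved under pointwise limits. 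Openness then follows since there are only two cosets. The nonemptiness of the antiunitary component is immediate: fix any conjugation $J$ (an antiunitary with $J^2=\id$), which exists because $\sH$ has an orthonormal basis.

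Next I would show each component is contractible. For the identity component this is precisely the contractibility of $U(\sH)\co$, which is the content of \cite[Theorem~A2.1]{AS}; the compact-open topology on $U(\sH)$ here is by Definition~\ref{thm:107} the subspace topology from $\Aut(\sH)\co$, which is the one used in that reference. For the antiunitary component, I would use the homeomorphism $U(\sH)\co\to\QAut(\sH)\co\setminus U(\sH)$ given by $V\mapsto VJ$, where $J$ is the fixed conjugation above. This map is a bijection onto the antiunitary operators since every antiunitary $S$ can be written uniquely as $S=(SJ)J$ with $SJ$ unitary. One must check it is a homeomorphism for the compact-open topologies: since $J$ is a fixed (anti)linear isometry, right multiplication by $J$ and its inverse are continuous maps $\End(\sH)\co\to\End(\sH)\co$ — composition with a fixed operator preserves uniform convergence on compact sets because $J$ is an isometry, which is exactly what preservation of the compact-open topology requires — and hence, after pairing with the inverse as in~\eqref{eq:114}, they are continuous on $\Aut(\sH)\co$ and its coset. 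Therefore the antiunitary component is homeomorphic to $U(\sH)\co$, hence contractible.

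Finally, assembling these facts: $\QAut(\sH)\co$ is the disjoint union of its two path components, the identity component $U(\sH)\co$ and the coset of antiunitaries, each homeomorphic to the contractible space $U(\sH)\co$. This proves the proposition.

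The main obstacle I anticipate is the careful verification that right multiplication by the fixed conjugation $J$ induces a homeomorphism for the compact-open topologies, and in particular that it interacts correctly with the nonstandard topology of Definition~\ref{thm:107} (which forces continuity of inversion by fiat); one has to confirm that $f\mapsto fJ$ is compatible with the graph embedding $f\mapsto(f,f\inv)$, using $(fJ)\inv = J\inv f\inv = Jf\inv$ since $J^2=\id$. This is routine but is the one place where the subtleties of the compact-open topology on operators — e.g.\ the non-continuity of multiplication in general — could in principle bite, so it deserves explicit attention. Everything else is bookkeeping about the two-element quotient $\pmo$.
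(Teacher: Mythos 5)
Your proof is correct and follows essentially the same line as the paper's. The paper's own proof is much terser: it only verifies that the unitary/antiunitary dichotomy, detected via the sign in $S(i\psi)=\pm i\psi$, is stable under compact-open (indeed point-open) limits, leaving the contractibility of each coset — via the cited contractibility of $U(\sH)\co$ and translation by a fixed conjugation — implicit; you supply those details explicitly, using preservation of $\langle Sx,Sy\rangle=\langle x,y\rangle$ versus $\overline{\langle x,y\rangle}$ as the detector instead.
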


  \begin{proof}
 Each $S\subset \QAut(\sH)$ is real linear and $S(i\psi )=\pm i\psi $ for
all~$\psi \in \sH$ and one choice of sign.  The sign is unchanged under
limits $S_n\to S$ in the point-open, hence compact-open, topology. 
  \end{proof}

  \subsection*{Hilbert bundles and continuous families of quantum systems}

We begin by defining a representation of a group as a continuous
\emph{action} on a vector space.

  \begin{definition}[]\label{thm:109}
 Let $G$~be a topological group and $\sH$~a Hilbert space.  A
\emph{representation} of~$G$ on~$\sH$ is a continuous linear action $\alpha
\:G\times \sH\to\sH$.  The representation is \emph{unitary} if the linear
operator~$\alpha (g,-)$ is unitary for all~$g\in G$.   
  \end{definition}

\noindent
 The (left) action property is 
  \begin{equation}\label{eq:115}
     \alpha \bigl(g_1,\alpha (g_2,\psi ) \bigr)= \alpha (g_1g_2,\psi ),\qquad
     g_1,g_2\in G,\quad \psi \in \sH. 
  \end{equation}
The continuity of~$\alpha $ implies that the induced map $\rho
\:G\to\Aut(\sH)\co$ is continuous.  Conversely, a continuous map $\rho
\:G\to\Aut(\sH)\co$ is a representation if the topology of $G$~is compactly
generated.\footnote{We remind that metrizable spaces are compactly generated,
so this holds for Lie groups~$G$.}  We have already remarked that
one-parameter groups ($G=\RR$) of unitary operators generated by an unbounded
self-adjoint operator are continuous in the sense of
Definition~\ref{thm:109}.  

The compact-open topology is also natural for fiber bundles.

  \begin{definition}[]\label{thm:110}
 Let $X,\sE$~be topological spaces and $\pi \:\sE\to X$ a continuous map.
Suppose each fiber $\sE_x=\pi \inv (x)$, $x\in X$, is endowed with the
structure of a complex Hilbert space.  Then $\pi $~is a \emph{Hilbert bundle}
if it is locally trivial, i.e., if about each~$x\in X$ there exists an open
set~$U\subset X$, a Hilbert space~$\sH$, and a homeomorphism~$\varphi $ which
fits into the commutative diagram 
  \begin{equation}\label{eq:116}
     \xymatrix{\pi \inv (U)\ar[rr]^{\varphi }\ar[dr]_{\pi }&&U\times
     \sH\ar[dl]^{\pi _1}\\ &U} 
  \end{equation}
and preserves the Hilbert space structure on each fiber. 
  \end{definition}

\noindent
 Here $\pi _1$~is projection onto the first factor.  The
homeomorphism~$\varphi $ is a \emph{local trivialization}, and the transition
function between two local trivializations is a continuous map
into~$U(\sH)\co$.  Conversely, if $X$~is compactly generated a fiber
bundle can be constructed from coherent continuous transition functions
into~$U(\sH)\co$.  A Hilbert bundle with infinite dimensional fibers is
trivializable, but may not come with a natural trivialization (e.g.,
Proposition~\ref{thm:115}). 

  \begin{definition}[]\label{thm:111}
 Let $G$~be a topological group and $X$~a topological space.  Then a
\emph{continuous family of unitary representations of~$G$ parametrized
by~$X$} is a Hilbert bundle $\pi \:\sE\to X$ and a continuous action $\alpha
\:G\times \sE\to\sE$ which preserves fibers: the diagram
  \begin{equation}\label{eq:117}
     \xymatrix{G\times \sE\ar[rr]^{\alpha  }\ar[dr]_{\pi \circ \pi _2}&&\sE
     \ar[dl]^{\pi }\\ &X} 
  \end{equation}
commutes. 
  \end{definition}

\noindent
 If $\sE=X\times \sH\to X$ is the trivial bundle with fiber a fixed Hilbert
space~$\sH$, then $\alpha $~defines a continuous map $G\times X\to
U(\sH)\co$.  Conversely, if $G$~and $X$~are compactly generated, then a
continuous map $G\times X\to U(\sH)\co$ which is a homomorphism for
each~$x\in X$ determines a continuous family of representations.
 
A specialization of Definition~\ref{thm:111} applies to a family of
Hamiltonians parametrized by~$X$.  We use the equivalence between
self-adjoint operators and one-parameter groups of unitary operators given by
Stone's theorem~\cite[\S VIII.4]{RS}.

  \begin{definition}[]\label{thm:112}
 A family of (possibly unbounded) self-adjoint operators~$\{H_x\}$, $x\in X$,
acting on a Hilbert bundle $\sE\to X$ is \emph{continuous} iff the associated
family $t\mapsto e^{-itH_x/\hbar}$ of unitary one-parameter groups is a
continuous family of unitary representations of~$\RR$ on $\sE\to X$.
  \end{definition}

Recall that $\lambda \in \CC$ is in the \emph{resolvent set} of a
self-adjoint operator~$H$ if $\lambda -H\:\domain(H)\to\sH$ is bijective with
bounded inverse $R_\lambda (H)$, the \emph{resolvent}.  Then $\lambda \mapsto
R_\lambda (H)$ is a holomorphic function whose domain is the resolvent
set---the complement of the spectrum---so includes $\CC\setminus \RR$ since
$H$~is self-adjoint, and whose codomain is the space of bounded self-adjoint
operators.  A family~$\{H_x\}_{x\in X}$ of self-adjoint operators on a fixed
Hilbert space~$\sH$ has a \emph{common core}~$D\subset \sH$ if $D\subset \sH$
is dense, $D\subset \domain(H_x)$ for all~$x\in X$, and the closure of the
restriction of~$H_x$ to~$D$ is~$H_x$.

  \begin{proposition}[{\cite[\S VIII.7]{RS}}]\label{thm:148}
 Let $X$~be a compactly generated topological space, $\sE\to X$ a Hilbert
bundle, and $\{H_x\}_{x\in X}$ a family of self-adjoint operators acting on
the fibers of~$\sE\to X$. 
  \begin{enumerate}
 \item $\{H_x\}$~ is a continuous family iff the associated resolvent
function $(\lambda ,x)\mapsto R_\lambda (H_x)$ is continuous in the
compact-open topology for~$\lambda \in \CC\setminus \RR,\;x\in X$.

 \item Suppose $\sE\to X$ is the trivial bundle with fiber~$\sH$ and the
operators~$H_x,\;x\in X$ have a common core~$D$.  Then if for each convergent
sequence $x_n\to x$ in~$X$ and~$\psi \in D$ we have $H_{x_n}\psi \to
H_x\psi $, then $\{H_x\}$~is a continuous family.

  \end{enumerate}
  \end{proposition}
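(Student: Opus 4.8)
The statement is Proposition~\ref{thm:148}, which characterizes continuity of a family of self-adjoint operators. This is a standard functional-analytic fact, essentially Theorem~VIII.25 of~\cite{RS} transported into our bundle-and-compact-open-topology language, so the proof is mostly a matter of unwinding Definition~\ref{thm:112} and invoking the right results from~\cite{RS}. The plan is to prove~(i) first, reducing to the classical statement that the resolvent is continuous (in the appropriate topology) if and only if the unitary groups $e^{-itH_x/\hbar}$ converge, and then deduce~(ii) from~(i) by checking the resolvent-convergence criterion via a core argument.

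\textbf{Step 1: reduce to a fixed Hilbert space.} Since the question is local in~$X$ and $X$~is compactly generated, I would first observe that continuity can be tested on compact subsets of~$X$, over which $\sE\to X$ is trivial after a local trivialization; the transition functions are continuous into~$U(\sH)\co$ (Definition~\ref{thm:110}), so they interact harmlessly with the resolvent. Thus it suffices to treat the case $\sE = X\times \sH$ with $\{H_x\}$ a family of self-adjoint operators on a fixed~$\sH$.

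\textbf{Step 2: prove~(i).} By Definition~\ref{thm:112}, $\{H_x\}$~is continuous iff $(t,x)\mapsto e^{-itH_x/\hbar}$ is continuous into $U(\sH)\co$. The key classical input is that for a single sequence $x_n\to x$, strong convergence $e^{-itH_{x_n}/\hbar}\to e^{-itH_x/\hbar}$ for all~$t$ is equivalent to strong convergence $R_\lambda(H_{x_n})\to R_\lambda(H_x)$ for one (hence all) $\lambda\in\CC\setminus\RR$; this is exactly~\cite[Theorem~VIII.21]{RS} (strong resolvent convergence). One direction uses the formula $R_\lambda(H) = i\int_0^\infty e^{-i\lambda t} e^{-itH}\,dt$ (for $\Im\lambda>0$, with $\hbar$ absorbed) together with dominated convergence; the other uses a Laplace-transform/approximation argument to recover the unitary group from the resolvent, as in the proof of Trotter's theorem. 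Since (by the discussion in Appendix~\ref{sec:14}) compact sets and hence continuity of maps from compactly generated spaces agree in the strong and compact-open topologies, the sequential statement upgrades to: $(\lambda,x)\mapsto R_\lambda(H_x)$ continuous in $\End(\sH)\co$ $\iff$ $(t,x)\mapsto e^{-itH_x/\hbar}$ continuous in $\End(\sH)\co$ $\iff$ $\{H_x\}$ continuous. One subtlety worth a sentence: the resolvent is $\End(\sH)\co$-valued, and one must check joint continuity in $(\lambda,x)$, but holomorphy of $\lambda\mapsto R_\lambda(H_x)$ plus the first resolvent identity $R_\lambda - R_\mu = (\mu-\lambda)R_\lambda R_\mu$ reduces joint continuity to continuity in~$x$ for fixed~$\lambda$.

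\textbf{Step 3: prove~(ii).} Assume $\sE = X\times\sH$ is trivial, the $H_x$ share a common core~$D$, and $H_{x_n}\psi\to H_x\psi$ for all $\psi\in D$ whenever $x_n\to x$. This is precisely the hypothesis of~\cite[Theorem~VIII.25(a)]{RS}, which concludes strong resolvent convergence $R_\lambda(H_{x_n})\to R_\lambda(H_x)$ along every convergent sequence; I would reproduce the short argument (for $\psi\in D$, $R_\lambda(H_x)\psi$ lies in a common dense set on which $(\lambda-H_{x_n})\to(\lambda-H_x)$ strongly, and the resolvents are uniformly bounded by $1/|\Im\lambda|$, so a standard $3\epsilon$-estimate gives the convergence). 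Because $X$~is compactly generated, sequential continuity of $x\mapsto R_\lambda(H_x)$ in the strong—equivalently compact-open—topology is genuine continuity, and then part~(i) delivers continuity of the family. The one place to be slightly careful is that $\{H_x\}$ must actually be a family of self-adjoint operators to begin with (so that resolvents exist off the real axis); this is part of the standing hypothesis, and the common-core condition guarantees the $H_{x_n}$ are genuinely the closures of their restrictions to~$D$, so no domain pathology intervenes.

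\textbf{Main obstacle.} No step is deep; the only real work is bookkeeping the passage between the strong operator topology (in which~\cite{RS} is phrased) and the compact-open topology (in which we have framed everything), and confirming that ``continuous family'' as defined via one-parameter unitary groups matches strong resolvent convergence. The cleanest route is to cite the coincidence of compact sets in the two topologies from~\cite[Appendix~1]{AS} (already invoked in Appendix~\ref{sec:14}) so that for compactly generated~$X$ the notions of continuity coincide, after which both parts are direct applications of~\cite[\S VIII.7]{RS}.
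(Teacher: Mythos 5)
The paper gives no proof of this proposition: it is stated with the attribution \cite[\S VIII.7]{RS} and followed only by the remark that the strong and compact-open topologies may be interchanged. Your proposal correctly unpacks exactly what that citation entails — (i) is the equivalence between strong resolvent convergence and strong convergence of the unitary one-parameter groups, and (ii) is the common-core criterion for strong resolvent convergence, both of which appear in Reed--Simon~\S VIII.7 (Theorems~VIII.21 and~VIII.25(a) respectively) — together with the compact-open/strong bookkeeping from~\cite[Appendix~1]{AS} that the paper relies on throughout Appendix~D. This is the same route the paper takes; your extra care about local trivialization and joint continuity in $(\lambda,x)$ via the resolvent identity are sensible sanity checks but not departures from the intended argument.
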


\noindent
 We remind that we can substitute the strong topology for the compact-open
topology.

 Next we give a precise version of Definition~\ref{thm:26}.  Recall from
Remark~\ref{thm:98} that an extended \Qsymmetry class $(G,\phi ,\tau ,c)$ has a
canonically associated \Qsymmetry class~$(\tilG,\tp,\tilde\tau)$ which includes
time translations.

  \begin{definition}[]\label{thm:113}
 Let $X$~be a topological space.  A \emph{continuous family of gapped systems
with extended \Qsymmetry class~$(G,\phi ,\tau ,c)$} is a continuous family of
$(\tp,\tilde\tau )$-twisted representations of~$\tilG$ parametrized by~$X$
such that for all~$x\in X$ the Hamiltonian~$H_x$ does not have~0 in its
spectrum. 
  \end{definition}

\noindent
 In other words, there is a Hilbert bundle $\pi \:\sE\to X$ and a continuous
action $\tilG^\tau \times \sE\to\sE$ which on each fiber of~$\pi $ is a
$(\tp,\tilde\tau )$-twisted representation of~$\tilG$.  The Hamiltonian~$H_x$
is the self-adjoint generator of the one-parameter group $\RR\subset
\tilG^\tau \to U(\sE_x)\co$.

In~\S\ref{sec:6} we use the positive and negative spectral projections
associated to Hamiltonians which do not contain~0 in their spectrum.

  \begin{proposition}[]\label{thm:149}
 Let $X$~be a compactly generated topological space, $\sE\to X$ a Hilbert
bundle, and $\{H_x\}$~a continuous family of self-adjoint operators acting
on~$\sE$.  Assume 0~is not in the spectrum of~$H_x$ for all~$x\in X$.  Then
  \begin{enumerate}
 \item the positive spectral projections~$\{\chi ^+(H_x)\}$ and negative
spectral projections~$\{\chi ^-(H_x)\}$ form continuous families of bounded
self-adjoint operators; and

 \item if the negative spectral projections have finite rank, then the images
of~$\chi ^+$ and $\chi ^-$ are sub-Hilbert bundles~$\sEp$ and~$\sEm$. 

  \end{enumerate}

  \end{proposition}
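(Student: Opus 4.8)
\textbf{Proof proposal for Proposition~\ref{thm:149}.}

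The plan is to reduce everything to the continuity of the resolvent and then apply the holomorphic functional calculus. For~(i), I would use the characterization in Proposition~\ref{thm:148}(i): the family $\{H_x\}$ is continuous iff $(\lambda ,x)\mapsto R_\lambda (H_x)$ is continuous in the compact-open topology for $\lambda \in \CC\setminus \RR$. The spectral projections can be written as contour integrals of the resolvent. Concretely, since $0$ is not in the spectrum of any $H_x$, I want to fix --- locally on $X$ --- a contour that separates the positive and negative parts of the spectrum. Here is the one technical point that needs care: I should first argue that the spectral gap at~$0$ is \emph{locally uniform} in~$x$, i.e., for each $x_0\in X$ there is a neighborhood $U$ and $\delta >0$ such that $(-\delta ,\delta )$ is in the resolvent set of $H_x$ for all $x\in U$. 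This follows from continuity of the resolvent at $\lambda =\pm i\delta '$ for small $\delta '$, together with the standard estimate $\|R_\lambda (H)\|= \operatorname{dist}(\lambda ,\operatorname{spec} H)\inv$: if $\|R_{i\delta '}(H_x)\|$ stays bounded near $x_0$, the gap cannot close. Then I would take $\Gamma \subset \CC\setminus \RR$ a fixed rectifiable contour (the boundary of a half-plane-like region, suitably regularized to be compact, or a pair of vertical lines at $\pm\delta /2$ closed up) encircling the negative spectrum of every $H_x$, $x\in U$, and set
  \begin{equation*}
     \chi ^-(H_x) = \frac{1}{2\pi i}\oint_\Gamma R_\lambda (H_x)\,d\lambda ,
     \qquad \chi ^+(H_x)=1-\chi ^-(H_x).
  \end{equation*}
Continuity of $x\mapsto \chi ^-(H_x)$ in the compact-open topology then follows by interchanging the integral over the (compact) contour with application to a vector $\psi \in \sE_x$ and using joint continuity of $(\lambda ,x)\mapsto R_\lambda (H_x)\psi $ together with the local uniform bound on $\|R_\lambda (H_x)\|$ along~$\Gamma $; this gives continuity of the map into $\End\co$, and the analogous statement for the inverses (needed for Definition~\ref{thm:107}) is trivial since $\chi ^\pm$ are idempotents, hence their ``inverses'' on the relevant subspaces are themselves. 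Self-adjointness and idempotency of $\chi ^\pm(H_x)$ are immediate from the functional calculus for each fixed~$x$. Since continuity is a local condition and $X$ is compactly generated, patching over a cover of $X$ finishes~(i).

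For~(ii), assume the negative spectral projections have finite rank, and let $r$ be that rank --- first I should note $r$ is locally constant in~$x$, because $x\mapsto \chi ^-(H_x)$ is a continuous family of finite-rank projections and the rank of a projection is locally constant under any reasonable (e.g.\ strong, hence compact-open) continuity once one knows the rank is finite; alternatively, shrink $U$ so that $\|\chi ^-(H_x)-\chi ^-(H_{x_0})\|<1$ in operator norm --- which one gets from the contour integral estimate on~$U$ --- and then the standard fact that two projections at operator-norm distance $<1$ are conjugate, hence have equal rank. Now I would build local trivializations of $\sEm$: over $U$ pick a frame $\psi _1,\dots ,\psi _r$ of $\operatorname{image}\chi ^-(H_{x_0})$, transport it by $\psi _j(x):=\chi ^-(H_x)\psi _j$, check that for $x$ near $x_0$ these remain linearly independent (again using $\|\chi ^-(H_x)-\chi ^-(H_{x_0})\|<1$), Gram--Schmidt to get a continuous orthonormal frame, and declare this frame to trivialize $\sEm\big|_U\cong U\times \CC^r$. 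Transition functions between two such trivializations are continuous maps into $U(\CC^r)$ by construction, so $\sEm\to X$ is a (finite-rank) Hilbert bundle, and it is a sub-bundle of $\sE$ because the inclusion $\sEm\hookrightarrow \sE$ is fiberwise the inclusion of $\operatorname{image}\chi ^-(H_x)$ and is continuous. For $\sEp=\operatorname{image}\chi ^+(H_x)$: its complement $\sEm$ is a sub-bundle with a continuous family of (finite-rank) orthogonal projections onto it, so $\sEp=\ker\chi ^-(H_\cdot )$ inherits local trivializations by extending a frame of $\sEm\big|_U$ to a full ``frame'' --- more precisely, over $U$ one has the continuous isomorphism $\sE\big|_U\cong \sEm\big|_U\oplus \sEp\big|_U$ induced by $(\chi ^-(H_x),\chi ^+(H_x))$, and since $\sE\big|_U$ and $\sEm\big|_U$ are locally trivial Hilbert bundles, so is the orthogonal complement $\sEp\big|_U$; here I invoke the fact (used implicitly around Definition~\ref{thm:110}) that the orthogonal complement of a sub-Hilbert-bundle is again a Hilbert bundle. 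Patching over $X$ gives~(ii).

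The main obstacle I anticipate is the first step of part~(i): upgrading pointwise invertibility of $H_x$ at $\lambda =0$ to a \emph{locally uniform} spectral gap, and correspondingly choosing a single contour $\Gamma $ that works on a whole neighborhood. The compact-open topology is weaker than the norm topology, so I cannot directly conclude that $\|R_\lambda (H_x)\|$ is locally bounded from continuity of $x\mapsto R_\lambda (H_x)$ in the compact-open topology alone --- a priori the norms could blow up. The resolution is that the relevant continuity statement in Proposition~\ref{thm:148}(i) is phrased for the compact-open topology, but one also has the uniform boundedness principle: a compact-open-continuous (equivalently, strongly continuous) family of bounded operators is locally uniformly bounded in norm on a compactly generated (in particular, first-countable/metrizable) base, because strong boundedness on each point plus Baire category gives local norm boundedness. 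So I would cite uniform boundedness to get $\sup_{x\in U}\|R_{i\delta '}(H_x)\|<\infty $, and then the norm estimate for resolvents closes the gap. Everything after that is routine contour-integral manipulation and the standard ``projections close in norm are conjugate'' lemma, so I do not expect further difficulties; the proof is essentially the one in~\cite[\S VIII.7]{RS} combined with the bundle-theoretic bookkeeping from Appendix~\ref{sec:14}.
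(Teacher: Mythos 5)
Your overall strategy---express the spectral projections as contour integrals of the resolvent and pass to the limit by dominated convergence---is the same as the paper's, and the second half (frame plus Gram--Schmidt for the local trivialization in~(ii)) is essentially what the paper does. But there is a genuine error in the step you yourself flagged as the main obstacle in~(i), and the way the paper handles it is instructive.

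You want to first establish a locally uniform spectral gap $(-\delta ,\delta )$ and then run a contour through $\pm \delta /2$. Your argument for the uniform gap is: if $\|R_{i\delta '}(H_x)\|$ stays bounded near $x_0$, the gap cannot close. This is false. From $\|R_\lambda (H)\|=\operatorname{dist}(\lambda ,\operatorname{spec}H)^{-1}$ and the fact that $\operatorname{spec}H\subset \RR$, one gets $\operatorname{dist}(i\delta ',\operatorname{spec}H)=\sqrt{(\delta ')^2+d^2}$ where $d=\operatorname{dist}(0,\operatorname{spec}H)$ is the gap, hence
\begin{equation*}
   \|R_{i\delta '}(H)\|=\frac{1}{\sqrt{(\delta ')^2+d^2}}\le \frac{1}{\delta '}
\end{equation*}
\emph{regardless of $d$}. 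So this norm stays bounded even as the gap closes, and the uniform boundedness principle applied to $R_{i\delta '}(H_x)$ tells you nothing. To control the gap you would need $\|R_0(H_x)\|=\|H_x^{-1}\|$, but continuity of the resolvent at a \emph{real} point is not what Proposition~\ref{thm:148}(i) gives you (that statement covers $\lambda \in \CC\setminus \RR$ only), so the gap argument is not so quick to patch.

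The paper sidesteps all of this by choosing the contour $\Gamma $ to pass \emph{through} the origin: $\Gamma (0)=0$, with $\Gamma (t)=t\mp i$ for $|t|>1$. Since $0$ is in the resolvent set of every $H_x$ by hypothesis, the resolvent along $\Gamma $ is defined for all $x$ with no need for a uniform gap; the strong convergence $R_{\Gamma (t)}(H_n)\to R_{\Gamma (t)}(H)$ for a.e.\ $t$ plus the trivial uniform bound on the far tails feed directly into dominated convergence. That is the one idea your proposal is missing: place the contour at the point you already control ($\lambda =0$) rather than trying to manufacture a whole interval you control. In~(ii), you should also note that ``norm continuity of $\chi ^-(H_x)$ from the contour integral estimate'' does not come for free---the contour integral gives compact-open (strong) continuity only, and upgrading to norm continuity for finite-rank projections itself presupposes local constancy of the rank, so the argument as written is circular. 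The paper avoids this by going straight to the Gram matrix $\bigl(\langle \chi ^-(H_x)e_i,\chi ^-(H_x)e_j\rangle \bigr)$, whose invertibility for $x$ near $x_0$ follows from strong continuity alone and directly produces the local trivialization.
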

 
\noindent 
 A corollary of the proof of~(ii) is that if $\{E_x\subset \sE_x\}_{x\in X}$
is the image of a family of finite rank spectral projections~$\chi ^{[a,b]}$
with neither~$a$ nor~$b$ in the spectrum of each~$H_x$, then $E\to X$ is a
locally trivial vector bundle.

  \begin{proof}
 The statements are local and by the local triviality of~$\sE\to X$ we may
assume $H_x$~acts on a constant Hilbert space~$\sH$.  Let $\Gamma\:\RR\to\CC$
be a smooth embedding such that $\Gamma (0)=0$ and $\Gamma (-t)=t+i$, $\Gamma
(t)=t-i$ for~$t>1$.  Then if $H$~is self-adjoint and 0~is not in the
spectrum, the spectral calculus implies
  \begin{equation}\label{eq:161}
     \chi ^+(H) = \int_{\Gamma }d\lambda \,R_\lambda (H),
  \end{equation}
where the integral is defined as~$\int_{\Gamma }d\lambda
=\lim\limits_{M\to\infty }\int_{-M}^Mdt$.  Let $H_n\to H$ be a convergent
sequence in the sense of~Definition~\ref{thm:112}, so that by
Proposition~\ref{thm:148} we have $R_{\,\Gamma (t)}(H_n)\to R_{\,\Gamma
(t)}(H)$ in the strong topology for all~$t\not= 0$.  For any vector~$\psi $
the norm~$\|R_{\,\Gamma (t)}(H)\psi \|$~is uniformly bounded, since it is
bounded by~$\|\psi \|$ for~$|t|>1$.  The Lebesgue dominated convergence
theorem now implies $\chi ^+(H_n)\to \chi ^+(H)$ in the strong topology.
This proves~(i).
 
For~(ii) we must prove that the finite dimensional vector
spaces~$\{\sEm_x=\chi ^-(H_x)(\sH)\}$ form a locally trivial family of
subspaces of~$\sH$.  Fix~$x_0\in X$ and let $\pi _x\:\sEm_{x_0}\to\sEm_x$ be
orthogonal projection, i.e., the restriction of~$\chi ^{-}(H_x)$
to~$\sEm_{x_0}$.  We claim that for $x$~in an open neighborhood of~$x_0$ the
map~$\pi _x$ is an isomorphism, hence a local trivialization after applying
Gram-Schmidt to make it an isometry.  The claim follows since by strong
continuity the matrix of inner products $\left(\begin{matrix} \langle \chi
^-(H_x)e_i,\chi ^-(H_x)e_j \rangle \end{matrix}\right)_{i,j=1\dots N}$ is
invertible for $x$~sufficiently close to~$x_0$, where $\{e_i\}_{i=1}^N$ is an
orthonormal basis of~$\sEm_{x_0}$.
  \end{proof}

  \subsection*{Fourier transform (Bloch sums) and the Berry connection} 
 
Next, we prove that Hypothesis~\ref{thm:56}(ii) is satisfied in the standard
situation described at the beginning of~\S\ref{sec:6}.  We need the following
preliminary.

  \begin{lemma}[]\label{thm:114}
 Let $X,Y$ be locally compact Hausdorff topological spaces with Borel
measures, and assume that $Y$~is compact.  Let $\sL\to X\times Y$ be a
hermitian line bundle.  Then the Hilbert spaces
$\sE_x=L^2\bigl(Y;\sL\res{\{x\}\times Y} \bigr)$, $x\in X$, are the fibers of
a Hilbert bundle $\sE\to X$ and there exists an isomorphism of Hilbert spaces
$L^2(X\times Y;\sL)\cong L^2(X;\sE)$.
  \end{lemma}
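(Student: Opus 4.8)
The plan is to build the Hilbert bundle $\sE\to X$ from local trivializations of the line bundle $\sL\to X\times Y$, then identify the two $L^2$-spaces by a Fubini-type argument. First I would cover $X$ by open sets $U$ over which $\sL$ is trivializable; since $Y$ is compact, shrinking $U$ if necessary I can find a trivialization $\sL\res{U\times Y}\cong (U\times Y)\times\CC$ (using local compactness and a partition-of-unity/compactness argument on the compact slice $Y$ to patch trivializations on $\{x\}\times Y$ to a neighborhood $U\times Y$). Such a trivialization induces a unitary isomorphism $\sE_x\cong L^2(Y;\CC)$ for each $x\in U$, varying continuously in $x$, which is exactly the local trivialization data~\eqref{eq:116} required by Definition~\ref{thm:110}. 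I must check that the transition functions between two such local trivializations---which come from the transition functions of $\sL$, i.e.\ continuous maps $U\cap U'\to \Map(Y,\TT)$, hence continuous maps $U\cap U'\to U(L^2(Y))$ in the compact-open topology since $Y$ is compact---are continuous into $U(\sH)\co$. This uses that multiplication operators depend continuously (in the compact-open, equivalently strong, topology) on the multiplier when the latter varies continuously over a compact space, which is the standard fact recorded in Appendix~\ref{sec:14}. This produces the Hilbert bundle $\sE\to X$.

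Next I would construct the isomorphism $L^2(X\times Y;\sL)\xrightarrow{\ \cong\ }L^2(X;\sE)$. The map is the tautological one: a section $\psi$ of $\sL\to X\times Y$ which is $L^2$ for the product measure is sent to the assignment $x\mapsto \psi\res{\{x\}\times Y}$, which by Fubini's theorem is (for almost every $x$) an element of $\sE_x=L^2(Y;\sL\res{\{x\}\times Y})$, and $x\mapsto\|\psi\res{\{x\}\times Y}\|^2_{\sE_x}$ is an integrable function on $X$ with integral $\|\psi\|^2_{L^2(X\times Y;\sL)}$. Working in a local trivialization over $U$ this is literally the classical statement $L^2(U\times Y)\cong L^2(U;L^2(Y))$ for product measures, so the map is a well-defined isometry locally; measurability of $x\mapsto \psi\res{\{x\}\times Y}$ as an $\sE$-valued function follows since it is measurable in each chart and the charts are compatible. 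Glueing over a countable cover (both spaces being $\sigma$-finite / second countable in the relevant cases, or handling the general locally compact Hausdorff case directly via Fubini) gives a global isometry; surjectivity is immediate since in each chart every element of $L^2(U;L^2(Y))$ comes from an element of $L^2(U\times Y)$, and these patch.

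The main obstacle I expect is the measure-theoretic bookkeeping in the non-trivialized global situation---ensuring that ``section of $\sE\to X$'' in the sense used to define $L^2(X;\sE)$ (measurability of the vector-valued function, the integrability of the pointwise norm, completeness of the resulting space) is set up so that Fubini applies cleanly across charts, and that the glueing of local isometries is consistent on overlaps (which it is, because the transition functions act the same way on $L^2(X\times Y;\sL)$ and on $L^2(X;\sE)$ by construction). Once the local model $L^2(U\times Y;\CC)\cong L^2(U;L^2(Y))$ is in hand and one checks the two constructions intertwine the transition functions, the rest is routine: the global isometry is obtained by a standard glueing argument, and surjectivity and the norm identity are inherited from the local statements.
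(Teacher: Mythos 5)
Your plan matches the paper's in outline: build the Hilbert bundle from local trivializations of $\sL$, note that the transition functions act as multiplication operators that are continuous into $U(L^2(Y))$ in the compact-open topology, and then get the $L^2$-isomorphism from a Fubini argument (the paper leaves this last step implicit). However there is a real gap at the first step. You claim that, after shrinking $U$, one can fully trivialize $\sL\res{U\times Y}\cong (U\times Y)\times\CC$. That would require $\sL\res{\{x\}\times Y}$ to be a \emph{trivial} line bundle over $Y$, which is false in general: already with $X$ a single point and $Y=\CP^1$, $\sL$ could be $\sO(-1)$, and there is no trivialization, no matter how you ``patch'' over compact $Y$. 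Compactness of $Y$ gives you a \emph{finite} cover by trivializing sets, not a global trivialization; the obstruction is topological (e.g.\ $c_1$).

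What the paper actually does, and what you should do, is weaker but sufficient: for each $x_0\in X$ there is a neighborhood $U$ and an isomorphism $\sL\res{U\times Y}\cong U\times\bigl(\sL\res{\{x_0\}\times Y}\bigr)$, i.e.\ one trivializes only the $U$-dependence (this follows from local contractibility of $U$ and the homotopy invariance/rigidity of bundles over a product with a compact factor). With this form of local trivialization, the fiber $\sE_x$ is identified with $L^2(Y;\sK)$ for a \emph{fixed but possibly nontrivial} hermitian line bundle $\sK=\sL\res{\{x_0\}\times Y}$, and a transition function between two charts is, after fixing one reference isomorphism of the two model bundles, multiplication by a continuous function $h\colon (U\cap U')\times Y\to\CC^\times$. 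The key continuity input is then exactly the fact the paper records as~\eqref{eq:118}: for $h\colon U\times Y\to\CC^\times$ continuous and $\sK\to Y$ a hermitian line bundle, $x\mapsto$ (multiplication by $h(x,-)$ on $L^2(Y;\sK)$) is continuous in the compact-open topology. Your later steps --- continuity of transition maps into $U(\sH)\co$, the local model $L^2(U\times Y)\cong L^2(U;L^2(Y))$, and the glueing of local isometries --- go through verbatim once the local trivialization is stated in this corrected form.
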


  \begin{proof}
 About each~$x\in X$ is an open neighborhood $U\subset X$ together with a
continuous isomorphism $\phi \:\sL\res{U\times Y}\to U\times
\sL\res{\{x\}\times Y}$.  Then $\phi $~induces a local trivialization
$\varphi \:\sE\res U\to U\times \sE_x$ which we use to topologize~$\sE$.  To
prove the resulting transition functions are continuous, use the fact that if
$h\:U\times Y\to\CC^\times $ is continuous and $\sK\to Y$ a hermitian line
bundle, then 
  \begin{equation}\label{eq:118}
     x\mapsto \textnormal{multiplication by $h(x,-)$ on $L^2(Y;\sK)$}
  \end{equation}
is continuous in the compact-open topology.
  \end{proof}

 Let $E$~be a Euclidean space of dimension~$d$ and $\Pi $~a full lattice of
translations of~$E$.  The Pontrjagin dual~$\XL$ is the torus of unitary
characters $\lambda \:\Pi \to\TT$.  The affine torus~$E/\Pi $ is a torsor
over the dual torus to~$\XL$.  The lattice~$\Pi $ acts on the Hilbert
space~$L^2(E;\CC)$ of $L^2$~functions with respect to Lebesgue measure, so
determines a sheaf~ $\sS$ of Hilbert spaces, by Lemma~\ref{thm:83}.

  \begin{proposition}[]\label{thm:115}
 $\sS$~is the sheaf of sections of a Hilbert bundle $\sE\to \XL$.
  \end{proposition}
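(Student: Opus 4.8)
The plan is to construct the Hilbert bundle $\sE\to\XL$ directly and then identify its sheaf of sections with $\sS$. The key observation is that for each character $\lambda\:\Pi\to\TT$ there is an associated hermitian line bundle $\sL_\lambda\to E/\Pi$: namely, functions $f\:E\to\CC$ satisfying the Bloch wave condition~\eqref{eq:132} are precisely the sections of $\sL_\lambda$, where $\sL_\lambda$ is the quotient $(E\times\CC)/\Pi$ with $\Pi$ acting by $\xi\cdot(x,z)=(x+\xi,\lambda(\xi)z)$. First I would assemble these into a single hermitian line bundle $\sL\to \XL\times(E/\Pi)$ by the analogous quotient construction on $E\times E\times\CC$, checking continuity of the transition data; then Lemma~\ref{thm:114} (with $X=\XL$, $Y=E/\Pi$) immediately gives a Hilbert bundle $\sE\to\XL$ with fiber $\sE_\lambda=L^2(E/\Pi;\sL_\lambda)$ together with an isomorphism $L^2(\XL\times E/\Pi;\sL)\cong L^2(\XL;\sE)$.

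Next I would produce the Fourier transform (Bloch sum) map $L^2(E;\CC)\to L^2(\XL\times E/\Pi;\sL)$. On a dense subspace (say compactly supported continuous functions) this is the explicit assignment $f\mapsto\bigl((\lambda,[x])\mapsto \sum_{\xi\in\Pi}\lambda(\xi)\inv f(x+\xi)\bigr)$, interpreted as a section of $\sL$; one checks directly that the summand transforms correctly under $\Pi$ so the expression is a well-defined section, and that this map is an isometry onto its image when $\XL$ carries the Haar measure of suitable total volume (Poisson summation / Plancherel for the abelian group $\Pi$). Since it is an isometry on a dense subspace it extends to an isometric embedding of Hilbert spaces, and surjectivity follows because the image already contains a dense subspace of the target (trigonometric-polynomial sections). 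Composing with the isomorphism from the previous paragraph yields $L^2(E;\CC)\cong L^2(\XL;\sE)$.

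Finally, I would match the sheaf structures. The sheaf $\sS$ is by definition (Lemma~\ref{thm:83}) the sheaf $U\mapsto\image\mu(U)$, where $\mu$ is the projection-valued measure on $\XL$ coming from the spectral decomposition of the commuting unitary operators $\{$translation by $\xi:\xi\in\Pi\}$ on $L^2(E;\CC)$. Under the Bloch sum isomorphism, translation by $\xi\in\Pi$ becomes multiplication by the function $\lambda\mapsto\lambda(\xi)$ on $L^2(\XL;\sE)$, so $\mu(U)$ is just multiplication by the indicator of $U$; hence $\image\mu(U)=L^2(U;\sE\res U)$, which is exactly the space of sections of $\sE$ over $U$. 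Therefore $\sS$ is the sheaf of sections of $\sE\to\XL$, as claimed.

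I expect the main obstacle to be the verification that the Bloch sum map is well-defined and isometric with the correct normalization of Haar measure, together with the continuity of the transition functions for $\sL$ (and hence for $\sE$) in the compact-open topology; these are the places where one must be careful about convergence of the lattice sum and about the precise measure-theoretic statements underlying Lemma~\ref{thm:114}. Everything else is a matter of unwinding the quotient constructions and the definition of the spectral measure.
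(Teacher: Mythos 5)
Your proposal follows the paper's own route essentially verbatim: construct the Poincar\'e line bundle $\sL\to\XL\times(E/\Pi)$ from the quasi-periodicity condition, apply Lemma~\ref{thm:114} to get the Hilbert bundle $\sE\to\XL$, and use the Bloch-sum Fourier transform to produce the isomorphism $L^2(E;\CC)\cong L^2(\XL;\sE)$. Your final paragraph, identifying the spectral projection $\mes(U)$ with multiplication by the indicator of~$U$ and hence $\sS$ with the section sheaf, is a step the paper leaves implicit; spelling it out is a small but genuine improvement in completeness.
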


  \begin{proof}
 Let $\sL\to \XL\times E/\Pi $ be the ``Poincar\'e line bundle'' whose
sections are functions $\hf\:\XL\times E\to\CC$ which satisfy the
quasi-periodicity condition
  \begin{equation}\label{eq:119}
     \hf(\lambda ,x+\xi ) = \lambda (\xi )\hf(\lambda ,x),\qquad \xi \in \Pi
     . 
  \end{equation}
Fourier transform defines an isomorphism of Hilbert spaces 
  \begin{equation}\label{eq:120}
     L^2(E;\CC)\longrightarrow L^2(\XL\times E/\Pi ;\sL) 
  \end{equation}
which takes $f\in L^2(E;\CC)$ to 
  \begin{equation}\label{eq:121}
     \hf(\lambda ,x) = \sum\limits_{\xi '\in \Pi }\lambda (\xi ')\inv f(x+\xi
     '); 
  \end{equation}
the inverse maps $\hf\in L^2(\XL\times E/\Pi ;\sL)$ to 
  \begin{equation}\label{eq:122}
     f(x) = \int_{\XL}d\lambda \,\hf(\lambda ,x), 
  \end{equation}
where $d\lambda $~is Haar measure with unit total volume.
Lemma~\ref{thm:114} identifies the codomain of~ \eqref{eq:120} with the space
of $L^2$~sections of the Hilbert bundle $\sE\to \XL$ whose fiber at~$\lambda
\in \XL$ is
  \begin{equation}\label{eq:131}
      \sE_{\lambda }=L^2\bigl(E/\Pi ;\sL_\lambda \bigr) ,
  \end{equation}
where $\sL_\lambda =\sL\res{\{\lambda \}\times E/\Pi } $.
  \end{proof}

  \begin{remark}[]\label{thm:116}
 In the condensed matter physics literature \eqref{eq:119}~is known as the
\emph{Bloch wave condition} and \eqref{eq:121} is known as a \emph{Bloch
sum}. 
  \end{remark}

The Hilbert bundle $\sE\to\XL$ is in fact smooth and can be built from
locally constant transition functions.  Let $V$~denote the vector space of
translations of~$E$, so $\Pi \subset V$, and let $\Pi ^*\subset V^*$ be the
dual lattice of linear functionals $k\:V\to\RR$ such that $\langle k,\xi
\rangle\in \ZZ$ for all~$\xi \in \Pi$.  Here the bracket is the pairing
between~$V^*$ and~$V$.  We identify $H_1(\XL;\ZZ)\cong \Pi ^*$.

  \begin{proposition}[]\label{thm:141}
 For each choice of origin~$x_0\in E$ there is a flat covariant
derivative~$\nabla ^{\sE}$ on $\sE\to\XL$ whose holonomy around a loop with
homology class $k\in \Pi ^*$ is multiplication by the periodic function
  \begin{equation}\label{eq:151}
     h_{\pi ^*}(x)=e^{2\pi i\langle k,x-x_0  \rangle},\qquad x\in E.
  \end{equation} 
  \end{proposition}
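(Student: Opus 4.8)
The plan is to trivialize the Poincar\'e line bundle $\sL\to\XL\times E/\Pi$ over a standard cover of $\XL$ coming from a fundamental domain for $\Pi^*$ in $V^*$, compute that the resulting transition functions are \emph{locally constant} (in fact multiplication by unit complex numbers), and then exhibit the flat covariant derivative $\nEm$ as the one for which these locally constant transition functions are parallel. First I would recall from the proof of Proposition~\ref{thm:115} that sections of $\sL$ are functions $\hf\:\XL\times E\to\CC$ satisfying the quasi-periodicity condition $\hf(\lambda,x+\xi)=\lambda(\xi)\hf(\lambda,x)$ for $\xi\in\Pi$. Choosing the origin $x_0\in E$ lets me write $\lambda(\xi)=e^{2\pi i\langle k,\xi\rangle}$ for a Bloch momentum $k\in V^*$ representing $\lambda$, well-defined modulo $\Pi^*$. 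The key move is the standard ``periodic gauge'' substitution: replace a quasi-periodic section $\hf_\lambda$ by the genuinely periodic function $u_{k}(x)=e^{-2\pi i\langle k,x-x_0\rangle}\hf_\lambda(x)$, which descends to a function on the affine torus $E/\Pi$. Over a small coordinate ball $U\subset V^*$ (hence a chart of $\XL=V^*/\Pi^*$) this gives a trivialization $\varphi_U\:\sE\res U\xrightarrow{\cong}U\times L^2(E/\Pi)$, $\hf_\lambda\mapsto u_k$.

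Next I would compute the transition functions. If $U,U'$ are two charts with representative momenta $k$ and $k'=k+k^*$ differing by a lattice vector $k^*\in\Pi^*$ on the overlap, then the two periodic-gauge functions are related by $u'_{k'}(x)=e^{-2\pi i\langle k^*,x-x_0\rangle}u_k(x)$. Since $k^*\in\Pi^*$, the function $h_{k^*}(x)=e^{2\pi i\langle k^*,x-x_0\rangle}$ is $\Pi$-periodic, so multiplication by it is a well-defined unitary operator on $L^2(E/\Pi)$, and crucially it does \emph{not} depend on the point of $\XL$---it is a \emph{constant} transition function valued in $U(\sH)$. A cover with locally constant transition functions canonically carries a flat connection $\nEm$ whose parallel sections are exactly those that are constant in each chart. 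By construction the holonomy around a loop in $\XL$ whose class in $H_1(\XL;\ZZ)\cong\Pi^*$ equals $k$ is precisely multiplication by $h_{k}(x)=e^{2\pi i\langle k,x-x_0\rangle}$, which is formula~\eqref{eq:151}. The dependence of $\nEm$ on $x_0$ is through an overall gauge transformation, exactly matching the $x_0$-dependence in~\eqref{eq:151}.

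The step I expect to be the main obstacle is not the algebra of transition functions but the \emph{smoothness and well-definedness} of the trivializations in the infinite-dimensional setting: I must check that $\hf_\lambda\mapsto u_k$ really is a bounded isomorphism of Hilbert spaces varying smoothly (indeed real-analytically) in $\lambda$, and that the periodic-gauge identification is compatible with the topology on $\sE$ constructed via Lemma~\ref{thm:114}. This amounts to verifying that multiplication by the smooth family $x\mapsto e^{-2\pi i\langle k,x-x_0\rangle}$ of unit-modulus functions is continuous (compact-open), which is covered by the continuity statement~\eqref{eq:118} used in the proof of Lemma~\ref{thm:114}; smoothness in $k$ then follows because the exponential depends smoothly on $k$ and differentiation in $k$ brings down the bounded multiplication operator ``multiplication by $-2\pi i\langle\,\cdot\,,x-x_0\rangle$,'' which preserves $L^2(E/\Pi)$. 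Once this is in hand, the existence of $\nEm$ and the holonomy computation are formal, and flatness is immediate since locally constant transition functions give vanishing curvature.
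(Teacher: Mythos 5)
Your proof is correct and is essentially the trivialization/transition-function reformulation of the paper's argument, which instead writes down the parallel transport formula directly: along a path $\{\lambda_t\}$ in $\XL$ lifted to $\{k_t\}\subset V^*$, parallel transport is $\hf_t(x)=e^{2\pi i\langle k_t-k_0,\,x-x_0\rangle}\hf_0(x)$. Observe that your ``periodic gauge'' function $u_{k_t}=e^{-2\pi i\langle k_t,x-x_0\rangle}\hf_t$ is exactly what this transport holds constant, so the two descriptions are the same flat connection seen from dual sides---you exhibit the atlas with constant transition functions, the paper exhibits the path-ordered transport. Your route is somewhat longer but makes the gauge-theoretic content (the locally constant $U(\sH)$-valued cocycle on $\XL$, and hence flatness and the holonomy representation $\Pi^*\to U(\sH)$) more explicit, and it dovetails with Remark~\ref{thm:145} on $x_0$-dependence as a flat line-bundle twist; the paper's route is the minimal argument. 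One small thing worth tightening: your final paragraph worries about differentiating the trivialization in $k$, and you describe the derivative as ``multiplication by $-2\pi i\langle\,\cdot\,,x-x_0\rangle$.'' That expression is not itself a well-defined function on $E/\Pi$ (only $e^{2\pi i\langle k^*,x-x_0\rangle}$ for $k^*\in\Pi^*$ is periodic), so the correct statement is that you are differentiating the map $k\mapsto\bigl(\hf_\lambda\mapsto u_k\bigr)$ between the varying fiber $\sE_\lambda$ and the fixed model $L^2(E/\Pi)$, where $x-x_0$ may be taken in a fixed fundamental domain for $\Pi$ so that the resulting multiplication operator is bounded; with that clarification the smoothness claim is fine, and the paper is equally terse on this point.
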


  \begin{proof}
 Let $\hf_0\:E\to\CC$ be quasi-periodic for fixed~$\lambda _0$, as
in~\eqref{eq:119}, and let $\{\lambda _t\:0\le t\le 1\}$, be a smooth path
in~$\XL$.  Identify $\XL=V^*/\Pi ^*$ and choose a lift~$k_0\in V^*$
of~$\lambda _0$; then there is a unique smooth lift~$\{k_t\}\subset V^*$ of
the path~$\{\lambda _t\}\subset \XL$.  Define the parallel transport
of~$\hf_0$ along~$\{\lambda _t\}$ to be the path of functions
  \begin{equation}\label{eq:152}
     \hf_t(x) = e^{2\pi i\langle k_t-k_0,\,x-x_0\rangle}\hf_0(x),\qquad x\in
     E,\quad 0\le t\le1.
  \end{equation}
It is independent of the choice of~$k_0$ and defines the parallel transport
of a flat covariant derivative with the stated holonomy.
  \end{proof}

  \begin{remark}[]\label{thm:145}
 The dependence on the origin~$x_0$ is easily worked out from~\eqref{eq:151}
and~\eqref{eq:152} and is the familiar indeterminacy of the covariant
derivative on the Poincar\'e line bundle: we can tensor with a flat covariant
derivative pulled back from~$\XL$. 
  \end{remark}

  \begin{remark}[]\label{thm:144}
 If $\Pi $~is extended to a larger symmetry group~$G$, as in \eqref{eq:67},
then the quotient $G''$~acts on $\XL$ and with a possible central extension
the groupoid $\XL\gpd G''$ lifts to~$\sE\to\XL$.  We claim that this lifted
action commutes with the covariant derivative~$\nabla ^{\sE}$.  This follows
simply by observing that it preserves the parallel transport~\eqref{eq:152},
since the action of~$G''$ preserves the pairing $\langle -,-
\rangle\:V^*\otimes V\to\RR$.
  \end{remark}

In band theory one encounters smooth finite rank subbundles~$F\subset \sE$.
Let $F\xrightarrow{i}\sE\xrightarrow{p}F$ denote the inclusion and orthogonal
projection.  Then the compression $p\circ \nabla ^{\sE}\circ i$ is a
covariant derivative on~$F\to\XL$ called the \emph{Berry connection}.  If the
``Bloch Hamiltonians'' on the fibers of $\sE\to\XL$ are \emph{elliptic}
differential operators, and if $F\to\XL$ is obtained by finite rank spectral
projection, then $F\to\XL$ is a smooth vector bundle.  (That it is a
\emph{continuous} vector bundle follows from the proof of
Proposition~\ref{thm:149}.)  The proof, which we omit, uses elliptic
regularity.

Finally, we prove that differential operators such
as~\eqref{eq:134}---elliptic or not---lead to continuous families of Bloch
Hamiltonians.  Let $W$~be a finite dimensional hermitian vector space, $M$~a
smooth manifold, and $\{H_m\}_{m\in M}$ a smooth family of self-adjoint
differential operators of fixed (finite) order~$k\in \ZZ^{\ge0}$ acting on
functions $E\to W$.  Explicitly, let $x^1,\dots ,x^n$ be coordinates on~$E$.
Then we can write
  \begin{equation}\label{eq:162}
     H_m = \sum\limits_{q=0}^K\,(\sqrt{-1})^q\,a\mstrut _{i_1\cdots
     i_q}(m;x^1,\dots 
     ,x^n)\,\frac{\partial ^q}{\partial x^{i_1}\cdots \partial
     x^{i_q}},\qquad m\in M, 
  \end{equation}
where $a\mstrut _{i_1\cdots i_q}\:M\times E\to\End(W)$ is smooth and each
$a\mstrut _{i_1\cdots i_q}(m;x)$~is hermitian.  There is an implicit sum over
the symmetric multi-index~$i_1\cdots i_q$.  Assume $H_m$~is invariant under
translation by~$\Pi $, i.e., $a\mstrut _{i_1\cdots i_q}(m;x+\xi )= a\mstrut
_{i_1\cdots i_q}(m;x)$ for all $m\in M,\;x\in E,\;\xi \in \Pi $.  Then for
each~$\lambda \in \XL$ there is an induced self-adjoint differential
operator~$H_{m,\lambda }$ on $W$-valued quasi-periodic
functions~\eqref{eq:119}, which are sections of the vector bundle
$\sL_\lambda \otimes W\to E/\Pi $.  Let $\sE\to M\times \XL$ be the Hilbert
bundle with fiber $\sE_{m,\lambda }=L^2(E/\Pi ;\sL_\lambda \otimes W)$.

  \begin{proposition}[]\label{thm:150}
 The family of self-adjoint operators $\{H_{m,\lambda }\}$ is continuous. 
  \end{proposition}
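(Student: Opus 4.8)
The plan is to reduce the continuity of $\{H_{m,\lambda}\}$ to the criterion in Proposition~\ref{thm:148}(ii): exhibit a common core for all the operators $H_{m,\lambda}$, on a constant Hilbert space obtained from a local trivialization of $\sE\to M\times \XL$, and check strong convergence on that core along convergent sequences $(m_n,\lambda_n)\to(m,\lambda)$. Since the statement is local in $(m,\lambda)$, first I would trivialize $\sE\to M\times \XL$ near a point $(m_0,\lambda_0)$. Over a small neighborhood the Poincar\'e line bundle $\sL_\lambda\to E/\Pi$ can be trivialized smoothly in $\lambda$ (as in Proposition~\ref{thm:141} and Proposition~\ref{thm:115}, using the lift $k_t$ of the path), so that all the fibers $\sE_{m,\lambda}=L^2(E/\Pi;\sL_\lambda\otimes W)$ are identified with the fixed Hilbert space $\sH=L^2(E/\Pi;\CC^{\operatorname{rank}W})$; under this identification $H_{m,\lambda}$ becomes a differential operator on $\sH$ whose coefficients depend continuously (indeed smoothly) on $(m,\lambda)$, by~\eqref{eq:162} together with the explicit twist $e^{2\pi i\langle k-k_0,\,x-x_0\rangle}$ coming from the trivialization of $\sL_\lambda$.

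Next I would take the common core $D=C^\infty(E/\Pi;\CC^{\operatorname{rank}W})\subset \sH$ of smooth sections (with respect to a fixed reference identification). This is dense in $\sH$, is contained in the domain of every $H_{m,\lambda}$, and each $H_{m,\lambda}$ is essentially self-adjoint on $D$: the operators in question are $\Pi$-periodic, self-adjoint, elliptic-or-not differential operators with smooth coefficients on the compact manifold $E/\Pi$, so essential self-adjointness on smooth sections holds (for the non-elliptic case one invokes the standard fact that a symmetric differential operator with smooth coefficients on a compact manifold acting on a finite-rank bundle, constructed from a self-adjoint operator on all of $E$, has smooth sections as a core; alternatively one uses that the original $H_m$ on $L^2(E;W)$ is self-adjoint with core the Schwartz-type functions and that Fourier transform carries this to $D$ in the fibers). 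With $D$ in hand, for a convergent sequence $(m_n,\lambda_n)\to(m,\lambda)$ and any fixed $\psi\in D$ we have $H_{m_n,\lambda_n}\psi\to H_{m,\lambda}\psi$ in $\sH$, since $\psi$ has all derivatives bounded, the differentiation-and-multiplication is a fixed continuous operation $D\to\sH$, and the coefficients $a_{i_1\cdots i_q}(m_n;\cdot)$ and the twisting factors converge uniformly on the compact manifold $E/\Pi$. Proposition~\ref{thm:148}(ii) then yields that $\{H_{m,\lambda}\}$ is a continuous family near $(m_0,\lambda_0)$, and since continuity is local this completes the proof.

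The main obstacle is the verification that $D$ is a genuine \emph{common core} when the $H_m$ are allowed to be non-elliptic, as the proposition explicitly permits (e.g.\ a first-order Dirac-type operator, or a degenerate second-order operator): essential self-adjointness of a symmetric differential operator with smooth coefficients is not automatic in general, so one must use the hypothesis that $H_m$ arises from a genuinely self-adjoint operator on $L^2(E;W)$ and that the construction commutes with the Fourier/Bloch decomposition. The cleanest route is to observe that the $\Pi$-invariant self-adjoint operator $H_m$ on $L^2(E;W)$ has a core of $\Pi$-invariantly-controlled smooth functions, that the Bloch decomposition~\eqref{eq:120}--\eqref{eq:121} is a direct-integral decomposition $H_m=\int_{\XL}^\oplus H_{m,\lambda}\,d\lambda$, and that a core for a direct integral of self-adjoint operators can be chosen so that its fiberwise sections form cores for almost every $H_{m,\lambda}$; combined with regularity of the fiberwise problem on the compact manifold $E/\Pi$ this upgrades "almost every" to "every." Everything else in the argument is the routine continuity-of-coefficients bookkeeping sketched above, which I would not spell out in detail.
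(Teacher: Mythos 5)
Your proposal follows essentially the same route as the paper's proof: localize, use the explicit trivialization~\eqref{eq:152} to view all the $H_{m,\lambda}$ as acting on a fixed Hilbert space, take the common core~$D$ of smooth quasi-periodic functions (the paper's~\eqref{eq:163}, which is your $C^\infty(E/\Pi;\CC^{\operatorname{rank}W})$ after the identification), and then invoke Proposition~\ref{thm:148}(ii). The one point of difference is that the paper states ``Now apply Proposition~\ref{thm:148}(ii)'' without verifying that $D$ is in fact a common core, i.e., that each $H_{m,\lambda}$ is essentially self-adjoint on $D$; you are right that this is not automatic for non-elliptic operators, and the paper leaves it tacit. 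Your suggested remedy --- using that the $\Pi$-invariant $H_m$ on $L^2(E;W)$ is a priori self-adjoint with a good core, that the Bloch decomposition~\eqref{eq:120}--\eqref{eq:121} is a direct-integral decomposition $H_m=\int^\oplus_{\XL}H_{m,\lambda}\,d\lambda$, and then passing from a.e.~$\lambda$ to every~$\lambda$ using compactness of~$E/\Pi$ --- is a reasonable way to close this gap, and is not contained in the paper's one-line proof.
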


  \begin{proof}
 Relative to the local trivialization~\eqref{eq:152} we view~$H_{m,\lambda }$
as acting on 
  \begin{equation}\label{eq:163}
     D=\{e^{2\pi i\langle k_t-k_0,x-x_0 \rangle}\hf_0(x):\hf_0\:E\to W
     \textnormal{ is smooth with quasi-periodicity $\lambda _0$} \}.
  \end{equation}
Now apply Proposition~\ref{thm:148}(ii).
  \end{proof}

   \section{Twisted $K$-theory on orbifolds}\label{sec:19}
% lastsubsec@000

In this appendix we prove the theorem alluded to in Remark~\ref{thm:73}.  It
ensures that Definition~\ref{thm:71} reproduces standard twisted $K$-theory
for global quotients~$X\gpd G$, where $X$~is a nice compact space and $G$~a
finite group.  

Untwisted $K$-theory, equivariant or not, is defined in terms of finite rank
vector bundles~\cite{A1}.  It was realized early on (Atiyah-J\"anich) that
Fredholm operators also provide a model for $K$-theory, a crucial observation
for the Atiyah-Singer index theorem, for example.  By contrast, twisted
$K$-theory classes cannot in general be represented by finite rank bundles.
For example, if the twisting has infinite order in the abelian group of
twistings, then only the zero class has a finite rank representative.  Hence
models for twisted $K$-theory are built using Fredholm
operators~\cite{FHT1,AS}.  For the application to gapped topological
insulators~(\S\ref{sec:6}), specifically for Theorem~\ref{thm:131} and
Theorem~\ref{thm:136}, it is crucial that every twisted equivariant
$K$-theory class over the Brillouin torus have a finite rank representative.
This is what we sketch here.  We freely use \cite[Appendix]{FHT1}.
 
The starting point is \cite[Proposition~A.37]{FHT1}, which asserts that if
$\gp$~is a compact groupoid which is \emph{locally} of the form~$Y\gpd H$ for
$Y$~a compact space and $H$~a compact Lie group, then $\gp$~admits a
universal Hilbert bundle which is a sum of finite rank bundles if and only if
the groupoid~$\gp$ is locally equivalent to a \emph{global} quotient~$Y\gpd
H$.  The standard argument~\cite[Appendix]{A1} that a family of (untwisted)
Fredholm operators over a compact space is represented by a finite rank
$\zt$-graded vector bundle can be carried out on the groupoid~$\gp$ to prove
that if there is a universal Hilbert bundle which is a sum of finite rank
bundles, then every $K$-theory class on~$\gp$ is represented by a finite rank
$\zt$-graded vector bundle.  For twisted $K$-theory we use
\cite[Lemma~3.11]{FHT1} to reduce to the untwisted case on the centrally
extended groupoid.  Below we comment on the extra $\zt$-twists. 
 
The only new point, then, is the following.  Let $X$~be a nice topological
space with a continuous action of a finite group~$G$.  Set~$\gp=X\gpd G$.
Suppose $\tL\to \gp_1$ is a central extension, as in
Definition~\ref{thm:85}(i).  The elements of unit norm $\gp_1^\tau \subset
\tL$ are the morphisms of a groupoid~$\gp^\tau $ with $\gp^\tau _0=\gp_0=X$. 

  \begin{lemma}[]\label{thm:188}
 The groupoid~$\gp^\tau $ is locally equivalent to a global quotient~$Y\gpd
H$ of a compact space~$Y$ by a compact Lie group~$H$. 
  \end{lemma}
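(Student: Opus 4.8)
The plan is to combine the slice theorem for finite group actions with the functoriality of central extensions under groupoid equivalences, reducing the assertion to a statement about central extensions of a groupoid of the form $U\gpd H_0$ for $U$ a small invariant neighborhood of a fixed point and $H_0$ a finite group.

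First I would localize at a point $x_0\in X$. Let $H_0=G_{x_0}\subset G$ be the (finite) stabilizer of $x_0$. Since $G$ is finite and $X$ is Hausdorff and nice, I would choose an open neighborhood $U$ of $x_0$ that is $H_0$-invariant, $H_0$-equivariantly contractible to $x_0$, and so small that $gU\cap U=\emptyset$ for $g\in G\setminus H_0$; shrinking once more I may assume $\overline U$ is compact with the same properties. Then $GU=\bigsqcup_{[g]\in G/H_0}gU$ is a $G$-invariant (saturated) open subset of $X$, the induced $G$-space is homeomorphic to $G\times_{H_0}U$, and the inclusion of the full subgroupoid $U\gpd H_0\hookrightarrow GU\gpd G=\gp|_{GU}$ is an equivalence of groupoids: it is fully faithful, and it is essentially surjective because every point of $GU$ is carried into $U$ by some $g\in G$. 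As $x_0$ ranges over $X$ the sets $GU$ form a saturated open cover of $\gp_0=X$ (finite, by compactness of $X$), so it suffices to treat $\gp^\tau|_{GU}$ for one such $U$.

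Next I would transport the central extension. Restricting $\tL\to\gp_1$ to the morphisms of $U\gpd H_0$ produces a central extension $\sigma$ of $U\gpd H_0$ (pullback-functoriality of central extensions, as in Remark~\ref{thm:76}), and because the subgroupoid inclusion $U\gpd H_0\hookrightarrow\gp|_{GU}$ is an equivalence, its lift $(U\gpd H_0)^{\sigma}\hookrightarrow\gp^\tau|_{GU}$ is again an equivalence of groupoids — a central extension does not alter the object space, and isomorphisms lift through it since the line bundle of morphisms is fiberwise nonzero. Thus I am reduced to identifying $(U\gpd H_0)^{\sigma}$ with a global quotient. The underlying hermitian line bundle of $\sigma$ lives over $U\times H_0$, which deformation retracts onto the finite set $\{x_0\}\times H_0$; hence it is topologically trivial, and I would fix a unit-norm trivialization restricting at $x_0$ to the one presenting the central extension $\widehat{H_0}$ of $H_0$ determined by $\sigma|_{\{x_0\}\gpd H_0}$. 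In these coordinates the composition law of $(U\gpd H_0)^{\sigma}$ is governed by a continuous cocycle $\alpha(u;h_2,h_1)\in\TT$, and using the $H_0$-equivariant contraction of $U$ to $x_0$ I would produce an explicit coboundary straightening $\alpha(u;-,-)$ to the $u$-independent group $2$-cocycle $\alpha(x_0;-,-)$ of $H_0$. This yields an isomorphism of groupoids $(U\gpd H_0)^{\sigma}\cong U\gpd\widehat{H_0}$. Since $\widehat{H_0}$ is an extension of the finite group $H_0$ by $\TT$ it is a compact Lie group; replacing $U$ by $\overline U$ (or a compact $H_0$-invariant contractible neighborhood contained in it) we conclude $\gp^\tau|_{GU}\simeq Y\gpd H$ with $Y$ compact and $H=\widehat{H_0}$ compact Lie, which proves the lemma.

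The main obstacle is the last step — straightening the groupoid cocycle $\alpha(u;h_2,h_1)$ to a genuine group $2$-cocycle of $H_0$. Triviality of the underlying line bundle is immediate, but passing from a trivialized bundle to a constant composition law requires a homotopy, and it is exactly here that the hypothesis that $X$ is nice enters, through the existence of an $H_0$-equivariant contraction of a neighborhood of $x_0$ onto $x_0$; this holds for the spaces that actually occur (CW complexes, manifolds with an invariant metric, and the Brillouin tori of \S\ref{sec:6}) and underwrites the identification $(U\gpd H_0)^{\sigma}\cong U\gpd\widehat{H_0}$. A secondary, purely technical point is arranging the compactness of the local model $Y$, which is handled by the standard shrinking of the slice and does not interact with the rest of the argument.
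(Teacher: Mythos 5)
Your approach is genuinely different from the paper's, but it has a gap which you identify and do not close. The paper's argument is a single global construction: since the action map $p\colon\gp_1=X\times G\to X$ is an $N$-fold covering with $N=|G|$, the pushforward $p_*\tL$ is a rank-$N$ hermitian vector bundle over $X$, made $\gp^\tau$-equivariant by the composition law \eqref{eq:216}; passing to the associated $U_N$-frame bundle $P\to X$ produces a $\gp^\tau$-space on which the action is free, because the central circle rescales every frame and lifts of non-identity stabilizer elements permute the lines \eqref{eq:215} nontrivially. The groupoid with objects $P$ is therefore equivalent to its orbit space $Y$, and $\gp^\tau$ is locally equivalent to $Y\gpd U_N$. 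No slices, no localization, and no contraction hypothesis are needed.

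Your slice argument instead requires two ingredients not supplied by the standing hypothesis on $X$ (locally contractible and completely regular). First, you need the slice $U$ around $x_0$ to be $G_{x_0}$-equivariantly contractible: local contractibility plus finiteness of $G$ gives $G_{x_0}$-invariant slices, but does not by itself give equivariantly contractible ones. Second, even granting that, the step you flag as the ``main obstacle'' --- straightening $\alpha(u;h_2,h_1)$ to the constant group cocycle $\alpha(x_0;h_2,h_1)$ by an explicit coboundary --- is asserted rather than carried out, and it is precisely here that the equivariant contraction enters essentially: one needs a cochain-level trivialization, not merely the cohomology-class statement, to obtain the claimed isomorphism $(U\gpd H_0)^\sigma\cong U\gpd\widehat{H_0}$. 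For the spaces used in \S\ref{sec:6} both points are available, as you note, but the lemma is stated and applied at greater generality; the frame-bundle argument proves it there, which is why the paper takes that route.
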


  \begin{proof}
 The action map $p\:\gp_1=X\times G\to X$ has finite fibers, so the
pushforward $p_*\tL\to X$ is a rank~$N$ hermitian vector bundle, where $N$~is
the cardinality of~$G$.  The fiber at~$x\in X$ is
  \begin{equation}\label{eq:215}
     \oplus \tL_{(p\xrightarrow{g} x)}
  \end{equation}
where the sum is over the finite set of arrows $(z\xrightarrow{g}x)$ with
target~$x$.  The bundle~$p_*\tL$ is equivariant for~$\gp_1^\tau $: the
$\TT$-torsor of unit norm elements in $\tL_{(x\xrightarrow{f}y)}$ acts via
the multiplication 
  \begin{equation}\label{eq:216}
     \tL_{(x\xrightarrow{f}y)}\otimes \tL
     _{(z\xrightarrow{g}x)}\longrightarrow \tL_{(z\xrightarrow{fg}y)} 
  \end{equation}
in the central extension. 
 
Let $P\to X$ be the $U_N$-bundle of frames associated to~$p_*\tL\to X$.  It is
also $\gp^\tau$-equivariant, and furthermore there is a unique arrow
in~$\gp^\tau$ between any two points of~$P$: elements in the central
extension~$\gp^\tau _1$ which cover non-identity elements in the stabilizer
$G_x\subset G$ at~$x\in X$ permute the lines~\eqref{eq:2} nontrivially.  So
the groupoid with objects~$P$ and arrows~$\gp^\tau$ is equivalent to a
space~$Y$, and the groupoid with objects~$X$ and arrows~$\gp^\tau$ is locally
equivalent to the global quotient $Y\gpd U_N$.
  \end{proof}

Lemma~\ref{thm:188} leads to a universal twisted Hilbert bundle over~$\gp$
which is a sum of finite rank bundles.  The argument given covers central
extensions of~$\gp$.  For \ptzgces, as in Definition~\ref{thm:85}(iii), we
first apply the argument of Lemma~\ref{thm:134} to account for a
nontrivial homomorphism~$\phi \:\gp_1\to\pmo$.  A similar argument applies to
nontrivial $c\:\gp_1\to\pmo$: replace the complex conjugate Hilbert bundle by
the parity-reversed Hilbert bundle.

   \section{Diamonds and dust}\label{sec:15}
% lastsubsec@000

In this appendix we give an illustration of the canonical \ptzgce\ of
Theorem~\ref{thm:16} for the three-dimensional diamond structure, a common
structure in materials science found in many materials, for example the
carbon column of the periodic table: diamond, silicon, germanium, and grey tin.
We will describe the pullback of the \ptzgce\ to the subvarieties of the
Brillouin torus $X_\Pi$ with nontrivial stabilizer group.  This appendix is
included partially to assist some readers in translating between the language
in the body of the text and more standard terminology in condensed matter
physics.

In the diamond structure the lattice $\Pi\subset V= \RR^3$ is the
face-centered-cubic (fcc) lattice.  This may also be viewed as the lattice
$D_3 \cong A_3$, the root lattice of the Lie algebra $so(6) \cong
su(4)$. Concretely,
\begin{equation}\label{eq:D3-latt}
D_3 = \{ (x_1, x_2 ,x_3) \in \ZZ^3 ~\vert ~ x_1 + x_2 + x_3 \equiv  0  \pmod 2
\}\subset \RR^3. 
\end{equation}
The ``reciprocal lattice'' lattice is, up to scale, (see below)
the  weight lattice $D_3^*$. It is also known as the body-centered-cubic
(bcc) lattice. Concretely it is
\begin{equation}\label{eq:bcc-latt}
D_3^* = D_3 \cup (D_3 + s) \cup (D_3 + v) \cup (D_3 + s'),
\end{equation}
where
\begin{equation}
\begin{split}
s & =   (\half, \half, \half)  \\
 v  & =   (0,0,1)  \\
 s'  & =   (\half, \half,- \half)  \\
\end{split}
\end{equation}
are spinor, vector, and conjugate spinor weights. The diamond crystal is
\begin{equation}
\bar C= D_3 \cup (D_3+s) \subset \EE^3 .
\end{equation}
Even if we choose a lattice point as an origin (thus identifying the affine
space $\EE^3$, which contains the crystal $\bar C$, with the vector space
$V\cong \RR^3$, which contains the lattice $\Pi$), $\bar C$ is not a lattice
since $2s \cong v$ in the discriminant group $D_3^*/D_3 \cong \ZZ/4\ZZ$. It
is useful to regard $2\bar C$ as the set of points $(x_1,x_2,x_3)\in \ZZ^3$
with $x_i\equiv x_j  \pmod 2$ and $\sum_i x_i \equiv  0,1  \pmod 4$.

In nature the fcc lattice $D_3$ comes scaled by $a/2$ where the fcc lattice
is constructed from a cubic lattice of side length $a$ and we add lattice
points at the centers of the sides of the cubes.  In condensed matter the
``reciprocal lattice'' which is in the (Fourier) dual space $V^*$ is then
scaled by $2/a$. The condensed matter convention is to take $ \sum n^j k_j
\in 2 \pi \ZZ$ where $(n^1,n^2,n^3) \in \Pi$ is in the space lattice and $k=
(k_1,k_2,k_3)$ is in the reciprocal lattice.  Therefore, in comparing to the
condensed matter literature we should scale $D_3^*$ by $\frac{2}{a}\times
2\pi$.  For simplicity we will work with $a=2$, and we define our reciprocal
vectors without the factor of $2\pi$.
 Thus, for us the Brillouin torus $X_\Pi$ is $\RR^3/D_3^*$. A standard
fundamental domain is the Wigner-Seitz domain whose closure is given by
points in $\RR^3$ closest to the lattice point $0$ (in the lattice
$D_3^*$). In mathematics this is known as the Voronoi cell. The Wigner-Seitz
cell for the fcc lattice is illustrated in Figure~1.
  \begin{figure}[h]
  \centering
  \includegraphics[scale=.3]{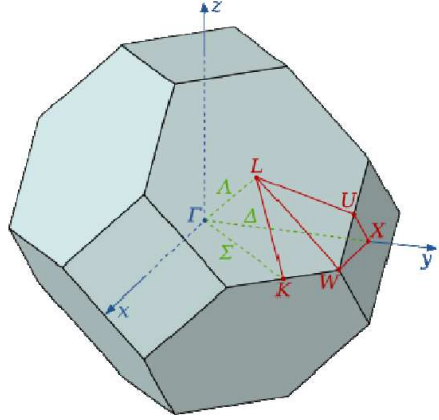}
  \caption{The Wigner-Seitz cell for the fcc lattice.}\label{fig:1}
  \end{figure}
The notation shown here is standard notation going back to the classic paper
analyzing the irreducible representations of the space group in~\cite{BSW}.

Another useful way to think about the Brillouin torus is as the quotient of
points $(k_1, k_2, k_3)\in \RR^3$ by the equivalence relation $k_i \sim
k_i+1$ \emph{and} $k \sim k + s$.  Thus, the Brillouin torus is a quotient of
the standard torus~$\RR^3/\ZZ^3$ by a free $\ZZ/2\ZZ$-action.  In this
example, $\phi=t=c=1$.  The extension~\eqref{eq:6} for the crystallographic
group of the fcc lattice splits, but that for the diamond structure is not
split, i.e., it is non-symmorphic.  In both cases we have the sequence:
\begin{equation}\label{eq:diamond-sequence}
1 \to D_3 \to G(\bar C)  \to \zt \times O_h \to 1
\end{equation}
 where $G(\bar C) := G(C)/U$, the extended point group is $\zt \times O_h$
and the point group is $P=O_h$, the cubic group.  The latter can be regarded
as $S_4 \times \ZZ/2\ZZ$, where $S_4$ is the Weyl group of $so(6)=su(4)$ and
the $\ZZ/2\ZZ$ acts on $\RR^3$ by inversion $x\to -x$. For our purposes a
more useful characterization is $O_h \cong (\ZZ/2\ZZ)^3 \ltimes S_3$ where
$S_3$ acts on $(x_1,x_2,x_3)$ by permutation and the $(\ZZ/2\ZZ)^3$ subgroup
is generated by $\epsilon_1,\epsilon_2,\epsilon_3$, where $\epsilon_i$
 flips the sign of $x_i$, leaving
the other coordinates fixed.

It is now convenient to introduce the Seitz notation for elements of the
Euclidean group (the group $Euc(E)$ of~\S\ref{sec:1}): Again we choose an
origin to identify $\EE^d \cong \RR^d$ and then define $\{ R\vert v \}$ by
\begin{equation}
\{ R\vert v \}x := Rx + v,
\end{equation}
where $Rx$ means $R^i_{~j}x^j$, $R\in O(d)$.

To see that the space group of the diamond structure is non-symmorphic
(i.e. \eqref{eq:diamond-sequence} does not split) we note that 24 elements of $O_h$ preserve
$(\sum_i 2x_i) \mod 4$, and hence can be lifted $R \to \{ R\vert 0 \}$ in the space group
of the diamond crystal, but the remaining 24 elements shift $(\sum_i 2 x_i) \to [(\sum_i 2 x_i ) -1 ]\mod 4$ and hence
require a shift by an element of $s + D_3$ in order to preserve the crystal $\bar C$. We choose to lift such elements 
$R$ to the  screw-displacement $ \{ R \vert s \}$.

Let us now describe the groupoid $X_\Pi\gpd O_h$. The action of the point group $O_h$
on the Brillouin torus is defined by
\begin{equation}
(\{ R\vert v \}\cdot \lambda)(\xi ) := \lambda\left(\{ R\vert v \} \{ 1 \vert \xi  \} \{ R\vert v \}^{-1} \right)= \lambda(R \xi )
\end{equation}
where $\lambda \in X_\Pi$ is a character of $\Pi$ and $\xi \in \Pi $. It is expressed (nonuniquely!) in terms of 
 more standard $k$-vectors $k=(k_1,k_2,k_3)\in V^*\cong \RR^3$ by
$\lambda(\xi ) = \exp[ 2\pi  i \sum_{j=1}^3 k_j \xi ^j ] $. The $O_h$ action on $X_\Pi$ has nontrivial
stabilizer groups at special points, circles, and (two-dimensional) tori. These subvarieties with enhanced
symmetry form orbits under the $O_h$ action. The nontrivial submanifolds, together with their
stabilizer groups may be summarized by the following table:

\def\ve{\varepsilon}

\bigskip
\begin{center}
\begin{tabular}{|c|c| }
% after \\: \hline or \cline{col1-col2} \cline{col3-col4} ...
\hline Typical character $k$ &   Stabilizer group $G''(\lambda)$   \\
\hline $(y,y,y') $ &   $\ZZ/2\ZZ$   \\
\hline $(y,y',0) $ &  $\ZZ/2\ZZ$  \\
\hline $(y,0,0)$ &   $D_4$   \\
\hline $(y,0,1/2) $ &   $\ZZ/2\ZZ \times \ZZ/2\ZZ$ \\
\hline $(y,y,0) $ &   $\ZZ/2\ZZ\times\ZZ/2\ZZ$  \\
\hline $(y,y,y)$ &    $S_3$ \\
\hline $(1/4,1/4,1/4)$ &   $\ZZ/2\ZZ \times S_3$ \\
\hline $(1/2,0,0)$ &   $\ZZ/2\ZZ \times D_4$  \\
\hline $(0,0,0) $ &   $O_h$ \\
 \hline \end{tabular}\end{center}
\bigskip

\noindent
 The notation here is as follows: $y$ denotes a generic real number and we
identify $y \sim y+1$.  A prime indicates that the number $y'$ is different
from $y$ modulo $1$. For each example of fixed point loci shown there are
others obtained by the action of the point group on $X_\Pi$. For example, the
2-torus obtained by the closure of points of type $(y,y,y')$ can be mapped to
a 2-torus of points $(\bar y, y' ,y)$. Similarly, the point $(1/2,0,0)$ sits
in an orbit of three points, the other two being represented by $(0,1/2,0)$
and $(0,0,1/2)$. These three points are known as $X$-points in the notation
of~\cite{BSW}.  The other special points are the four $L$-points: $(1/4, 1/4,
1/4)$ and an orbit of $(-1/4, 1/4,1/4)$ under the point group, together with
the trivial character $k=0$.  The trivial character is usually denoted as the
$\Gamma$-point in the condensed matter literature.

We can detect if the canonical \ptzgce\ is nontrivial by pulling
back to these special loci and determining whether the central
extension of~\eqref{eq:106} is trivializable or not. A case-by-case
analysis reveals that the nontrivial central extensions occur only on the
three circles with generic points $(y,0,1/2)$, $(0,y,1/2)$, and
$(0,1/2,y)$. For example, the point $(y,0,1/2)$, with $y$
generic has stabilizer group $\ZZ/2\ZZ \times \ZZ/2\ZZ$ 
generated by $\epsilon_2$ and $\epsilon_3$.
Their lifts to $G(\bar C)$ are $\{ \epsilon_2 \vert s \}$ and
$\{ \epsilon_3 \vert s \}$, whose group commutator is
\begin{equation}
[\{ \epsilon_2 \vert s \}, \{ \epsilon_3 \vert s \}] = \{ 1 \vert \xi   \}
\end{equation}
 with $\xi  = (0,-1,1)$, and for this lattice vector $\lambda(\xi )=-1$.  So,
by the same reasoning as in Example~\ref{thm:104} the extension $H(\lambda)$
of $G''(\lambda)$ is nontrivial. At the enhanced symmetry points $X$, which
lie on these three circles the stabilizer group is $\zt \times D_4$ and the
central extension remains nontrivial. For example, for the point $(1/2,0,0)$
the commutator function (that is, the commutator of the lift of elements to
$H(\lambda)$), which can be shown to characterize the central extension up to
isomorphism, is nontrivial for the pairs $s(\epsilon_1, \epsilon_2) =
s(\epsilon_1, \epsilon_3)=-1$, (and trivial on other pairs).

There are also examples of materials in nature with nontrivial time-reversal
symmetries leading to nontrivial extensions of the Brillouin groupoid, such
as those described in Theorem~\ref{thm:51}.  An example is probably provided
by manganese dioxide in its rutile structure.\footnote{We thank K. Rabe for
this suggestion.}  The space group (number 136) is nonsymmorphic, with a
point group isomorphic to $\zt \times D_4$.  Each manganese atom is
surrounded by a regular octahedron of oxygen atoms.  There are unpaired
$d$-electrons on the manganese atoms which provide local magnetic moments and
these moments can be staggered so that time reversal must be accompanied by a
nontrivial element of the space group. Both sequences~\eqref{eq:6}
and~\eqref{eq:7} are nonsplit.  We leave a detailed analysis of the \ptzgce\
and equivariant $K$-theory for another occasion.

\makeatletter
\let\sectionname\@empty
\makeatother

 \bigskip\bigskip
\bibliographystyle{hyperamsalpha} 

\begin{thebibliography}{TKNdN82}

\bibitem[A1]{A1}
M.~F. Atiyah, \emph{{$K$}-Theory}, second ed., Advanced Book Classics,
  Addison-Wesley, Redwood City, CA, 1989.

\bibitem[A2]{A2}
\bysame, \emph{{$K$}-theory and reality}, Quart. J. Math. Oxford Ser. (2)
  \textbf{17} (1966), 367--386.

\bibitem[ABS]{ABS}
M.~F. Atiyah, R.~Bott, and A.~A. Shapiro, \emph{Clifford modules}, Topology
  \textbf{3} (1964), 3--38.

\bibitem[AH]{AH}
M.~F. Atiyah and F.~Hirzebruch, \emph{Vector bundles and homogeneous spaces},
  Proc. Symp. Pure Math. \textbf{3} (1961), 7--38.

\bibitem[AK]{AK}
G.~Abramovici and P.~Kalugin, \emph{Clifford modules and symmetries of
  topological insulators}, International Journal of Geometric Methods in Modern
  Physics \textbf{9} (2012), no.~03,
  \href{http://arxiv.org/abs/arXiv:1101.1054v2}{{\tt arXiv:1101.1054v2}}.

\bibitem[AS]{AS}
M.~Atiyah and G.~Segal, \emph{Twisted {$K$}-theory}, Ukr. Mat. Visn. \textbf{1}
  (2004), no.~3, 287--330, \href{http://arxiv.org/abs/arXiv:math/0407054}{{\tt
  arXiv:math/0407054}}.

\bibitem[AZ]{AZ}
Alexander Altland and Martin~R. Zirnbauer, \emph{Nonstandard symmetry classes
  in mesoscopic normal-superconducting hybrid structures},
  \href{http://dx.doi.org/10.1103/PhysRevB.55.1142}{Phys. Rev. B \textbf{55}
  (1997)}, 1142--1161.

\bibitem[BCMMS]{BCMMS}
Peter Bouwknegt, Alan~L. Carey, Varghese Mathai, Michael~K. Murray, and Danny
  Stevenson, \emph{Twisted {$K$}-theory and {$K$}-theory of bundle gerbes},
  \href{http://dx.doi.org/10.1007/s002200200646}{Comm. Math. Phys. \textbf{228}
  (2002)}, no.~1, 17--45, \href{http://arxiv.org/abs/arXiv:hep-th/0106194}{{\tt
  arXiv:hep-th/0106194}}.

\bibitem[BP]{BP}
F.~Bassani and G.~Pastori~Parravicini, \emph{Electronic States and Optical
  Transitions in Solids}, Pergamon Press, Ltd., 1975.

\bibitem[BSW]{BSW}
L.~P. Bouckaert, R.~Smoluchowski, and E.~Wigner, \emph{Theory of Brillouin
  Zones and Symmetry Properties of Wave Functions in Crystals},
  \href{http://dx.doi.org/10.1103/PhysRev.50.58}{Phys. Rev. \textbf{50}
  (1936)}, 58--67.

\bibitem[De]{De}
Pierre Deligne, \emph{Notes on spinors}, Quantum Fields and Strings: a course
  for mathematicians, {V}ol. 1, 2 ({P}rinceton, {NJ}, 1996/1997), Amer. Math.
  Soc., Providence, RI, 1999, pp.~99--135.

\bibitem[Detal]{Detal}
Pierre Deligne, Pavel Etingof, Daniel~S. Freed, Lisa~C. Jeffrey, David Kazhdan,
  John~W. Morgan, David~R. Morrison, and Edward Witten (eds.), \emph{Quantum
  fields and strings: a course for mathematicians. {V}ol. 1, 2}, American
  Mathematical Society, Providence, RI, 1999. Material from the Special Year on
  Quantum Field Theory held at the Institute for Advanced Study, Princeton, NJ,
  1996--1997.

\bibitem[DF1]{DF1}
Pierre Deligne and Daniel~S. Freed, \emph{Classical field theory}, Quantum
  fields and strings: a course for mathematicians, Vol. 1, 2 (Princeton, NJ,
  1996/1997), Amer. Math. Soc., Providence, RI, 1999, pp.~137--225.

\bibitem[DF2]{DF2}
\bysame, \emph{Sign manifesto}, Quantum fields and strings: a course for
  mathematicians, Vol. 1, 2 (Princeton, NJ, 1996/1997), Amer. Math. Soc.,
  Providence, RI, 1999, pp.~357--363.

\bibitem[DK]{DK}
P.~Donovan and M.~Karoubi, \emph{Graded {B}rauer groups and {$K$}-theory with
  local coefficients}, Inst. Hautes \'Etudes Sci. Publ. Math. (1970), no.~38,
  5--25.

\bibitem[DM]{DM}
Pierre Deligne and John~W. Morgan, \emph{Notes on supersymmetry (following
  {J}oseph {B}ernstein)}, Quantum fields and strings: a course for
  mathematicians, {V}ol. 1, 2 ({P}rinceton, {NJ}, 1996/1997), Amer. Math. Soc.,
  Providence, RI, 1999, pp.~41--97.

\bibitem[Dy]{Dy}
Freeman~J. Dyson, \emph{The threefold way. {A}lgebraic structure of symmetry
  groups and ensembles in quantum mechanics}, J. Mathematical Phys. \textbf{3}
  (1962), 1199--1215.

\bibitem[EMV]{EMV}
Andrew~M. Essin, Joel~E. Moore, and David Vanderbilt, \emph{Magnetoelectric
  polarizability and axion electrodynamics in crystalline insulators}, Phys.
  Rev. Lett. \textbf{102} (2009), 146805,
  \href{http://arxiv.org/abs/0810.2998}{{\tt 0810.2998}}.

\bibitem[ETMV]{ETMV}
Andrew~M. Essin, Ari~M. Turner, Joel~E. Moore, and David Vanderbilt,
  \emph{Orbital magnetoelectric coupling in band insulators}, Phys. Rev. B
  \textbf{81} (2010), 205104, \href{http://arxiv.org/abs/1002.0290}{{\tt
  1002.0290}}.

\bibitem[F1]{F1}
Daniel~S. Freed, \emph{On Wigner's theorem}, Proceedings of the Freedman Fest
  (Vyacheslav~Krushkal Rob~Kirby and Zhenghan Wang, eds.), Geometry \& Topology
  Monographs, vol.~18, Mathematical Sciences Publishers, 2012, pp.~83--89.
  \href{http://arxiv.org/abs/arXiv:1211.2133}{{\tt arXiv:1211.2133}}.

\bibitem[F2]{F2}
Daniel~S. Freed, \emph{Classical {C}hern-{S}imons theory. {I}}, Adv. Math.
  \textbf{113} (1995), no.~2, 237--303,
  \href{http://arxiv.org/abs/arXiv:hep-th/9206021}{{\tt arXiv:hep-th/9206021}}.

\bibitem[F3]{F3}
Daniel~S. Freed, \emph{Lectures on twisted $K$-theory and orientifolds}, June
  2012. \url{http://www.ma.utexas.edu/users/dafr/ESI.pdf}. Notes from graduate
  workshop "$K$-theory and quantum fields".

\bibitem[FHNQWW]{FHNQWW}
Michael Freedman, Matthew~B. Hastings, Chetan Nayak, Xiao-Liang Qi, Kevin
  Walker, and Zhenghan Wang, \emph{Projective ribbon permutation statistics: A
  remnant of non-Abelian braiding in higher dimensions},
  \href{http://dx.doi.org/10.1103/PhysRevB.83.115132}{Phys. Rev. B \textbf{83}
  (2011)}, 115132, \href{http://arxiv.org/abs/arXiv:1005.0583v4}{{\tt
  arXiv:1005.0583v4}}.

\bibitem[FHT1]{FHT1}
D.~S. Freed, M.~J. Hopkins, and C.~Teleman, \emph{Loop groups and twisted
  {$K$}-theory {I}}, J. Topology \textbf{4} (2011), 737--798,
  \href{http://arxiv.org/abs/arXiv:0711.1906}{{\tt arXiv:0711.1906}}.

\bibitem[FHT2]{FHT2}
\bysame, \emph{Loop groups and twisted $K$-theory III}, Ann. of Math.
  \textbf{174} (2011), no.~2, 974--1007,
  \href{http://arxiv.org/abs/arXiv:math/0511232}{{\tt arXiv:math/0511232}}.

\bibitem[FK]{FK}
Liang Fu and C.~L. Kane, \emph{Topological insulators with inversion symmetry},
  \href{http://dx.doi.org/10.1103/PhysRevB.76.045302}{Phys. Rev. B \textbf{76}
  (2007)}, 045302, \href{http://arxiv.org/abs/arXiv:cond-mat/0611341}{{\tt
  arXiv:cond-mat/0611341}}.

\bibitem[FKi1]{FKi1}
Lukasz Fidkowski and Alexei Kitaev, \emph{Topological phases of fermions in one
  dimension}, \href{http://dx.doi.org/10.1103/PhysRevB.83.075103}{Phys. Rev. B
  \textbf{83} (2011)}, 075103, \href{http://arxiv.org/abs/arXiv:1008.4138}{{\tt
  arXiv:1008.4138}}.

\bibitem[FKi2]{FKi2}
\bysame, \emph{Effects of interactions on the topological classification of
  free fermion systems},
  \href{http://dx.doi.org/10.1103/PhysRevB.81.134509}{Phys. Rev. B \textbf{81}
  (2010)}, 134509, \href{http://arxiv.org/abs/arXiv:0904.2197}{{\tt
  arXiv:0904.2197}}.

\bibitem[FKM]{FKM}
Liang Fu, C.~L. Kane, and E.~J. Mele, \emph{Topological Insulators in Three
  Dimensions}, \href{http://dx.doi.org/10.1103/PhysRevLett.98.106803}{Phys.
  Rev. Lett. \textbf{98} (2007)}, 106803,
  \href{http://arxiv.org/abs/arXiv:cond-mat/0607699}{{\tt
  arXiv:cond-mat/0607699}}.

\bibitem[FR]{FR}
Klaus Fredenhagen and Katarzyna Rejzner, \emph{Perturbative algebraic quantum
  field theory}, \href{http://arxiv.org/abs/arXiv:1208.1428}{{\tt
  arXiv:1208.1428}}.

\bibitem[GSW]{GSW}
Krzysztof Gawedzki, Rafal~R. Suszek, and Konrad Waldorf, \emph{Bundle gerbes
  for orientifold sigma models},
\href{http://arxiv.org/abs/0809.5125}{{\tt arXiv:0809.5125 [math-ph]}}.
%%CITATION = 0809.5125;%%.

\bibitem[H]{H}
Petr Ho\ifmmode~\check{r}\else \v{r}\fi{}ava, \emph{Stability of Fermi Surfaces
  and $K$ Theory}, \href{http://dx.doi.org/10.1103/PhysRevLett.95.016405}{Phys.
  Rev. Lett. \textbf{95} (2005)}, 016405,
  \href{http://arxiv.org/abs/hep-th/0503006}{{\tt hep-th/0503006}}.

\bibitem[Ha]{Ha}
F.~D.~M. Haldane, \emph{Model for a Quantum Hall Effect without Landau Levels:
  Condensed-Matter Realization of the "Parity Anomaly"},
  \href{http://dx.doi.org/10.1103/PhysRevLett.61.2015}{Phys. Rev. Lett.
  \textbf{61} (1988)}, 2015--2018.

\bibitem[HHZ]{HHZ}
P.~Heinzner, A.~Huckleberry, and M.R. Zirnbauer, \emph{Symmetry Classes of
  Disordered Fermions},
  \href{http://dx.doi.org/10.1007/s00220-005-1330-9}{Communications in
  Mathematical Physics \textbf{257} (2005)}, no.~3, 725--771,
  \href{http://arxiv.org/abs/arXiv:math-ph/0411040}{{\tt
  arXiv:math-ph/0411040}}.

\bibitem[HK]{HK}
M.~Z. Hasan and C.~L. Kane, \emph{\textit{Colloquium} : Topological
  insulators}, \href{http://dx.doi.org/10.1103/RevModPhys.82.3045}{Rev. Mod.
  Phys. \textbf{82} (2010)}, 3045--3067,
  \href{http://arxiv.org/abs/arXiv:1002.3895}{{\tt arXiv:1002.3895}}.

\bibitem[HM]{HM}
M.~Zahid Hasan and Joel~E. Moore, \emph{Three-Dimensional Topological
  Insulators}, \href{http://arxiv.org/abs/1011.5462v1}{{\tt 1011.5462v1}}.

\bibitem[K]{K}
Alexei Kitaev, \emph{{Periodic table for topological insulators and
  superconductors}}, \href{http://dx.doi.org/10.1063/1.3149495}{AIP Conf.Proc.
  \textbf{1134} (2009)}, 22--30, \href{http://arxiv.org/abs/0901.2686}{{\tt
  arXiv:0901.2686 [cond-mat.mes-hall]}}.

\bibitem[KBMHL]{KBMHL}
Markus Konig, Hartmut Buhmann, Laurens~W. Molenkamp, Taylor~L. Hughes,
  Chao-Xing Liu, et~al., \emph{{The Quantum Spin Hall Effect: Theory and
  Experiment}}, \href{http://dx.doi.org/10.1143/JPSJ.77.031007}{J.Phys.Soc.Jap.
  \textbf{77} (2008)}, 031007,
\href{http://arxiv.org/abs/arXiv:0801.0901}{{\tt arXiv:0801.0901}}.
%%CITATION = JUPSA,77,031007;%%.

\bibitem[KM1]{KM1}
C.~L. Kane and E.~J. Mele, \emph{Quantum Spin Hall Effect in Graphene},
  \href{http://dx.doi.org/10.1103/PhysRevLett.95.226801}{Phys. Rev. Lett.
  \textbf{95} (2005)}, 226801.

\bibitem[KM2]{KM2}
\bysame, \emph{${Z}_{2}$ Topological Order and the Quantum Spin Hall Effect},
  \href{http://dx.doi.org/10.1103/PhysRevLett.95.146802}{Phys. Rev. Lett.
  \textbf{95} (2005)}, 146802.

\bibitem[LWCG]{LWCG}
Z.~X.~Liu X.~G.~Wen X.~Chen, Z.~C.~Gu, \emph{{Symmetry protected topological
  orders and the cohomology class of their symmetry group,}},
  \href{http://arxiv.org/abs/1106.4772}{{\tt arXiv:1106.4772
  [cond-mat.mes-hall]}}.

\bibitem[M]{M}
N.D. Mermin, \emph{{The topological theory of defects in ordered media}},
  \href{http://dx.doi.org/10.1103/RevModPhys.51.591}{Rev.Mod.Phys. \textbf{51}
  (1979)},
591--648.
%%CITATION = RMPHA,51,591;%%.

\bibitem[MB]{MB}
J.~E. Moore and L.~Balents, \emph{Topological invariants of
  time-reversal-invariant band structures},
  \href{http://dx.doi.org/10.1103/PhysRevB.75.121306}{Phys. Rev. B \textbf{75}
  (2007)}, 121306, \href{http://arxiv.org/abs/arXiv:cond-mat/0607314}{{\tt
  arXiv:cond-mat/0607314}}.

\bibitem[Mi]{Mi}
J.~Milnor, \emph{Morse theory}, Based on lecture notes by M. Spivak and R.
  Wells. Annals of Mathematics Studies, No. 51, Princeton University Press,
  Princeton, N.J., 1963.

\bibitem[Mo]{Mo}
G.~W. Moore, \emph{Quantum symmetries and $K$-theory}.\newline
  \url{http://www.physics.rutgers.edu/users/gmoore/QuantumSymmetryKTheory-Part1.pdf}.
  Notes from St. Ottilien lectures, July, 2012.

\bibitem[Mu]{Mu}
M.~K. Murray, \emph{Bundle gerbes}, J. London Math. Soc. (2) \textbf{54}
  (1996), no.~2, 403--416, \href{http://arxiv.org/abs/arXiv:dg-ga/9407015}{{\tt
  arXiv:dg-ga/9407015}}.

\bibitem[QHZ]{QHZ}
Xiao-Liang Qi, Taylor~L. Hughes, and Shou-Cheng Zhang, \emph{Topological field
  theory of time-reversal invariant insulators},
  \href{http://dx.doi.org/10.1103/PhysRevB.78.195424}{Phys. Rev. B \textbf{78}
  (2008)}, 195424, \href{http://arxiv.org/abs/arXiv:0802.3537}{{\tt
  arXiv:0802.3537}}.

\bibitem[QZ]{QZ}
Xiao-Liang Qi and Shou-Cheng Zhang, \emph{Topological insulators and
  superconductors}, \href{http://dx.doi.org/10.1103/RevModPhys.83.1057}{Rev.
  Mod. Phys. \textbf{83} (2011)}, 1057--1110,
  \href{http://arxiv.org/abs/arXiv:1008.2026}{{\tt arXiv:1008.2026}}.

\bibitem[R1]{R1}
Rahul Roy, \emph{${Z}_{2}$ classification of quantum spin Hall systems: An
  approach using time-reversal invariance},
  \href{http://dx.doi.org/10.1103/PhysRevB.79.195321}{Phys. Rev. B \textbf{79}
  (2009)}, 195321.

\bibitem[R2]{R2}
\bysame, \emph{Topological phases and the quantum spin Hall effect in three
  dimensions}, \href{http://dx.doi.org/10.1103/PhysRevB.79.195322}{Phys. Rev. B
  \textbf{79} (2009)}, 195322.

\bibitem[R3]{R3}
Rahul Roy, \emph{Three dimensional topological invariants for time reversal
  invariant Hamiltonians and the three dimensional quantum spin Hall effect},
  \href{http://arxiv.org/abs/cond-mat/0607531v3}{{\tt cond-mat/0607531v3}}.

\bibitem[RS]{RS}
Michael Reed and Barry Simon, \emph{Methods of modern mathematical physics.
  {I}}, second ed., Academic Press Inc. [Harcourt Brace Jovanovich Publishers],
  New York, 1980. Functional analysis.

\bibitem[SCR]{SCR}
Michael Stone, Ching-Kai Chiu, and Abhishek Roy, \emph{Symmetries, dimensions
  and topological insulators: the mechanism behind the face of the {B}ott
  clock}, \href{http://dx.doi.org/10.1088/1751-8113/44/4/045001}{J. Phys. A
  \textbf{44} (2011)}, no.~4, 045001, 18,
  \href{http://arxiv.org/abs/arXiv:1005.3213}{{\tt arXiv:1005.3213}}.

\bibitem[Se]{Se}
Graeme Segal, \emph{Equivariant {$K$}-theory}, Inst. Hautes {\'E}tudes Sci.
  Publ. Math. (1968), no.~34, 129--151.

\bibitem[SRFL1]{SRFL1}
Andreas~P. Schnyder, Shinsei Ryu, Akira Furusaki, and Andreas W.~W. Ludwig,
  \emph{Classification of Topological Insulators and Superconductors},
  \href{http://arxiv.org/abs/0905.2029v1}{{\tt 0905.2029v1}}.

\bibitem[SRFL2]{SRFL2}
Andreas~P. Schnyder, Shinsei Ryu, Akira Furusaki, and Andreas W.~W. Ludwig,
  \emph{Classification of topological insulators and superconductors in three
  spatial dimensions},
  \href{http://dx.doi.org/10.1103/PhysRevB.78.195125}{Phys. Rev. B \textbf{78}
  (2008)}, 195125, \href{http://arxiv.org/abs/arXiv:0803.2786}{{\tt
  arXiv:0803.2786}}.

\bibitem[SRFL3]{SRFL3}
Shinsei Ryu, Andreas~P. Schnyder, Akira Furusaki, and Andreas~W.W. Ludwig,
  \emph{{Topological insulators and superconductors: Tenfold way and
  dimensional hierarchy}},
  \href{http://dx.doi.org/10.1088/1367-2630/12/6/065010}{arXiv:0912.2157; New
  J.Phys. \textbf{12} (2010)}, 065010,
\href{http://arxiv.org/abs/arXiv:0912.2157}{{\tt arXiv:0912.2157}}.
%%CITATION = NJOPF,12,065010;%%.

\bibitem[St]{St}
N.~E. Steenrod, \emph{A convenient category of topological spaces}, Michigan
  Math. J. \textbf{14} (1967), 133--152.

\bibitem[TeK]{TeK}
Jeffrey C.~Y. Teo and C.~L. Kane, \emph{Topological defects and gapless modes
  in insulators and superconductors},
  \href{http://dx.doi.org/10.1103/PhysRevB.82.115120}{Phys. Rev. B \textbf{82}
  (2010)}, 115120.

\bibitem[TKNN]{TKNN}
D.~J. Thouless, M.~Kohmoto, M.~P. Nightingale, and M.~den Nijs, \emph{Quantized
  Hall Conductance in a Two-Dimensional Periodic Potential},
  \href{http://dx.doi.org/10.1103/PhysRevLett.49.405}{Phys. Rev. Lett.
  \textbf{49} (1982)}, 405--408.

\bibitem[TXL]{TXL}
Jean-Louis Tu, Ping Xu, and Camille Laurent-Gengoux, \emph{Twisted {$K$}-theory
  of differentiable stacks},
  \href{http://dx.doi.org/10.1016/j.ansens.2004.10.002}{Ann. Sci. {\'E}cole
  Norm. Sup. (4) \textbf{37} (2004)}, no.~6, 841--910,
  \href{http://arxiv.org/abs/arXiv:math/0306138}{{\tt arXiv:math/0306138}}.

\bibitem[TZM]{TZM}
Ari~M. Turner, Yi~Zhang, Roger S.~K. Mong, and Ashvin Vishwanath,
  \emph{Quantized Response and Topology of Insulators with Inversion Symmetry},
  \href{http://arxiv.org/abs/1010.4335v2}{{\tt 1010.4335v2}}.

\bibitem[Wa]{Wa}
C.~T.~C. Wall, \emph{Graded {B}rauer groups}, J. Reine Angew. Math.
  \textbf{213} (1963/1964), 187--199.

\bibitem[We]{We}
Steven Weinberg, \emph{The quantum theory of fields. {V}ol. {I}}, Cambridge
  University Press, Cambridge, 2005. Foundations.

\bibitem[Wen]{Wen}
Xiao-Gang Wen, \emph{Symmetry-protected topological phases in noninteracting
  fermion systems}, \href{http://dx.doi.org/10.1103/PhysRevB.85.085103}{Phys.
  Rev. B \textbf{85} (2012)}, 085103,
  \href{http://arxiv.org/abs/arXiv:1111.6341 [cond-mat.mes-hall]}{{\tt
  arXiv:1111.6341 [cond-mat.mes-hall]}}.

\bibitem[Wi]{Wi}
Eugene~P. Wigner, \emph{Group theory: {A}nd its application to the quantum
  mechanics of atomic spectra}, Expanded and improved ed. Translated from the
  German by J. J. Griffin. Pure and Applied Physics. Vol. 5, Academic Press,
  New York, 1959.

\bibitem[WQZ]{WQZ}
Zhong Wang, Xiao-Liang Qi, and Shou-Cheng Zhang, \emph{Equivalent topological
  invariants of topological insulators}, New J. Phys. \textbf{12} (2010),
  065007, \href{http://arxiv.org/abs/0910.5954}{{\tt 0910.5954}}.

\end{thebibliography}
\addcontentsline{toc}{section}{References}  

\newcommand{\etalchar}[1]{$^{#1}$}
\providecommand{\bysame}{\leavevmode\hbox to3em{\hrulefill}\thinspace}
\providecommand{\MR}{\relax\ifhmode\unskip\space\fi MR }
% \MRhref is called by the amsart/book/proc definition of \MR.
\providecommand{\MRhref}[2]{%
  \href{http://www.ams.org/mathscinet-getitem?mr=#1}{#2}
}
\providecommand{\href}[2]{#2}

  \end{document}